\providecommand{\matrixnorm}[1]{\left|\!\left|\!\left|{#1}
  \right|\!\right|\!\right|} 
\def\vu{{\bm{u}}}
\def\vv{{\bm{v}}}
\def\vw{{\bm{w}}}
\def\vx{{\bm{x}}}
\def\mG{{\bm{G}}}
\def\mH{{\bm{H}}}
\def\mI{{\bm{I}}}
\def\mS{{\bm{S}}}
\DeclareMathAlphabet{\mathsfit}{\encodingdefault}{\sfdefault}{m}{sl}
\SetMathAlphabet{\mathsfit}{bold}{\encodingdefault}{\sfdefault}{bx}{n}
\def\gC{{\mathcal{C}}}
\def\gF{{\mathcal{F}}}
\def\gG{{\mathcal{G}}}
\def\gN{{\mathcal{N}}}
\def\gX{{\mathcal{X}}}
\def\sA{{\mathbb{A}}}
\def\sB{{\mathbb{B}}}
\def\sE{{\mathbb{E}}}
\def\sG{{\mathbb{G}}}
\def\sP{{\mathbb{P}}}
\def\sR{{\mathbb{R}}}
\def\sS{{\mathbb{S}}}
\newcommand{\R}{\mathbb{R}}
\newcommand{\KL}{D_{\mathrm{KL}}}
\newcommand{\Var}{\mathrm{Var}}
\DeclareMathOperator*{\argmax}{arg\,max}
\DeclareMathOperator*{\argmin}{arg\,min}
\DeclareMathOperator*{\argsup}{arg\,sup}
\theoremstyle{thmstyleone}%
\newtheorem{theorem}{Theorem}[section]
\newtheorem{proposition}[theorem]{Proposition}%
\newtheorem{lemma}[theorem]{Lemma}
\newtheorem{corollary}[theorem]{Corollary}
\theoremstyle{thmstyletwo}%
\newtheorem{remark}{Remark}[section]%
\theoremstyle{thmstylethree}%
\newtheorem{definition}{Definition}[section]
\newtheorem{assumption}{Assumption}[section]
\newcommand{\bea}{\begin{eqnarray*}}
\newcommand{\eea}{\end{eqnarray*}}
\newcommand{\ba}{\begin{eqnarray*}}
\newcommand{\ea}{\end{eqnarray*}}
\newcommand{\be}{\begin{equation}}
\newcommand{\ee}{\end{equation}}
\newcommand{\bi}{\begin{itemize}}
\newcommand{\ei}{\end{itemize}}
\newcommand*\rel@kern[1]{\kern#1\dimexpr\macc@kerna}
\newcommand*\widebar[1]{%
  \begingroup
  \def\mathaccent##1##2{%
    \rel@kern{0.8}%
    \overline{\rel@kern{-0.8}\macc@nucleus\rel@kern{0.2}}%
    \rel@kern{-0.2}%
  }%
  \macc@depth\@ne
  \let\math@bgroup\@empty \let\math@egroup\macc@set@skewchar
  \mathsurround\z@ \frozen@everymath{\mathgroup\macc@group\relax}%
  \macc@set@skewchar\relax
  \let\mathaccentV\macc@nested@a
  \macc@nested@a\relax111{#1}%
  \endgroup
}
\DeclarePairedDelimiter{\braces}{\lbrace}{\rbrace}
\DeclarePairedDelimiter{\paren}{(}{)}
\DeclarePairedDelimiter{\norm}{\|}{\|}
\DeclarePairedDelimiter{\abs}{|}{|}
\DeclarePairedDelimiter{\sqbracket}{[}{]}
\newcommand{\selected}{\mathbb{S}}
\newcommand{\dfmr}{\operatorname{DFMR}}
\newcommand{\omr}{\operatorname{OMR}}
\newcommand{\mr}{\operatorname{MR}}
\newcommand{\coat}{\operatorname{COAT}}
\title{Byzantine-tolerant distributed learning of finite mixture models}
\date{}
\author[1]{Qiong Zhang\thanks{Co-first author. Correspondence to: qiong.zhang@ruc.edu.cn}}
\author[2]{Yan Shuo Tan$^*$}
\author[3]{Jiahua Chen}
\affil[1]{\small Institute of Statistics and Big Data, Renmin University of China, Beijing, China}
\affil[2]{\small Department of Statistics and Data Science, National University of Singapore, Singapore}
\affil[3]{\small Department of Statistics, The University of British Columbia, Vancouver, Canada}
\newcommand{\ie}{{\em i.e.,~}}
\begin{document}

\maketitle

\begin{abstract}
Traditional statistical methods need to be updated to work with modern distributed data storage paradigms. 
The split-and-conquer framework that learns models on local machines and averaging their parameter estimates is common. 
However, this does not work for the important problem of learning finite mixture models, because subpopulation indices on each local machine may be arbitrarily permuted (the “label switching problem”). 
\citet{zhang2022distributed} proposed Mixture Reduction (MR) to address this issue, offering an effective and efficient solution for aligning and aggregating local mixture components. 
Building upon this foundation, this paper considers the additional challenge of Byzantine failure, where a fraction of local machines may transmit arbitrarily erroneous information. 
We introduce Distance-Filtered Mixture Reduction (DFMR), a Byzantine-tolerant framework that enhances MR by adding a distance-based filtering mechanism to identify and exclude corrupted local estimates before the MR aggregation. 
This integration allows DFMR to maintain MR’s efficiency while achieving strong robustness against Byzantine failure. 
We provide theoretical justification for DFMR, proving its optimal convergence rate and asymptotic equivalence to the global maximum likelihood estimate under standard assumptions. 
Numerical experiments on simulated and real-world data validate the effectiveness of DFMR in achieving robust and accurate aggregation in the presence of Byzantine failure.
The code can be found at \url{https://github.com/SarahQiong/RobustSCGMM}.
\end{abstract}

\noindent
\textbf{Keywords}: Aggregation, clustering, robust estimation, unsupervised learning


\section{Introduction}
Mixture models are powerful tools for representing data with latent subpopulations.
They trace back to the seminal work of~\citet{pearson1894contributions}, who applied them to infer the existence of subspecies in crabs.
These models have been applied in diverse fields such as finance~\citep{liesenfeld2001generalized}, astronomy~\citep{baldry2004color}, image analysis~\citep{salimans2017pixelcnn++}, biology, and medicine over the years.
In medical research, for example, they have been used to model patient lifespans with lupus nephritis~\citep{marin2005using}, delineate breast cancer tumour stages~\citep{fraley2002model}, and identify post-stem cell transplantation states~\citep{baudry2010combining}.

As the applications of mixture models have expanded, so too have the challenges associated with their implementation. 
In particular, methods for fitting mixture models must now account for the massive scale of modern datasets, evolving paradigms of data storage, and growing concerns around security and robustness.
When the size of a dataset exceeds the capacity of a single device, it has to be stored in a distributed fashion across multiple local machines.
To fit statistical and machine learning models on such data, several distributed learning paradigms have emerged, one of which is the \emph{Split-and-Conquer (SC)} framework.
Under this framework, a dataset of size $N$ may be regarded as being randomly divided into $m$ non-overlapping chunks, each of size $n$, and distributed across $m$ local machines, with each machine processing a single chunk.
Each machine independently fits a model on its own local chunk and communicates the resulting local estimate to a central server, which aggregates the local estimates into a global one.
In recent years, SC adaptations have been developed for a wide range of statistical procedures. 
These include generalised linear models~\citep{chen2014split}, kernel ridge regression~\citep{zhang2015divide}, local average regression~\citep{chang2017divide}, Wald and score tests~\citep{battey2018distributed}, and estimation of principal eigenspaces~\citep{fan2019distributed}. 
Additionally, SC methods have been applied to the minimisation of additive smooth and locally convex loss functions~\citep{jordan2019communication, fan2023communication, wu2023quasi}, and empirical likelihood estimation~\citep{wang2024distributed}.

Unfortunately, the SC framework, and distributed learning in general, is vulnerable to \emph{Byzantine failure}~\citep{lamport1982byzantine}. 
This occurs when hardware or software malfunction, data corruption, or communication disruptions introduce unpredictable inaccuracies during the transmission of information from local machines to the central server.\footnote{The cloud platform Cloudflare experienced a real-world Byzantine failure on November 2, 2020, where a single misbehaving router in a data centre caused widespread network issues by advertising incorrect routes. This led to significant packet loss and service disruptions~\citep{cloudflare_byzantine_failure}.}
Mathematically, Byzantine failure is usually modelled as an unknown $\alpha < 1/2$ fraction of machines transmitting arbitrary values instead of their authentic local estimates.
It is well-known that this failure mode can lead to catastrophic errors when local estimates are aggregated via simple averaging, but can be mitigated using tools from robust statistics.
Indeed, researchers have proposed aggregating local estimates using robust alternatives such as the trimmed mean, coordinate-wise median~\citep{yin2018byzantine,yin2019defending}, geometric median~\citep{feng2014distributed,chen2017distributed} and variance reduced median-of-means~\citep{tu2021variance}.

Despite these advances, Byzantine-tolerant distributed learning for finite mixture models pose unique challenges and remain largely unexplored. 
A key difficulty arises from the ``label switching problem''--the indices of mixture components can be permuted without affecting the overall distribution, making it unclear how to align components across local estimators before aggregation. 
The \emph{mixture reduction} (MR) framework in~\citet{zhang2022distributed} effectively resolves this issue and enables efficient aggregation of local mixture estimates through an optimal transport formulation. 
It therefore provides a principled and computationally efficient foundation for distributed learning of mixture models.

In this paper, we first strengthen the theoretical foundation of MR.
We prove, under standard assumptions, that when the local estimates are maximum likelihood estimates (MLEs), the MR estimator converges at the optimal $O_{P}(N^{-1/2})$ rate whenever $m = O(n)$ and, furthermore, is asymptotically equivalent to the global MLE whenever $m = o(n)$.
This result provides a rigorous theoretical guarantee even when the number of local machines grows with the total sample size $N$, a regime of increasing practical importance in large-scale distributed environments.

Building on this theoretical foundation, we further propose a Byzantine-tolerant extension of MR--our key methodological contribution.
While MR effectively resolves the label-switching issue, it remains susceptible to Byzantine failures, as the optimal transport formulation is not intrinsically robust to arbitrary corruptions in local estimates.
At the same time, existing robust aggregation techniques~\citep{yin2018byzantine,yin2019defending,tu2021variance} are primarily developed for Euclidean parameter spaces and cannot be directly applied to mixing distributions, which lie on a non-Euclidean space.
Other approaches, such as the centre of attention (COAT) estimator~\citep{nissim2007smooth}, originally introduced in the context of differential privacy, effectively select a single representative local estimator.
Although COAT can in principle be extended to non-Euclidean settings, such methods do not exploit the full collection of local information and, as we demonstrate empirically, can perform poorly for mixture models under contamination.
To overcome these challenges, we design a novel robust aggregation procedure that adapts MR to Byzantine failure settings.
In particular, when a subset of local estimates is corrupted by Byzantine failures, our method identifies such anomalies through a distance-based filtering mechanism prior to aggregation, thereby enabling accurate and reliable distributed learning for mixture models.
The proposed approach thus advances both the robustness and practical applicability of distributed finite mixture modelling.

Our approach makes use of the densities of local estimates and not just their parameter vectors.
We prove that the squared $L^2$ distance between the density of an authentic local estimate and that of the true model is asymptotically a quadratic form in the difference between their parameter vectors, and hence converges in distribution to a generalized chi-squared distribution.
As the number of local machines $m$ also tends to infinity, the empirical distribution of these distances also approaches the limiting distribution.
These facts allow us to use the observed pairwise distances between local estimates to construct a filter that simultaneously contains an overwhelming majority of the uncorrupted estimates while removing all estimates that have experienced severe corruption.
More precisely, we first identify a local estimate that is highly central and then keep only a set of local estimates within some $L^2$ distance from the initial estimate.
Mixture reduction can then be applied on the set of filtered local estimates.
We call this procedure \emph{Distance Filtered Mixture Reduction (DFMR)}.

Our proposed DFMR approach is both computationally feasible and statistically sound.
\emph{Computationally}, DFMR remains highly efficient.
Its main computational burden arises from the filtering stage, which identifies non-corrupted local estimators by comparing their density functions.
Specifically, DFMR retains local estimators whose densities lie within an $L^2$ distance threshold $\rho r^{\text{init}}$ from a initial estimator, where $r^{\text{init}}$ measures the typical dispersion among local estimates and $\rho \ge 1$ is a user-chosen inflation factor controlling the filter’s tolerance.
This step requires evaluating all $m(m-1)/2$ pairwise $L^2$ distances between local mixture densities.
For common mixture models such as Gaussian, Gamma, or Poisson mixtures, each $L^2$ distance admits a closed-form expression in terms of model parameters,\footnote{The $L^2$ distance is a weighted sum of integrals involving products of component densities.
For many well-known parametric families, these integrals can be computed analytically.
Some examples are provided in Appendix~\ref{app:L2_closed_form}.}
which allows efficient computation.
For example, computing the $L^2$ distance between two $K$-component mixtures of $d$-dimensional Gaussians costs $O(K^2d^3)$, leading to an overall complexity of $O(m^2K^2d^3)$ for the filtering step.
The subsequent MR step is faster, with computational complexity $O(mK^2)$.\footnote{For instance, under a $5$-component Gaussian mixture in $10$ dimensions with $100$ machines, computing the pairwise distances takes approximately $25$ seconds on a MacBook with an Apple M2 Pro chip, while the reduction step takes only $0.2$ seconds.}

\emph{Statistically}, under standard regularity conditions on the finite mixture model, we show that the DFMR estimator achieves the convergence rate $O_P(N^{-1/2} + \alpha\rho n^{-1/2})$, which matches the optimal rate for Byzantine-tolerant split-and-conquer learning with Euclidean parameters~\citep{yin2018byzantine}, up to the factor $\rho$.
The inflation factor $\rho$ can be chosen as a small power of $m$ (e.g., $m^{1/10}$), reflecting the heavier tails of local mixture estimates.
Furthermore, when the corrupted local estimates are not overly concentrated, we establish that the DFMR estimator is asymptotically equivalent to an oracle estimator that performs MR on the subset of failure-free estimates—an condition satisfied under many attack models in the Byzantine-failure literature.

The remainder of the paper is organized as follows. 
Section~\ref{sec:literature} discussed related work.
Section~\ref{sec:preliminary_sc_gmm} introduces finite mixture models and discusses their SC aggregation in a Byzantine failure-free setting. 
Section~\ref{sec:byzantine_definition} defines the Byzantine failure model, Section~\ref{subsec:l2_distance_densities} analyses the $L^2$ distances between mixture densities, Section~\ref{sec:proposed_estimator_machine} presents the proposed Byzantine-tolerant robust aggregation approaches, while Section~\ref{sec:theory} details the statistical properties of these approaches.
Numerical studies using simulated data and real-world data to demonstrate the effectiveness of the proposed approaches are provided in Sections~\ref{sec:exp} and~\ref{sec:real_data}, respectively.
Section~\ref{sec:conclusion} concludes with a discussion and final remarks.
\section{Related work}
\label{sec:literature}
In SC distributed learning, robust aggregation methods such as the trimmed mean~\citep{yin2018byzantine} and the median~\citep{minsker2019distributed} are commonly employed to combine local estimates in the presence of Byzantine failure.
Note that these, and other papers discussed in this section, consider a more general setting, where local information, such as gradients, can be shared over several rounds of communication.
When the goal is to minimize a strongly convex objective which is the expectation of a sample-wise loss, and assuming subexponential noise for each sample's gradient,~\citet{yin2018byzantine} showed that no algorithm can achieve an error lower than $\widetilde{\Omega}\left(\alpha n^{-1/2} + N^{-1/2}\right)$ regardless of the communication cost,
and that this rate can be achieved by aggregation using the coordinate-wise median when $n \geq m$.
Despite having optimal rates, the coordinate-wise median can still be improved upon in terms of efficiency, as shown for instance by~\citet{tu2021variance}.

Several works have investigated robust aggregation via distance based filtering.
For example,~\citet{blanchard2017machine}'s Krum\footnote{It is named after the Bulgarian Khan at the end of the eighth century, who undertook offensive attacks against the Byzantine empire.} method aggregates local estimates by identifying and outputting the local estimate with the minimum sum of squared distances to its $m - \zeta$ closest vectors, where $\zeta$ is a tuning parameter. 
Building on this idea,~\citet{xie2018generalized} proposed a ``mean-around-the-median'' estimator, which further averages the $m-\zeta$ closest neighbours of the Krum estimate. 
While both approaches demonstrate some degree of Byzantine tolerance, their statistical convergence rates remain unclear.
\citet{wang2024distributed} employed distance filtering in the context of distributed learning of empirical likelihood ratio statistics for the population mean. 
They used the efficient surrogate minimisation algorithm proposed by~\citet{jordan2019communication}, which requires aggregating local gradients. 
Each local gradient is selected for further aggregation if its $L^2$ distance to the majority of other gradients is below a pre-specified threshold. 
A recommended threshold is of order $n^{-1/2}\log m$ based on asymptotic analysis. 
However, the consistency of their filtering procedure is only established under the condition that the failure gradients are far from the failure-free ones.

As discussed in the introduction, due to the label-switching problem, none of these methods can be directly applied to SC learning of mixture models, whether in the single round of communication setting considered in our paper, or in the multiple-round version.
Furthermore, traditional robust aggregation techniques such as the median and trimmed mean rely on the totally ordered property of Euclidean space and hence do not have straightforward generalisations to non-Euclidean spaces, such as the space of mixing distributions.
Indeed, to the best of our knowledge, no existing work specifically addresses the issue of Byzantine failure in SC learning for finite mixtures.

Finally, we briefly describe two works that have considered similar problems.
\citet{del2019robust} introduced a trimmed $k$-barycentres method to robustly cluster non-Euclidean objects.
However, their method does not assume a Byzantine failure model and instead focuses only on how to handle outliers.
In particular, it
does not effectively address failures in mixing weights. 
Additionally,  the method lacks a clear strategy for determining an appropriate trimming level, whereas the failure rate in the Byzantine failure model is typically unknown.
\citet{tian2023unsupervised} proposed a distributed Expectation-Maximisation (EM) framework, which considers scenarios where the local datasets on different machines may be generated from different mixing distributions instead of arising from the same model. 
This approach iteratively aggregates local EM updates using penalized least squares to encourage consensus. 
While robust to a limited fraction of Byzantine failure (less than $1/3$), the statistical rate of its aggregate estimator does not improve upon that of a single-machine estimator.
Moreover, its theoretical analysis is only applicable to special classes of mixtures such as Gaussian mixtures with known subpopulation covariances.


\section{Revisiting failure-free SC learning of finite mixture models}
\label{sec:preliminary_sc_gmm}
A finite mixture model is a special family of distributions defined formally as follows:
\begin{definition}[Finite mixture model]
\label{def:mixture_model}
Let $\gF = \{f(x; \theta): x \in \gX \subseteq\sR^{d}, \theta \in \Theta \subseteq \sR^{p}\}$ be density functions representing a classical parametric model, where $d$ and $p$ denote the dimensions of the observation and parameter space respectively. 
An order $K$ finite mixture of $\gF$ consists of mixture distributions, each with a density function given by:
\begin{equation*}
    f_{G}(x):= \int_{\Theta} f(x;\theta)\,dG(\theta) = \sum_{k=1}^K w_k f(x;\theta_k)
\end{equation*}
where the parameter $G=\sum_{k=1}^{K} w_k \delta_{\theta_k}$, called the mixing distribution, assigns probability $w_k$ to subpopulation parameter $\theta_k\in \Theta$, $k \in [K] =\{1,\ldots, K\}$. 
\end{definition}
Here, $f(x; \theta_k)$ represents subpopulation densities and mixing weights $(w_1,\ldots, w_K)^{\top}\in \Delta_{K-1}=\{(w_1,\ldots, w_K):w_k \geq 0, \sum_{k=1}^{K} w_k =1\}$. 
We use $\sG_{K}$ to denote the family of mixing distributions with at most $K$ distinct support points with $\sG_{0}=\emptyset$.
Note that for simplicity, we focus on distributions with densities with respect to Lebesgue measure, although both mixture models and our proposed method can be defined more generally.

\begin{remark}[The label switching problem and finite mixture model parameterisation]
\label{remark:parameterization}
One may consider using a vector such as $(w_1, w_2, \ldots, w_K, \theta_1, \theta_2, \ldots, \theta_K)^{\top}$ to parameterise the finite mixture, instead of a mixing distribution $G$. 
However, this parameterisation encounters the well-known label-switching problem, where the order of the parameters can be permuted without altering the mixture distribution. 
This leads to equivalent distributions that differ only in the ordering of their subpopulations. 
For example, $(w_K, w_{K-1}, \ldots, w_1, \theta_K, \theta_{K-1}, \ldots, \theta_1)^{\top}$ (and $K!$ such permutations in total) yields the same mixture, rendering the model non-identifiable. 
In contrast, parameterisation via the mixing distribution avoids this issue.
\end{remark}

\subsection{The mixture reduction estimator}
\label{subsec:MR_estimator}

Let $\gX = \{x_{ij}: i\in[m], j\in [n]\}$
\footnote{
For simplicity, we present the analysis assuming equal local sample sizes across machines. The algorithm itself accommodates heterogeneous sample sizes through the weights $\lambda_i = n_i / N$, where $n_i$ is the sample size on the $i$th machine. The same results extend to this more general setting if we assume a relatively balanced partition, i.e. that $\min_i n_i / \max_i n_i \geq c$ for some constant $c > 0$.} 
be $N=nm$ independently and identically distributed (IID) samples from a mixture distribution $f_{G^*}(x)$ for some $G^* \in \sG_{K} \backslash \sG_{K-1}$ with $K$ distinct subpopulations, $K \geq 2$.
Suppose $\gX$ is partitioned into $m$ subsets $\gX_1, \ldots, \gX_m$ completely at random, which are stored in a distributed manner on $m$ local machines, where $\gX_{i} = \{x_{ij}: j \in [n]\}$ for $i \in [m]$.
A general SC procedure for estimating $G^*$ consists of the following two steps:
\begin{itemize}[leftmargin=*]
\item
\emph{Local inference}: Obtain the local estimator $\widehat{G}_i = \sum_{k} \widehat{w}_{ik}\delta_{\widehat{\theta}_{ik}}$ by maximizing the log-likelihood function $\ell_i(G) = \sum_{j=1}^{n} \log f_G(x_{ij})$ or its penalized version in the case of Gaussian mixture~\citep{chen2009inference}, based on the data $\mathcal{X}_i$ on the $i$th machine. 
These local estimators are referred to as Maximum Likelihood Estimates (MLE) or penalized MLEs (pMLE) respectively. 
EM algorithms can be used for their numerical computations, see Appendix~\ref{app:em_algorithm}.

\item
\emph{Global aggregation}: Send $\{\widehat{G}_i: i \in [m]\}$ to a server and aggregate them to produce final estimator $\widehat G$.
The final estimator is
\be
\label{eq:sc_regular_aggregation}
\vspace{-13pt}
\widehat G^{\mathfrak{A}} = \mathfrak{A} (\{\widehat{G}_i: i \in [m]\} ),
\ee
where $\mathfrak{A}(\cdot)$ is some aggregation operator such as those in~\citet{zhang2022distributed}.
\end{itemize}

We now describe~\citet{zhang2022distributed}'s approach to performing aggregation via mixture reduction (MR).
Let the \emph{average mixing distribution} be
$
\widebar{G} = \sum_{i=1}^m \lambda_i \widehat{G}_i
$
where $\lambda_i = n/N$ is the proportion of data stored on $i$th machine.
The average mixing distribution assigns weight $\lambda_{i}\widehat{w}_{ik}$ at $\widehat{\theta}_{ik}$.
Because the local estimators we adopt are consistent, each $\widehat{G}_i$ is close to $G^*$, $\widebar{G}$ is also close to $G^*$.
However, $\widebar{G}$ potentially lies outside the parameter space $\sG_K$.
We hence seek a mixing distribution $G \in \sG_K$ that is closest to $\widebar{G}$, a task which can be framed as approximating a high-order mixture using a lower-order mixture.
This class of approximation problems has been previously studied in the literature, with the version pertaining to Gaussian mixtures known as Gaussian Mixture Reduction~\citep{crouse2011look}.

\cite{zhang2022distributed} proposed a novel approach to MR using a composite transportation divergence.
Specifically, let $c: \gF \times \gF \to \sR_{+}$ be a cost function, where we denote $c(\theta, \theta') = c(f(\cdot; \theta), f(\cdot; \theta') )$ for simplicity.
Let $G = \sum_{k=1}^{K} w_k \delta_{\theta_k}$ and  $G' = \sum_{k'=1}^{K'} w'_{k'} \delta_{\theta'_{k'}}$ be two mixing distributions with $K$ and $K'$ support points.
The composite transportation divergence between these mixing distributions is defined to be
\begin{equation}
\label{eq:CTD_def}
    T_{c}(G, G') = \min \left \{
    \sum_{k=1}
    ^{K}\sum_{k'=1}^{K'} \pi_{k, k'} c(\theta_k, \theta'_{k'}):~\sum_{k=1}^{K} \pi_{k,k'} = w'_{k'},  \sum_{k'=1}^{K'} \pi_{k,k'}= w_k, \pi_{k,k'} \geq 0
    \right \}.
\end{equation}
where the minimum is taking over all transportation plans $\pi$ that satisfies two sets of marginal constraints.
This divergence is the lowest cost of transporting subpopulations
of $G$ to those of $G'$~\citep{chen2017optimal,delon2020wasserstein,bing2022estimation,nguyen2013convergence}.
The MR estimator is then defined to be
\begin{equation}
\label{eq:GMR}
\widehat{G}^{\mr} =\argmin\left\{T_{c}(\widebar{G}, G):~G\in\sG_{K}\right\}.
\end{equation}
Note that while $\widehat G^{\mr}$ is ostensibly the solution to a complicated bilevel optimisation program, the program actually simplifies and has a clear interpretation, as discussed in~\citet{zhang2022distributed}.
More precisely, in \eqref{eq:CTD_def}, the column-wise marginal constraints on the transportation plan $\pi$ are redundant and the full mass of each support point $\theta_{ik}$ in $\widebar G$ is transported to its ``closest'' support point in $\widehat{G}^{\mr}$.
The optimal transportation plan can thus be interpreted as an optimal clustering of the subpopulation parameters $\braces*{\theta_{ik} \colon (i,k) \in [m] \times [K]}$ into $K$ clusters $\{\gC_{j}\}_{j=1}^{K}$, with the support points of $\widehat{G}^{\mr}$ forming the cluster ``barycentres''.
Mathematically, the simplified optimisation program can be written as:
\begin{equation} \label{eq:MR_clustering}
    \min \braces*{\sum_{j=1}^K \sum_{(i,k) \in \gC_j} w_{ik}c(\hat\theta_{ik},\theta_j) \colon \theta_1,\ldots,\theta_K \in \Theta, [m] \times [K] = \gC_1\sqcup\cdots\sqcup\gC_K }
\end{equation}
where $\sqcup$ denotes disjoint union.
This formulation shows that MR implicitly aligns local components through clustering while simultaneously aggregating them via barycentric averaging.

\subsection{Calculating the mixture reduction estimator}
The clustering interpretation of MR naturally leads to an efficient majorisation–minimisation (MM) algorithm for computation.
Each iteration alternates between assigning local subpopulations to the nearest tentative cluster centres and updating these centres by barycentric averaging.
Formally, let $G^\dagger = \sum_{j=1}^K w^\dagger_j \delta_{\theta_j^{\dagger}}$ denote an initial mixing distribution.
The MM algorithm then proceeds by repeating the following two steps until convergence:
\begin{itemize}[leftmargin=*]
  \item
        \underline{\emph{Majorisation step.}} 
        Partition the support points of $\widebar{G}$,
        $\{\widehat \theta_{i k}: i \in [m], k\in [K]\}$, into $K$ groups based on their proximity to subpopulation parameter $\{\theta_j^{\dagger}, j=1,2,\ldots,K\}$, of the current iterate.
        In other words, we assign the $k$th subpopulation on the $i$th machine to the $j$th cluster $\gC_j$ if $c(\widehat\theta_{ik},\theta_j^{\dagger})\leq c(\widehat\theta_{ik},\theta_{j'}^{\dagger})$ for any $j'$.
        We denote this as $(i,k)\in \gC_j$.
  \item
        \underline{\emph{Minimisation step.}} Update $\theta_j^{\dagger}$ by computing the barycentre of the subpopulation parameters in its cluster.
        Specifically, set $\theta_j^{\dagger} = \argmin_{\theta\in\Theta} \sum_{\{(i,k) \in \gC_j\}} c(\widehat\theta_{ik}, \theta)$.
        We also update the mixing weight via $w^\dagger_j=\sum_{\{(i,k) \in \gC_j\}} \lambda_i \widehat w_{ik}$.
\end{itemize}
\citet{zhang2024gaussian} showed that the majorisation and minimisation steps can be interpreted as the assignment and update steps, respectively, in $K$-means clustering within the space $\gF$.
In particular, when $c(\theta,\theta')=\|\theta-\theta'\|^2$, the algorithm reduces to the Lloyd's algorithm in the Euclidean space $\Theta$.
Similar to $K$-means, one can show that the objective function, $G \mapsto T_c(\widebar{G},G)$, is monotone in the MM iterates.
Therefore, the MM algorithm is guaranteed to converge to a local minimum.
On the other hand, the optimisation problem in~\eqref{eq:GMR} is non-convex and, as is the case with similar schemes such as $K$-means and the EM algorithm, it is difficult to prove the convergence of the MM algorithm to a global minimum.
Nonetheless, they observed empirically that the output from the MM algorithm often performs as well as the global estimator, especially when the algorithm is run several times with different initial values.

The pseudocode and the closed-form formula for the update when $c(\cdot,\cdot)$ is chosen to be KL divergence are given in Appendix~\ref{app:mm_algorithm}.

\subsection{Theoretical guarantees for mixture reduction estimator}
\citet{zhang2022distributed} established a $\sqrt{N}$-rate of convergence of the MR estimator under finite Gaussian mixture models when the number of machines $m$ is a constant. 
In this section, we greatly improve and generalize their analysis by (i) allowing $m$ to grow with $N$, (ii) analysing generic finite mixture models, (iii) obtaining the asymptotic distribution of the MR estimator.

\subsubsection{Notation}
\label{sec:notation}
We use the following notation. 
Let $x$ be a vector in $\mathbb{R}^{d}$; $\|x\|_1$ and $\|x\|_2$ denote the $L^1$ and $L^2$ norms of $x$, respectively. 
For a $d \times d$ matrix $H$, we write $H \succeq 0$ to indicate that $H$ is positive semidefinite. 
Similarly, $\preceq 0$, $\succ 0$, and $\prec 0$ denote negative semidefinite, positive definite, and negative definite matrices, respectively. 
Let $\lambda_{\min}(H)$ and $\lambda_{\max}(H)$ denote the minimum and maximum eigenvalues of $H$. 
The operator norm of $H$ is defined as $\matrixnorm{H} = \sup_{u \in \mathbb{R}^{d}: \|u\|_2 \leq 1} \|Hu\|_2 = \lambda_{\max}(H)$. 

For two sequences of positive numbers $a_n$ and $b_n$, we write $a_n = O(b_n)$ (or {\color{Melon}{$a_n = \Omega(b_n)$}}) if there exists a constant $C > 0$, independent of $n$, such that $a_n \leq C b_n$ (or {\color{Melon}{$a_n \geq C b_n$}}) for all $n$. 
We write $a_n = \Theta(b_n)$ if $a_n = O(b_n)$ and $a_n = \Omega(b_n)$. 
$\widetilde{O}$ and $\widetilde{\Omega}$ have the same meaning as $O$ and $\Omega$ respectively, except they omit logarithmic factors.
For a sequence of random variables $X_n$ and a sequence of numbers $a_n$, we write $X_n = O_{P}(a_n)$ (or {\color{Melon}{$X_n = \Omega_{P}(a_n)$}}) if, for any $\epsilon > 0$, there exist $n_{\epsilon}$ and $C_{\epsilon}$ such that $\mathbb{P}(|X_n/a_n| \leq C_{\epsilon} ) \geq 1 - \epsilon$ (or {\color{Melon}{$\mathbb{P}(|X_n/a_n| \geq C_{\epsilon} ) \geq 1 - \epsilon$}}) for all $n \geq n_{\epsilon}$. 
We write $X_n = \Theta_{P}(a_n)$ if $X_n = O_{P}(a_n)$ and $X_n = \Omega_{P}(a_n)$.
We write $X_n = o_{P}(a_n)$ if for any $\epsilon > 0$, $\lim_{n\to\infty} \sP(|X_n-a_n|>\epsilon) = 0$.

For a parametric density function $f(x;\theta)$, the gradient with respect to the parameters is denoted by $\nabla f(x;\theta') = \partial f(x;\theta)/\partial \theta|_{\theta=\theta'}$.
We use $\gN(\mu, \Sigma)$ to denote a Gaussian (normal) distribution with mean $\mu$ and covariance $\Sigma$.
Let $G = \sum_{k=1}^{K} w_k \delta_{\theta_k}$ represent a mixing distribution of order $K$, where $w_k$ are the mixing weights and $\theta_k$ are the subpopulation parameters. 
The true mixing distribution is denoted by $G^* = \sum_{k=1}^{K} w_k^* \delta_{\theta_k^*}$. 
Under a \emph{fixed arbitrary ordering}, the vectorized representation of $G^*$ is given by $\mG^* = (w_1^*, \ldots, w_K^*, \theta_1^*, \ldots, \theta_K^*)$. 
Let $\sigma: [K] \to [K]$ denote a permutation of the subpopulation orders. 
For any other mixing distribution $G$, define
\begin{equation} 
\label{eq:aligned_component_def}
\mG = \argmin \left\{ \| \vv_{\sigma} - \mG^* \|_2 : \vv_{\sigma} = (w_{\sigma(1)}, \ldots, w_{\sigma(K)}, \theta_{\sigma(1)}, \ldots, \theta_{\sigma(K)}) \right\},
\end{equation}
where the minimisation is taken over all possible permutations $\sigma \in S_K$.

The log-likelihood contribution of a single observation $X$ and its expectation under the true distribution are denoted by $\ell(\mG; X) = \log f_{\mG}(X)$ and $L(\mG) = \mathbb{E}_{\mG^*}\{\ell(\mG; X)\}$, respectively. 
We omit the subscript in the expectation when it is taken with respect to the true distribution.

\subsubsection{Assumptions}
\begin{assumption}[Parameters]
\label{assumption:parameter-space}
The parameter space $\Theta \subset \R^d$ of the subpopulation distribution family $\gF$ is compact and convex. 
The true mixing distribution is $G^* = \sum_{k=1}^{K} w^*_k \delta_{\theta^*_k}$ with distinct $\theta_k^*$ and mixing proportions $w^*_k \in (0, 1)$.
\end{assumption}

\begin{remark}
The conditions on the parameter space is standard. 
See for example~\citet{nguyen2013convergence},~\citet{ho2016strong}, and~\citet{cai2019chime}.
Under Gaussian mixtures,~\citet{cai2019chime} uses $\Theta = \{\mu, \Sigma: \|\mu\|_2\leq M_1, \|\Sigma\|_2\leq M_2\}$ for some constant $M_1$, $M_2>0$.
\end{remark}

\begin{assumption}[Identifiability]
\label{assumption:identifiability}
The mixture model on $\gF$ is identifiable; that is, for any $G, G' \in \sG_K$, $f_{G}(x) = f_{G'}(x)$ for all $x$ implies $G = G'$.
\end{assumption}
\begin{remark}
Note that this assumption implies that $G^*$ is well-defined and is the unique maximizer of the population log-likelihood (see Lemma~\ref{lemma:population-likelihood-global-maxima}).
Commonly used mixture models, such as Poisson, Gaussian, and Gamma mixtures, are identifiable. 
In contrast, mixtures of Beta or Binomial distributions are generally non-identifiable. 
For further details, see~\citet[Section 1.4]{chen2023statistical}.
\end{remark}

\begin{assumption}[Local strong concavity]
\label{assumption:strong-convexity}
The population log-likelihood $L(\mG)$ is twice differentiable, and the Hessian matrix $\nabla^2 L(\mG^*)$ is finite and negative definite, \emph{i.e.}, there exists a constant $\lambda > 0$ such that $\nabla^2 L(\mG^*) \preceq -\lambda I$.
\end{assumption}

\begin{assumption}[Smoothness]
\label{assumption:smoothness}
There exist finite constants $\delta$, $H$, and a function $W(x)$ with $\sE[|W(X)|^8]$ finite such that for all $\mG, \mG'$ within an $\delta$-neighbourhood of $\mG^*$ (in the Euclidean distance), we have $\sE\{\|\nabla \ell(\mG; X)\|^8\} \leq H$ and $\sE\{\matrixnorm{\nabla^2 \ell(\mG; X) - \nabla^2 L(\mG)}^8\} \leq H$, and the Hessian matrix is $W(x)$-Lipschitz continuous with respect to $\mG$, \ie $\matrixnorm{\nabla^2 \ell(\mG'; x) - \nabla^2 \ell(\mG; x)} \leq W(x) \|\mG' - \mG\|_2$.
\end{assumption}

\begin{assumption}[Cost function]
\label{assump:cost_bregman_divergence}
The cost function in the composite transportation divergence~\eqref{eq:CTD_def} is a reversed Bregman divergence\footnote{$D_{A}(\theta,\theta')$represents the Bregman divergence, and the arguments are reversed in our definition of the cost function.} induced by a strictly convex function $A(\cdot)$, that is~$c(\theta',\theta) = D_{A}(\theta,\theta') = A(\theta) - A(\theta') - (\theta - \theta')^\top \nabla A(\theta')$.
Furthermore, for each component parameter $\theta_k^*$, $k\in[K]$, there is a neighbourhood $B_{\delta}(\theta_k^*)$  s.t. for $\theta, \theta' \in B_{\delta}(\theta_k^*)$, we have
\begin{enumerate}[label=(\alph*)]
    \item (Strong convexity and smoothness) $\eta_{-}\mI\leq \nabla^2 A(\theta) \leq \eta_{+}\mI$ for some $\eta_+ \geq \eta_- > 0$;
    \item (Local Lipschitz) $\matrixnorm{\nabla^2 A(\theta) - \nabla^2 A(\theta')}\leq \widetilde{A}\|\theta-\theta'\|$ for some $\widetilde{A} >0$\footnote{We may assume the same $\delta$, $\eta_-$, and $\eta_+$ apply for each $k$ since the order of the mixture $K$ is finite.}. 
\end{enumerate}
\end{assumption}

\begin{remark}
Examples of Bregman divergence include Euclidean distance, Mahalanobis distance, and the Kullback-Leibler (KL) divergence for exponential families. 
All of them satisfy the required assumptions. 
Further details and derivations are provided in Appendix~\ref{sec:bregman_divergence_appendix}.
\end{remark}

\subsubsection{Rate of convergence and asymptotic normality}
We first establish useful properties of the local estimator.
\begin{lemma}[Properties of local MLE]
\label{lemma:local_mle_property}
Let $\{X_1,\ldots,X_n\}$ be $n$ IID observations from a finite mixing distribution $G^*$ of known order $K$.
Under Assumptions~\ref{assumption:parameter-space}--\ref{assumption:smoothness}, the MLE $\widehat{G} = \sum_{k=1}^{K} \widehat{w}_k \delta_{\widehat{\theta}_k}$ based on this sample is unique and satisfies the following:
\begin{enumerate}[label=(\alph*)]
\item (Berry-Esseen CLT) $\sup_{A \in \mathscr{A}} \left| \sP\left( \sqrt{n} \, \mI^{1/2}(\mG^*)(\widehat{\mG} - \mG^*) \in A \right) - \sP(Z \in A) \right| =O(n^{-1/2})$
where $Z$ is a standard Gaussian random vector of dimension $pK + (K-1)$, $\mI(\mG^*)$ is the Fisher information matrix at $\mG^*$, and $\mathscr{A}$ is the collection of all convex sets in $\R^{pK + (K-1)}$;
\item (Bias bound) $\|\sE \{\widehat{\mG} - \mG^*\}\| = O(n^{-1})$;
\item (Moment bounds) $\sE\{\|\widehat{\mG} - \mG^*\|^{q}\} = O(n^{-q/2})$ for $1\leq q \leq 8$.
\end{enumerate}
\end{lemma}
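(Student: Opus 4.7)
The plan is to reduce all three statements to the standard asymptotic theory of MLEs by first eliminating the label-switching nuisance. By Assumption~\ref{assumption:identifiability} and the compactness/convexity of $\Theta$, the population log-likelihood $L(\mG)$ is uniquely maximized at $\mG^*$ (up to relabeling), and standard uniform-convergence / Wald-type arguments give consistency: there exists a sequence of MLEs $\widehat{\mG}$ such that $\widehat{\mG}\to\mG^*$ in probability after alignment as in~\eqref{eq:aligned_component_def}. Because the support points $\theta_k^*$ are distinct and $w_k^*>0$, there is a deterministic radius $\delta_0>0$ such that on the event $\{\|\widehat{\mG}-\mG^*\|\le \delta_0\}$ the minimizing permutation in~\eqref{eq:aligned_component_def} is unique and continuous in $\widehat{\mG}$. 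On that event the aligned MLE can be treated as an ordinary Euclidean parameter estimate, and by Assumption~\ref{assumption:strong-convexity} it is the unique root of the score equation $\nabla\ell_n(\widehat{\mG})=0$ inside the $\delta_0$-ball. All bounds below are derived on this high-probability event, and the complementary event is absorbed into the stated rates by showing (via a standard exponential bound combined with Assumption~\ref{assumption:smoothness}) that it has probability $o(n^{-q})$ for any fixed $q$.

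For (a), I would Taylor expand the score equation around $\mG^*$:
\begin{equation*}
0 \;=\; \tfrac{1}{n}\nabla\ell_n(\widehat{\mG}) \;=\; \tfrac{1}{n}\nabla\ell_n(\mG^*) \;+\; \bigl[\tfrac{1}{n}\nabla^2\ell_n(\widetilde{\mG})\bigr](\widehat{\mG}-\mG^*),
\end{equation*}
for some $\widetilde{\mG}$ on the segment between $\widehat{\mG}$ and $\mG^*$. The Lipschitz-Hessian part of Assumption~\ref{assumption:smoothness} together with the moment bound on $\nabla^2\ell$ shows that $n^{-1}\nabla^2\ell_n(\widetilde{\mG})=-\mI(\mG^*)+R_n$ with $\|R_n\|=O_P(n^{-1/2})$ (using also $\sE|W(X)|^8<\infty$). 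Inverting and using the strong concavity guaranteed by Assumption~\ref{assumption:strong-convexity}, I obtain the Bahadur representation
\begin{equation*}
\sqrt{n}\,\mI^{1/2}(\mG^*)(\widehat{\mG}-\mG^*) \;=\; \mI^{-1/2}(\mG^*)\,\tfrac{1}{\sqrt{n}}\sum_{j=1}^n\nabla\ell(\mG^*;X_j) \;+\; \Delta_n,
\end{equation*}
with $\|\Delta_n\|=O_P(n^{-1/2})$ in a way that concentrates exponentially because $\sE\|\nabla\ell\|^8<\infty$. The leading term is a standardized i.i.d.\ sum whose summands have finite 8th moment; applying the Bentkus multivariate Berry--Esseen bound for convex sets to it gives the $O(n^{-1/2})$ uniform bound, and a separate Lipschitz-on-convex-sets argument (the Gaussian measure of an $\varepsilon$-enlargement of $\partial A$ is $O(\varepsilon)$ uniformly in convex $A$) shows that the remainder $\Delta_n$ contributes only another $O(n^{-1/2})$, yielding (a).

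For (c), the same expansion gives $\|\widehat{\mG}-\mG^*\| \le C\,\bigl\|n^{-1}\sum_{j}\nabla\ell(\mG^*;X_j)\bigr\| + C\,\|\widehat{\mG}-\mG^*\|^2$ on the $\delta_0$-event, so for any $1\le q\le 8$ Rosenthal's inequality applied to the zero-mean IID sum of scores (which has finite 8th moment) gives $\sE\|\widehat{\mG}-\mG^*\|^q\,\mathbf{1}_{\{\|\widehat{\mG}-\mG^*\|\le\delta_0\}}=O(n^{-q/2})$; the complementary event is negligible since the parameter space is compact so $\|\widehat{\mG}-\mG^*\|$ is bounded and the event itself has exponentially small probability. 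For (b), push the expansion one order further to get $\widehat{\mG}-\mG^* = \mI^{-1}(\mG^*)\,\bar U_n + Q_n(\bar U_n)+\widetilde\Delta_n$, where $\bar U_n = n^{-1}\sum_j\nabla\ell(\mG^*;X_j)$ has mean zero, $Q_n$ is a quadratic form in $\bar U_n$ and in centered Hessian fluctuations, and $\|\widetilde\Delta_n\|=O_P(n^{-3/2})$. Taking expectation, the linear term vanishes and $\sE\|Q_n(\bar U_n)\|=O(n^{-1})$ by the variance bound and the moment bounds just established; the remainder's contribution is $O(n^{-3/2})+O(n^{-q})$ from the truncation, giving (b).

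The main obstacle is controlling the interplay between label-switching and the local expansion: I need to verify once-and-for-all that on the consistency event the aligned MLE genuinely behaves as a smooth functional of the data (so that the score, Hessian, and remainder moment computations go through without the permutation introducing discontinuities), and that the exponentially small probability of leaving the $\delta_0$-neighbourhood is small enough to be absorbed into the $O(n^{-1/2})$ Berry--Esseen rate; the rest is a careful but standard bookkeeping of Taylor remainders using Assumptions~\ref{assumption:strong-convexity} and~\ref{assumption:smoothness}.
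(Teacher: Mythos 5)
Your route is self-contained where the paper's is not: the paper proves (a) by verifying the hypotheses of Shao and Zhang's Berry--Esseen theorem for $M$-estimators, proves (b) and (c) by invoking Huang and Huo's moment lemmas for $M$-estimators, and only supplies from scratch a Wald-type consistency argument (compactness, finite covering, SLLN, identifiability) to patch a gap in the latter reference --- that consistency step is essentially identical to the first paragraph of your plan. Re-deriving the cited results is legitimate in principle, but two steps in your sketch do not go through as written.

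First, you repeatedly assert exponential concentration: that $\{\|\widehat{\mG}-\mG^*\|>\delta_0\}$ has probability $o(n^{-q})$ for every $q$, and that $\Delta_n$ ``concentrates exponentially because $\sE\|\nabla\ell\|^8<\infty$''. Eighth moments do not yield exponential tails. The best one can extract from Assumption~\ref{assumption:smoothness} via Rosenthal/Markov and a quantitative localization (peeling) argument is an escape probability of polynomial order, roughly $O(n^{-4})$. That happens to suffice for (b) and (c), since $\|\widehat{\mG}-\mG^*\|$ is bounded on the compact $\Theta$ and $q\le 8$, so your plan for those parts survives --- but the justification must be a quantitative localization argument in the style of the cited moment lemmas, not the qualitative consistency statement $\sP(\|\widehat{\mG}-\mG^*\|<\delta_0)\to 1$, which contributes only $o(1)$ to the truncated moments and is useless for the stated rates.

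Second, and more seriously, part (a) has a real gap. After the Bahadur representation $\sqrt{n}\,\mI^{1/2}(\mG^*)(\widehat{\mG}-\mG^*)=S_n+\Delta_n$, the enlargement argument you invoke bounds the convex-set distance by $O(n^{-1/2})+C\sqrt{d}\,\epsilon_n+\sP(\|\Delta_n\|>\epsilon_n)$, so you need simultaneously $\epsilon_n=O(n^{-1/2})$ and $\sP(\|\Delta_n\|>\epsilon_n)=O(n^{-1/2})$. But $\Delta_n$ is essentially a product of two $O_P(n^{-1/2})$ quantities (the centred empirical Hessian evaluated near $\mG^*$ and $\sqrt{n}(\widehat{\mG}-\mG^*)$), and under only eighth-moment control no Markov-type splitting of $\sP(XY>Cn^{-1/2})$ delivers an $O(n^{-1/2})$ tail at the threshold $n^{-1/2}$; the naive computation yields a Berry--Esseen rate strictly worse than $n^{-1/2}$. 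Overcoming exactly this obstacle is the content of Shao and Zhang's randomized concentration inequality, which is why the paper cites their theorem (after checking that its conditions reduce to Assumptions~\ref{assumption:strong-convexity} and~\ref{assumption:smoothness}) rather than running the classical expansion. You should either do the same or supply a remainder bound of comparable strength; the ``Lipschitz-on-convex-sets'' step alone does not close this.
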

The proof of the Lemma is deferred to Appendix~\ref{sec:local_mle_properties}.
Note that $\widehat\mG$ is an oracle quantity since it uses the true parameter vector $\mG^*$ to align its components (see \eqref{eq:aligned_component_def}).
Indeed, if $m$ such aligned local estimates $\widehat\mG_1,\widehat\mG_2,\ldots,\widehat\mG_m$ were known, aggregation would be trivial via simple averaging: $\widebar{\mG} = m^{-1}\sum_{i=1}^m \widehat \mG_i$.
Combining a bias-variance decomposition of the mean squared error (MSE) of the simple averaging estimator together with Lemma~\ref{lemma:local_mle_property}(b) and (c) would then immediately yield a $\sqrt{N}$-rate of convergence. 
Similar calculations were performed by~\citet{zhang2015divide} and~\citet{huang2019distributed} in deriving convergence rates for SC learning of $M$-estimators for Euclidean parameters via simple averaging. 
Since we cannot practically compute these oracle quantities, we need to consider the more complicated MR estimator, whose statistical properties are detailed in the following theorem.

\begin{theorem}
\label{thm:oracle_rate}
Under assumptions~\ref{assumption:parameter-space}--\ref{assump:cost_bregman_divergence}, then the MR estimator in~\eqref{eq:GMR} satisfies
\begin{enumerate}[label=(\alph*)]
    \item (Convergence rate) $\|\widehat \mG^{\mr} - \mG^*\| = O_{P}(N^{-1/2})$ when $n \geq m$;
    \item (Asymptotic normality) $\sqrt{N}(\widehat \mG^{\mr} - \mG^*) \to \gN(0, \mI^{-1}(\mG^*))$ when $m = o(n)$, where $\mI(\mG^*)$ is the Fisher information matrix at the true parameter value.
\end{enumerate}
\end{theorem}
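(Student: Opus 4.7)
The plan is to prove Theorem~\ref{thm:oracle_rate} by reducing the MR estimator to a simple average of aligned local MLEs up to a negligible remainder, then applying Lemma~\ref{lemma:local_mle_property}. The argument splits naturally into three stages: (i) with high probability the MR clustering recovers the correct matching between local subpopulation estimates and true components; (ii) on this event the MR output equals an oracle aligned simple average up to a higher-order correction; and (iii) a standard bias--variance decomposition plus CLT on the oracle average delivers both the rate and the limiting distribution.

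For stage (i), set $2\Delta := \min_{j\neq j'}\|\theta^*_j-\theta^*_{j'}\|>0$, which is strictly positive by Assumption~\ref{assumption:parameter-space}. Applying Markov's inequality to the $q=8$ moment bound in Lemma~\ref{lemma:local_mle_property}(c) and a union bound over the $mK$ local subpopulation estimates yields
\begin{equation*}
\sP\!\left(\max_{i\in[m],\,k\in[K]}\|\widehat\theta_{ik}-\theta^*_{\sigma_i(k)}\|>\Delta/4\right)=O\!\left(mK\,\Delta^{-8}n^{-4}\right)\to 0,
\end{equation*}
which vanishes whenever $m=o(n^4)$, and in particular whenever $n\geq m$. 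On this high-probability event, Assumption~\ref{assump:cost_bregman_divergence}(a) sandwiches $c(\cdot,\cdot)$ between constant multiples of squared Euclidean distance in each $B_\delta(\theta_k^*)$, so every $\widehat\theta_{ik}$ is strictly cheapest (under $c$) to match to $\theta^*_{\sigma_i(k)}$. Using the clustering characterisation~\eqref{eq:MR_clustering}, a short exchange argument shows that reassigning any single $(i_0,k_0)$ across clusters inflates the objective by at least $\Omega(\Delta^2/m)$ even after re-optimising the barycentres, whereas the correctly-aligned objective is only $O(n^{-1})$; under $n\geq m$ and $\Delta$ constant the misassignment penalty dominates, forcing the MR global minimiser to cluster $\{\widehat\theta_{ik}\}$ exactly by their true subpopulations.

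For stage (ii), relabel the aligned local estimates on this event as $\{(\widetilde w_{ij},\widetilde\theta_{ij})\}$, with $\widetilde\theta_{ij}$ estimating $\theta_j^*$. The first-order optimality condition for the reversed Bregman barycentre reads $\nabla A(\widehat\theta_j^{\mr})=\bigl(\sum_i\widetilde w_{ij}\nabla A(\widetilde\theta_{ij})\bigr)/\bigl(\sum_i\widetilde w_{ij}\bigr)$, while $\widehat w_j^{\mr}=m^{-1}\sum_i\widetilde w_{ij}$. A second-order Taylor expansion of $\nabla A$ about $\theta_j^*$, valid by the Lipschitz Hessian of Assumption~\ref{assump:cost_bregman_divergence}(b), followed by replacing the self-normalised weights by $1/m$, gives
\begin{equation*}
\widehat\theta_j^{\mr}=\theta_j^*+\frac{1}{m}\sum_{i=1}^m(\widetilde\theta_{ij}-\theta_j^*)+R_j,\qquad \|R_j\|=O_P(n^{-1}),
\end{equation*}
where the remainder absorbs both the Taylor error of order $\max_i\|\widetilde\theta_{ij}-\theta_j^*\|^2$ and the reweighting correction, each $O_P(n^{-1})$ by Lemma~\ref{lemma:local_mle_property}(c) and a further union bound. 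An analogous (and simpler) identity holds for $\widehat w_j^{\mr}$. Hence, coordinatewise, $\widehat{\mG}^{\mr}$ equals the oracle simple average $m^{-1}\sum_i\widetilde{\mG}_i$ up to an $O_P(n^{-1})$ remainder.

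Stage (iii) is the classical split-and-conquer analysis. By Lemma~\ref{lemma:local_mle_property}(b)--(c), the oracle average has bias $O(n^{-1})$ and coordinate variance $\Theta(N^{-1})$, producing rate $O_P(N^{-1/2}+n^{-1})$ which collapses to $O_P(N^{-1/2})$ under $n\geq m$ and establishes part (a). For part (b), when $m=o(n)$ both the $O(n^{-1})$ bias and the MR remainder are $o(N^{-1/2})$, so $\sqrt{N}(\widehat{\mG}^{\mr}-\mG^*)$ is asymptotically equivalent to $m^{-1/2}\sum_i\sqrt{n}(\widetilde{\mG}_i-\mG^*)$; the per-machine Berry--Esseen bound of Lemma~\ref{lemma:local_mle_property}(a), combined with a CLT across the $m$ independent machines (made rigorous by the uniform eighth-moment control), then yields the $\gN(0,\mI^{-1}(\mG^*))$ limit. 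The main technical obstacle is the exchange argument closing stage (i): one must quantitatively verify that the between-cluster cost gap of order $\Delta^2$ truly dominates the within-cluster barycentre improvement over the full regime $n\geq m$, which is where the eighth-moment hypothesis in Assumption~\ref{assumption:smoothness} is consumed by the union bound; everything downstream is a careful but essentially mechanical bias--variance calculation supported by the moment bounds already established.
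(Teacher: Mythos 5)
Your overall architecture --- (i) show the MR clustering aligns components correctly with high probability, (ii) reduce the MR output to an aligned barycentre/simple average plus an $O_P(n^{-1})$ remainder via the first-order condition for the Bregman barycentre, (iii) finish with a bias--variance decomposition and a CLT across machines --- is exactly the paper's strategy (Lemma~\ref{lem:barycentre_equivalence} followed by Steps 1--5 of the proof in Appendix~\ref{app:mr_rate_proof}). However, two of your steps have concrete gaps. First, your exchange argument for stage (i) compares the wrong quantities: you pit the $\Omega(\Delta^2/m)$ penalty of misassigning a single support point against the \emph{total} correctly-aligned objective, which is $O_P(n^{-1})$. At $n\asymp m$ both are of order $1/n$ and dominance is a question of unspecified constants, so the argument does not close over the full regime $n\geq m$; moreover, "even after re-optimising the barycentres" silently assumes the competing clustering's centres remain locatable near true components, which is precisely what must be proved (a degenerate clustering can place centres anywhere). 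The paper avoids both issues: Lemma~\ref{lem:barycentre_equivalence} first localizes every MR centre near a distinct true component by sandwiching $T_c(\widebar G,\widehat G^{\mr})$ between an upper bound (using $G^*$ as a feasible competitor) and a lower bound that blows up if some $\theta_k^*$ has no nearby centre, and then exploits the fact that the column marginal constraints in \eqref{eq:CTD_def} are inactive, so each support point is transported wholly to its \emph{nearest} centre --- a pointwise comparison $c(\widehat\theta_{ik},\theta_k^{\mr})<c(\widehat\theta_{ik},\theta_l^{\mr})$ with no $1/m$ weighting and hence no $m$-versus-$n$ tradeoff at all; the condition $n\geq m$ is consumed only by the bias term in stage (iii).

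Second, in stage (ii) you bound the Taylor remainder by "$\max_i\|\widetilde\theta_{ij}-\theta_j^*\|^2$ \ldots\ each $O_P(n^{-1})$ by Lemma~\ref{lemma:local_mle_property}(c) and a further union bound." With only eighth moments, the union bound gives $\max_i\|\widetilde\theta_{ij}-\theta_j^*\|=O_P(m^{1/8}n^{-1/2})$, so the maximum of the squared deviations is $O_P(m^{1/4}n^{-1})$, which is $o(N^{-1/2})$ only when $m\ll n^{2/3}$ --- not over the whole range $m\leq n$ claimed in part (a). The fix, and what the paper actually does with its terms $T_1,T_2,T_3$, is to bound the remainder by the \emph{average} $\tfrac1m\sum_i U_i$ with $\|U_i\|\lesssim\|\widehat\theta_{ik}-\theta_k^*\|^2$, and then apply a bias--variance decomposition using $\|\sE U_i\|=O(n^{-1})$ and $\sE\|U_i\|^2=O(n^{-2})$, yielding $O_P(n^{-1})$ without any maximum. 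Your stage (iii) is correct and matches the paper's Steps 2 and 5.
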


The proof details are deferred to Appendix~\ref{app:mr_rate_proof}, but we sketch the main ideas here.
Since each local estimator is consistent, we can formalize our discussion in Section \ref{subsec:MR_estimator} and prove that with high probability, the MR estimator indeed aligns the components of different local estimates and aggregates the aligned components via their barycentre (Lemma~\ref{lem:barycentre_equivalence}).
In other words, we have
\begin{equation}
\widehat w_k^{\mr} 
= 
m^{-1}\sum_{i=1}^m \widehat{w}_{ik},\quad\quad
\widehat \theta_k^{\mr} 
= 
\argmin_{\theta} \sum_{i=1}^m \widehat{w}_{ik} c(\widehat \theta_{ik},\theta),
\end{equation}
for $k\in[K]$.
As simple averages, the mixing weight estimators can be easily analysed.
Meanwhile, each subpopulation parameter can be interpreted as an $M$-estimator and unsurprisingly has an asymptotically linear expansion.
Further decomposing this expansion into bias and variance, we see that the bias terms are of $O(n^{-1})$ order due to Lemma \ref{lemma:local_mle_property} (b) and (c), while the variance term is asymptotically equivalent to that of the simple average estimator $m^{-1}\sum_{i=1}^m (\widehat \theta_{ik}-\sE\widehat\theta_{ik})$. 
Under $n \geq m$, this thus yields the optimal $O_{P}(N^{-1/2})$ rate of convergence.
Furthermore, for $m=o(n)$, we have $O(n^{-1}) = o(N^{-1/2})$ and the variance term dominates, yielding asymptotic normality.

Both conclusions in Theorem~\ref{thm:oracle_rate} mirror similar results for SC learning of $M$-estimators for Euclidean parameters via averaging~\citep{zhang2015divide,huang2019distributed}, showing that despite prior challenges, finite mixture models are not any harder to learn, at least from an asymptotic statistical perspective.
Note also that Theorem~\ref{thm:oracle_rate} (b) shows that the MR estimator is asymptotically equivalent to the global MLE when $m=o(n)$.
Therefore, the SC approach incurs no efficiency loss in this case.
Outside of this scaling regime, the SC approach incurs some degree of additional bias.

\section{Distance-filtered reduction under Byzantine failure}
\subsection{Formal definition of Byzantine failure}
\label{sec:byzantine_definition}

Byzantine failure is a well-recognized challenge in distributed learning~\citep{lamport1982byzantine}, which occurs when a subset of local machines within the system transmit erroneous or even malicious data to the central server. 
Such errors can arise from various sources, including hardware or software glitches, data corruption, or communication interruptions.
We formally define Byzantine failure model below.

\begin{definition}[Byzantine failure]
Let $\sB \subset [m]$ be a fixed, unknown subset of the local machines, called the Byzantine failure set.
Let $\xi_1,\xi_2,\ldots,\xi_m$ be fixed, arbitrary element of $\mathbb{G}_K$.
The central sever receives transmitted estimates
\begin{equation*}
    \widetilde{G}_i =
    \begin{cases}
        \widehat{G}_i, & \text{ when } i \notin \sB \\
        \xi_i,             & \text{ when } i \in \sB.
    \end{cases}
\end{equation*}
\end{definition}
We focus on scenarios where \emph{the majority of local estimates} are transmitted error-free; that is, we assume the size of the Byzantine failure set satisfies $|\sB| < 0.5 m$.
\footnote{The majority rule, which assumes that the majority of local estimates are failure-free, is a fundamental concept in distributed systems. 
Indeed, it is easy to see that it is impossible to distinguish between failure and failure-free machines when this rule does not hold.
This majority rule has been implemented in several well-known distributed systems. 
For instance, in decentralized networks like Bitcoin and Ethereum, the majority rule plays a critical role in maintaining the integrity of the blockchain~\citep{nakamoto2008bitcoin}. 
Similarly, in distributed databases such as Google's Chubby and Apache Cassandra~\citep{lakshman2010cassandra}, protocols are used to ensure that a majority of nodes agree on updates to the system's state.}
Our objective is to produce a final estimate $\widehat{G}$, based on local estimates transmitted $\{\widetilde{G}_i: i \in [m]\}$, that remains both statistical reliable and computationally efficient. 
In the rest of this paper, we will refer to the elements of $\braces{\widetilde G_i \colon i \in \sB}$ (respectively $\braces{\widetilde G_i \colon i \notin \sB}$) as the set of failed local estimates (respectively failure-free local estimates).

It is insightful to compare and contrast the Byzantine failure model with other models for data corruption studied in robust statistics.
To perform this comparison, we consider an abstract setting in which we observe a sample $Y_1,Y_2,\ldots,Y_m$.
The Byzantine failure model replaces an $\alpha$ fraction of these observations with arbitrary values.
Meanwhile, perhaps the most widely used data corruption model is the Huber contamination model, under which we have $Y_i \overset{\text{i.i.d.}}{\sim} (1-\alpha)F + \alpha B$ for $i=1,2,\ldots,m$, where $F$ is the distribution of interest, and $B$ is an arbitrary distribution \citep{huber1992robust,huber2008progressive}.
This is weaker than (i.e. is a special case of) the Byzantine failure model, since under the latter, the corrupted observations can be arbitrary instead of being drawn from a fixed distribution $B$.
On the other hand, the Byzantine failure model is weaker than the adversarial attack model, in which a malicious adversary is allowed to alter an $\alpha$ fraction of the sample $Y_1,Y_2,\ldots,Y_m$ arbitrarily \emph{after} having seen their values---in comparison, the Byzantine failure corruptions are constrained to be independent of the original values of $Y_1,Y_2,\ldots,Y_m$.
The adversarial attack model has recently become a topic of keen interest in the theoretical computer science community \citep{diakonikolas2016robust,diakonikolas2017being,diakonikolas2019recent,hopkins2019hard,diakonikolas2020outlier,cheng2019high,cheng2020high,lugosi2021robust}.

These differences of formulation stem from differences in motivation.
The Huber contamination model was classically motivated by the existence of outliers, arising from natural variations, data collection errors, or model misspecification.
On the other hand, Byzantine failure can be the result of malicious or intentional attempts to disrupt the system and hence may not have a well-defined distribution.
Yet, because of the distributed nature of the data, it is difficult for an attacker to intercept and observe all transmitted local estimates, which is the basis for a truly adversarial attack.
Moreover, we note that the objects of disruption differ between traditional settings for robust statistics (which concern observed data) and the Byzantine failure model (which pertain to parameter estimates rather than the raw observations themselves).

Due to the differences mentioned above, specialized tools are needed, particularly in non-Euclidean parameter spaces such as those of mixture models.
Lastly, we remark that it is easy to see that vanilla mixture reduction is not Byzantine-tolerant.
Indeed, if any subpopulation estimate suffers Byzantine failure, then according to the clustering interpretation of MR~\eqref{eq:MR_clustering}, the corrupted subpopulation estimates must be assigned to one of the clusters.
Consequently, the barycentre of this cluster can be arbitrarily distorted, making the aggregated estimate $\widehat G^{\mr}$ unreliable.

\subsection{$L^2$ distances between mixture densities}
\label{subsec:l2_distance_densities}
Towards designing a Byzantine tolerant aggregation method, we first study properties of the $L^2$ distances between densities of mixture distributions.
Given two mixing distributions $G, G' \in \sG_K$, this is defined as
\be
\label{eq:L2_distance}
  D(G,G') =  \left\{\int \{f_{G}(x)-f_{G'}(x)\}^2\,dx\right\}^{1/2}.
\ee
For discrete distributions, the integration becomes a summation. 
To make our presentation more accessible, we express all results in terms of Lebesgue integration.
We choose the $L^2$ distance for the following two key reasons. 
First, it is one of the few distances between finite mixtures that has an analytical form, unlike other distances such as the Kullback-Leibler (KL) divergence, which typically require numerical evaluation.
Second, it is theoretically tractable.
More precisely, under standard regularity assumptions (Assumption~\ref{assumption:parametric_family}), the squared $L^2$ distance~\eqref{eq:L2_distance} is well approximated by a quadratic form in the difference between the aligned parameter vectors, $\mG - \mG'$, whenever $G$ and $G'$ are sufficiently close to the true mixing distribution $G^*$.
This follows from a Taylor expansion shown in Lemma~\ref{lemma:L2_distance_densities} to come.

\begin{assumption}[Parametric family]
\label{assumption:parametric_family}
The parametric family $\gF$ satisfies:
\begin{enumerate}[label=(\roman*)]
\item For all $\theta\in \Theta$, the density function is square integrable, \ie $\int f^2(x;\theta)\, dx <\infty$.
\item The gradient  $\nabla f$ with respect to $\theta$ satisfies $\int \left\| \nabla f(x; \theta_k^*) \right\|^2 \, dx < \infty$ for $k \in [K]$.
\item For each $k \in [K]$, there is a neighbourhood and a square integrable function $V: \sR^{d} \to \sR_+$ such that for all $\theta\in B_\delta(\theta_k^*)$, we have $
\left\| \nabla f(x; \theta) - \nabla f(x; \theta_k^*)  \right\| \leq V(x) \|\theta-\theta_k^*\|$ for all $x \in \sR^d$.
\end{enumerate}
\end{assumption}

\begin{remark}
These assumptions can be easily verified for most well-known parametric families, such as exponential and Gaussian distributions. 
For Gamma distributions, the conditions hold when the shape parameter satisfies $r>0.5$.
\end{remark}

\begin{lemma}[$L^2$ distance between mixture densities]
\label{lemma:L2_distance_densities}
Let $G$ and $G'$ be two mixing distributions. 
Under assumptions~\ref{assumption:parameter-space}--\ref{assump:cost_bregman_divergence} and assumption~\ref{assumption:parametric_family}, as $G, G' \to G^*$, the $L^2$ distance between their densities satisfies
\[
D^2(G, G') =({\mG} - \mG')^{\top} \mH^*( {\mG} - \mG')  + O(\max\{\|\mG - \mG^*\|^3,\|\mG'-\mG^*\|^3\}).    
\]
where $\mH^*$ is a positive semi-definite matrix defined in Section \ref{sec:distance_between_mixtures}.
\end{lemma}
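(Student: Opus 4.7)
My plan is to Taylor expand the mixture densities $f_G$ and $f_{G'}$ around $f_{G^*}$, then square the difference and integrate, with the main effort going into controlling the higher-order remainder. Because $\mG$ and $\mG'$ are aligned to $\mG^*$ by~\eqref{eq:aligned_component_def}, when both are sufficiently close to $G^*$ the minimising permutation is the identity, so we can match subpopulations componentwise without ambiguity.

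First, I would write the first-order expansion
\[
f_G(x) - f_{G^*}(x) = \sum_{k=1}^K (w_k - w_k^*) f(x;\theta_k^*) + \sum_{k=1}^K w_k^* \nabla f(x;\theta_k^*)^\top (\theta_k - \theta_k^*) + R_G(x),
\]
where $R_G(x)$ collects the second-order and cross terms. Packaging the linear coefficients into a vector-valued map
\[
\phi(x) := \bigl(f(x;\theta_1^*),\ldots,f(x;\theta_K^*),\, w_1^*\nabla f(x;\theta_1^*)^\top,\ldots, w_K^*\nabla f(x;\theta_K^*)^\top \bigr)^\top,
\]
this reads $f_G - f_{G^*} = \phi^\top(\mG - \mG^*) + R_G$, and similarly for $G'$. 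Subtracting gives $f_G - f_{G'} = \phi^\top(\mG - \mG') + (R_G - R_{G'})$. Setting $\mH^* := \int \phi(x)\phi(x)^\top\,dx$, which is a Gram matrix and hence automatically positive semi-definite, and which is finite entrywise by parts (i)--(ii) of Assumption~\ref{assumption:parametric_family}, expanding the square and integrating yields
\[
D^2(G,G') = (\mG-\mG')^\top \mH^* (\mG-\mG') + 2(\mG-\mG')^\top \!\!\int \phi (R_G - R_{G'})\,dx + \int (R_G - R_{G'})^2\,dx.
\]

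The main technical step is to show $\int R_G^2(x)\,dx = O(\|\mG-\mG^*\|^4)$. The remainder decomposes into (a) products $(w_k-w_k^*)[f(x;\theta_k)-f(x;\theta_k^*)]$ and (b) pure second-order Taylor residuals $w_k^*\bigl[f(x;\theta_k)-f(x;\theta_k^*)-\nabla f(x;\theta_k^*)^\top(\theta_k-\theta_k^*)\bigr]$. For (b), writing the residual in integral form as $\int_0^1 [\nabla f(x;\theta_k^*+t(\theta_k-\theta_k^*))-\nabla f(x;\theta_k^*)]^\top(\theta_k-\theta_k^*)\,dt$ and applying the Lipschitz bound from Assumption~\ref{assumption:parametric_family}(iii) gives a pointwise upper bound of $\tfrac{1}{2}V(x)\|\theta_k-\theta_k^*\|^2$; squaring and integrating then uses square-integrability of $V$ to yield $O(\|\theta_k-\theta_k^*\|^4)$. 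For (a), a first-order Taylor expansion of $f(x;\theta_k)-f(x;\theta_k^*)$ combined with parts (ii) and (iii) gives an $L^2$ bound of order $|w_k-w_k^*|\cdot\|\theta_k-\theta_k^*\|$, which when squared is $O(\|\mG-\mG^*\|^4)$. Summing over $k$ and repeating for $G'$ gives $\int R_G^2\,dx$ and $\int R_{G'}^2\,dx$ both $O(\|\mG-\mG^*\|^4)$ and $O(\|\mG'-\mG^*\|^4)$ respectively.

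Finally, I would combine the pieces. The pure remainder $\int(R_G-R_{G'})^2 dx \leq 2\int R_G^2 + 2\int R_{G'}^2 = O(\max\{\|\mG-\mG^*\|^4,\|\mG'-\mG^*\|^4\})$ is already absorbed into the claimed error. The cross term is controlled by Cauchy--Schwarz together with the triangle inequality $\|\mG-\mG'\|\leq \|\mG-\mG^*\|+\|\mG'-\mG^*\|\leq 2\max\{\|\mG-\mG^*\|,\|\mG'-\mG^*\|\}$, producing $O(\max\{\|\mG-\mG^*\|^3,\|\mG'-\mG^*\|^3\})$, which matches the stated error. I expect the main obstacle to be the bookkeeping inside the remainder estimate: carefully enumerating every second-order piece and confirming that the dominating functions from Assumption~\ref{assumption:parametric_family} (square-integrability of $\nabla f(\cdot;\theta_k^*)$ and of the Lipschitz envelope $V$) are exactly what is needed to close the $L^2$ bound. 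Positive semi-definiteness of $\mH^*$, by contrast, is immediate from its Gram-matrix form.
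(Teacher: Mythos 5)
Your proposal follows essentially the same route as the paper's proof: a first-order Taylor expansion of each subpopulation density at $\theta_k^*$ with an integral-form remainder controlled pointwise by $V(x)\|\theta_k-\theta_k^*\|^2/2$, identification of $\mH^*$ as a Gram matrix of the linear coefficient functions (hence positive semi-definite and entrywise finite by Assumption~\ref{assumption:parametric_family}), an $O(\max\{\|\mG-\mG^*\|,\|\mG'-\mG^*\|\}^4)$ bound on the squared remainder, and Cauchy--Schwarz for the $O(\max\{\cdot\}^3)$ cross term. The only cosmetic difference is that you keep all $K$ weight coordinates with basis functions $f(x;\theta_k^*)$, whereas the paper uses the simplex constraint to express the weight contribution through the $K-1$ differences $f(x;\theta_k^*)-f(x;\theta_K^*)$, matching the $\mH^*$ of Section~\ref{sec:distance_between_mixtures}; on the tangent space of the weight simplex the two quadratic forms coincide.
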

See the proof of the Lemma in Appendix~\ref{app:L2_closed_form}.
While we cannot control distances concerning failed local estimates, we may combine Lemma \ref{lemma:L2_distance_densities} together with Lemma \ref{lemma:local_mle_property}(a) to see that, when rescaled by $\sqrt{n}$, the $L^2$ distance between the density of a failure-free local estimate and the true mixture distribution converges in distribution to a generalized chi-squared distribution.
Further using the well-known uniform convergence of the empirical CDF shows that the empirical quantiles of these distances also converge uniformly.
Finally, the moment bound from Lemma \ref{lemma:local_mle_property}(c) shows that the probability of large deviations for these distances is relatively small.
In summary, we have the following concentration lemma.

\begin{lemma}[$L^2$ distance concentration]
\label{lemma:distance_concentration}
Under assumptions~\ref{assumption:parameter-space}--\ref{assump:cost_bregman_divergence} and assumption~\ref{assumption:parametric_family}, the following concentration results hold for the uncorrupted local estimates $\widehat{G}_1, \widehat{G}_2,\ldots,\widehat{G}_m$:
\begin{enumerate}[label=(\alph*)]
\item (Bulk) Let $j_1,\ldots, j_{m}$ be a permutation of $[m]$ such that $D(\widehat{G}_{j_1},G^*)\leq \cdots \leq D(\widehat{G}_{j_{m}},G^*)$.
For any $0 < \epsilon < 1/2$, as $n, m\to\infty$, we have
\begin{equation}
\label{eq:distance_bulk_concentration}
D(\widehat{G}_{j_{\lfloor \epsilon m \rfloor}},G^*) = \Omega_P(n^{-1/2}), \quad\quad D(\widehat{G}_{j_{\lfloor (1-\epsilon)m \rfloor}},G^*) = O_P(n^{-1/2}).
\end{equation}
\item (Large deviations) For $i \in [m]$, as $n, \rho_{n} \to \infty$, we have
\begin{equation}
\label{eq:distance_deviation_concentration}
\sP(D(\widehat{G}_i,G^*) \geq \rho_{n} n^{-1/2}) = O(\rho_{n}^{-8}).
\end{equation}
\end{enumerate}
\end{lemma}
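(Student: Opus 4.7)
My plan is to reduce both claims to tail bounds on the Euclidean deviation $\|\widehat{\mG}_i - \mG^*\|$ via the local quadratic expansion in Lemma~\ref{lemma:L2_distance_densities}, and then invoke the Berry--Esseen CLT of Lemma~\ref{lemma:local_mle_property}(a) for part (a) and the 8th-moment bound of Lemma~\ref{lemma:local_mle_property}(c) for part (b). The guiding observation is that, for a failure-free index $i$, the rescaled distance $\sqrt{n}\, D(\widehat G_i, G^*)$ converges in distribution to $R := \sqrt{Z^\top \mH^* Z}$ with $Z \sim \gN(0, \mI^{-1}(\mG^*))$. Under the identifiability of the mixture (Assumption~\ref{assumption:identifiability}) and the explicit construction of $\mH^*$, the CDF $F$ of $R$ is continuous with no atom at $0$, so every strict interior quantile of $F$ is a strictly positive finite constant.

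For part (a), I would establish uniform convergence of the empirical CDF $F_{nm}$ of $\{\sqrt{n}\,D(\widehat G_i,G^*): i\in\sB^c\}$ to $F$. A triangle inequality gives
\[
\sup_t |F_{nm}(t) - F(t)| \;\leq\; \sup_t |F_{nm}(t) - F_n(t)| \;+\; \sup_t |F_n(t) - F(t)|,
\]
where $F_n$ is the true CDF of $\sqrt{n}\,D(\widehat G_1,G^*)$. The first term vanishes in probability by Glivenko--Cantelli applied to the i.i.d.\ failure-free sample of size $(1-\alpha)m \to \infty$; the second vanishes by the Berry--Esseen rate of Lemma~\ref{lemma:local_mle_property}(a) combined with continuity of $F$, after controlling the remainder in Lemma~\ref{lemma:L2_distance_densities} by the moment bound on $\|\widehat\mG_i-\mG^*\|^3$. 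The $\lfloor\epsilon m\rfloor$-th and $\lfloor(1-\epsilon)m\rfloor$-th order statistics among the $(1-\alpha)m$ failure-free distances are empirical quantiles of $F_{nm}$ at interior levels, so uniform convergence implies they tend in probability to the corresponding positive, finite quantiles of $F$, yielding~\eqref{eq:distance_bulk_concentration}.

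For part (b), I would decompose $\{D(\widehat G_i, G^*) \ge \rho_n n^{-1/2}\}$ into a local event on $\{\|\widehat\mG_i-\mG^*\|<\delta\}$ and its complement. Inside the $\delta$-neighbourhood, Lemma~\ref{lemma:L2_distance_densities} gives $D^2(\widehat G_i,G^*) \le C\|\widehat\mG_i-\mG^*\|^2$, forcing $\|\widehat\mG_i-\mG^*\| \ge c\rho_n n^{-1/2}$ for some $c>0$. Markov's inequality with Lemma~\ref{lemma:local_mle_property}(c) at $q=8$ then bounds this by $(c\rho_n n^{-1/2})^{-8} \cdot O(n^{-4}) = O(\rho_n^{-8})$. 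On the tail event $\{\|\widehat\mG_i-\mG^*\|\ge\delta\}$, the same moment bound gives probability $O(n^{-4})$, which is absorbed into $O(\rho_n^{-8})$ provided $\rho_n = O(n^{1/2})$; for $\rho_n$ larger than $n^{1/2}$, square-integrability of $f(\cdot;\theta)$ and compactness of $\Theta$ render $D(\widehat G_i,G^*)$ uniformly bounded so the event is eventually empty.

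The main obstacle is managing the handover between the local quadratic expansion, which is only valid near $\mG^*$, and the possibility that $\widehat\mG_i$ wanders: in part (a) this appears as the need to absorb the $O(\|\widehat\mG_i-\mG^*\|^3)$ remainder into the uniform CDF convergence, and in part (b) it drives the event-splitting above. A secondary point is verifying that $\mH^*$ is nondegenerate on the range of $\mI^{-1/2}(\mG^*)$ so that $F$ has no mass at the origin; this I would obtain by observing that $\mH^*$ is the Gram matrix associated with $L^2$ perturbations of $f_{G^*}$ along the score directions, whose injectivity is equivalent to local identifiability of $\mG^*$ and hence follows from Assumptions~\ref{assumption:identifiability} and~\ref{assumption:strong-convexity}.
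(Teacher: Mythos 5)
Your proposal is correct and follows essentially the same route as the paper: part (a) is the paper's argument in dual form (the paper applies the DKW inequality to get empirical quantile concentration and then transfers the finite-$n$ quantile function to the generalized chi-squared quantile $q^*$ via Berry--Esseen plus the quadratic expansion of $D^2$, which is exactly your Glivenko--Cantelli/CDF decomposition read through the quantile function), and part (b) is identical (local Lipschitz bound $D(G,G^*)\lesssim\|\mG-\mG^*\|$ near $\mG^*$ followed by Markov with the 8th-moment bound). Your explicit splitting of the far event $\{\|\widehat\mG_i-\mG^*\|\geq\delta\}$ in part (b) is in fact slightly more careful than the paper's one-line event containment.
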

See the proof of the lemma in Appendix~\ref{app:rCOAT_COAT}.
The significance of this lemma is threefold.
First, together with the majority rule, the second equation in \eqref{eq:distance_bulk_concentration} implies that a majority of the local estimates are within $O(n^{-1/2})$ $L^2$-distance from $G^*$, which means that a good initial estimate is easy to obtain.
Second, the first equation in \eqref{eq:distance_bulk_concentration} implies that the dispersion of the failure-free local estimates can be estimated from data.
Third, \eqref{eq:distance_deviation_concentration} implies that a ball of radius $O(\rho_n n^{-1/2})$ around a good initial estimate, given a small inflation factor $\rho_n$, will contain almost all of the failure-free local estimates.

\subsection{Distance-filtered mixture reduction}
\label{sec:proposed_estimator_machine}
\begin{wrapfigure}{r}{0.4\textwidth}
\begin{center}
\includegraphics[width=0.35\textwidth]{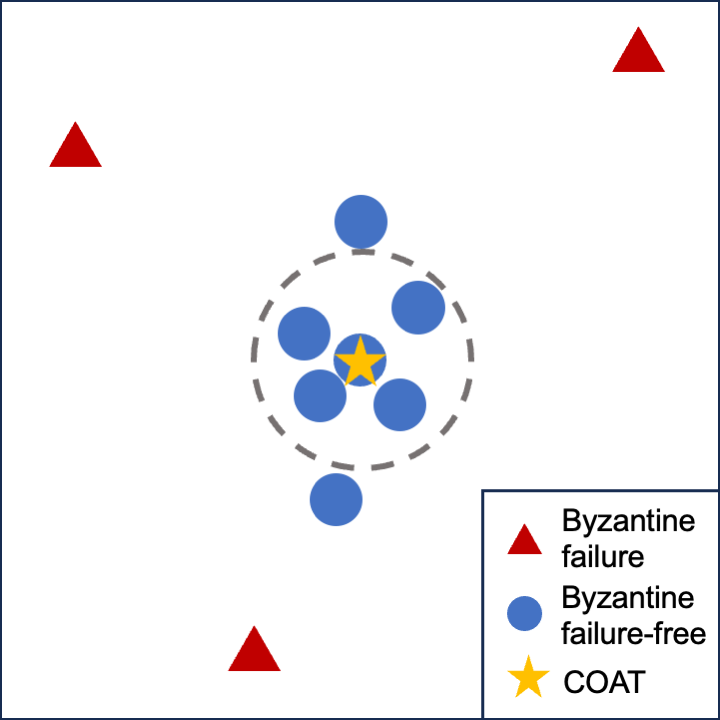}
\end{center}
\caption{
    $3$ failure machines out of $m = 10$ machines.
    Failure-free local estimates (blue dots) are clustered, while failed ones (red triangles) are scattered.
    The COAT (yellow star) and other $4$ local estimates that are closest to COAT are in the grey circle, and they are failure-free.}
\label{fig:red_coat_demo}
\end{wrapfigure}
We are now ready to describe our proposed method concretely.
Denote the open ball centred at $G$ with radius $r$ by
$B(G; r) = \{ G': D(G', G) < r, G' \in \sG_K\}$.
Let $|\sA|$ denote the cardinality of a set $\sA$.
Given the set of transmitted local estimates $\widetilde{\gG} = \{\widetilde{G}_i: i\in [m]\}$, define
\[
  r({G}) = \inf\left\{r: |\{G'\in \widetilde{\gG}:
  G' \in B(G; r) \}|\geq m/2\right\}
\]
as the smallest radius such that the closure of the ball $B(G; r(G))$ centred at $G$ contains at least $50\%$ of local estimates.

Define the Centre Of ATtention (COAT) as in~\citet{nissim2007smooth}:
\begin{equation}
\label{eq:coat}
\widehat{G}^{\coat} = \arg\min\left\{r(G): G \in \widetilde{\gG}\right\}.
\end{equation}
An example illustrating the idea of COAT is given in Figure~\ref{fig:red_coat_demo}.
It is worth noting that~\citet{nissim2007smooth} originally introduced COAT to obtain differentially private estimates for Euclidean parameters, which is conceptually distinct from our purpose.
Within our framework, COAT can be interpreted as a robust aggregation estimator; however, since it effectively relies on a single local estimator, its statistical efficiency is very low--its convergence rate is only $O_P(n^{-1/2})$, as shown in Proposition~\ref{prop:COAT_bound}.
An efficient aggregation procedure should instead leverage all non-corrupted local estimators.
To achieve this, we propose a two-step approach: first, a distance-based filtering step identifies non-corrupted local estimators; then, the MR method is applied to aggregate these filtered estimates, yielding a statistically efficient and robust global estimator.
Specifically, let $r^{\coat} = r(\widehat{G}^{\coat})$ and $\rho \geq 1$ be a tuning parameter, we define the set of selected machines as
\be
\label{eq:cred-set-estimate}
\selected_\rho = \{i:~D(\widehat{G}^{\coat}, \widetilde{G}_i) \leq \rho r^{\coat}\}.
\ee
Using this set, we define a family of Byzantine failure-tolerant \emph{Distance Filtered Mixture Reduction} (DFMR) estimators:
\be
\label{eq:DFMR}
\widehat{G}_{\rho}^{\dfmr} = \mathfrak{A}(\{\widetilde{G}_i: i\in \selected_{\rho}\}).
\ee
where $\mathfrak{A}(\cdot)$ is the mixture reduction operator.

\begin{algorithm}[!ht]
\centering
\begin{algorithmic}
\State{\textbf{Input}: $\rho \geq 1$, local estimator $\widetilde{G}_i$.}
\State{Initialize an empty matrix $D \in \sR^{K\times K}$ on the server.}
\For {$i\in[K]$}
\For {$j \in \{i+1,\ldots, K\}$}
\State {Compute $D_{ij} = D(\widetilde{G}_i,\widetilde{G}_j)$}
\State{Let $D_{ji} \leftarrow D_{ij}$.}
\EndFor
\EndFor
\For {$i\in[K]$}
\State{Let $\text{med}_i = \text{median}(D_{i1},\ldots, D_{iK})$.}
\EndFor
\State{Let $i^* = \argmin_{i} \text{med}_i$}
\State{Let initial estimate be COAT $\widehat{G}^{\coat} \leftarrow \widetilde{G}_i^* $.}
\State{Compute $r^{\coat} = \text{med}_{i^*}$.}
\State{Send $\{\widetilde{G}_j: D(\widetilde{G}_j, \widehat{G}^{\coat}) < \rho r^{\coat}\}$ to Algorithm~\ref{alg:mm_reduction} for aggregation.} 
\end{algorithmic}
\caption{Numerical algorithm for DFMR($\rho$).}
\label{alg:our_method}
\end{algorithm}
When $\rho = 1$, only $50\%$ of the local estimates are used for aggregation.
If the failure rate $\alpha$ is below $50\%$, this may not fully leverage all failure-free local estimates, limiting the estimator's efficiency. 
To address this, we recommend choosing a value of $\rho$ larger than $1$.
On the other hand, setting $\rho$ to be too large would include too many failed local estimates in the selected set, leading to performance degradation.
Consequently, the MSE of DFMR exhibits a U-shaped behaviour as $\rho$ increases, as demonstrated empirically in Section~\ref{sec:robustness_rho_exp}. 
In our theoretical analysis in Section~\ref{sec:theory}, we obtain convergence rate results with the choice $\rho = m^{\delta}$ for any $\delta > 1/14$.
In practice, under common attack models, we observe that the MSE is relatively insensitive to the choice of $\rho$ over a fairly wide range of values and that setting $\rho = 1.3$\footnote{This is approximately equal to $50^{1/14}$. 
The value of $m^{1/14}$ changes slowly with $m$.} seems to work fairly well.
The pseudo-code for the numerical implementation of our proposed DFMR method is provided in Algorithm~\ref{alg:our_method}.

\subsection{Theoretical guarantees}
\label{sec:theory}

In this section, we establish the rate of convergence of our proposed DFMR estimator.
Following our discussion in Section~\ref{subsec:l2_distance_densities}, it is easy to see that for a suitably chosen inflation factor $\rho$, the selected set $\selected_\rho$ contains the vast majority of the failure-free estimates.
What we therefore need to show is that the most harmful failed estimates are indeed filtered out.
We, however, run into the following issue:
Local estimates are filtered based on the $L^2$ distances between mixture densities, but the error of our estimate is measured in terms of the Euclidean parameter distance.
Lemma~\ref{lemma:L2_distance_densities} shows that the former is locally Lipschitz continuous with respect to the latter, but does not guarantee the converse to hold.
As such, it does not preclude a (bad) local estimate from being selected on account of having a density with small $L^2$ distance from that of the COAT, but having a large Euclidean parameter distance from $G^*$, thereby hurting the efficiency of the estimate.
This issue does not occur when the parametric family $\gF$ is first-order strongly identifiable~\citep{ho2015identifiability,ho2016strong}.
As we show in Lemma~\ref{lem:strong_identifiability}, this ensures that the matrix $\mH^*$ in Lemma~\ref{lemma:L2_distance_densities} is invertible, which implies that the Euclidean parameter distance is locally Lipschitz with respect to the $L^2$ distance between densities.

\begin{definition}[First-order strong identifiability]
The family $\gF$ is first-order strongly identifiable if $f(x;\theta)$ is differentiable in $\theta$ and for any finite $k$ and $k$ distinct parameter values $\theta_1,\ldots, \theta_k$, the equality
$\sum_{i=1}^{k}\alpha_i f(x;\theta_i) + \beta_i^{\top}\nabla f(x;\theta_i) =0$
almost for all $x$ implies that $\alpha_i=0$, $\beta_i =\gamma_i =0$ for $i=1,\ldots, k$.
\end{definition}
\begin{remark}
The first-order strong identifiability condition is weaker than the second-order strong identifiability condition proposed by~\citet{chen1995optimal}.
A wide range of univariate parametric families, including most exponential families~\citep{chen1995optimal}, are known to be second-order identifiable.
With multidimensional parameter space, second-order strong identifiability holds for multivariate Gaussian distributions in location or scale, certain classes of student $t$-distributions, von Mises, Weibull, logistic and generalized Gumbel distributions~\citep{ho2016strong}, as well as multinomial distributions~\citep{manole2021estimating}. 
The Gaussian mixture in location and scale is not second-order identifiable but first-order identifiable.
Gamma distributions, Location-exponential distributions, and skew Gaussian distributions, however, are not first-order identifiable~\citep{ho2015identifiability}.
\end{remark}


We now establish the convergence rates of both the initial estimate, COAT, as well as the DFMR estimator, under the strong identifiability condition.\footnote{The general rate for $\gF$ that is not first-order strongly identifiable is given in Appendix~\ref{app:DFMR_rate}.}
\begin{proposition}
\label{prop:COAT_bound}
Under assumptions~\ref{assumption:parameter-space}--\ref{assump:cost_bregman_divergence} and~\ref{assumption:parametric_family}, the COAT satisfies $D(\widehat G^{\coat},G^*) = O_P(n^{-1/2})$.
When $\gF$ is also 1st-order strongly identifiable, we have $\|\widehat{\mG}^{\coat} - \mG^*\| = O_P(n^{-1/2})$.
\end{proposition}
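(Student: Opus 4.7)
The plan is to prove the proposition in three stages: first, bound the attention radius $r^{\coat}$; second, establish the $L^2$ rate for the COAT via a pigeonhole argument; and third, upgrade to the Euclidean parameter rate under first-order strong identifiability by inverting the quadratic expansion of Lemma~\ref{lemma:L2_distance_densities}.

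\emph{Stage 1 (controlling $r^{\coat}$).} Because the majority rule gives $\alpha < 1/2$, I fix $\epsilon \in (\alpha, 1/2)$. Applying Lemma~\ref{lemma:distance_concentration}(a) at this $\epsilon$, there is a constant $C>0$ such that with probability tending to one, at least $M_\epsilon := \lfloor (1-\epsilon)m \rfloor$ of the failure-free local estimates lie in $B(G^*;\, C n^{-1/2})$, and $M_\epsilon > m/2$ for $m$ large. For any such $\widehat{G}_i \in B(G^*;\, C n^{-1/2})$, the triangle inequality yields $B(G^*;\, C n^{-1/2}) \subseteq B(\widehat{G}_i;\, 2 C n^{-1/2})$, so $|\widetilde{\gG} \cap B(\widehat{G}_i;\, 2C n^{-1/2})| \geq M_\epsilon > m/2$. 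By the definition of $r(\cdot)$ this gives $r(\widehat{G}_i) \leq 2 C n^{-1/2}$, and since the COAT minimises $r(\cdot)$ over $\widetilde{\gG}$, we conclude $r^{\coat} \leq 2 C n^{-1/2} = O_P(n^{-1/2})$.

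\emph{Stage 2 ($L^2$ bound for the COAT).} By construction $B(\widehat{G}^{\coat};\, r^{\coat})$ contains at least $\lceil m/2 \rceil$ elements of $\widetilde{\gG}$, while $B(G^*;\, C n^{-1/2})$ contains at least $M_\epsilon > m/2$ of them. Since both are subsets of $\widetilde{\gG}$, which has cardinality $m$, inclusion-exclusion forces their intersection to be non-empty, producing a failure-free estimate $\widehat{G}_j$ with $D(\widehat{G}^{\coat}, \widehat{G}_j) \leq r^{\coat}$ and $D(\widehat{G}_j, G^*) \leq C n^{-1/2}$. Then the triangle inequality gives
\[
D(\widehat{G}^{\coat}, G^*) \leq D(\widehat{G}^{\coat}, \widehat{G}_j) + D(\widehat{G}_j, G^*) \leq r^{\coat} + C n^{-1/2} = O_P(n^{-1/2}),
\]
establishing the first assertion of the proposition.

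\emph{Stage 3 (Euclidean rate under first-order strong identifiability).} First I would bootstrap consistency $\widehat{\mG}^{\coat} \to \mG^*$ from the Stage 2 bound. Along any subsequence, compactness of the parameter space (Assumption~\ref{assumption:parameter-space}) permits extracting a further subsequence with a limit $G^\dagger$ satisfying $D(G^\dagger, G^*) = 0$, hence $f_{G^\dagger} = f_{G^*}$ almost everywhere, hence $G^\dagger = G^*$ by identifiability (Assumption~\ref{assumption:identifiability}). The alignment convention in~\eqref{eq:aligned_component_def} then delivers $\widehat{\mG}^{\coat} \to \mG^*$ in Euclidean norm. Once $\widehat{\mG}^{\coat}$ lies in a fixed small neighbourhood of $\mG^*$, Lemma~\ref{lemma:L2_distance_densities} applies, and combining it with Lemma~\ref{lem:strong_identifiability} (which, under first-order strong identifiability, ensures $\mH^* \succ 0$) gives
\[
D^2(\widehat{G}^{\coat}, G^*) \geq \lambda_{\min}(\mH^*)\, \|\widehat{\mG}^{\coat} - \mG^*\|^2 - C' \|\widehat{\mG}^{\coat} - \mG^*\|^3.
\]
For $n$ sufficiently large the cubic remainder is absorbed into half of the quadratic term, yielding $\|\widehat{\mG}^{\coat} - \mG^*\|^2 \leq (2/\lambda_{\min}(\mH^*))\, D^2(\widehat{G}^{\coat}, G^*) = O_P(n^{-1})$, which is the claimed Euclidean rate.

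\emph{Main obstacle.} Stage 2's pigeonhole is essentially automatic once Stage 1 is in place; the real delicacies are (i) choosing $\epsilon \in (\alpha, 1/2)$ and reading Lemma~\ref{lemma:distance_concentration}(a) to supply \emph{strictly more than} $m/2$ failure-free estimates in a ball of $L^2$-radius $O(n^{-1/2})$ around $G^*$, and (ii) bootstrapping from the $L^2$-density bound to Euclidean-parameter consistency so that the local Taylor expansion of Lemma~\ref{lemma:L2_distance_densities} is in force. Both rely on careful use of identifiability and parameter-space compactness, while the rest of the argument is triangle-inequality bookkeeping.
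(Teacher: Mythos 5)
Your proposal is correct and follows essentially the same route as the paper: more than half of the failure-free estimates concentrate within $O_P(n^{-1/2})$ of $G^*$ in $L^2$ (Lemma~\ref{lemma:distance_concentration}(a)), which bounds $r^{\coat}$ (the upper half of Lemma~\ref{lem:coat_radius_conc}), a pigeonhole plus triangle inequality transfers this to $D(\widehat G^{\coat},G^*)$, and the Euclidean rate follows by inverting the quadratic form $\mH^*$ under first-order strong identifiability exactly as in Lemma~\ref{lem:strong_identifiability}. The only difference is that you reprove the radius bound and the $\mH^*$-inversion inline rather than citing the corresponding lemmas.
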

The proof of the proposition is given in Appendix~\ref{app:DFMR_rate}.
It implies that the COAT estimator is Byzantine-failure tolerant, but its convergence rate is not optimal as anticipated.

\begin{theorem}
\label{thm:main_theorem}
Under assumptions~\ref{assumption:parameter-space}--\ref{assump:cost_bregman_divergence} and~\ref{assumption:parametric_family}, suppose further that $n \geq m$.
For any $r > 0$, let $\widetilde\alpha_m(r) = (r/m)\abs*{\braces{i \in \sB \colon D(\xi_i,G^*) \leq r }}$.
Let $\widehat G^{\omr}$ be the oracle mixture reduction estimator, \ie the solution to the optimization problem~\eqref{eq:GMR} in which the aggregation is performed over the failure-free machines $\sB^c$.
Suppose further that $\gF$ is first-order strongly identifiable.
Recall that $\mG$ is the vectorized form of the mixing distribution $G$.
The DFMR estimator with the choice $\rho = \Omega(m^{\delta})$ for any $\delta > 1/14$ satisfies
\begin{equation}
\label{eq:main_theorem}
    \widehat{\mG}^{\dfmr} - \mG^* = \gamma \paren*{\widehat{\mG}^{\omr} - \mG^*} + O_P(\widetilde\alpha_m(2\rho n^{-1/2})) + o_P(N^{-1/2})
\end{equation}
for some $0 < \gamma \leq 1$.
In particular, $\|\widehat{\mG}^{\dfmr} - \mG^*\| = O_{P}(N^{-1/2}+\widetilde\alpha_m(2\rho n^{-1/2}))$.
\end{theorem}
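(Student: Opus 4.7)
The plan is to extend the barycentre-based analysis underlying Theorem~\ref{thm:oracle_rate} by carefully quantifying how the filtering step in DFMR perturbs the oracle mixture reduction. First I would control the selected set $\selected_\rho$: Proposition~\ref{prop:COAT_bound} gives $D(\widehat G^{\coat}, G^*) = O_P(n^{-1/2})$, and applying Lemma~\ref{lemma:distance_concentration}(a) to the failure-free machines (which form a strict majority of $[m]$) yields $r^{\coat} = \Theta_P(n^{-1/2})$. The triangle inequality then gives $D(\widetilde G_i, G^*) = O_P(\rho n^{-1/2})$ for every $i \in \selected_\rho$, and first-order strong identifiability via Lemma~\ref{lem:strong_identifiability} lifts this to $\|\widetilde \mG_i - \mG^*\| = O_P(\rho n^{-1/2})$. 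Consequently $|\selected_\rho \cap \sB| \leq m\,\widetilde\alpha_m(2\rho n^{-1/2})$ up to absorbable constants, while Lemma~\ref{lemma:distance_concentration}(b) combined with Markov's inequality gives $|\sB^c \setminus \selected_\rho| = O_P(m\rho^{-8})$.

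Next, I would invoke the barycentre equivalence underlying Theorem~\ref{thm:oracle_rate} (Lemma~\ref{lem:barycentre_equivalence}), together with a local Taylor expansion of $\nabla A$ around each $\theta_k^*$, to express both $\widehat\mG^{\dfmr}$ and $\widehat\mG^{\omr}$ as weighted averages of their aligned inputs, up to $O_P(n^{-1}) = o_P(N^{-1/2})$ remainders (using $n \geq m$). Straightforward arithmetic rearrangement then gives
\begin{equation*}
    \widehat{\mG}^{\dfmr} - \mG^*
    = \frac{|\sB^c|}{|\selected_\rho|}\paren*{\widehat{\mG}^{\omr} - \mG^*}
    + \Delta_{\text{incl}}
    - \Delta_{\text{excl}}
    + o_P(N^{-1/2}),
\end{equation*}
where $\Delta_{\text{incl}} = |\selected_\rho|^{-1}\sum_{i \in \selected_\rho \cap \sB}(\widetilde{\mG}_i - \mG^*)$ captures the contribution of selected failed estimates and $\Delta_{\text{excl}} = |\selected_\rho|^{-1}\sum_{i \in \sB^c \setminus \selected_\rho}(\widehat{\mG}_i - \mG^*)$ captures the bias from excluding some failure-free ones. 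The prefactor $|\sB^c|/|\selected_\rho|$ lies in $(0, 1+o_P(1)]$, so setting $\gamma := \min\{|\sB^c|/|\selected_\rho|, 1\}$ and absorbing the residual $o_P(1) \cdot (\widehat{\mG}^{\omr}-\mG^*)$ into $o_P(N^{-1/2})$ gives the form claimed in the theorem with $0 < \gamma \leq 1$.

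The key technical step, which pins down the $m^{1/14+\delta}$ threshold on $\rho$, is bounding $\Delta_{\text{excl}}$. Hölder's inequality with conjugate exponents $(8, 8/7)$, combined with the 8-th moment bound from Lemma~\ref{lemma:local_mle_property}(c) and the tail bound from Lemma~\ref{lemma:distance_concentration}(b), gives
\begin{equation*}
    \sE\bigl[\|\widehat\mG_i - \mG^*\|\cdot\mathbbm{1}\{i \notin \selected_\rho\}\bigr]
    \leq \sE[\|\widehat\mG_i - \mG^*\|^8]^{1/8}\cdot\sP(i \notin \selected_\rho)^{7/8}
    = O(n^{-1/2}\rho^{-7}).
\end{equation*}
Summing over the $(1-\alpha)m$ failure-free machines and dividing by $|\selected_\rho| = \Theta_P(m)$ yields $\|\Delta_{\text{excl}}\| = O_P(\rho^{-7} n^{-1/2})$, which is $o_P(N^{-1/2}) = o_P(m^{-1/2} n^{-1/2})$ precisely when $\rho^7 = \omega(m^{1/2})$, i.e., $\rho = \Omega(m^{1/14+\delta})$. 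For the other error term, the uniform bound from the first paragraph gives $\|\Delta_{\text{incl}}\| \leq (|\selected_\rho \cap \sB|/|\selected_\rho|) \cdot O_P(\rho n^{-1/2}) = O_P(\widetilde\alpha_m(2\rho n^{-1/2})\cdot\rho n^{-1/2})$. The ``In particular'' clause then follows immediately from $\|\widehat\mG^{\omr} - \mG^*\| = O_P(N^{-1/2})$ via Theorem~\ref{thm:oracle_rate}(a).

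The hardest part will be the barycentre equivalence extension in the second step: Lemma~\ref{lem:barycentre_equivalence} is proved under the assumption that every MR input is close to $\mG^*$, whereas in DFMR the selected set may contain failed estimates whose subpopulation parameters are only guaranteed to be within $O_P(\rho n^{-1/2})$ of $\mG^*$. One must verify that the MR clustering step still correctly partitions the aligned subpopulation parameters into $K$ groups matching $\theta_1^*, \ldots, \theta_K^*$; this will hold because the true subpopulations are separated by a positive constant under Assumption~\ref{assumption:parameter-space}, whereas $\rho n^{-1/2} \to 0$ under our scaling, so each selected failed component is absorbed into the cluster of its nearest $\theta_k^*$ without disrupting the alignment of the failure-free components.
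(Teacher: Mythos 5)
Your proposal is correct and follows essentially the same route as the paper's proof: the same three-way decomposition of the error over $\sB^c$, $\sB^c\cap\selected_\rho^c$, and $\sB\cap\selected_\rho$, the same H\"older argument with exponents $(8,8/7)$ combining the eighth-moment bound of Lemma~\ref{lemma:local_mle_property}(c) with the tail bound of Lemma~\ref{lemma:distance_concentration}(b) to make the exclusion term $o_P(N^{-1/2})$ under $\rho = \Omega(m^{1/14+\delta})$, and the same use of first-order strong identifiability to convert the $L^2$-ball containment of $\selected_\rho$ (Lemma~\ref{lem:upper_lower_selected_set_distance}) into the parameter-space bound driving the $\widetilde\alpha_m(2\rho n^{-1/2})\cdot\rho n^{-1/2}$ term. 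The concern you flag about extending the barycentre equivalence to the selected failed estimates is resolved exactly as you suggest, since Lemma~\ref{lem:barycentre_equivalence} only requires $D(G_i,G^*)<r_0$ and every selected estimate lies in $B(G^*,2\rho n^{-1/2})$ with $\rho n^{-1/2}\to 0$.
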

The proof of the theorem is deferred to Appendix~\ref{app:DFMR_rate}.
Since $\widetilde\alpha_m(r) \leq \alpha r$, the theorem implies that the DFMR estimator achieves a rate of $O_{P}(N^{-1/2} + \alpha \rho n^{-1/2})$.
Up to the factor of $\rho$, this is equivalent to the optimal rate suggested by \cite{yin2018byzantine} for SC algorithms applied in the Euclidean parameter setting.
Note that the additional factor of $\rho$ arises not because of any slack in our analysis, but because of the inherent difficulty in estimating mixture models.
Most other works establish the rate of convergence under the assumption that the local estimates are sub-exponential~\citep{chen2017distributed,yin2018byzantine,su2019securing,tu2021variance} or sub-Gaussian~\citep{yin2019defending}, whereas our assumptions on the mixture family are only sufficient to guarantee that our local estimates have bounded 8th moments.
This heavy tailed nature requires us to set $\rho$ large enough so as to contain almost all the estimates from failure-free machines.
Indeed, if we are able to guarantee finite $q$-th moments in Lemma \ref{lemma:local_mle_property} for some $q > 8$, we are able to relax the requirement on $\rho$ and achieve the results of Theorem~\ref{thm:main_theorem} with a choice of $\rho = m^{\delta}$, $\delta > 1/2(q-1)$.
If we are able to guarantee $\norm{\mG-\mG^*}$ is sub-Gaussian, then we may even choose $\rho = \log m$.

Theorem~\ref{thm:main_theorem} also shows potential slack in the previous analysis of Byzantine-tolerant methods.
Specifically, the effect of the failed local estimates on the optimal convergence rate is not in terms of the total failure fraction $\alpha$, but in terms of the fraction of failed local estimates that are relatively indistinguishable from the failure-free ones.
Indeed, under reasonable attack models, $\widetilde\alpha_m(2\rho n^{-1/2})$ can be relatively negligible, as illustrated theoretically in the following corollary and empirically in our simulations to come (Section~\ref{sec:exp}).

\begin{corollary}
\label{cor:failure_density}
Under the assumptions of Theorem \ref{thm:main_theorem}, assume further that $\rho = o(m^{1/6})$ and suppose that the failed local estimates $\braces*{\xi_i \colon i \in \sB}$ are drawn i.i.d. from a fixed distribution on $\sG_K$ satisfying $\sP(D(\xi_i,G^*) \leq r) = O(r^3)$ as $r \to 0$.
Then we have
$\sP\paren*{\widetilde\alpha_m(2\rho n^{-1/2}) > 0}= o(1)$.
Hence, $\widehat{\mG}^{\dfmr} = \widehat{\mG}^{\omr} + o_P(N^{-1/2})$.
\end{corollary}
The proof of the corollary is deferred to Appendix~\ref{app:DFMR_rate}.
The same optimal rate is achieved by the method developed in~\citet{wang2024distributed} for Byzantine-tolerant distributed learning of Lagrange multipliers in empirical likelihood-based inference. 
However, their result does not hold unless there is a constant-width separation margin between failure-free estimates and failed estimates, which is a more strict scenarios than ours.

Under the conditions of Corollary~\ref{cor:failure_density}, Theorem~\ref{thm:oracle_rate} (b) further implies the asymptotic normality of our proposed DFMR estimator. 
The asymptotic variance of the DFMR estimator matches that of the oracle estimator, ensuring full efficiency. 
In contrast,~\citet{tu2021variance}'s result, which established asymptotic normality result for VRMOM in Byzantine-robust distributed learning of Euclidean parameters, demonstrates loss of efficiency. 
More precisely, the asymptotic variance of VRMOM is larger than that of the oracle estimator by a factor of $\pi/3$. 
Furthermore, their asymptotic normality result requires a low failure rate ($\alpha = o(m^{-1/2})$) and additional assumptions on the size of $n$ and $m$.

\section{Numerical experiments}
\label{sec:exp}
\subsection{Empirical comparison with other approaches}
\subsubsection{Experiment setting}
\label{sec:exp_setting}
We generate data from a widely used Gaussian mixture. 
Additional results for Gamma and Poisson mixtures are given in Appendix~\ref{app:gamma_mixture} and Appendix~\ref{app:poisson_mixture}, respectively.
To mitigate potential bias, the parameters of the Gaussian mixtures are generated at random using the \texttt{R} package \texttt{MixSim}~\citep{melnykov2012mixsim}.
An important quantity of a mixture is pairwise overlap $o_{j|i} = \sP(w_{i} f(X;\theta_i)< w_{j} f(X;\theta_j)|X \sim f(\cdot;\theta_i))$.
The maximum overlap of a finite mixture is defined as $
\texttt{MaxOmega} = \max_{i, j \in [K] } \{ o_{j|i} + o_{i|j}\}$.
Larger values of \texttt{MaxOmega} indicate greater difficulty in learning the model.  
We generate $R = 300$ sets of parameter values with dimension $d = 10$, order $K = 5$, and \texttt{MaxOmega} values of $10\%$, $20\%$, and $30\%$.  
The number of machines $m$ ranges from $20$, $50$, to $100$, with local sample sizes $n$ ranging from $5000$, $10000$, to $20000$.  
These choices for $d$, $n$, and $m$ are motivated by the nature of distributed learning methods, which are designed to handle large-scale learning scenarios, particularly when observations are high-dimensional and sample sizes are substantial.

We next generate Byzantine failure as follows.
We let the percentage of Byzantine failure machines, denoted as $\alpha$, vary from $0.1$ to $0.4$ with an increment $0.1$. 
For each given $\alpha$, we then randomly select Byzantine failure machines out of all machines.
To fully investigate various failure scenarios, we consider three representative types of failures:

\begin{itemize}[leftmargin=*]
    \item \emph{Mean failure}. 
    Following the convention for simulating Byzantine failure in Euclidean parameter space~\citep{tu2021variance}, each subpopulation mean estimate on a Byzantine failure machine is replaced by a vector of same length with entries generated independently from $N(0, 100^2)$.  

    \item \emph{Covariance failure}. 
    We perturb each covariance estimate on each failure machine by adding additive noise $d^{-2}\sum_{i=1}^{d}\xi_i\xi_i^{\top}$ where $\xi_i$ are IID standard Gaussian random vectors. 
    Such additive noise can ensure that the attacked covariance is still positive definite.

    \item 
    \emph{Weight failure}. 
    We replace the estimated mixing weights on the Byzantine failure by a random vector generated from the Dirichlet distribution. 
    The parameter values for the Dirichlet distribution differ across machines and are random integers within $[10, 20]$. 
    We choose parameter values within this range to ensure that the failure weights are significantly different from the failure-free weights.
\end{itemize}

These three failure types are representative of typical Byzantine failures in finite mixture models.  
Other failure cases, such as incorrect model order or estimates in the wrong parameter space, can often be easily identified with prior knowledge of the mixture model and are therefore not considered in this simulation.

\subsubsection{Baselines and performance measure metrics}
\label{sec:estimators}
There are no specific Byzantine-tolerant SC methods for learning finite mixture models in the literature, as pointed out in the introduction. 
However, some existing algorithms can be used as Byzantine-tolerant SC methods. 
Examples include the trimmed $k$-barycentre for robust clustering by~\citet{del2019robust}.
To showcase the effectiveness of the proposed method, we also include the following estimators in the simulation:
\begin{enumerate}
\item \textbf{Vanilla}. 
The reduction estimator in~\eqref{eq:GMR} that aggregates all local estimates. 
The cost function $c(\cdot,\cdot)$ is chosen to be the KL divergence. The reduction method is also used in our proposed method, and Oracle approaches but with filtered local estimates.

\item \textbf{Trim}. This is the trimmed $k$-means by~\citet{del2019robust} summarised in Algorithm~\ref{alg:trimmed_barycentre} in Appendix~\ref{app:trim_algorithm}. 
The trimming level $\eta$ in the approach is a hyperparameter and there is no consensus choice.
We let $\eta=0.5$ so that the estimator is robust yet efficient.
For fair comparisons, we also use the KL divergence as its cost function.

\item \textbf{COAT}. 
This is the estimator introduced in~\eqref{eq:coat}.

\item \textbf{Oracle}.
This estimator aggregates all failure-free local estimates.
\end{enumerate}

In each simulation setting, we generate random samples $\gX^{(r)}$ from a finite mixture with mixing distribution $G^{(r)} = \sum_{k} w_k^{(r)} \delta_{(\mu_k^{(r)},\Sigma_k^{(r)})}$ for $r = 1,\ldots, R$. 
After which $\gX^{(r)}$ is randomly partitioned into $m$ local sets. 
We use the pMLE~\citep{chen2009inference} locally with a penalty size of $n^{-1/2}$. 
The EM algorithm is used for numerical computation, and we declare convergence when the change of the per observation penalized log-likelihood function is less than $10^{-6}$. Given the large sizes of the simulated data, the maxima of the penalized likelihood are expected to be close to the true mixing distribution. 
Therefore, we use the true mixing distribution as the initial value for the EM algorithm. 
During aggregation, we use the COAT estimator as the initial value for the reduction. 
The estimated mixing distribution is denoted as $\widehat G^{(r)}= \sum_{k} \widehat w_k^{(r)} \delta_{(\widehat\mu_k^{(r)}, \widehat\Sigma_k^{(r)})}$. 
We use the following performance metrics:
\begin{itemize}
  \item 
   \textbf{Transportation distance} ($W_1 = W_1(\widehat G^{(r)},G^{(r)})$): The Wasserstein distance between the estimated and true mixing distributions. 

  \item 
  \textbf{Adjusted Rand Index} (ARI): Mixture models are widely used for model-based clustering. 
  The ARI serves as a common metric for assessing the similarity between two clustering results. 
  Specifically, we compare the clustering outcomes based on $\widehat G^{(r)}$ and $G^{(r)}$ respectively. 
  ARI values closer to $1$ indicate better performance.
\end{itemize}
These metrics provide a comprehensive evaluation of the proposed methods in comparison to other estimators across various settings.
Details about the metrics are given in Appendix~\ref{app:exp_performance_metric}.

\subsubsection{Robustness of the inflation factor $\rho$}
\label{sec:robustness_rho_exp}
The inflation factor $\rho$ is a hyper-parameter in our proposed method. While our theoretical analysis provides guidance on its order of magnitude, determining its precise value is more nuanced. Through this experiment, we demonstrate that our method exhibits remarkable robustness across a wide range of $\rho$ values, underscoring its practical utility.

In the top row of Figure~\ref{fig:EEI_threshold}, we set $m=100$ and $n=10^{4}$, and the failure rate to $\alpha=0.1$.
We vary $\rho$ from $1.0$ to $3.0$ in increments of  $0.05$ over $R=300$ repetitions and compute the mean and standard error of the distances between the DFMR($\rho$) estimator and the true parameter. 
For both mean and covariance failures, we observe that when $\rho\geq 1.3$, the DFMR($\rho$) estimator performs as well as the Oracle approach (dashed line) and significantly outperforms the COAT approach (dash-dotted line). 
In the case of weight failure, the optimal distance is achieved within a narrower range of $\rho$, and the expected U-shaped trend in distance is evident. 
Notably, even when $\rho$ is increased to $3.0$, the DFMR($\rho$) estimator remains robust and consistently outperforms the COAT approach.

In the bottom row of the figure, we further investigate the impact of varying the number of local machines. The results reveal that the optimal choice of $\rho$ remains consistent across different numbers of machines, highlighting the stability and scalability of our method.
\begin{figure}[!htbp]
\centering
\includegraphics[width=0.3\textwidth]{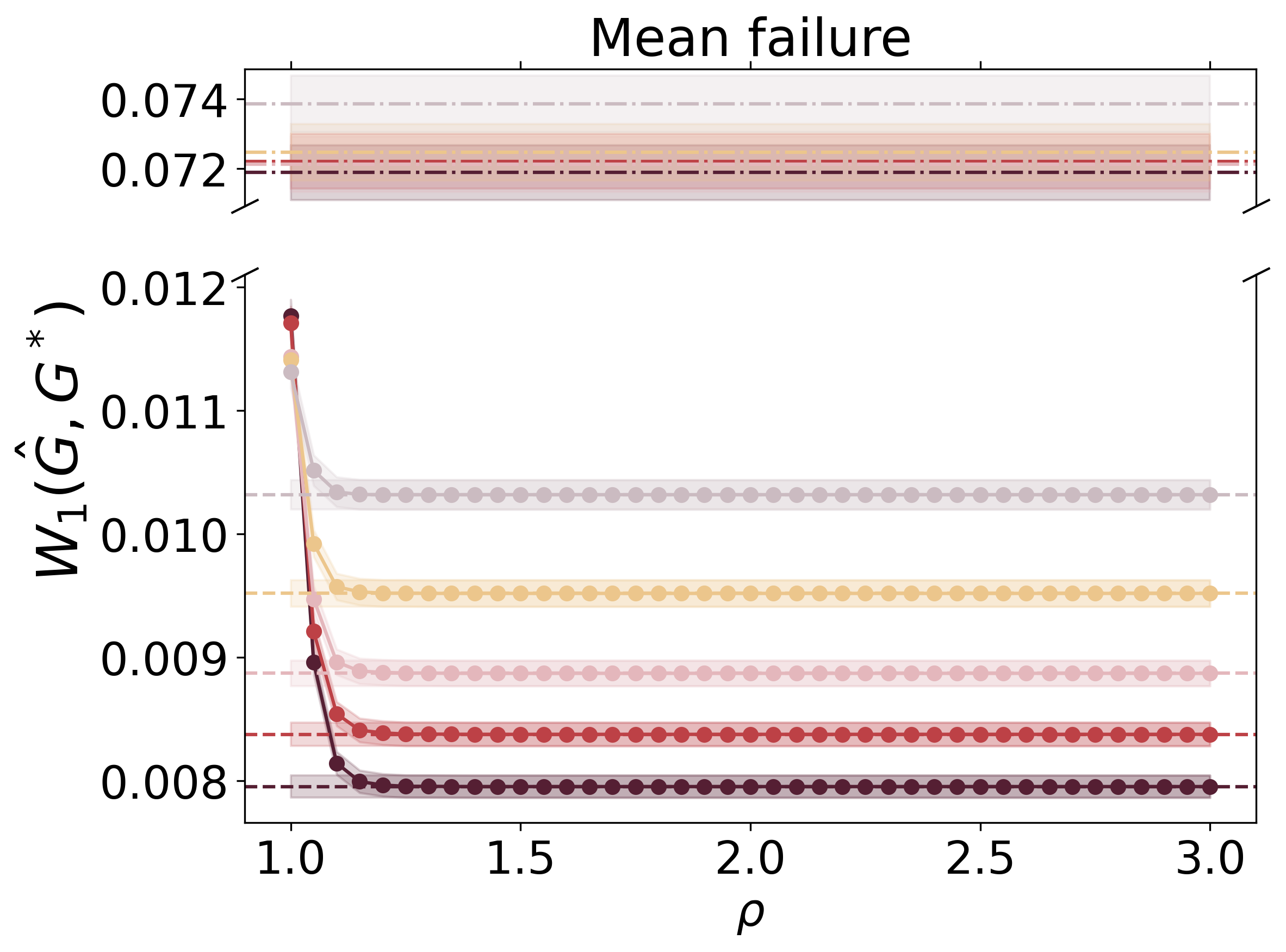}
\includegraphics[width=0.3\textwidth]{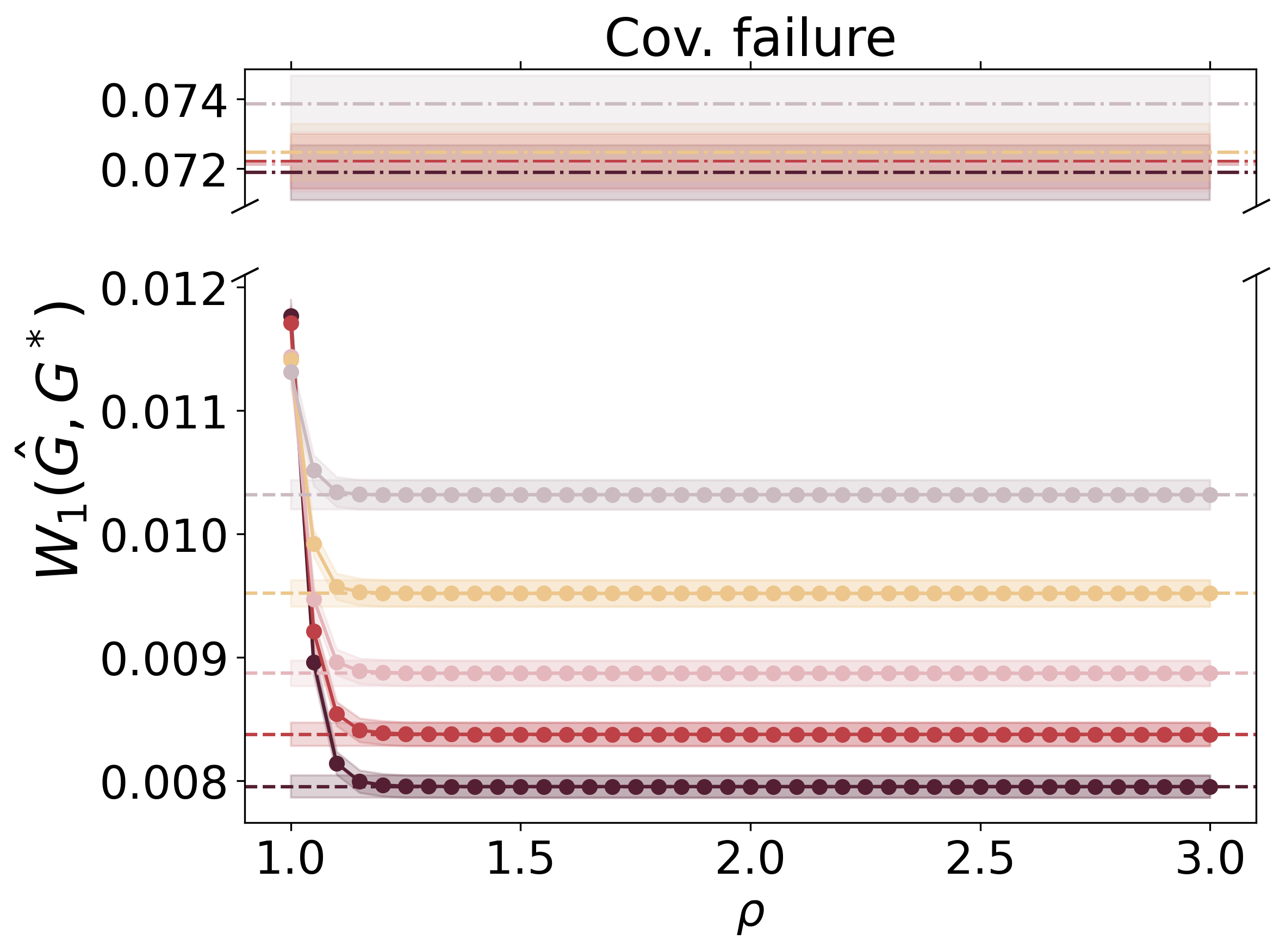}
\includegraphics[width=0.3\textwidth]{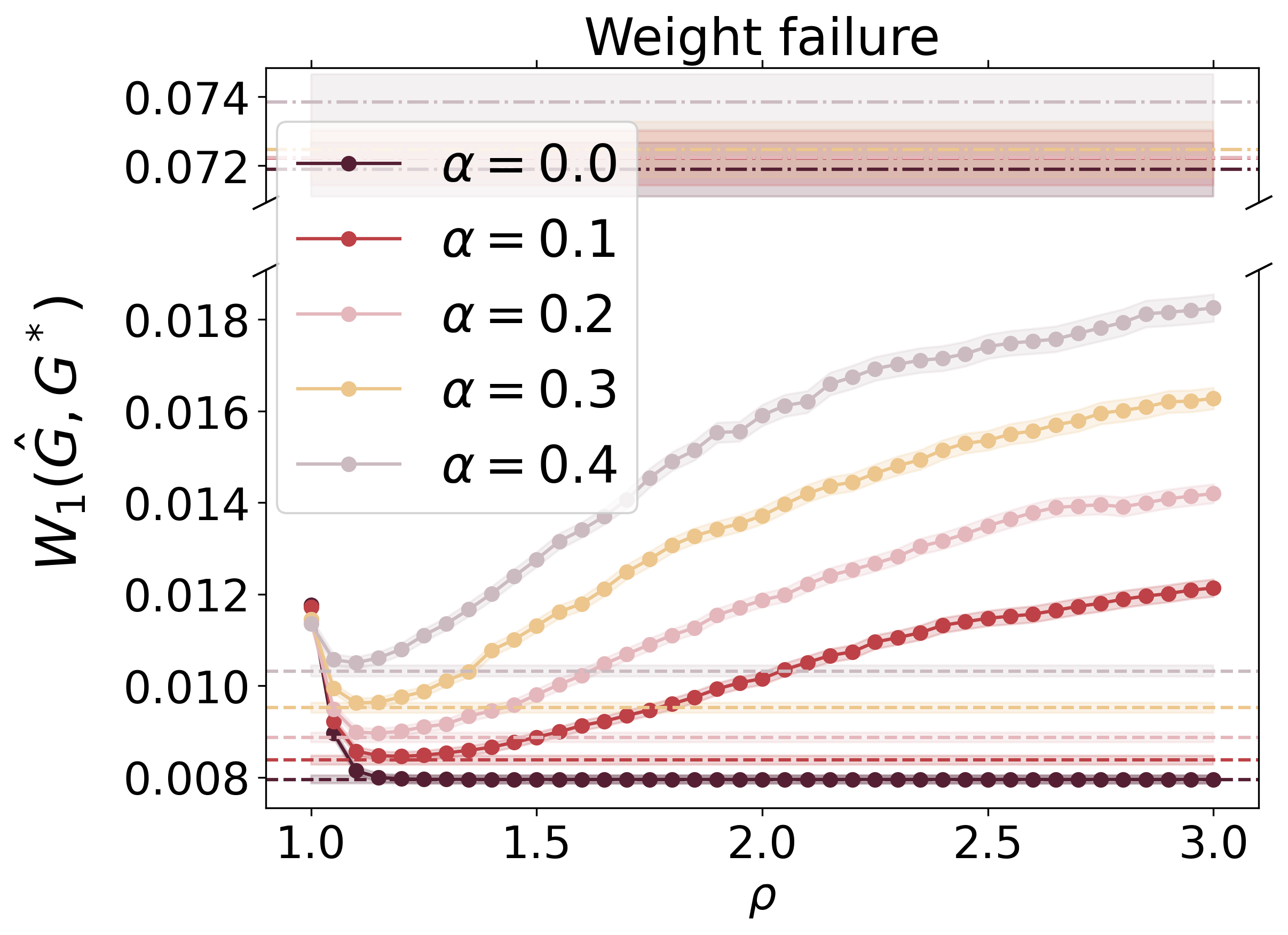}\\
\includegraphics[width=0.3\textwidth]{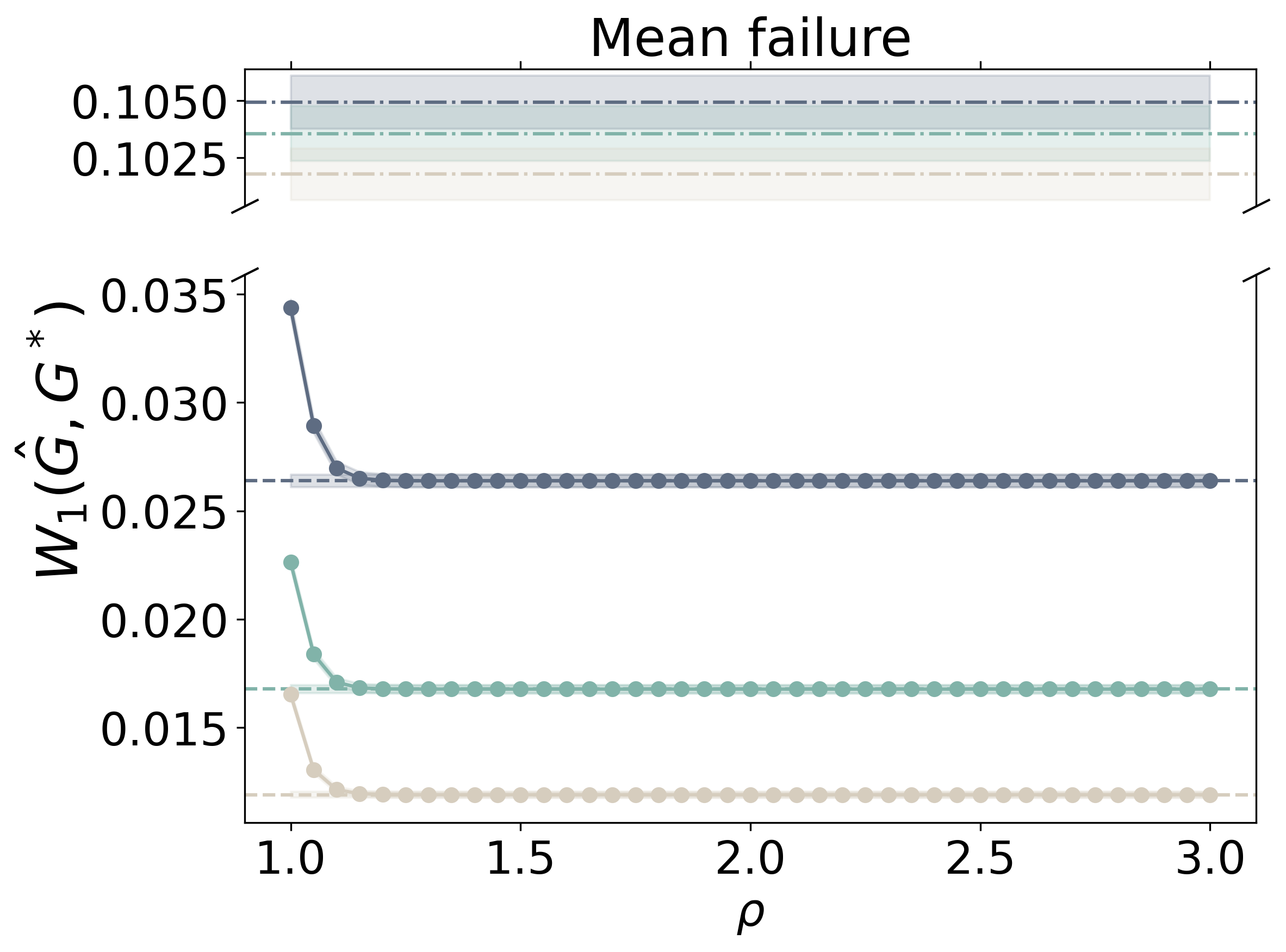}
\includegraphics[width=0.3\textwidth]{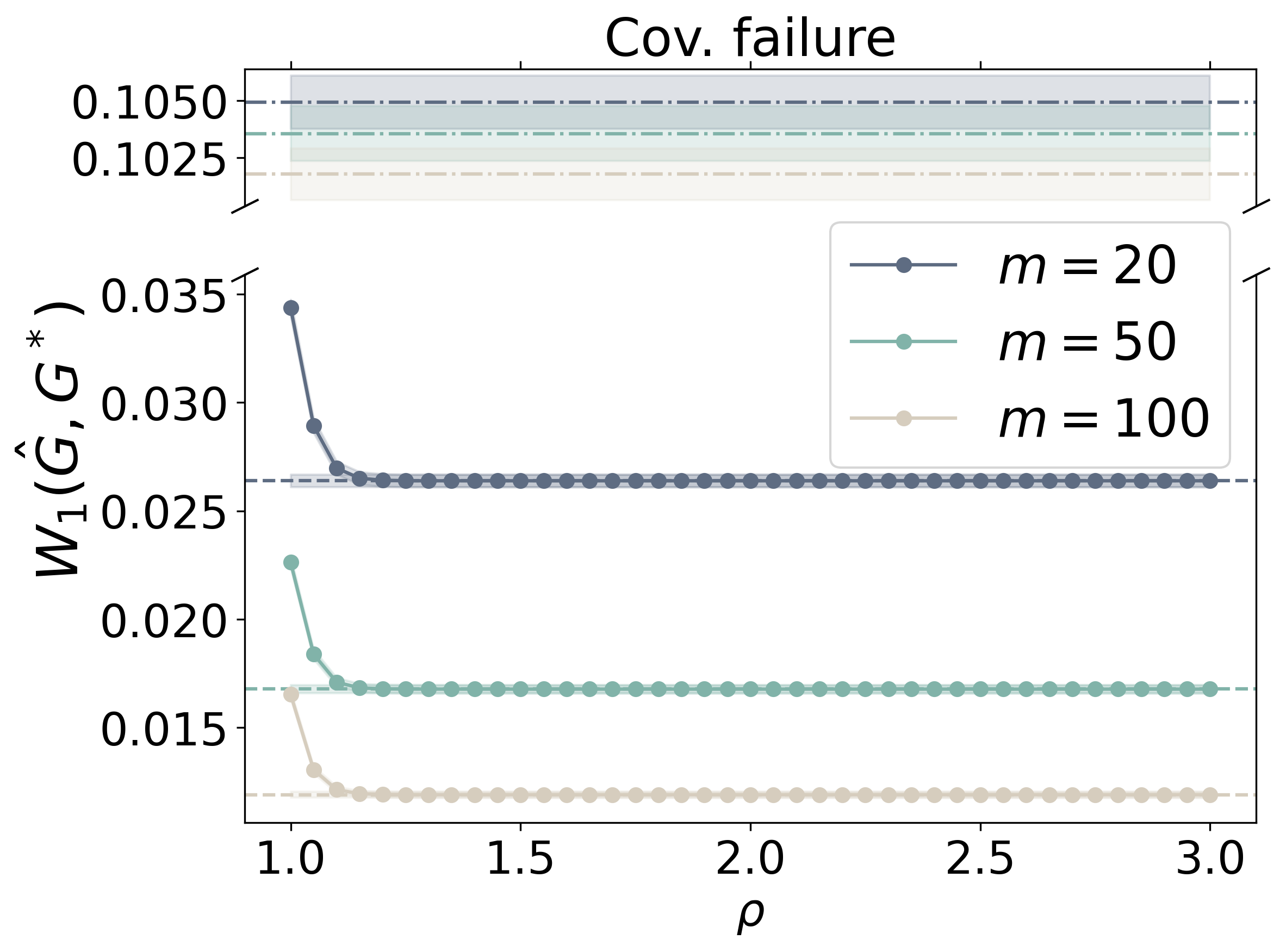}
\includegraphics[width=0.3\textwidth]{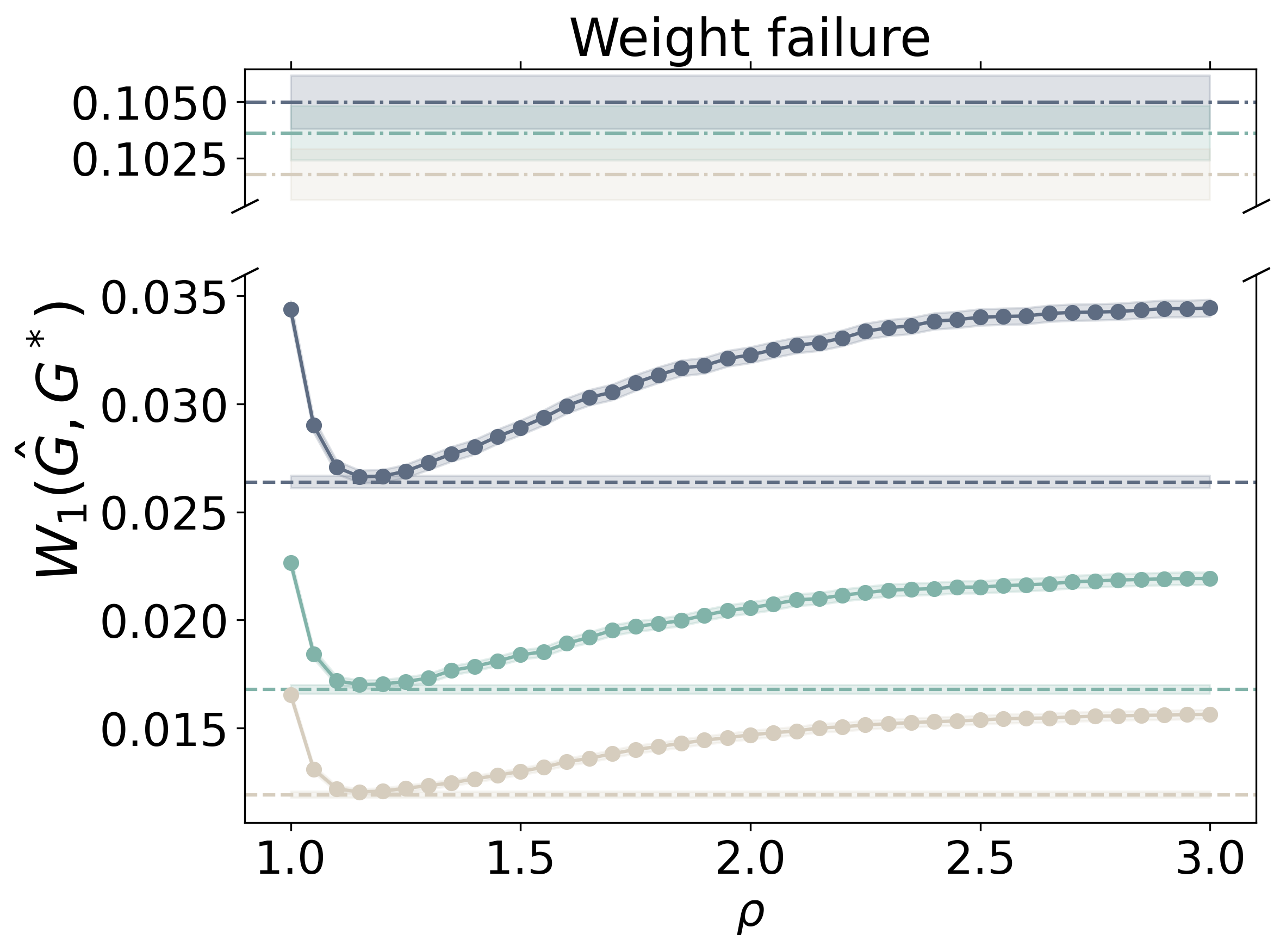}
\caption{The $W_1$ values of the DFMR($\rho$) approach as a function of the inflation factor $\rho$ under varying failure types, failure rates, and numbers of local machines. 
The dotted lines represent the DFMR($\rho$) approach, the dashed lines indicate the performance of the Oracle approach, and the dash-dotted lines correspond to the performance of the COAT approach.}
\label{fig:EEI_threshold}
\end{figure}
This analysis demonstrates that, while $\rho$ is a parameter in our method, it is not overly sensitive and does not require meticulous tuning. 
Based on these results, we identify $\rho=1.3$ as a robust and effective choice across all scenarios. 
Consequently, we adopt $\rho=1.3$ for the DFMR($\rho$) approach in subsequent experiments and we denote this estimator as DFMR(1.3). 
This choice aligns with the theoretically recommended value of $m^{1/10}$, which evaluates to $1.3$, $1.5$, and $1.6$ for $m=20$, $50$, and $100$, respectively. 
The robustness of DFMR to the choice of $\rho$ further enhances its practicality and ease of deployment in real-world applications.

\subsubsection{Simulation results} 
We present overall comparison in terms of performance metric $W_1$ for all estimators over $R=300$ repetitions. 
The ARI has a similar pattern and is deferred to Appendix~\ref{app:more_results} for conciseness.

\textbf{Performance with respect to total sample sizes and Byzantine failure rates.}
Figure~\ref{fig:w1_fixoverlap} shows the Wasserstein distance ($W_1$) of different estimates across various experimental settings, while maintaining a fixed number of local machines ($m=100$) and a specified degree of overlap (\texttt{MaxOmega}$=0.3$).
The rows represent results for mean failure, covariance failure, and weight failure respectively, while the columns denote failure rates ranging from $0.1$ to $0.4$. 
The relative performance of methods remains consistent across other combinations of $m$ and \texttt{MaxOmega}. 
Therefore, we present only the results under the most challenging scenario where \texttt{MaxOmega} is at its maximum.

Our proposed estimators clearly outperform other estimators in general, and their performance improves as $N$ increases.
The performance of the proposed DFMR(1.3) matches that of the oracle closely, while our DFMR(1) is comparable to oracle and is much better than the TRIM even though the latter also utilizes $50\%$ of local estimates. 
Under mean failure and covariance failure, Vanilla is not Byzantine tolerant.
The COAT estimator is Byzantine-tolerant, but it has the weakest performance among Byzantine-tolerant methods because of its reliance on a single local estimate. 
\begin{figure}[!htbp]
  \centering
  \includegraphics[width=0.9\textwidth]{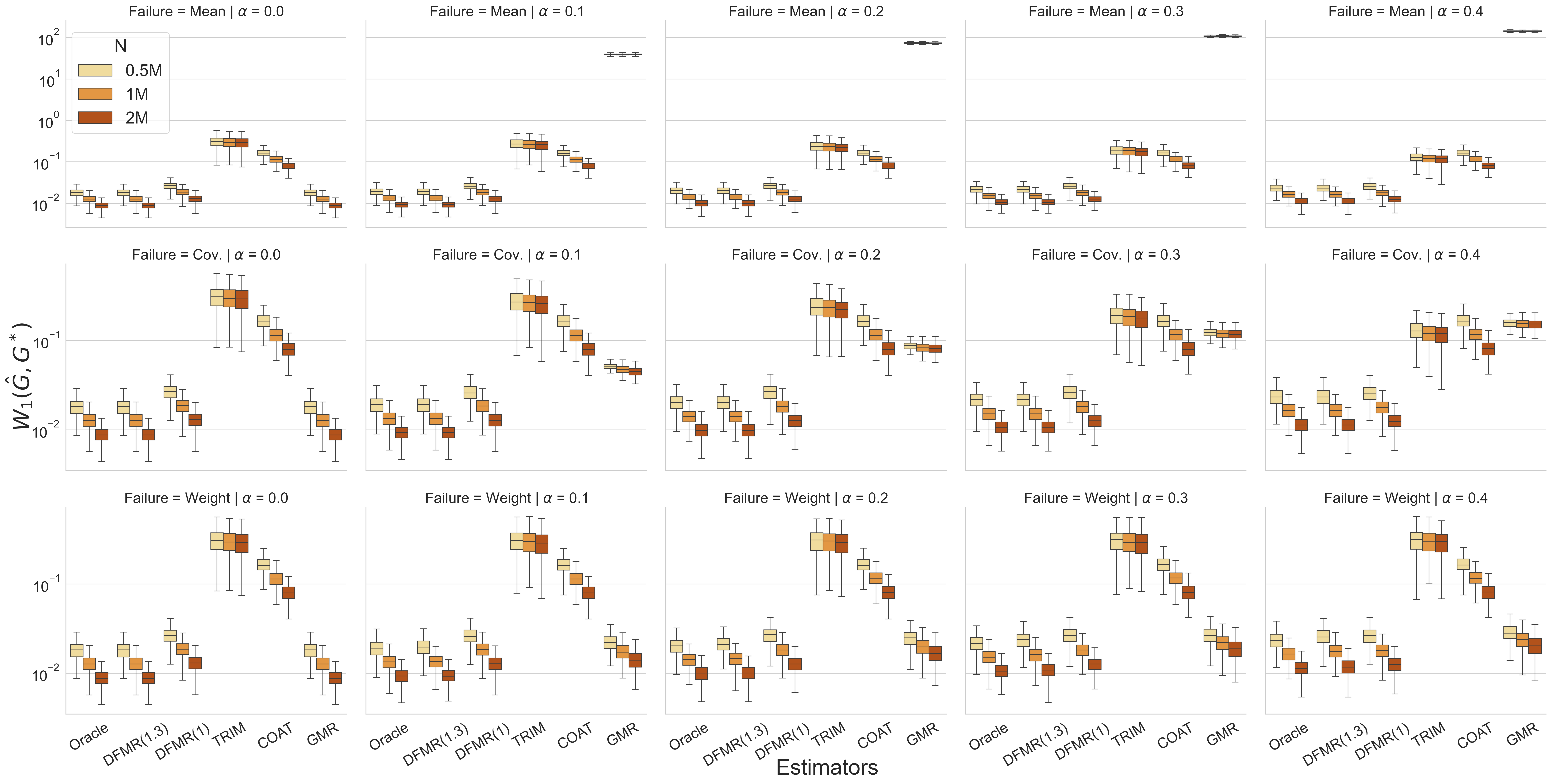}
  \caption{$W_1$ value comparison as $N$ and $\alpha$ varies, with $m=100$ and \texttt{MaxOmega}$=0.3$.}
  \label{fig:w1_fixoverlap}
\end{figure}
Under weight failure, both our DFMR(1) and DFMR(1.3) estimators remain highly competitive with the oracle estimator.
Interestingly, Vanilla also works well, whereas other estimators perform poorly, with TRIM being particularly ineffective. 
The suboptimal performance of TRIM is largely due to its ignorance to the mixing weights in detecting potential failed local estimates.

\textbf{Performance when the number of local machines scale up.}
In addition to sample sizes, the scalability of the estimators in terms of the number of local machines ($m$) is crucial in distributed learning. 
Figures~\ref{fig:w1_fixn_varym} shows the $W_1$ values of various estimators as $m$ increases in a representative scenario where \texttt{MaxOmega}$=0.3$ and $\alpha=0.1$, under various failure scenarios.
We examine two cases: first, where the local machine sample size $n$ is fixed, as illustrated in Figure~\ref{fig:w1_fixn_varym}, and second, where the total sample size $N$ is fixed, as shown in Figure~\ref{fig:w1_fixN_varym}. 
In both cases, the relative performance of different methods aligns with the findings in the previous setting.
Our proposed DFMR estimators consistently outperform other approaches.
\begin{figure}[!htbp]
\centering
\includegraphics[width=0.9\textwidth]{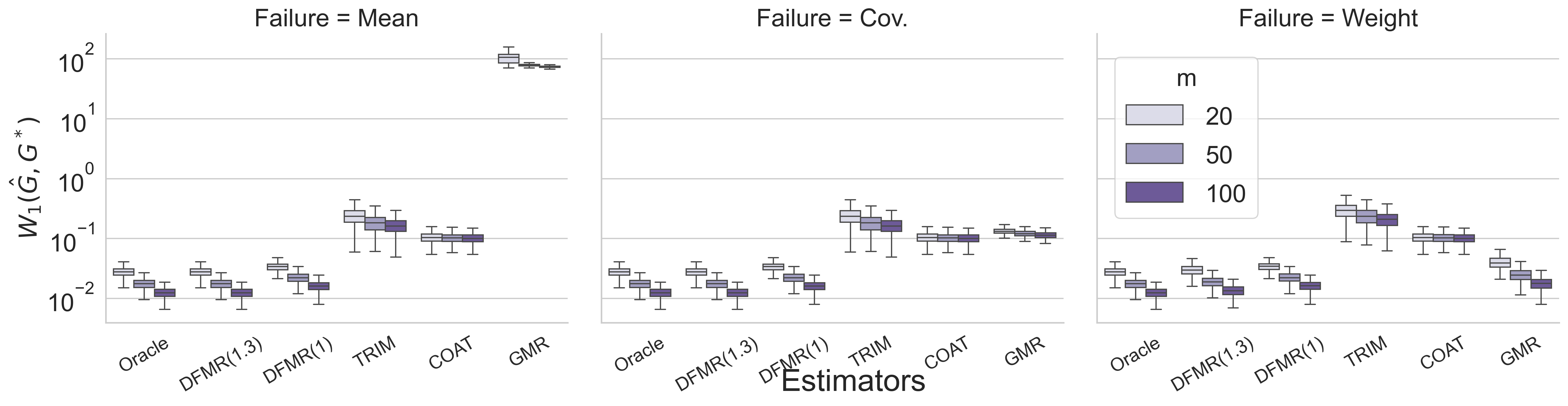}
\caption{$W_1$ values of different methods as $m$ varies, with $\alpha=0.1$, and \texttt{MaxOmega}$=0.3$. 
The local sample size $n=5000$, the total sample $N=nm$ increasing with $m$.}
\label{fig:w1_fixn_varym}
\end{figure}

In the first case where the total sample size increases with more machines, the performance of the proposed methods improves.
This is anticipated as our proposed estimators are consistent.
In the second case where the total sample size does not increase with more machines, the performance of the proposed methods does not change with different number of machines.
In comparison, the performance of COAT is determined solely by the local sample size $n$. 

\begin{figure}[!htbp]
\centering
\includegraphics[width=0.9\textwidth]{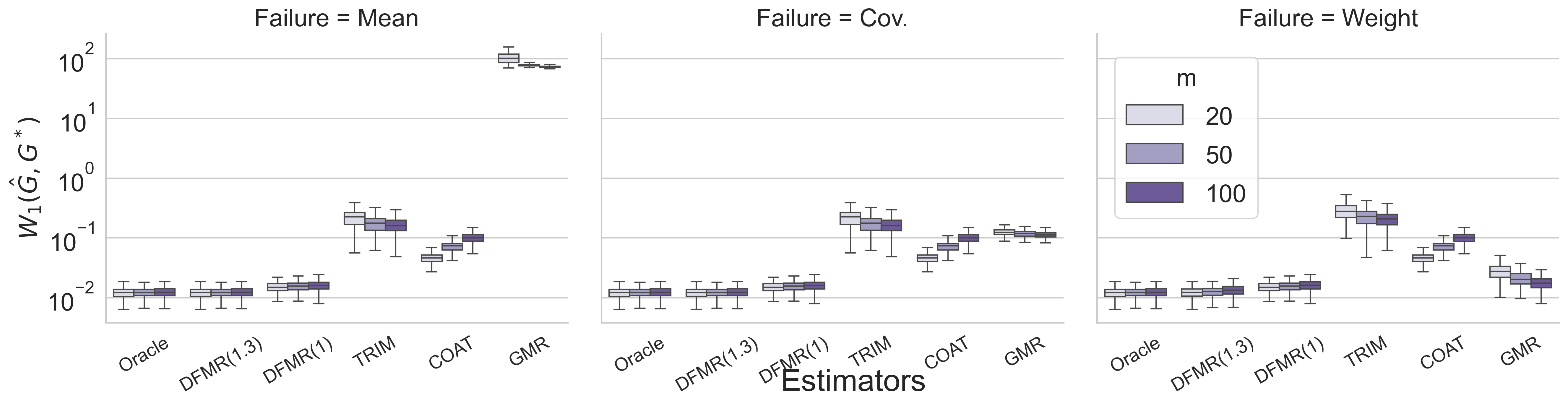}
\caption{$W_1$ values of different methods as $m$ varies, with $\alpha=0.1$, and \texttt{MaxOmega}$=0.3$. The total sample size $N=500K$, the local sample $n=N/m$ decreasing with $m$.}
\label{fig:w1_fixN_varym}
\end{figure}

\textbf{Performance with different degree of overlaps.} 
We next examine the performance of different approaches while varying the degree of overlap (\texttt{MaxOmega}). 
Figure~\ref{fig:w1_fixalpha_fixn} illustrates the $W_1$ of different methods in a representative case where $\alpha=0.2$, $m=100$, and local sample size $n=5000$, under various failure scenarios.
\begin{figure}[!htbp]
\centering
\includegraphics[width=0.9\textwidth]{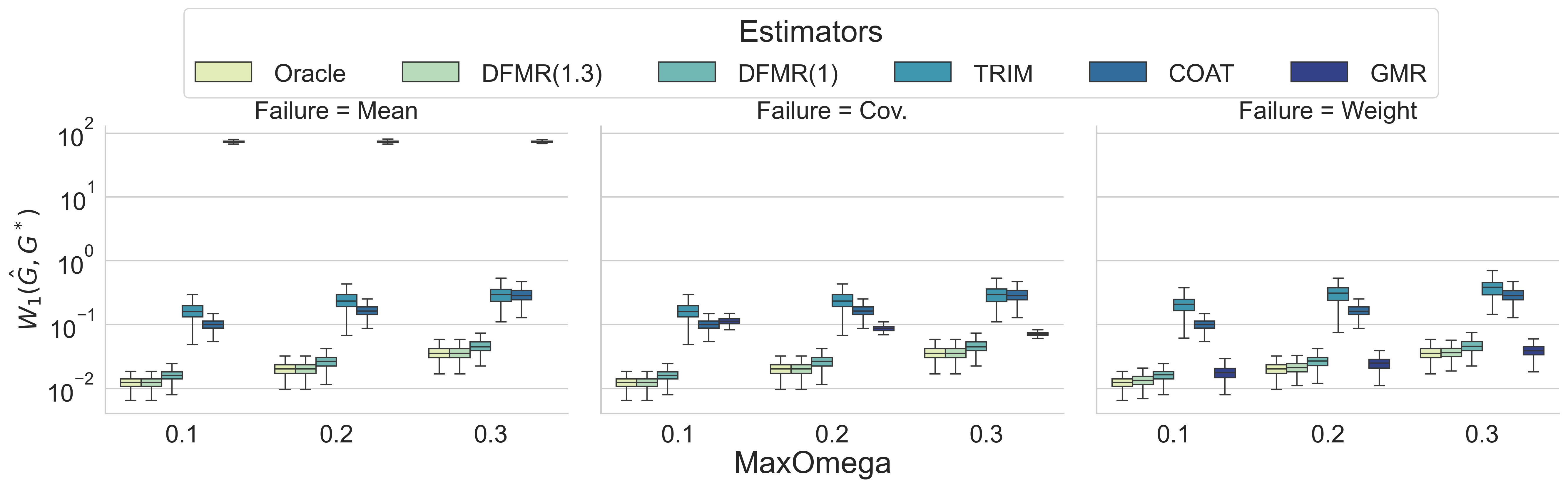}
\caption{The $W_1$ of different methods as the degree of overlap \texttt{MaxOmega} varies, with $\alpha=0.2$, $m=100$, and local sample size $n=5000$.}
\label{fig:w1_fixalpha_fixn}
\end{figure}
Our estimators remain comparable to the oracle estimator, regardless of the failure type and degree of overlap.
Other methods do not compete effectively with our method.

\subsection{Distance concentration under different failure types}
\label{sec:exp_distance_concentration}
To further explain the good empirical performance of DFMR demonstrated in the previous section, we perform further numerical simulations to show how the filter step in our method can distinguish between failed and failure-free local estimates, based on the concentration of the $L^2$ distance of failure-free mixture densities from the COAT.
In this experiment, we still consider Gaussian mixture and set $m=100$, $n=10^{4}$, and failure rate to $\alpha=0.3$.
We randomly generate $m$ IID sets of samples each with size $n$ from a randomly generated Gaussian mixture in $d$-dimensional space with $K$-components.
We consider different values for $K$ and $d$, hence leading to different number of parameters.
Same as the previous experiment, we fit a Gaussian mixture on each local machine.
After which, we randomly attack $\alpha m$ machines using different attack strategies as described earlier in Section~\ref{sec:exp_setting}. 
The results are shown in Figure~\ref{fig:dist_concentration}. 
The panels, arranged from left to right, correspond to an increasing number of parameters, with configurations $K=5, d=2$, $K=5, d=10$, and $K=10, d=10$, respectively.
\begin{figure}[!htbp]
\centering
\includegraphics[width=0.3\textwidth]{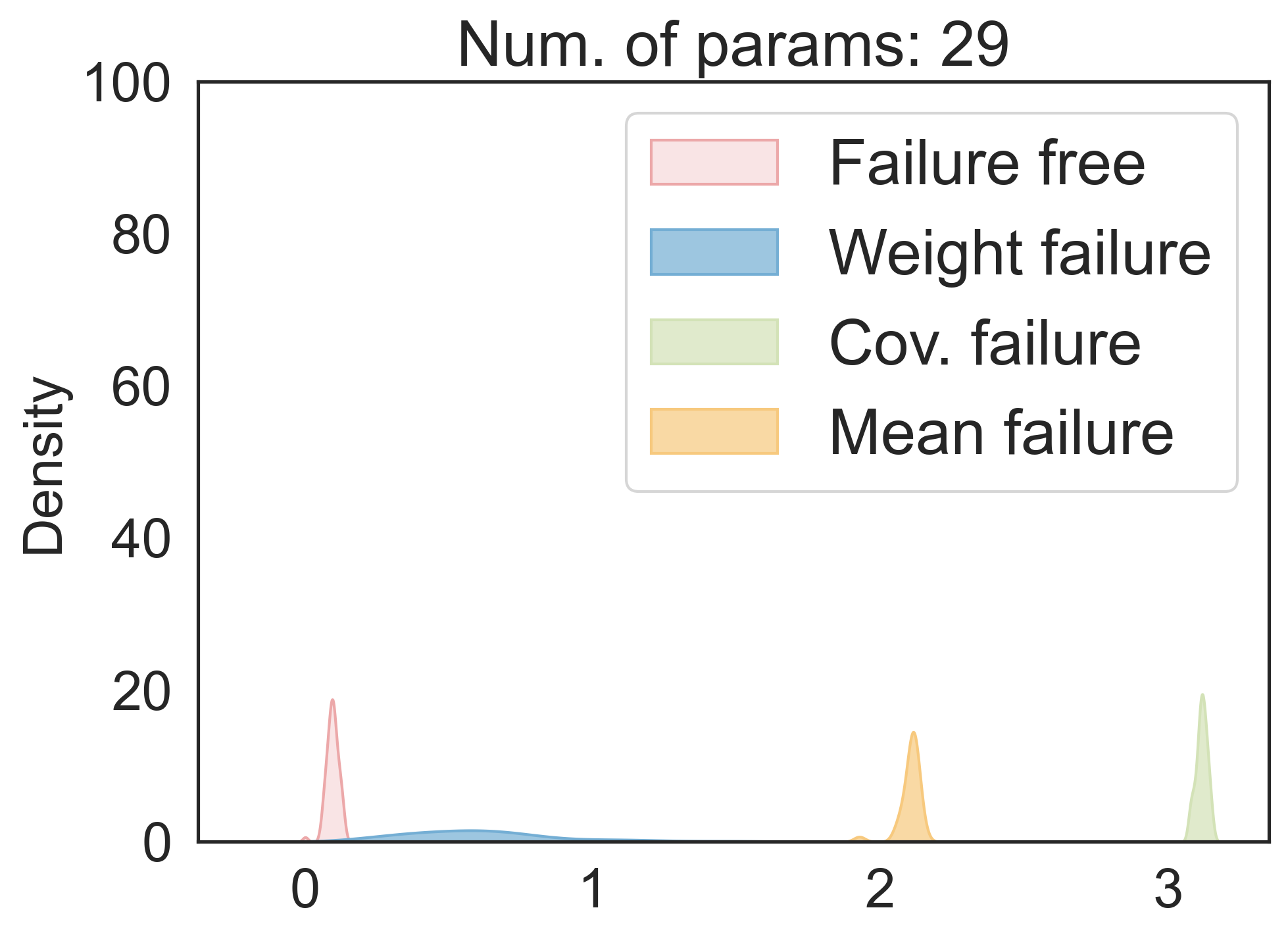}
\includegraphics[width=0.3\textwidth]{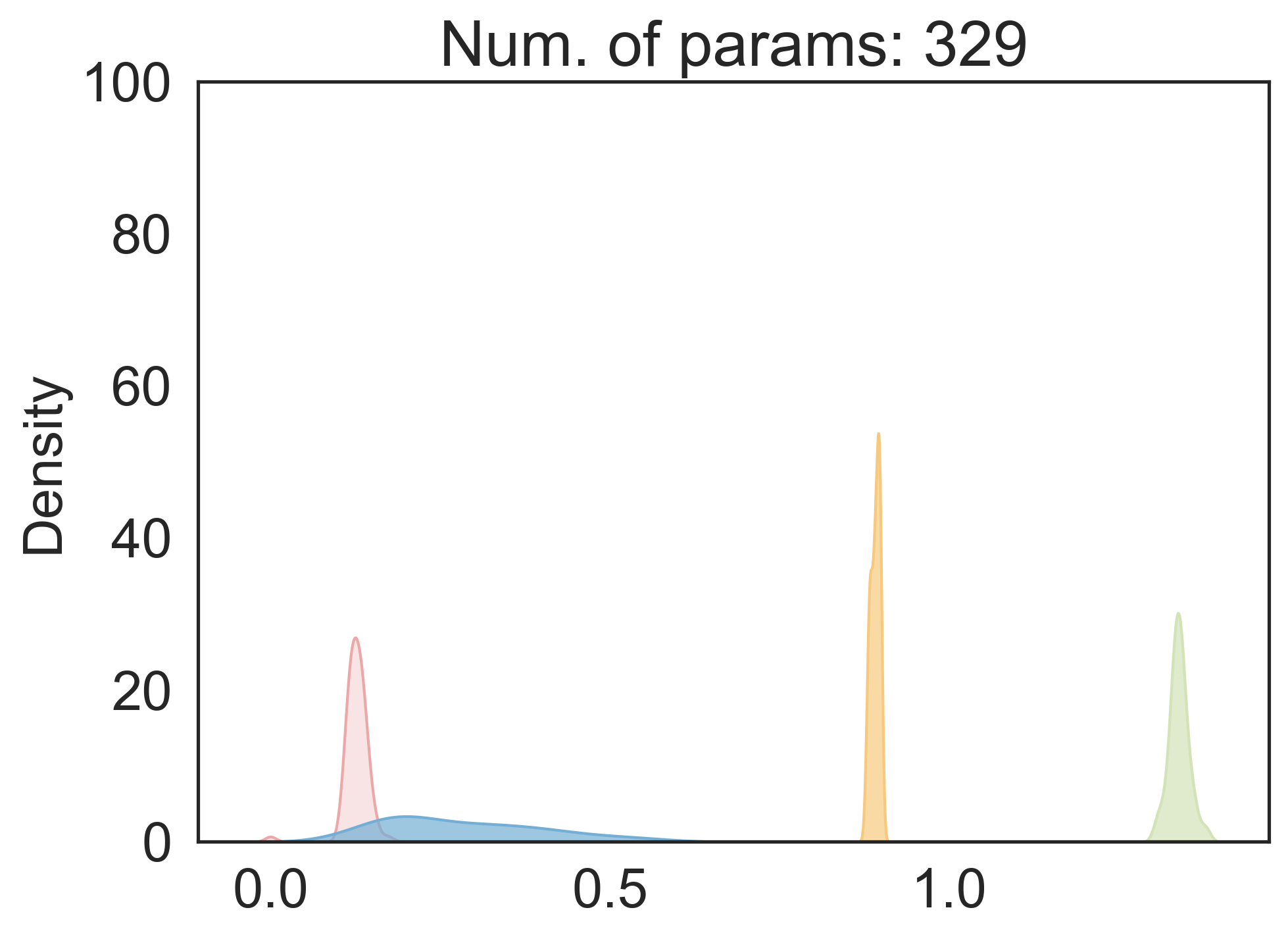}
\includegraphics[width=0.3\textwidth]{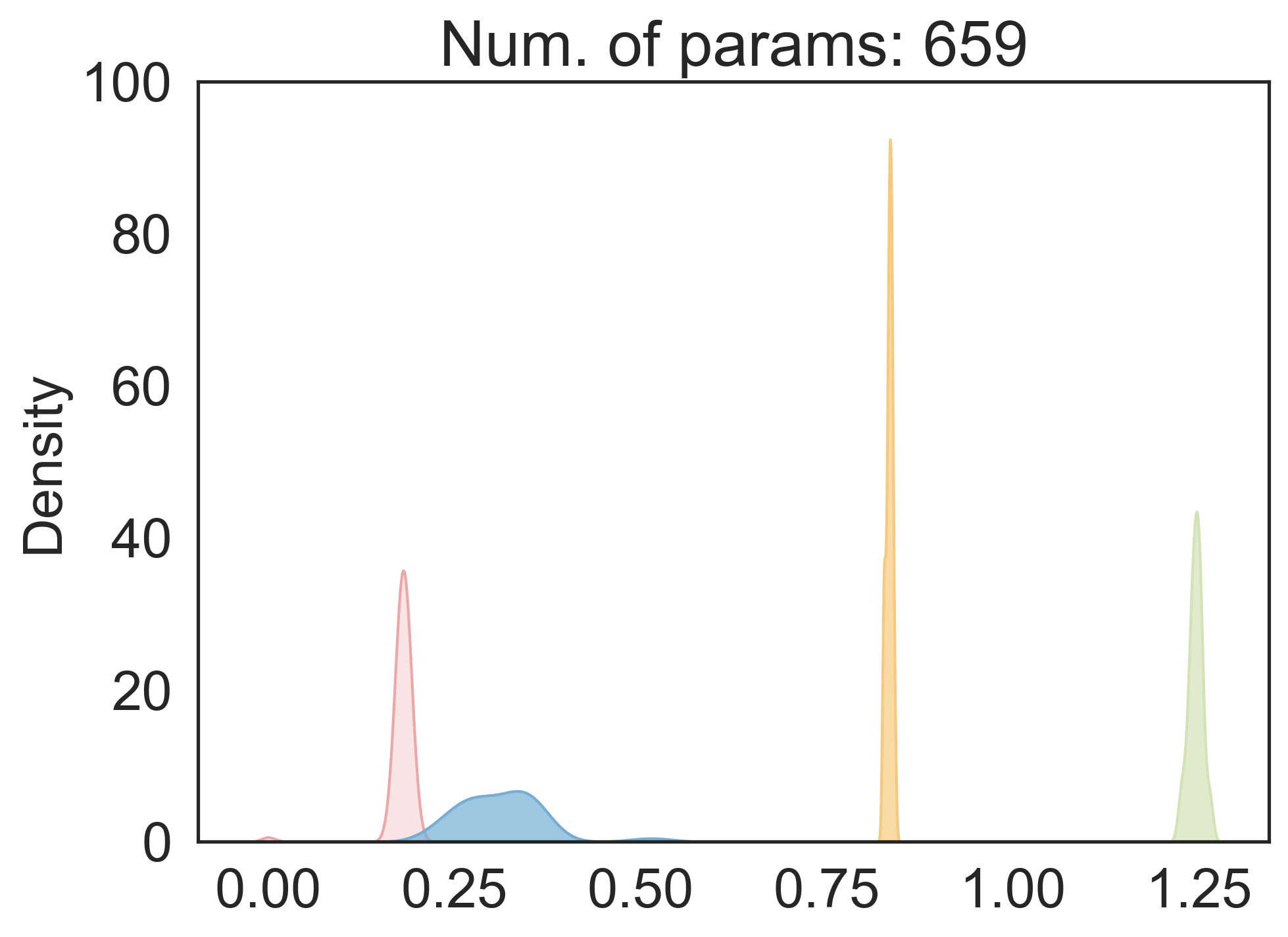}
\caption{Distribution of distances to the COAT mixture density under different failure types, represented by distinct colours, as the number of parameters increases.}
\label{fig:dist_concentration}
\end{figure}

Within each panel, we observe a pronounced gap in the distribution of distance to the true mixture density for failure-free local estimates and that for failed local estimates under both mean and covariance failure models, though the distance distribution under the weight failure model is less distinguishable. 
In fact, as the number of parameters grows, the distributions of the $L^2$ distances become sharper, highlighting the phenomenon of distance concentration in higher dimensions \citep{vershynin2018high}.
This complements our theoretical results, which are asymptotic in $n$ and $m$ but hold only for fixed $d$ and $K$, suggesting that DFMR continues to perform well even as $d$ and $K$ are large compared with the sample size $N$.
In other words, we expect DFMR to be applicable to modern high-dimensional datasets with high data heterogeneity.


\section{Real data analysis}
\label{sec:real_data}
Although naturally occurring Byzantine failures are an important application scenario, real-world failure data are rarely accessible due to security and confidentiality constraints. 
Consequently, it is standard in the distributed and robust learning literature to evaluate methods by introducing controlled perturbations into real datasets. 
Following this approach, we study Byzantine-tolerant aggregation of finite Gaussian mixtures for clustering using the NIST dataset~\citep{grother2016nist}, a widely used benchmark for handwritten digit recognition, under experimentally induced failures similar to those in~\citet{yin2018byzantine, zhu2023byzantine}.

We use the second edition of the dataset~\footnote{\scalebox{0.9}{Available at \url{https://www.nist.gov/srd/nist-special-database-19}.}}. 
It comprises approximately 4 million images of handwritten digits and characters (0--9, A--Z, and a--z) by different writers, partitioned into training and test sets. 
Figure~\ref{fig:nist} (a) gives example images in the dataset.
The number of training images for each digit and letter A--J are listed in Table~\ref{tab:nist_summary}.
\begin{table}[htpb]
\caption{Numbers of training images in the NIST dataset.}
\label{tab:nist_summary}
\centering
\small
\begin{tabular}{c||cccccccccc}
\toprule
Digits & 0 & 1 & 2 & 3 & 4 & 5 & 6 & 7 & 8 & 9 \\
\hline
Training & 34803& 38049& 34184& 35293& 33432& 31067& 34079& 35796& 33884& 33720\\
Test & 5560& 6655& 5888& 5819& 5722& 5539& 5858& 6097& 5695& 5813 \\
\midrule
Letter & A & B & C & D & E & F & G & H & I & J \\
Training & 7010& 4091& 11315& 4945& 5420& 10203& 2575& 3271& 13179& 3962\\
\bottomrule
\end{tabular}
\end{table}

Each image is a $128 \times 128$ pixel greyscale matrix, with entries ranging from $0$ to $1$, representing the darkness of the corresponding pixels, where a darker pixel has a value closer to $1$.
We reduce each image to a $d=50$ real valued vector via deep convolutional neural network. 
Further details of the datasets and the architecture of the neural network are provided in Appendix~\ref{app:real_data}.
Included in Figure~\ref{fig:nist} (b) is the t-SNE visualisation of the $50$-dimensional features of randomly selected samples in the dataset.
The green and red dots are randomly selected digits and letters respectively.
It is evident that features of the same digit or the same letter are clustered.
Given this, an approach to build digit recognizer is to regard the features of each digit as a random sample from a distinct Gaussian distribution. 
Thus, the pooled data forms a sample from a finite Gaussian mixture of order $K=10$. 

\begin{figure}[htbp]
\centering
\begin{subfigure}{0.4\textwidth}
 \includegraphics[width=0.18\textwidth]{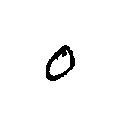}
 \includegraphics[width=0.18\textwidth]{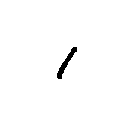}
 \includegraphics[width=0.18\textwidth]{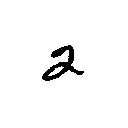}
 \includegraphics[width=0.18\textwidth]{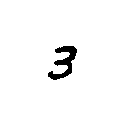}
 \includegraphics[width=0.18\textwidth]{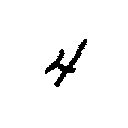}\\
 \includegraphics[width=0.18\textwidth]{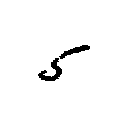}
 \includegraphics[width=0.18\textwidth]{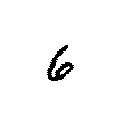}
 \includegraphics[width=0.18\textwidth]{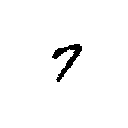}
 \includegraphics[width=0.18\textwidth]{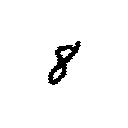}
 \includegraphics[width=0.18\textwidth]{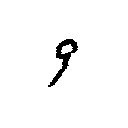}\\
 \includegraphics[width=0.18\textwidth]{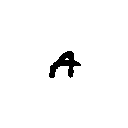}
 \includegraphics[width=0.18\textwidth]{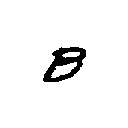}
 \includegraphics[width=0.18\textwidth]{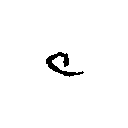}
 \includegraphics[width=0.18\textwidth]{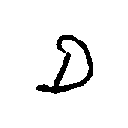}
 \includegraphics[width=0.18\textwidth]{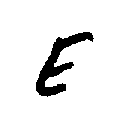}\\
 \includegraphics[width=0.18\textwidth]{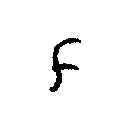}
 \includegraphics[width=0.18\textwidth]{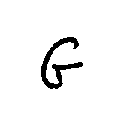}
 \includegraphics[width=0.18\textwidth]{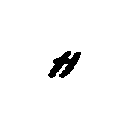}
 \includegraphics[width=0.18\textwidth]{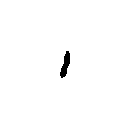}
 \includegraphics[width=0.18\textwidth]{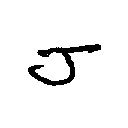}
 \caption{Raw data.}
 \label{fig:nist_raw}
\end{subfigure}
\begin{subfigure}{0.4\textwidth}
\includegraphics[width=\textwidth]{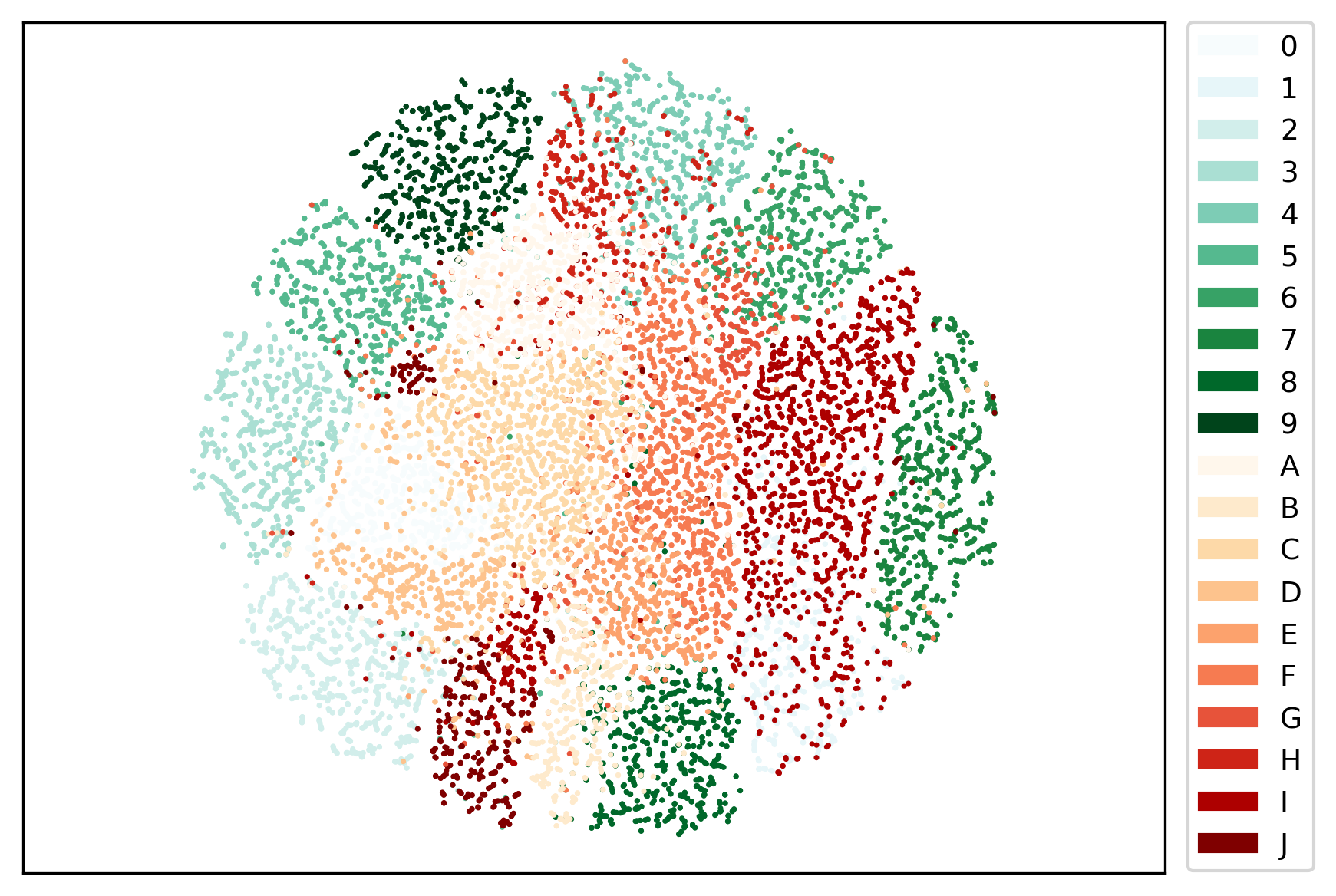}
\caption{t-SNE visualisation.}
\label{fig:nist_tsne}
\end{subfigure}
\caption{Random examples of digits and letters in the NIST dataset.}
\label{fig:nist}
\end{figure}

For illustration, we randomly select a dataset of size $N = 150,000$ and partition it into $m = 30$ subsets. 
We train a $K = 10$ order Gaussian mixture model on each local machine. 
Next, we introduce Byzantine failure to a number of local estimates in a specific manner.
For each Byzantine failure machine, we independently sample $n = 5,000$ random images of letters A--J and fit a $K = 10$ mixture model to this data. 
We then replace the mixture on the corresponding Byzantine failure machine with the fitted model. 
Note that in Figure~\ref{fig:nist} (b), the letter features differ from digit features, thus serving as a natural source of Byzantine failure.
Both procedures are repeated $R = 300$ times.

We use the estimators in Section~\ref{sec:estimators} for $\alpha\in \{0.0, 0.1,0.2,0.3,0.4\}$. 
These mixture estimates are then used to cluster images of handwritten digits in the test set. 
The ARI between the true label of the image and the predicted cluster label on the test set is given in Table~\ref{tab:real_data}.
\begin{table}[htbp]
\centering
\caption{Median (IQR) ARI on the NIST digits test set of different methods under different failure rates. For the DFMR($\rho$) with $\rho\in [1.35, 3.0]$ leads to the same ARI.}
\label{tab:real_data}
\resizebox{0.8\textwidth}{!}{
\begin{tabular}{ccccccc}
\toprule
$\alpha$ & Oracle & DFMR($\rho$) & DFMR(1) & Trim & COAT & Vanilla \\
\midrule
$0.0$ & 0.9195 \scriptsize{(0.0014)} & \textbf{0.9195 \scriptsize{(0.0014)}} & \textbf{0.9186 \scriptsize{(0.0018)}} & 0.9034 \scriptsize{(0.0116)} & 0.8896 \scriptsize{(0.0108)} & 0.9195 \scriptsize{(0.0014)} \\
$0.1$ & 0.9193 \scriptsize{(0.0015)} &  \textbf{0.9194 \scriptsize{(0.0014)}} & \textbf{0.9185 \scriptsize{(0.0018)}} & 0.9035 \scriptsize{(0.0118)} & 0.8898 \scriptsize{(0.0106)} & 0.9043 \scriptsize{(0.0050)} \\
$0.2$ & 0.9192 \scriptsize{(0.0015)} &  \textbf{0.9194 \scriptsize{(0.0013)}} & \textbf{0.9186 \scriptsize{(0.0020)}} & 0.9042 \scriptsize{(0.0112)} & 0.8898 \scriptsize{(0.0106)} & 0.9046 \scriptsize{(0.0044)} \\
$0.3$ & 0.9189 \scriptsize{(0.0017)} &  \textbf{0.9194 \scriptsize{(0.0015)}} & \textbf{0.9186 \scriptsize{(0.0018)}} & 0.9040 \scriptsize{(0.0107)} & 0.8898 \scriptsize{(0.0104)} & 0.9041 \scriptsize{(0.0046)} \\
$0.4$ & 0.9189 \scriptsize{(0.0017)} &  \textbf{0.9195 \scriptsize{(0.0014)}} & \textbf{0.9186 \scriptsize{(0.0018)}} & 0.9037 \scriptsize{(0.0117)} & 0.8892 \scriptsize{(0.0110)} & 0.9042 \scriptsize{(0.0049)} \\
\bottomrule
\end{tabular}}
\end{table}

Our proposed methods, DFMR(1) and DFMR($\rho$), demonstrate greater robustness to Byzantine failures compared to the Vanilla method. 
They also exhibit higher efficiency than other baseline approaches.
In the DFMR($\rho$) approach, we observe that a wide range of inflation factors, $\rho\in[1.35, 3.0]$, yields consistent ARI values. 
This indicates the robustness of DFMR($\rho$) to variations in the choice of the inflation factor.
Furthermore, we found that the DFMR($\rho$) approach outperforms the Oracle estimator, particularly as the failure rate approaches $0.4$.
This aligns with our theoretical analysis, which suggests that filtering out failure-free estimates can enhance the efficiency of the estimator.

The above semi-synthetic design provides a controlled framework that combines the structural richness of real data with carefully introduced adversarial perturbations, enabling reproducible and interpretable evaluation of DFMR. 
This setting goes beyond standard simulations by incorporating realistic feature representations while retaining control over the failure mechanism. 
Moreover, it offers a complementary perspective to purely synthetic experiments: since the learned feature embeddings for each digit are not guaranteed to follow a Gaussian distribution, the mixture model is inherently misspecified. 
The strong empirical performance of DFMR in this setting therefore further illustrates its robustness to both Byzantine contamination and model misspecification.


\section{Discussion and concluding remarks}
\label{sec:conclusion}
This paper addresses Byzantine-tolerant aggregation in split-and-conquer learning of finite mixture models. 
We propose Distance Filtered Mixture Reduction (DFMR), a robust aggregation procedure tailored to mixture models, and show that it is both computationally efficient and statistically well-founded. 
Our analysis establishes that, even under adversarial contamination, DFMR achieves near-optimal estimation accuracy while remaining scalable to large distributed systems.

It is useful to place our contribution in the broader context of distributed mixture estimation. 
When the true mixing distribution is fixed with well-separated components and the number of components $K$ is known, label switching becomes asymptotically negligible, as local estimators naturally align. 
Thus, label switching is not the primary difficulty in such regimes, and various aggregation strategies could in principle recover the correct alignment. 
The main challenge addressed here is instead to achieve robustness against Byzantine failures while preserving statistical efficiency. 
Our results show that DFMR effectively filters out corrupted local estimators and aggregates the remaining information in a principled manner, leading to optimal convergence guarantees. 
Technically, this requires non-trivial analysis: the filtering step is based on $L^2$ distances between mixture densities, whereas estimation accuracy is measured in terms of Euclidean distances between parameters. 
Relating these metrics, establishing concentration properties, and controlling misclassification in the filtering step are key challenges underlying our theoretical results.

Our analysis assumes that the number of components $K$ is known and that the mixing distribution remains fixed as the sample size increases in the usual asymptotic sense. 
While standard, these assumptions exclude more challenging regimes. 
For example, allowing component locations to vary with the sample size or to become less well-separated would require fundamentally new techniques. 
In such settings, key ingredients of our analysis,
such as concentration results based on asymptotic normality and Berry–Esseen-type arguments,
may no longer apply. 
Extending Byzantine-robust aggregation to these regimes is therefore non-trivial and an important direction for future work. 
At the same time, we view the present results as a necessary foundation for studying these more demanding scenarios. 
Similarly, selecting $K$ with theoretical guarantees remains open even in centralized settings, and becomes even more challenging in distributed environments.

To reduce communication costs, we focus on a single-round communication regime. 
Nevertheless, the method naturally extends to multi-round settings, where the aggregated estimator can serve as initialization for subsequent EM iterations on local machines. 
While such extensions are promising, our primary goal here is to establish a robust and theoretically grounded aggregation procedure, leaving multi-round refinements for future work.

Finally, we assume that all components on a failed machine are affected by Byzantine failures. 
In practice, some components may remain intact, and exploiting such partial reliability could further improve efficiency and robustness—an interesting direction for future research.

The mixture reduction technique of \citet{zhang2022distributed} has been extended to distributed learning of mixtures of experts~\citep{chamroukhi2023distributed}. 
Mixtures of experts are powerful tools for scaling neural networks to handle complex and heterogeneous data, and are widely used in large language models~\citep{du2022glam, liu2024deepseek}, computer vision~\citep{lin2024moe}, and multimodal learning~\citep{mustafa2022multimodal}. 
Training such models requires substantial computational resources, making distributed learning essential for scalability~\citep{duan2024efficient}. 
Ensuring robustness in these settings is therefore critical, and the ideas underlying DFMR may provide a useful starting point for developing Byzantine-tolerant procedures in these more complex models.

\section*{Funding}
Qiong Zhang is supported by the National Natural Science Foundation of China Grant 12301391.
Yan Shuo Tan is supported by NUS Startup Grant A-8000448-00-00 and the Singapore Ministry of Education (MOE) AcRF Tier 1 Grants A-8002498-00-00 and A-8004458-00-00.
Jiahua Chen is supported by the Natural Sciences and Engineering Research Council of Canada
under Grant RGPIN-2025-03989.

\bibliographystyle{apalike}
{\small \bibliography{biblio}}

\clearpage
\newpage
\appendix
\section{Numerical algorithms}
\subsection{EM algorithm}
\label{app:em_algorithm}
The Maximum Likelihood Estimate (MLE) is widely used for local inference under finite mixture models. 
The preferred numerical method for computing the MLE is the Expectation-Maximisation (EM) algorithm, which is most conveniently described using the latent variable interpretation of the mixture model.

For the $j$th unit with an observed value $x_{ij}$ from the finite mixture $f_{G}(x)$ on the $i$th machine, there exists a latent variable $z_{ij}$ associated with the observed value. 
When the latent variables $\{z_{ij}, j \in [n]\}$ are known, the complete dataset $\{(z_{ij}, x_{ij})\}$ is obtained. 
This leads to the complete data log-likelihood function:
\[
\ell_{n}^c(G) 
 = \sum_{j=1}^{n} 
 \sum_{k=1}^K \mathbbm{1}(z_{ij}=k)\log\{w_kf(x_{ij}; \theta_k)\}.
 \]
However, the complete data log-likelihood cannot be directly used to estimate $G$ since the latent variables are unknown. 
In the EM algorithm, one replaces $z_{ij}$ in the complete data log-likelihood by its conditional expectation in each iteration.
Let $G^{(t)}$ denote the value of the mixing distribution after the $t$th iteration. 
The conditional probability of $z_{ij}$ given the observed data is:
\begin{equation}
\label{eq:conditional_expectation}
w_{ijk}^{(t)} 
= 
\sP(z_{ij}=k| G^{(t)}, \gX)
= 
\dfrac{w_k^{(t)} f(x_{ij};\theta_k^{(t)})}
{\sum_{k'=1}^{K} w_{k'}^{(t)} f(x_{ij}; \theta_{k'}^{(t)})}.
\end{equation}
The conditional expectation of $\ell_{n}^c(G) $ is then given by:
\begin{equation*}
Q(G|G^{(t)}) 
= 
\sum_{j=1}^{n} \sum_{k=1}^K w_{ijk}^{(t)}\log \{w_kf(x_{ij}; \theta_k)\}.
\end{equation*}
This constitutes the E-step of the algorithm.

In the M-step, instead of maximizing the complete data log-likelihood $\ell_{n}^c(G)$, the algorithm seeks the maximizer of $Q(G|G^{(t)})$ for $G \in \sG_K$. 
This task is simpler because $Q(G |G^{(t)})$ is additive in subpopulation parameters which allows for easier numerical computation.
As a result, the updated mixing distribution $G^{(t+1)}$ is composed of mixing weights:
\begin{equation*}
w_{k}^{(t+1)} =   n^{-1} \sum_{j=1}^{n} w_{ijk}^{(t)}
\end{equation*}
and subpopulation parameters: 
\begin{equation*}
\theta_{k}^{(t+1)} = \argmax_{\theta} \left\{\sum_{j=1}^{n} w_{ijk}^{(t)} \log f(x_{ij};\theta)\right\}.
\end{equation*}
For many parametric subpopulation models $\gF$, there exists an analytical solution for $\theta_{k}^{(t+1)}$, making the EM iteration straightforward to execute. 
Repeating the iteration generates a sequence of mixing distributions. 
Under certain conditions,~\citet{wu1983convergence} show that $\ell_{n}(G^{(t)}) = \sum_{j=1}^{n} \log f_{G^{(t)}}(x_{ij})$ is an increasing sequence, and the sequence $\{G^{(t)}, t=1,2,\ldots\}$ converges to a local maximum of the log-likelihood function.

Despite its widespread use, the EM algorithm suffers from a slow convergence rate and can easily become trapped in local maxima. 
Various approaches have been proposed to accelerate convergence, see~\citet{meilijson1989fast},~\citet{meng1993maximum},~\citet{liu1994ecme}, and~\citet{balakrishnan2017statistical}.

\textbf{Issues with the MLE under finite Gaussian mixtures and finite location-scale mixtures.}
The MLE is not well-defined for widely used finite Gaussian and location-scale mixtures in general. 
To address this issue, several approaches have been proposed. 
For example,~\citet{hathaway1985constrained} introduces a constrained maximum likelihood formulation to avoid the unboundedness of the likelihood function, resulting in an estimator with desirable consistency properties. 
However, this approach alters the parameter space, which may be undesirable. 
Alternatively,~\citet{ridolfi2001penalized} proposes a Bayesian approach by imposing an inverse gamma prior on the scale parameter in Gaussian mixture model. 
The posterior mode, known as the maximum a posterior (MAP) estimator, is later shown to be consistent by~\citet{chen2008inference}.

In~\citet{chen2008inference}, a penalized log-likelihood function is defined as:
\begin{equation*}
p\ell_n(G) = \ell_n(G) - a_n\sum_{k=1}^K p(\theta_k)
\end{equation*}
where $p(\cdot)$ is a penalty function and $a_n > 0$ is the penalty strength. 
The penalized MLE (pMLE) is then:
\begin{equation*}
\widehat{G}^{\text{pMLE}}= \argsup_{G\in\sG_{K}} p\ell_n(G)
\end{equation*}
The penalty function is applied to individual subpopulation parameters, ensuring ease of implementation in the modified EM algorithm.
Three penalty functions are recommended for different mixtures of $\gF$:
\begin{itemize}
\item 
Under \emph{finite Gaussian mixtures}, let $S_{x}$ be the sample covariance matrix of $\gX_{i}$.
\citet{chen2009inference} recommends the penalty function:
\[p(\theta) = \text{tr}(\Sigma^{-1}S_{x}) + \log\text{det}(\Sigma)\]
where $\text{tr}(\cdot)$ is the trace of a square matrix.
The penalty function reduces to $p(\theta) = s_x^2/\sigma^2 + \log \sigma^2$ under univariate Gaussian mixtures~\citep{chen2008inference}. 

\item 
Under \emph{finite location-scale mixtures} with location $\mu$ and scale $\sigma$, the sample variance $s_x^2$ in the previous penalty function can be replaced by a scale-invariant statistic, such as the squared sample interquartile range.
This is particularly useful when the variance of $f_0(\cdot)$ is not finite.

\item 
Under finite mixture of \emph{two-parameter Gamma distributions}, the likelihood is also unbounded. 
\citet{chen2016consistency} recommends the penalty function to be $p(\theta)= r - \log r$, where $r$ is the shape parameter in the Gamma distribution.
\end{itemize}

\textbf{The EM algorithm for penalized MLE.}
The EM algorithm can be easily adapted to compute the pMLE with the recommended penalty functions~\citep{chen2009inference}. 
Using the latent variable interpretation, the penalized complete data log-likelihood is:
\[
p\ell_{n}^c(G) = \sum_{j=1}^{n} \sum_{k=1}^K \mathbbm{1}(z_{ij}=k)\log \{ w_k f(x_{ij};\theta_k)\} - a_n \sum_{k=1}^{K} p(\theta_k) 
\]
The only random quantity in $p\ell_{n}^c(G)$ is $\{z_{ijk}, j \in [n], k\in[K]\}$ when conditioning on $\gX_i$.
The E-step involves computing the conditional expectation of $p\ell_{n}^c(G)$, which remains valid as in~\eqref{eq:conditional_expectation}. 
The M-step maximizes $Q(G|G^{(t)})$, which now includes a penalty term:
\begin{equation*}
Q(G|G^{(t)}) = \sum_{j=1}^{n} \sum_{k=1}^K w_{ijk}^{(t)} \log \{w_k f(x_{ij};\theta_k)\} - a_n \sum_{k=1}^{K} p(\theta_k).
\end{equation*}

With the recommended penalty function, the updated mixing weights and subpopulation parameters are
\[
w_{k}^{(t+1)} = n^{-1} \sum_{j=1}^{n} w_{ijk}^{(t)}
\]
and
\begin{equation}
\label{eq:m_step_subpop}
\theta_{k}^{(t+1)} 
= \argmax_{\theta} \left\{\sum_{j=1}^{n} w_{ijk}^{(t)} \log f(x_{ij};\theta) - a_n p(\theta)\right\}.
\end{equation}

For Gaussian mixtures~\citep{chen2009inference}, the solution to~\eqref{eq:m_step_subpop} has the closed-form solution:
\[
\begin{split}
  \mu_{k}^{(t+1)}  
      &= \big \{ n w_k^{(t+1)} \big \}^{-1}  \sum_{j=1}^{n} w_{ijk}^{(t)} x_{ij}, \\
  \Sigma_{k}^{(t+1)}  
      &= \big \{ 2a_{n} + n w_{k}^{(t+1)} \big \}^{-1} \big \{ 2a_{n} S_{x} + S_k^{(t+1)} \big \},
\end{split}
\]
where 
\[
S_k^{(t+1)} = \sum_{j=1}^{n} w_{ijk}^{(t)} (x_{ij} - \mu_k^{(t+1)}) (x_{ij}-\mu_k^{(t+1)} )^{\top}.
\] 
For general location-scale mixtures, the M-step may not have a closed-form solution but only requires solving a two-variable optimisation problem, which is usually simple.

The EM algorithm for pMLE, like its MLE counterpart, increases the value of the penalized likelihood after each iteration. 
For all $t$, $\Sigma_{k}^{(t)} \geq { 2a_{n}/ (n+2a_{n})} S_{x} > 0$, ensuring the covariance matrices in $G^{(t)}$ have a lower bound that does not depend on the parameter values. 
This property guarantees that the log-likelihood under Gaussian mixture at $G^{(t)}$ has a finite upper bound. 
The above iterative procedure is guaranteed to have $p\ell_{n}^c(G^{(t)})$ converge to at least a non-degenerate local maximum.

\subsection{MM algorithm for reduction estimator}
\label{app:mm_algorithm}
In this section, we summarize the Majorisation Minimisation (MM) algorithm for minimizing~\eqref{eq:GMR} in~\citet{zhang2022distributed} for reference.
The algorithm begins with an initial $G^\dagger = \sum_{j=1}^K w^\dagger_j \{\theta_j^{\dagger}\}$ and alternates between two steps until convergence:

\begin{itemize}
  \item 
  Majorisation step: Partition the support points of $\widebar{G}$, $\{\widehat \theta_{i k}: i \in [m], k\in [K]\}$, into $K$ groups based on their closeness to $\theta_j^{\dagger}$.
  The $k$th component on the $i$th machine belongs to the $j$th cluster $\gC_{j}$ if $c(\widehat\theta_{ik},\theta_j^{\dagger})\leq c(\widehat\theta_{ik},\theta_{j'}^{\dagger})$ for any $j'$.
  
  \item Minimisation step:
  Update $\theta_j^{\dagger}$ by computing the barycentre of $\widehat \theta_{ik}$ that belongs to $j$th cluster with respect to the cost function $c(\cdot,\cdot)$.
  Specifically, $\theta_j^{\dagger} = \argmin_{\theta\in\Theta}\sum_{\{(i,k)\in \gC_{j}\}} c(\widehat\theta_{ik}, \theta)$ and assign their mixing weights to $w^\dagger_j=\sum_{\{(i,k)\in \gC_{j}\}} (n_i/N) \widehat w_{ik}$.
\end{itemize}

\begin{algorithm}
\begin{algorithmic}
\State {\bfseries Initialisation:} $f(\cdot;\theta^{\dagger}_{k})$, $k\in [K]$
\Repeat
\For {$k\in[K]$}
\State {\textbf{Majorisation:} For $\gamma\in[mK]$, let
\begin{equation*}
  \pi_{\gamma k}=
  \begin{cases}
  (n_{\gamma}/N)\widehat w_{\gamma} & \text{if}~k=\argmin_{k'} c(f(\cdot;\widehat\theta_{\gamma}),f(\cdot;\theta^{\dagger}_{k'}))\\
  0& \text{otherwise}
  \end{cases}
\end{equation*}}
\State{\textbf{Minimisation:} Let 
\begin{equation}
\label{eq:barycentre}
\theta^{\dagger}_{k}=\argmin_{\theta} \sum_{\gamma} \pi_{\gamma k} c(f(\cdot;\widehat\theta_{\gamma}), f(\cdot;\theta))
\end{equation}}
\EndFor
\Until 
the change in $\sum_{\gamma,k}\pi_{\gamma k}c(f(\cdot;\widehat\theta_{\gamma}), f(\cdot;\theta^{\dagger}_k))$ is below some threshold
\State{
Let $w^{\dagger}_{k} = \sum_{\gamma} \pi_{\gamma k}$}
\State {\bfseries Output:} $\sum_{k}w_k^{\dagger}f(\cdot;\theta^{\dagger}_{k})$
\end{algorithmic}
\caption{MM algorithm for GMR estimator under a general cost function $c(\cdot, \cdot)$.}
\label{alg:mm_reduction}
\end{algorithm}

The MM algorithm is summarized in Algorithm~\ref{alg:mm_reduction} for a general cost function $c(\cdot,\cdot)$ defined on $\gF$.
For simplicity, we order $\{\widehat\theta_{ik}\}$ arbitrarily and use a single index $\gamma$ in the algorithm.
When $\gF = \{\phi(x;\mu,\Sigma)\}$ is the space of Gaussian distributions and the cost function is the KL divergence between two Gaussian distributions:
\begin{align*}
&c(\phi(x;\widehat\mu_{\gamma},\widehat\Sigma_{\gamma}),\phi(x;\mu^{\dagger}_k,\Sigma^{\dagger}_k))\\
=&~\KL(\phi(x;\widehat\mu_{\gamma},\widehat\Sigma_{\gamma})\|\phi(x;\mu^{\dagger}_k,\Sigma^{\dagger}_k))\\
=&~-\log \phi(\widehat\mu_{\gamma};\mu^{\dagger}_k,\Sigma^{\dagger}_k) 
- \frac{1}{2}\Big \{\log\text{det}(2\pi\widehat\Sigma_{\gamma})-\mbox{tr}(\{\Sigma^{\dagger}_k\}^{-1}\widehat\Sigma_{\gamma}) + d
\Big\},
\end{align*}
the minimisation step~\eqref{eq:barycentre} has an analytical solution:
\begin{equation*}
\begin{split}
&\mu^{\dagger}_{k} =\Big\{\sum_{\gamma}\pi_{\gamma k}\Big\}^{-1}\sum_{\gamma}\pi_{\gamma k}\widehat\mu_{\gamma},\\
&\Sigma^{\dagger}_{k}= \Big\{\sum_{\gamma}\pi_{\gamma k}\Big\}^{-1}\sum_{\gamma}\pi_{\gamma k}\{\widehat\Sigma_{\gamma} + (\widehat\mu_{\gamma}-\mu^{\dagger}_{k})(\widehat\mu_{\gamma}-\mu^{\dagger}_{k})^{\top}
\}.  
\end{split}
\end{equation*}

\subsection{Numerical algorithm for TRIM}
\label{app:trim_algorithm}
In this section, we summarize the iterative algorithm for the TRIM estimator proposed in~\citet{del2019robust} for reference in Algorithm~\ref{alg:trimmed_barycentre}.
\begin{algorithm}[!htbp]
\begin{algorithmic}
\State {\bfseries Initialisation:} $\{f(\cdot;\theta^{\dagger}_{k})\}_{k=1}^{K}$, threshold $\eta \in (0,1)$
\Repeat
\For {$\gamma\in[mK]$}
\State{$c_{\gamma} = \argmin_{j} c(f(\cdot;\tilde\theta_{\gamma}), f(\cdot;\theta^{\dagger}_{j})), l_{\gamma} = c(f(\cdot;\tilde\theta_{\gamma}), f(\cdot;\theta^{\dagger}_{c_{\gamma}}))$}
\Comment{Find the cluster assignment and compute its distance to the cluster centre}
\EndFor
\State Find the permutation $\{(1),\ldots, (mK)\}$ such that $l_{(1)}\leq l_{(2)}\leq \cdots\leq l_{(mK)}$
\State{Set $\tau = \inf\{\eta \in [mK]: \sum_{\gamma=1}^{\eta}\widetilde{w}_{(\gamma)}\leq 1-\eta\}$, let
\begin{equation*}
\kappa_{\gamma} = 
\begin{cases}
\widetilde{w}_{(\gamma)} & \gamma <\tau\\
1-\eta - \sum_{\gamma < \tau} \widetilde{w}_{(\gamma)} & \gamma = \tau\\
0 & \text{otherwise}
\end{cases}
\end{equation*}}
\Comment{Trimming}
\For {$k\in[K]$}
\State{$\theta^{\dagger}_{k} = \argmin_{\theta} \sum_{\{\gamma: c_{\gamma} =k \}} \kappa_{\gamma} c(f(\cdot;\tilde\theta_{\gamma}), f(\cdot;\theta))$}
\Comment{Update cluster centres}
\State{$w^{\dagger}_{k} = (1-\eta)^{-1}\sum_{\{\gamma: c_{\gamma} =k \}} \kappa_{\gamma} \widetilde{w}_{\gamma}$}
\Comment{Update weights}
\EndFor
\Until 
there is no change in $c_{\gamma}$ for all $\gamma\in[mK]$
\State {\bfseries Output:} $\sum_{k}w_k^{\ddagger}f(\cdot;\theta^{\dagger}_{k})$
\end{algorithmic}
\caption{Iterative algorithm to compute the TRIM estimator in~\citet{del2019robust}.}
\label{alg:trimmed_barycentre}
\end{algorithm}

\section{Theoretical results}
\label{app:theory}
\begin{figure}[!htbp]
\centering
\includegraphics[width=\textwidth]{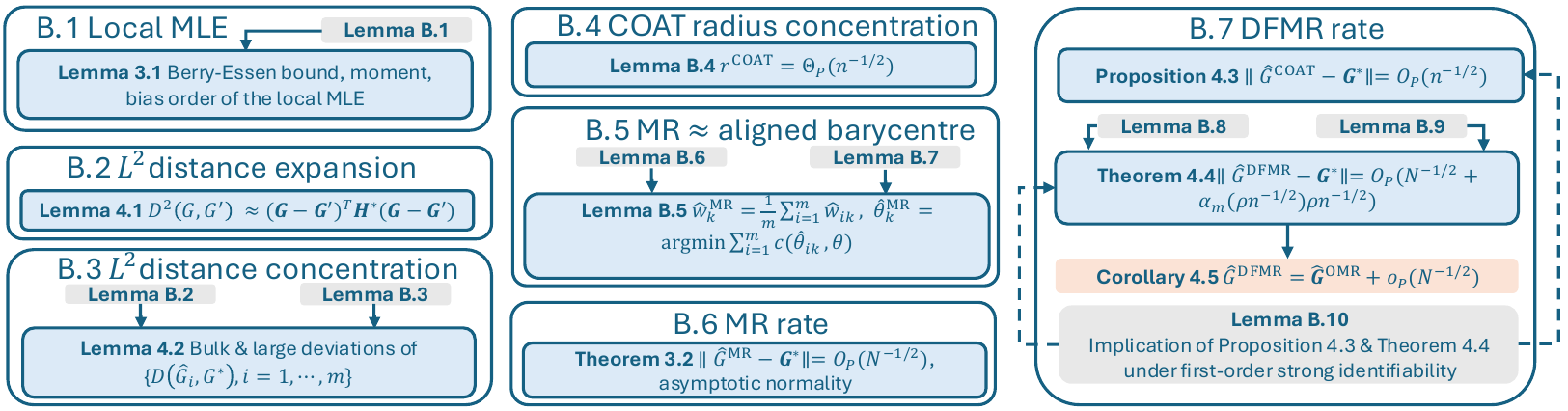}
\caption{The structure of Appendix B. The blue boxes are main results and the grey boxes are technical lemmas used for proving the main results.}
\label{fig:proof_outline}
\end{figure}
Figure~\ref{fig:proof_outline} summarizes the organisation of technical results.

\subsection{Properties of local MLE}
\label{sec:local_mle_properties}

In this section, we present the order of the moments of the local MLE, its asymptotic normality, and the Berry-Esseen bound for the local MLE.

\begin{lemma}[Unique global maxima for population log-likelihood]
\label{lemma:population-likelihood-global-maxima}
The population log-likelihood $L(\mG)$ has a unique global maximum at the true parameter value: \ie,
\[\mG^* = \argmax L(\mG).\]
\end{lemma}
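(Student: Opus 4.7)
The plan is to prove this via the classical Gibbs inequality argument, leveraging the identifiability assumption (Assumption~\ref{assumption:identifiability}) to upgrade equality of mixture densities to equality of mixing distributions, and then to invoke the alignment convention~\eqref{eq:aligned_component_def} to pass from the mixing-distribution level to the vectorised parameter level.

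More concretely, for any mixing distribution $G \in \sG_K$ and its aligned vectorisation $\mG$, I would consider the difference
\[
    L(\mG^*) - L(\mG)
    = \sE_{\mG^*}\braces*{\log f_{G^*}(X) - \log f_{G}(X)}
    = \sE_{\mG^*}\braces*{\log\paren*{f_{G^*}(X)/f_{G}(X)}},
\]
which is precisely the Kullback--Leibler divergence $\KL(f_{G^*} \,\|\, f_{G})$. By Jensen's inequality applied to the strictly concave function $\log$, this quantity is non-negative, with equality if and only if $f_{G^*}(x) = f_{G}(x)$ for almost every $x$. Hence $\mG^*$ is a global maximiser of $L$.

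For uniqueness, suppose $L(\mG) = L(\mG^*)$ for some $\mG$ corresponding to a mixing distribution $G$. The equality case of Jensen's inequality then yields $f_{G}(x) = f_{G^*}(x)$ almost everywhere. By the identifiability assumption (Assumption~\ref{assumption:identifiability}), this forces $G = G^*$ as mixing distributions, \ie the two measures place the same weights on the same support points. Under the alignment convention defined in~\eqref{eq:aligned_component_def}, the identity permutation achieves the zero of the objective when $G = G^*$, so $\mG = \mG^*$, establishing uniqueness. One minor technical subtlety worth noting at the outset is that the Gibbs inequality tacitly requires $\sE_{\mG^*}|\log f_{G^*}(X)| < \infty$ and $\sE_{\mG^*} \log f_{G}(X) < \infty$ for the subtraction to be meaningful; both can be controlled via Assumption~\ref{assumption:smoothness}, which guarantees finite moments of $\ell$ and its derivatives in a neighbourhood of $\mG^*$, together with compactness of $\Theta$ from Assumption~\ref{assumption:parameter-space}.

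I do not anticipate a genuine obstacle here; the argument is essentially Jensen plus identifiability, with the only mildly non-trivial point being the bookkeeping to go from identifiability of $G$ to uniqueness of the aligned vector $\mG$. That bookkeeping is handled cleanly by the definition in~\eqref{eq:aligned_component_def}, since distinct permutations of identical support points produce the same vector.
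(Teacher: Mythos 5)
Your argument is correct and is essentially identical to the paper's own proof: both rewrite $L(\mG^*) - L(\mG)$ as the Kullback--Leibler divergence $\KL(f_{G^*}\,\|\,f_G)$, use its non-negativity with equality iff the densities agree almost everywhere, and then invoke Assumption~\ref{assumption:identifiability} to conclude $G = G^*$. Your additional remarks on integrability and on the alignment convention~\eqref{eq:aligned_component_def} are sensible housekeeping that the paper leaves implicit, but they do not change the route.
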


\begin{proof}
Recall the population log-likelihood is
\[L(\mG) = \sE_{\mG^*}\{\ell(\mG;X)\} = \int\log f_{\mG}(x) f_{\mG^*}(x)\,dx.\]
Then we have
\[
\max_{\mG} L(\mG) = \min_{\mG} D_{KL}(f_{\mG^*} \| f_{\mG}),
\]  
where $D_{KL}(f_{\mG^*} \| f_{\mG})$ denotes the Kullback-Leibler (KL) divergence from $f_{\mG^*}$ to $f_{\mG}$.  
The KL divergence attains its minimum value of $0$ when $f_{\mG^*}(x) = f_{\mG}(x)$ for all $x$.  
Since both $\mG^*$ and $\mG$ are mixing distributions for an order $K$ mixture, the identifiability of the mixture model implies that $\mG = \mG^*$.  
This completes the proof.
\end{proof}
\begin{proof}[\textbf{Proof of Lemma~\ref{lemma:local_mle_property}}]

Statement (a) follows essentially from~\citet[Theorem 3.1]{shao2022berry}'s Berry-Esseen central limit theorem for $M$-estimators.
To translate our setting into their terminology and notation, note that the local MLE is an $M$-estimator.
The vectorized representation $\mG$ corresponds to their parameter vector, and the log-likelihood $\ell(\mG; x)$ corresponds to the function $m_{\theta}(x)$.
We check that the conditions of their theorem hold (conditions (3.6)--(3.10) in their paper).
In their proof of the lemma, conditions (3.6) and (3.7) are required to show $\sE \|\widehat{\mG} - \mG^*\|^q = O(n^{-q/2})$. 
This can alternatively be shown as in our Lemma~\ref{lemma:local_mle_property} under Assumptions~\ref{assumption:parameter-space}--\ref{assumption:smoothness}. 
Therefore, these two conditions are not necessary.
Their condition (3.8) is equivalent to the $W(x)$-Lipschitz condition in Assumption~\ref{assumption:smoothness}. 
Condition (3.9) is equivalent to the second part of our Assumption~\ref{assumption:strong-convexity}.
Finally, condition (3.10) requires finite $4$th moment of $\|\nabla \ell(\mG^*;X)\|$ and is implied by the first part of our Assumption~\ref{assumption:smoothness}, which assumes finite $8$-th moments.

Statements (b) and (c) follow immediately from Lemma 23 and (B.10) in~\citet{huang2019distributed}, who were interested in establishing the rate of convergence of a simple average of $M$-estimators.
To translate our setting into their terminology and notation, note that the local MLE is an $M$-estimator and the true parameter value $\mG^*$ is the unique global maxima as shown in Lemma~\ref{lemma:population-likelihood-global-maxima}.
Note that their Assumptions 3--6 are identical to our Assumptions~\ref{assumption:parameter-space}--\ref{assumption:smoothness}.
However, we identified a gap in their proof of Lemma 20, which is used to derive (B.10).
Specifically, their proof that the local estimates lie within a small neighbourhood of the true parameter is incorrect. 
We address this gap below in the context of finite mixture models.

We aim to show that for any $\delta > 0$, we have $\sP\paren{\|\widehat\mG-\mG^*\| < \delta} \to 1$.
To see this, observe that the vectorized representation of a mixing distribution is an element of the space
\[
\sG_K^{\text{vec}} = \Delta_{K-1} \times \underbrace{\Theta \times \cdots \times \Theta}_{K~\text{times}}.
\]
Consider any element of this space $\mG$ such that $\norm{\mG - \mG^*} \geq \delta$.
Letting $B(\mG,\delta)$ denote the open Euclidean ball of radius $\delta$ around $\mG$, Assumption \ref{assumption:identifiability} implies that there exists some $\delta_{\mG} > 0$ such that
\[
\sE \left\{ \sup_{\mG' \in B(\mG,\delta_\mG)} \log f_{\mG'}(X) \right\} < \sE\{\log f_{\mG^*}(X)\}.
\]
Next, consider the set 
\[\sG_K^{\text{vec}} \big\backslash \bigcup_{\sigma \in S_K} B(\mG_{\sigma}^*, \delta),\] 
which is obtained by removing neighbourhoods around all permutations of the vectorized representation of the true mixing distribution $G^*$.
By Assumption~\ref{assumption:parameter-space}, this set is compact and can be covered by a finite number of neighbourhoods, $B_1, B_2,\ldots, B_Q$, such that for each $q$, $B_q = B(\mG,\delta_\mG)$ for some $\mG$.
By the strong law of large numbers,
\[
\frac{1}{n}\sum_{i=1}^n \sup_{\mG \in B_q} \log f_{\mG}(X_i) \to \sE \left\{ \sup_{\mG' \in B(\mG,\delta_\mG)} \log f_{\mG'}(X) \right\}
\]
almost surely for $q=1,2,\ldots,Q$, which gives an event with probability converging to 1 on which
\[
\max_{1 \leq q \leq Q}\frac{1}{n}\sum_{i=1}^n \sup_{\mG \in B_q} \log f_{\mG}(X_i) < \frac{1}{n}\sum_{i=1}^n\log f_{\mG^*}(X_i).
\]
On this event, for any $\mG$ such that $\norm{\mG - \mG^*} \geq \delta$, we have $\mG \in B_q$ for some $q$, which implies
\[
\begin{split}
    \frac{1}{n}\sum_{i=1}^n \log f_{\mG}(X_i) & \leq \sup_{\mG' \in B_q}\frac{1}{n}\sum_{i=1}^n \log f_{\mG'}(X_i) \\
    & \leq \frac{1}{n}\sum_{i=1}^n \sup_{\mG' \in B_q}\log f_{\mG'}(X_i)  < \frac{1}{n}\sum_{i=1}^n\log f_{\mG^*}(X_i).
\end{split}
\]
Hence, $\mG$ cannot be equal to the MLE $\widehat\mG$.
\end{proof}

\subsection{$L^2$ distance between mixture densities}
\label{sec:distance_between_mixtures}

We define 
\begin{equation}
   \label{eq:l2_h_matrix}
\mH^* = \begin{pmatrix}
  \mS_{11}^* & \mS_{12}^*\\
  \mS_{12}^{*\top} & \mS_{22}^*
\end{pmatrix}, 
\end{equation}
where the blocks $\mS_{11}^*$, $\mS_{12}^*$, and $\mS_{22}^*$ are defined as follows: let $\mS_{11}^*$ be a ${(K-1)\times (K-1)}$ matrix with the $(k_1,k_2)$th element defined as 
\[
\mS_{11}^*[k_1,k_2] 
= \int \{f(x;\theta_{k_1}^*)-f(x;\theta_{K}^*)\}\{f(x;\theta_{k_2}^*)-f(x;\theta_{K}^*)\}\, dx.
\]
Let $\mS_{12}^*$ be a ${(K-1)\times pK}$ matrix 
with the $(k_1,k_2)$th block being a length $p$ vector given by 
\[
\mS_{12}^*[k_1,k_2] 
= \int w_{k_2}^* \{f(x;\theta_{k_1}^*)-f(x;\theta_{K}^*)\}\nabla_{\theta}^{\top} f(x;\theta_{k_2}^*)\,dx,
\]
and $\mS_{22}^*$ be a ${pK\times pK}$ block matrix with the $(k_1,k_2)$th block being a $p\times p$ matrix defined as
\[
\mS_{22}^*[k_1,k_2]  
= \int w_{k_1}^* w_{k_2}^* \nabla_{\theta}f(x;\theta_{k_1}^*)\nabla_{\theta}^{\top}f(x;\theta_{k_2}^*)\, dx.
\]
Assumption~\ref{assumption:parametric_family} and Cauchy-Schwarz inequality ensures that every element in $\mH^*$ is finite.

\begin{proof}[\textbf{Proof of Lemma~\ref{lemma:L2_distance_densities}}]
We first expand \( f(x; \theta_k) \) at \( \theta_k^* \) to get
\[
f(x; \theta_k) 
= f(x; \theta_k^*) + (\theta_k - \theta_k^*)^{\top} \nabla f(x; \theta_k^*) 
+ \epsilon_{f}(x; \theta_k, \theta_k^*),
\]
where the remainder term is given by
\[
\epsilon_{f}(x; \theta_k, \theta_k^*) = (\theta_k - \theta_k^*)^{\top} \int_{0}^{1} \left\{ \nabla f(x; (1-\rho)\theta_k^* + \rho\theta_k) - \nabla f(x; \theta_k^*) \right\} \, d\rho.
\]
Based on Assumption~\ref{assumption:parametric_family} (iii), when $\theta_k \in B_{\delta}(\theta_k^*)$, the remainder term satisfies
\be
\label{eq:epsilon_f_bound}
|\epsilon_{f}(x; \theta_k, \theta_k^*)| \leq V(x) \|\theta_k - \theta_k^*\|^2 \int_{0}^{1} \rho \, d\rho = \frac{V(x)}{2} \|\theta_k - \theta_k^*\|^2.
\ee
Applying this expansion to the density function of the mixture, we get
\[
\begin{split}
f_{\mG}(x) - f_{\mG^*}(x) =&~ \sum_{k=1}^{K-1} (w_k - w_k^*) \left\{ f(x; \theta_k^*) - f(x; \theta_K^*) \right\} \\
&+ \sum_{k=1}^{K} w_k^* (\theta_k - \theta_k^*)^{\top} \nabla f(x; \theta_k^*) + \sum_{k=1}^K \epsilon_{D}(x; w_k, \theta_k, \theta_k^*),
\end{split}
\]
where
\[
\epsilon_{D}(x; w_k, \theta_k, \theta_k^*) = w_k \epsilon_{f}(x; \theta_k, \theta_k^*) + (w_k - w_k^*) (\theta_k - \theta_k^*)^{\top} \nabla f(x; \theta_k^*).
\]
Based on~\eqref{eq:epsilon_f_bound} and Cauchy-Schwarz inequality, we have 
\[
\begin{split}
|\epsilon_{D}(x; w_k, \theta_k, \theta_k^*)| \leq&~ |\epsilon_{f}(x; \theta_k, \theta_k^*)| + |(w_k - w_k^*) (\theta_k - \theta_k^*)^{\top} \nabla f(x; \theta_k^*)| \\
\leq&~ \frac{V(x)}{2} \|\theta_k - \theta_k^*\|^2 + \|\nabla f(x; \theta_k^*)\||w_k - w_k^*| \|\theta_k - \theta_k^*\|. 
\end{split}
\]
Moreover, we have
\be
\label{eq:density_remainder_order}
\begin{split}
&\int \epsilon_{D}^2(x; w_k, \theta_k, \theta_k^*)\,dx\\
=&~\|\theta_k-\theta_k^*\|^4\int \frac{V^2(x)}{4}\,dx  + \|\theta_k-\theta_k^*\|^2 |w_k - w_k^*|^2\int \|\nabla f(x;\theta_k^*)\|^2 \, dx  \\
&+ |w_k - w_k^*|\|\theta_k-\theta_k^*\|^3 \int V(x)\|\nabla f(x;\theta_k^*)\|\,dx \\
\leq &~\frac{C}{4}\|\theta_k-\theta_k^*\|^4  \\
&+ C|w_k - w_k^*|^2\|\theta_k-\theta_k^*\|^2  + |w_k - w_k^*|\|\theta_k-\theta_k^*\|^3 \left\{\int V^2(x)\,dx \int \|\nabla f(x;\theta_k^*)\|^2\,dx\right\}^{1/2} \\
\leq &~\frac{9C}{4}\|\mG -\mG^*\|^4
\end{split}
\ee
where $C =\max\left\{\int V^2(x)\,dx, \int \|\nabla f(x;\theta_k^*)\|^2\,dx\right\} <\infty$ since according to Assumption~\ref{assumption:parametric_family} (ii) and (iii), both $V(x)$ and $\|\nabla f(x;\theta_k^*)\|$ are square integrable.
Hence, the expansion for $f_{\mG'}(x) - f_{\mG^*}(x)$ becomes
\[
\begin{split}
f_{\mG}(x) - f_{\mG'}(x)
=&~ \left\{ f_{\mG}(x) - f_{\mG^*}(x) \right\} - \left\{ f_{\mG'}(x) - f_{\mG^*}(x) \right\} \\
=&~ \sum_{k=1}^{K-1} (w_k - w_k') \left\{ f(x; \theta_k^*) - f(x; \theta_K^*) \right\} + \sum_{k=1}^{K} w_k^* (\theta_k - \theta_k')^{\top} \nabla f(x; \theta_k^*) \\
&+ \sum_{k=1}^K \left\{ \epsilon_{D}(x; w_k, \theta_k, \theta_k^*) + \epsilon_{D}(x; w_k', \theta_k', \theta_k^*) \right\}.
\end{split}
\]

When both $\mG, \mG' \to \mG^*$, note that in the above expansion, only the first two terms are the leading terms as they are linear in $\mG - \mG'$. 
The rest of the terms are of higher order. Hence, we get
\[
\begin{split}
&\int \left\{ 
\sum_{k=1}^{K-1} (w_k - w_k') \left\{ f(x; \theta_k^*) - f(x; \theta_K^*) \right\} + \sum_{k=1}^{K} w_k^* \nabla^{\top} f(x; \theta_k^*) (\theta_k - \theta_k') 
\right\}^2 dx \\
=&~ (\mG - \mG')^{\top} \mH^* (\mG - \mG'),
\end{split}
\]
where $\mH^*$ is defined in~\eqref{eq:l2_h_matrix}.

We now consider the remainder term. Based on~\eqref{eq:density_remainder_order}, we have
\[
\begin{split}
&~\int \left\{ \epsilon_{D}(x; w_k, \theta_k, \theta_k^*) + \epsilon_{D}(x; w_k', \theta_k', \theta_k^*) \right\}^2 dx \\
=&~\int \epsilon_{D}^2(x; w_k, \theta_k, \theta_k^*) \,dx + \int \epsilon_{D}^2(x; w_k', \theta_k', \theta_k^*) \,dx + 2\int \epsilon_{D}(x; w_k, \theta_k, \theta_k^*) \epsilon_{D}(x; w_k', \theta_k', \theta_k^*)\,dx\\
\leq &~\frac{9C}{4}\max\{\|\mG-\mG^*\|^4,\|\mG'-\mG^*\|^4\} + 2\left\{\int \epsilon_{D}^2(x; w_k, \theta_k, \theta_k^*)\,dx\int \epsilon_{D}^2(x; w_k', \theta_k', \theta_k^*)\,dx\right\}^{1/2}\\
\leq &~\frac{9C}{4}\max\{\|\mG-\mG^*\|^4,\|\mG'-\mG^*\|^4\} + \frac{9C}{2}\max\{\|\mG-\mG^*\|^4,\|\mG'-\mG^*\|^4\} \\
=&~\frac{27C}{4}\max\{\|\mG-\mG^*\|^4,\|\mG'-\mG^*\|^4\}.
\end{split}
\]
Similarly, the cross term satisfies
{\small
\begin{equation*}
\begin{split}
&\int \left\{ 
\sum_{k=1}^{K-1} (w_k - w_k^*) \left\{ f(x; \theta_k^*) - f(x; \theta_K^*) \right\} 
\right\} \sum_{k=1}^K \left\{ \epsilon_{D}(x; w_k, \theta_k, \theta_k^*) + \epsilon_{D}(x; w_k', \theta_k', \theta_k^*) \right\} \, dx \\
&+ \int \left\{ 
\sum_{k=1}^{K} w_k^* \nabla^{\top} f(x; \theta_k^*) (\theta_k - \theta_k^*) 
\right\} \sum_{k=1}^K \left\{ \epsilon_{D}(x; w_k, \theta_k, \theta_k^*) + \epsilon_{D}(x; w_k', \theta_k', \theta_k^*) \right\} \, dx \\
&= O(\max\{\|\mG - \mG^*\|^3,\|\mG'- \mG^*\|^3\}\}).
\end{split}    
\end{equation*}
}
Combining the three order assessments, we obtain the conclusion in the lemma. 
\end{proof}

\subsection{$L^2$ distance concentration}
\label{app:rCOAT_COAT}

We first define two quantities that will be used throughout this and subsequent sections of the appendix.
Let $q_n$ denote the quantile function for the distribution of $D(\widehat G, G^*)$ where $\widehat G$ is the MLE based on $n$ samples.
Let $q^*$ denote the quantile function for the generalized chi-squared distribution with matrix $\mI^{-1/2}(\mG^*)\mH^*\mI^{-1/2}(\mG^*)$, \ie 
\[q^*(t) = \inf\left\{x:\sP(Z^{\top}\mI^{-1/2}(\mG^*)\mH^*\mI^{-1/2}(\mG^*)Z\leq x)\geq t\right\}
\]
where $Z$ is a standard Gaussian random vector.

To ease notation, we extend the domain of each quantile function $q$ to the real line, with the convention that $q(a) = - \infty$ for $a < 0$ and $q(a) = \infty$ for $a > 1$.

\begin{lemma} 
\label{lemma:coat_radius_conc_part2}
Let $q_n$ and $q^*$ be as defined above.
Under Assumptions~\ref{assumption:parameter-space}--\ref{assumption:parametric_family}, when $n$ is sufficiently large, there exists a universal constant $C > 0$ such that for all $a \in [0,1]$, we have
\begin{equation}
\label{eq:quantile_CLT}
q^*\paren*{a - Cn^{-1/2}} - Cn^{-1/2}\log^{3/2} n \leq nq_n^2(a) \leq q^*\paren*{a + Cn^{-1/2}} + Cn^{-1/2}\log^{3/2} n.
\end{equation}
\end{lemma}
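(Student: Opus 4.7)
The plan is to combine three ingredients: the quadratic expansion of $D^2(\widehat{G}, G^*)$ from Lemma~\ref{lemma:L2_distance_densities}, the Berry-Esseen CLT from Lemma~\ref{lemma:local_mle_property}(a), and a Gaussian-tail truncation argument. Define the rescaled score $Y_n := \sqrt{n}\,\mI^{1/2}(\mG^*)(\widehat{\mG} - \mG^*)$ and the quadratic form $Q(y) := y^{\top}\mI^{-1/2}(\mG^*)\mH^*\mI^{-1/2}(\mG^*)y$, so that the target limit variable is $Z^* = Q(Z)$ for a standard Gaussian $Z$. Then Lemma~\ref{lemma:L2_distance_densities} gives $n D^2(\widehat G,G^*) = Q(Y_n) + R_n$ with $|R_n| \leq C n\,\|\widehat\mG - \mG^*\|^3 = C n^{-1/2}\|\mI^{-1/2}(\mG^*)Y_n\|^3$, valid whenever $\widehat\mG$ lies in the neighbourhood supplied by Assumption~\ref{assumption:parametric_family}.

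Introduce the good event $\gE_n := \{\|Y_n\|^2 \leq C_1 \log n\}$. On $\gE_n$, $\|\widehat\mG - \mG^*\| = O(\sqrt{\log n/n})$, which for large $n$ lies in the required neighbourhood, and $|R_n| \leq C_2 n^{-1/2}(\log n)^{3/2}$. Since the event $\{\|y\|^2 \leq C_1 \log n\}$ is convex, Lemma~\ref{lemma:local_mle_property}(a) yields $\sP(\gE_n^c) \leq \sP(\|Z\|^2 > C_1 \log n) + O(n^{-1/2})$; a standard chi-squared tail bound controls the first term by $n^{-C_1/4}$, so choosing $C_1$ large makes $\sP(\gE_n^c) = O(n^{-1/2})$. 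Separately, because the sub-level set $\{y : Q(y) \leq t\}$ is a convex ellipsoid in $\R^{pK + (K-1)}$, a second application of Lemma~\ref{lemma:local_mle_property}(a) gives
\[
\sup_{t \in \R}\,\bigl|\sP(Q(Y_n) \leq t) - \sP(Z^* \leq t)\bigr| = O(n^{-1/2}).
\]

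With these pieces in place, set $T_n := n D^2(\widehat G,G^*)$ and choose a constant $C$ large enough. For the upper inequality, take $t := q^*(a + Cn^{-1/2}) + C_2 n^{-1/2}(\log n)^{3/2}$, and estimate
\[
\sP(T_n \leq t) \geq \sP\bigl(\{Q(Y_n) \leq q^*(a + Cn^{-1/2})\} \cap \gE_n\bigr) \geq \sP(Z^* \leq q^*(a + Cn^{-1/2})) - O(n^{-1/2}) \geq a,
\]
where the last step uses the Berry-Esseen estimate for $Q(Y_n)$, subtracts $\sP(\gE_n^c)$, and absorbs all $O(n^{-1/2})$ errors into $C$. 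This proves $nq_n^2(a) \leq t$. The lower bound is symmetric: with $s := q^*(a - Cn^{-1/2}) - C_2 n^{-1/2}(\log n)^{3/2}$, the event $\{T_n \leq s\} \cap \gE_n$ is contained in $\{Q(Y_n) \leq q^*(a - Cn^{-1/2})\}$, so $\sP(T_n \leq s) \leq \sP(Z^* \leq q^*(a-Cn^{-1/2})) + O(n^{-1/2}) \leq a$, hence $nq_n^2(a) \geq s$.

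The main technical obstacle is the bookkeeping around the cubic remainder $R_n$. Pointwise control of $R_n$ is only available on $\gE_n$, and a direct moment bound through Lemma~\ref{lemma:local_mle_property}(c) would give $\sE|R_n|$ of order $n^{-1/2}$, which is too weak to be translated into a quantile shift without amplification. The right balance is to let $\gE_n$ bleed a controlled probability $O(n^{-1/2})$ (this forces $C_1 \gtrsim 2$ in the chi-squared tail), which in turn fixes the $(\log n)^{3/2}$ factor that multiplies $n^{-1/2}$ inside the additive error. A minor additional care point is that Lemma~\ref{lemma:L2_distance_densities} requires $\mG,\mG'$ to lie in a small neighbourhood of $\mG^*$; extending $q_n$ and $q^*$ to $(-\infty,\infty)$ by convention and only verifying the inequalities in the regime where $a \in [0,1]$ and $n$ is large enough for $\gE_n$ to impose this neighbourhood constraint resolves this.
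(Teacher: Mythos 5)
Your proposal is correct and follows essentially the same route as the paper's proof: both truncate $\sqrt{n}(\widehat{\mG}-\mG^*)$ to a $\sqrt{\log n}$-radius ball via the Berry--Esseen bound of Lemma~\ref{lemma:local_mle_property}(a) and a chi-squared tail, control the cubic remainder from Lemma~\ref{lemma:L2_distance_densities} by $Cn^{-1/2}\log^{3/2}n$ on that event, apply Berry--Esseen a second time to the (convex sublevel sets of the) quadratic form, and invert the resulting CDF inequalities into quantile inequalities. The only differences are presentational (your explicit good event $\gE_n$ versus the paper's inline probability bound).
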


\begin{proof}
Recall that $q_n$ is the quantile function of $D(\widehat{G}, G^*)$.
Note that $nq_n^2 = \widehat q_n$, where $\widehat q_n$ is the quantile function of $n D^2\paren*{\widehat{G},G^*}$.
Using this relationship, we analyse the distribution of $nD^2\paren*{\widehat{G},G^*}$ directly.

Let $\mathbf{Z}$ be a standard Gaussian random vector with dimension $\nu = Kp+K-1$.
By the Berry-Esseen bound in Lemma~\ref{lemma:local_mle_property} (a), which quantifies the difference between the asymptotic distribution $\sqrt{n}\paren*{\widehat{\mG}-\mG^*}$ and a standard Gaussian random vector, we have
\[
\sP\paren*{\norm*{\sqrt{n}\paren*{\widehat{\mG}-\mG^*}}_2 \geq t} \leq \sP\paren*{\norm*{\mI^{-1/2}\mathbf{Z}}_2 \geq t} + \frac{C}{\sqrt{n}} \leq \sP\paren*{\norm*{\mathbf{Z}}_2^2 \geq \lambda_{\min}(\mI)t^2} + \frac{C}{\sqrt{n}}
\]
where we write $\mI(\mG^*)$ as $\mI$ for simplicity of notation. 
The first term on the right-hand side is the tail of a chi-squared random variable with $\nu$ degrees of freedom.
Making use of the CDF of the distribution (or alternatively, concentration of norm \citep{vershynin2018high}), we get
\[\sP\paren*{\norm*{\mathbf{Z}}_2^2 \geq \lambda_{\min}(\mI)t^2}\leq C\exp\paren*{-\lambda_{\min}(\mI)(t-\sqrt{\nu})^2/2}\]
for $t \geq \nu$.
Setting this term to be of order $O(n^{-1/2})$ and solving for $t$, we get
\begin{equation}
\label{eq:quantile_CLT_helper}
    \sP\paren*{\norm*{\sqrt{n}\paren*{\widehat{\mG}-\mG^*}}_2 \geq \paren*{\frac{\log n}{2\lambda_{\min}(\mI)}}^{1/2} + \sqrt{\nu}} \leq \frac{C}{\sqrt{n}}.
\end{equation}
Next, using Lemma~\ref{lemma:L2_distance_densities}, there exists some $r_0 > 0$ and $C > 0$ such that
\[
\abs*{D^2(G,G^*) - \paren*{\mG-\mG^*}^{\top}\mH^*\paren*{\mG-\mG^*}} \leq C\norm*{\mG-\mG^*}^3
\]
whenever $\norm*{\mG-\mG^*} \leq r_0$.
Suppose $n$ is large enough so that
\[
n^{-1/2}\paren*{\paren*{\frac{\log n}{2\lambda_{\min}(\mI)}}^{1/2} + \sqrt{\nu}} \leq r_0.
\]
On the complement of the event considered in~\eqref{eq:quantile_CLT_helper}, we then have
\[
\begin{split}
    \abs*{nD^2(\widehat G,G^*) - n\paren*{\widehat\mG-\mG^*}^{\top}\mH^*\paren*{\widehat\mG-\mG^*}} & \leq Cn^{-1/2} \norm*{n^{1/2}\paren*{\widehat{\mG} - \mG^*}}^3 \\
    & \leq Cn^{-1/2}\paren*{\paren*{\frac{\log n}{2\lambda_{\min}(\mI)}}^{1/2} + \sqrt{\nu}}^3.
\end{split}
\]
Using Lemma~\ref{lemma:local_mle_property}(a) again, we get
\[
\sP\paren*{n\paren*{\widehat\mG-\mG^*}^{\top}\mH^*\paren*{\widehat\mG-\mG^*} \leq t} \leq \sP\paren*{\mathbf{Z}^{\top}\mI^{1/2}\mH^*\mI^{-1/2}\mathbf{Z} \leq t} + \frac{C}{\sqrt{n}}.
\]
Combining these last two statements gives
\[
\begin{split}
&\sP\paren*{nD^2(\widehat G,G^*) \leq t} \\
\leq&~\sP\Bigg(n(\widehat\mG-\mG^*)^{\top}\mH^*(\widehat\mG-\mG^*) \leq t + Cn^{-1/2}\paren*{\Big(\frac{\log n}{2\lambda_{\min}(\mI)}\Big)^{1/2} + \sqrt{\nu}}^3\Bigg) + \frac{C}{n^{1/2}} \\
\leq&~\sP\paren*{\mathbf{Z}^{\top}\mI^{-1/2}\mH^*\mI^{-1/2}\mathbf{Z} \leq t + C\frac{\log^{3/2}n}{n^{1/2}}} + \frac{C}{n^{1/2}}
\end{split}
\]
which holds for any $t > 0$ and for $n$ sufficiently large.
Note that $\mathbf{Z}^{\top}\mI^{-1/2}\mH^*\mI^{-1/2}\mathbf{Z}$ is a generalized chi-squared distribution with matrix $\mI^{-1/2}\mH^*\mI^{-1/2}$.
Hence the first term on the right-hand-size is the CDF of this random variable.
Since quantile functions are the almost sure inverses of CDFs, we may reflect these functional inequalities around the $y=x$ line to obtain 
\[
\widehat q_n\paren*{a} \geq q^*(a - Cn^{-1/2}) - \frac{C\log^{3/2}n}{\sqrt{n}}
\]
for any $0 \leq a \leq 1$.

A similar calculation gives the bound
\[
\sP\paren*{nD^2(\widehat G,G^*) \leq t} \geq \sP\paren*{\mathbf{Z}^{\top}\mI^{-1/2}\mH^*\mI^{-1/2}\mathbf{Z} \leq t - C\frac{\log^{3/2}n}{\sqrt{n}}} - \frac{C}{\sqrt{n}}
\]
and 
\[
\widehat q_n(a) \leq q^*\paren*{a + Cn^{-1/2}} + \frac{C\log^{3/2}n}{n^{1/2}}. \]
Using the relationship between $\widehat q_n$ and $q_n$ completes the proof of \eqref{eq:quantile_CLT}.
\end{proof}

\textbf{Proof of Lemma~\ref{lemma:distance_concentration}}
The proof relies on Lemma~\ref{lem:quantile_conc}, which is deferred to Appendix~\ref{app:technical_lemmas} for conciseness.
Fix $0 < \epsilon < 1/2$.
Applying Lemma \ref{lem:quantile_conc} to $D(\widehat{G}_{j_1},G^*), \cdots, D(\widehat{G}_{j_{m}},G^*)$, we get an event with probability at least $1-2\exp(-2m(\epsilon/4)^2)$ on which 
\[
D(\widehat{G}_{j_{\lfloor \epsilon m \rfloor}},G^*) \geq q_n\paren*{\epsilon - \epsilon/4}.
\]
Using Lemma \ref{lemma:coat_radius_conc_part2}, we see that for $n \geq (4C/\epsilon)^2$, the right hand side is further bounded from below by
\[
n^{-1/2}q^*(\epsilon/2) - Cn^{-1}\log^{3/2} n = \Omega(n^{-1/2}).
\]
The upper bound in~\eqref{eq:distance_bulk_concentration} can be proved similarly.

Next, to prove \eqref{eq:distance_deviation_concentration}, we make use of Lemma \ref{lem:distance_conc_helper} to get
\[
    \sP\paren{D(\widehat G_i,G^*) \geq \rho n^{-1/2}} \leq \sP\paren{\norm{\widehat\mG_i-\mG^*} \geq \rho n^{-1/2}}
\]
for $i \in \sB^c$ for $n$ large enough.
We now apply Lemma \ref{lemma:local_mle_property} (c) together with Markov's inequality to get
\[
    \sP\paren{\norm{\widehat\mG_i-\mG^*} \geq \rho n^{-1/2}} \leq \frac{\sE\braces{\norm{\widehat \mG_i - \mG^*}^8}}{(\rho n^{-1/2})^8} = O(\rho^{-8}). \qedhere
\]

\begin{lemma}
\label{lem:distance_conc_helper}
    There is some $C, c > 0$ such that $D(G,G^*) \leq C\norm{\mG - \mG^*}$ whenever $\norm{\mG - \mG^*} \leq c$.
\end{lemma}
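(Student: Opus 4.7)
The plan is to derive this local Lipschitz bound as an essentially immediate corollary of Lemma~\ref{lemma:L2_distance_densities} applied with $G' = G^*$. Specifically, that lemma gives
\[
D^2(G,G^*) = (\mG - \mG^*)^\top \mH^*(\mG - \mG^*) + O(\|\mG - \mG^*\|^3)
\]
as $G \to G^*$. Since $\mH^*$ is a fixed positive semi-definite matrix, its leading quadratic form is bounded above by $\lambda_{\max}(\mH^*) \|\mG - \mG^*\|^2$.

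First I would unpack the $O(\cdot)$ notation in Lemma~\ref{lemma:L2_distance_densities}: there exist constants $C_1, c_1 > 0$ such that whenever $\|\mG - \mG^*\| \leq c_1$, the remainder is bounded in absolute value by $C_1 \|\mG - \mG^*\|^3$. Combining with the operator-norm bound on the quadratic form gives
\[
D^2(G,G^*) \leq \lambda_{\max}(\mH^*) \|\mG - \mG^*\|^2 + C_1 \|\mG - \mG^*\|^3
\]
for all such $G$. Then I would choose $c = \min\{c_1, 1\}$ so that $\|\mG - \mG^*\|^3 \leq \|\mG - \mG^*\|^2$ on the relevant neighbourhood, yielding
\[
D^2(G,G^*) \leq \bigl(\lambda_{\max}(\mH^*) + C_1\bigr) \|\mG - \mG^*\|^2.
\]
Taking square roots produces the desired bound with $C = \sqrt{\lambda_{\max}(\mH^*) + C_1}$.

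There is no substantive obstacle: the lemma is essentially a restatement of the local quadratic expansion from Lemma~\ref{lemma:L2_distance_densities}. The only minor care needed is that $\mG$ denotes the permutation-aligned vectorization defined in~\eqref{eq:aligned_component_def}, but this is already the convention used in Lemma~\ref{lemma:L2_distance_densities}, so no extra work is required. Note also that we do not need $\mH^*$ to be positive definite here (only positive semi-definite), which is why this direction of Lipschitz continuity holds without the first-order strong identifiability condition that is needed for the reverse inequality.
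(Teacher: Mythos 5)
Your proposal is correct and follows essentially the same route as the paper's proof: both apply Lemma~\ref{lemma:L2_distance_densities} with $G' = G^*$, bound the quadratic form by $\lambda_{\max}(\mH^*)\norm{\mG-\mG^*}^2$, and absorb the cubic remainder into the quadratic term by shrinking the neighbourhood. The only difference is the cosmetic choice of how to absorb the cubic term (the paper takes the radius to be $\matrixnorm{\mH^*}/C$ rather than $\min\{c_1,1\}$), which changes only the explicit constant.
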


\begin{proof}
By Lemma~\ref{lemma:L2_distance_densities}, there is an $r_0 > 0$ such that for all $\norm{\mG - \mG^*} \leq r_0$, we have
\[
\abs*{D^2(G,G^*) - \paren*{\mG-\mG^*}\mH^*\paren*{\mG-\mG^*}} \leq C\norm*{\mG-\mG^*}^3
\]
for some $C > 0$.
Hence, whenever $\norm{\mG - \mG^*} \leq \min\braces*{r_0,\matrixnorm{\mH^*}/C}$, we have
\[
\begin{split}
    D^2(G,G^*) & \leq \paren*{\mG-\mG^*}\mH^*\paren*{\mG-\mG^*} + C\norm*{\mG-\mG^*}^3 \\
    & \leq \matrixnorm{\mH^*}\norm*{\mG-\mG^*}^2 + C\norm*{\mG-\mG^*}^3 \\
    & \leq 2\matrixnorm{\mH^*}\norm*{\mG-\mG^*}^2
\end{split}
\]
as we wanted.
\end{proof}
\subsection{COAT radius concentration}
\label{sec:coat_radius_conc}

\begin{lemma}[COAT radius concentration]
\label{lem:coat_radius_conc}
Under assumptions~\ref{assumption:parameter-space}--\ref{assump:cost_bregman_divergence} and assumption~\ref{assumption:parametric_family}, we have $r^{\coat} = \Theta_P(n^{-1/2})$.
\end{lemma}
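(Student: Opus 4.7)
The plan is to establish matching upper and lower bounds for $r^{\coat}$, both of order $n^{-1/2}$ in probability.

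For the upper bound $r^{\coat} = O_P(n^{-1/2})$, I would exhibit a single failure-free estimate whose attention radius is already $O_P(n^{-1/2})$. Pick any $\epsilon \in [\alpha, 1/2)$, which is non-empty since $\alpha < 1/2$ by the majority rule. By Lemma~\ref{lemma:distance_concentration}(a), with probability tending to $1$, at least $\lfloor (1-\epsilon)m \rfloor > m/2$ failure-free estimates $\widehat G_j$ satisfy $D(\widehat G_j, G^*) \leq C n^{-1/2}$ for some constant $C$. Fix any such $\widehat G_{i_0}$. By the triangle inequality, every estimate in this collection lies within $2C n^{-1/2}$ of $\widehat G_{i_0}$, so $r(\widehat G_{i_0}) \leq 2C n^{-1/2}$, whence $r^{\coat} \leq r(\widehat G_{i_0}) = O_P(n^{-1/2})$.

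The lower bound is the more delicate step. I would first transfer the COAT attention radius to that of a nearby failure-free estimate, and then show every failure-free estimate has attention radius $\Omega_P(n^{-1/2})$. Since $B(\widetilde G_{i^*}, r^{\coat})$ contains at least $m/2$ local estimates and $|\sB| \leq \alpha m < m/2$, at least one failure-free $\widehat G_{j_0}$ lies in this ball; because $B(\widehat G_{j_0}, 2r^{\coat}) \supseteq B(\widetilde G_{i^*}, r^{\coat})$ by the triangle inequality, $r(\widehat G_{j_0}) \leq 2r^{\coat}$. For the second step, fix $i \in \sB^c$. Applying the Berry--Esseen bound of Lemma~\ref{lemma:local_mle_property}(a) to the independent pair $(\widehat \mG_i, \widehat \mG_j)$ shows that $\sqrt n(\widehat \mG_i - \widehat \mG_j)$ is asymptotically normal with covariance $2\mI^{-1}(\mG^*)$; combined with the quadratic expansion in Lemma~\ref{lemma:L2_distance_densities}, this yields a limiting CDF $F^{**}$ for $\sqrt n D(\widehat G_i, \widehat G_j)$ with $F^{**}(c) \to 0$ as $c \to 0$. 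Choose $c > 0$ so that $F^{**}(c) < (1-2\alpha)/(2(1-\alpha))$. Conditional on $\widehat G_i$, the indicators $\mathbbm{1}[D(\widehat G_i, \widehat G_j) \leq c n^{-1/2}]$ for $j \in \sB^c \setminus \{i\}$ are independent with mean approaching $F^{**}(c)$, so Hoeffding's inequality bounds the probability that their sum exceeds $(1-2\alpha)m/2 - 1$ by $\exp(-\Omega(m))$. On the complementary event, $B(\widehat G_i, c n^{-1/2})$ contains fewer than $m/2$ local estimates in total (the failed ones add at most $\alpha m$), forcing $r(\widehat G_i) > c n^{-1/2}$. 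A union bound over $i \in \sB^c$ yields $\min_{i \in \sB^c} r(\widehat G_i) \geq c n^{-1/2}$ with high probability, and combining with $r(\widehat G_{j_0}) \leq 2 r^{\coat}$ gives $r^{\coat} \geq c n^{-1/2}/2$.

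The main technical obstacle is the anti-concentration for the pairwise distance $\sqrt n D(\widehat G_i, \widehat G_j)$---i.e., showing $F^{**}(c) \to 0$ as $c \to 0$. This requires extending Berry--Esseen and the quadratic expansion of Lemma~\ref{lemma:L2_distance_densities} from a single estimate versus $G^*$ to a pair of independent MLEs, and ensuring the limiting generalized chi-squared random variable is strictly positive almost surely. The latter follows from identifiability of the mixture model ($D(G,G')=0$ iff $G=G'$) together with the non-degeneracy of the Gaussian limit of $\sqrt n(\widehat \mG_i - \widehat \mG_j)$, though some care may be required when $\mH^*$ has a non-trivial kernel.
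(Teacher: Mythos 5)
Your upper bound is essentially the paper's: exhibit one failure-free estimate whose attention radius is $O_P(n^{-1/2})$ by combining the bulk concentration of Lemma~\ref{lemma:distance_concentration}(a) with the triangle inequality. Your lower bound is a clean reformulation of the paper's argument: the paper lower-bounds $r^{\coat}$ by half the diameter of the failure-free estimates inside the COAT ball and then by a minimax over subsets of $\sB^c$ of size $(1/2-\alpha)m$, which is combinatorially the same as your statement that fewer than $(1/2-\alpha)m$ failure-free estimates can lie within $cn^{-1/2}$ of any given failure-free $\widehat G_i$; your transfer step $r(\widehat G_{j_0})\le 2r^{\coat}$ is also sound. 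Both routes therefore reduce to the same core fact: an anti-concentration / small-ball bound for $D(\widehat G_i,\widehat G_j)$ with $i,j\in\sB^c$.

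The gap is in how you establish that core fact. Your Hoeffding step conditions on $\widehat G_i$ and needs the \emph{conditional} probability $\sP\paren{D(\widehat G_i,\widehat G_j)\le cn^{-1/2}\mid \widehat G_i}$ to be bounded by some $p_0<\frac{1-2\alpha}{2(1-\alpha)}$ \emph{uniformly} over the realized value of $\widehat G_i$ (at least on a high-probability event), since the conditional mean of your indicators is a random function of $\widehat G_i$, not the unconditional limit $F^{**}(c)$. The joint CLT for $\sqrt n(\widehat\mG_i-\widehat\mG_j)$ with covariance $2\mI^{-1}(\mG^*)$, which you flag as the main obstacle, only controls the unconditional law and does not by itself deliver this uniform conditional bound. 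The paper closes exactly this hole in two steps: it first shows (via the eighth-moment bound, Markov, and a Chernoff count) that with high probability the relevant $\widehat G_i$ satisfies $\norm{\widehat\mG_i-\mG^*}\le Cn^{-1/2}$, and then proves in Lemma~\ref{lem:convex_function_CDF} (an Anderson-type inequality via Pr\'ekopa--Leindler) that $\sP\paren{(Z+\vv)^\top M(Z+\vv)\le t}\le \sP\paren{Z^\top MZ\le t}$ for any offset $\vv$, so the off-centre small-ball probability $\sP\paren{D(G,\widehat G_j)\le cn^{-1/2}}$ is dominated, uniformly in such $G$, by the \emph{central} generalized chi-squared CDF plus a Berry--Esseen error. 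With that lemma inserted, your argument goes through; your remaining concern about a non-trivial kernel of $\mH^*$ is harmless, since $\mH^*$ is a nonzero PSD Gram matrix and the central quadratic form therefore has no atom at zero.
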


\begin{proof}
Without loss of generality, let the local estimates in $\{\widehat{G}_i: i\in\sB^c\}$ such that 
\[D(\widehat{G}_{j_1},G^*)\leq \cdots \leq D(\widehat{G}_{j_{(1-\alpha)m}},G^*).\]
Since the majority of the machines are failure-free, by the definition of $r^{\coat}$, we have $r^{\coat} \leq r(\widehat{G}_{j_1})$, the radius of the smallest ball centred around $\widehat{G}_{j_1}$ containing at least $m/2$ other estimates.
Since $B(\widehat{G}_{j_1}; r(\widehat{G}_{j_1}))$ must be contained within the ball encompassing all $\widehat{G}_{j_1}, \ldots, \widehat{G}_{j_{m/2}}$, the radius $r(\widehat{G}_{j_1})$ is upper bounded by
\[
\max_{1 \leq i \leq m/2} D(\widehat G_{j_{i}}, \widehat G_{j_1})   \leq \max_{1 \leq i \leq m/2} D(\widehat{G}_{j_{i}},G^*) + D(\widehat{G}_{j_1},G^*)  \leq 2D(\widehat G_{j_{m/2}}, G^*)
\]
By \eqref{eq:distance_bulk_concentration}, this last quantity satisfies $D(\widehat G_{j_{m/2}}, \widehat G_{j_1}) = O_P(n^{-1/2})$, which concludes the upper bound for $r^{\coat}$.

We now show the lower bound.
We first introduce a new piece of notation.
For any fixed mixture distribution $G$, we let $q_{n,G}$ denote the quantile function for the distribution of $D(\widehat G, G)$ where $\widehat G$ is the MLE based on $n$ samples.

\textit{\underline{Step 1: Lower bound via quantile functions for off-centre distance.}}
We begin by lower bounding $r^{\coat}$ in terms of the maximum pairwise distance between failure-free estimates within the COAT ball:
\begin{equation}
\label{eq:coat_conc_helper}
\begin{split}
    r^{\coat} & = \frac{1}{2}\operatorname{diam}\paren*{B(\widehat G^{\coat}; r^{\coat})} \\
    & \geq \frac{1}{2}\max \braces*{D(\widetilde{G}_i,\widetilde{G}_j) \colon \widetilde{G}_i,\widetilde{G}_j \in B(\widehat G^{\coat}; r^{\coat})}\\
    & \geq \frac{1}{2}\max \braces*{D(\widehat{G}_i,\widehat{G}_j) \colon i, j \in \sB^c, \widehat{G}_i,\widehat{G}_j \in B(\widehat G^{\coat}; r^{\coat})}.
\end{split}
\end{equation}
Because it is assumed that there are at most $\alpha m$ failure machines, the number of failure-free estimates contained in $B(\widehat G^{\coat}; r^{\coat})$ is at least $(1/2-\alpha)m$.
This implies that~\eqref{eq:coat_conc_helper} can be further lower bounded by a minimax quantity:
\begin{equation}
\label{eq:radius_lower_bound_helper}
\frac{1}{2}\max_{1 \leq i \leq (1-\alpha)m, \widehat{G}_i \in B(\widehat G^{\coat}; r^{\coat})}\min_{\substack{S \subset \sB^c \\ |S| = (1/2 - \alpha)m}} \max_{j \in S} D(\widehat{G}_i,\widehat{G}_j).
\end{equation}
Note that by Lemma~\ref{lem:quantile_conc}, for each fixed $\widehat{G}_i$, there is a probability at least $1-2\exp(2(1-\alpha)m\epsilon^2)$ event such that
\[
\min_{\substack{S \subset \sB^c \\ |S| = (1/2 - \alpha)m}} \max_{j \in S} D(\widehat{G}_i,\widehat{G}_j) \geq q_{n,\widehat{G}_i}\paren*{\frac{1-2\alpha}{2-2\alpha}-\epsilon},
\]

Hence, consider all possible $\widehat{G}_i$, with probability at least $1-2(1-\alpha)m\exp(2(1-\alpha)m\epsilon^2)$,~\eqref{eq:radius_lower_bound_helper} is further lower bounded by
\[
\frac{1}{2}\max_{1 \leq i \leq (1-\alpha)m, \widehat{G}_i \in B(\widehat G^{\coat}; r^{\coat})}q_{n,\widehat{G}_i}\paren*{\frac{1-2\alpha}{2-2\alpha}-\epsilon}.
\]
Next we connect this lower bound with the quantile function $q^*$.

\textit{\underline{Step 2: Lower bound for quantile function for off-centre distance.}}
Consider a fixed mixing distribution $G$ that satisfies $\norm*{\mG - \mG^*} \leq Cn^{-1/2}$ for some fixed $C > 0$ not depending on $n$ or $m$. 
We may lower bound $q_{n,G}\paren*{\frac{1-2\alpha}{2-2\alpha}-\epsilon}$ as follows.
Let $t \geq 0$ be any threshold.
Following the argument in Proposition \ref{lemma:coat_radius_conc_part2}, there is a probability at least $1-Cn^{-1/2}$ event (with respect to the distribution of $\widehat G$) on which we have
\[
\begin{split}
&\abs*{nD^2(\widehat G,G) - n\paren*{\widehat\mG-\mG}^{\top}\mH^*\paren*{\widehat\mG-\mG}} \\
\leq&~Cn^{-1/2}\max\braces*{\norm*{\sqrt{n}\paren*{\widehat{\mG} - \mG^*}}^3, \norm*{\sqrt{n}\paren*{\mG - \mG^*}}^3} \\
\leq&~Cn^{-1/2}\paren*{\paren*{\frac{\log n}{2\lambda_{\min}(\mI)}}^{1/2} + \sqrt{\nu}}^3 + Cn^{-1/2}
\end{split}
\]
which implies that
\[
\sP\paren*{nD^2(\widehat G,G) \leq t}  \leq \sP\paren*{(\mathbf{Z}+\vv)^{\top}\mI^{-1/2}\mH^*\mI^{-1/2}(\mathbf{Z}+\vv) \leq t + C\frac{\log^{3/2}n}{n^{1/2}}} + \frac{C}{n^{1/2}},
\]
where $\vv = n^{1/2} \mI^{1/2}\paren*{\mG^*-\mG}$.
Next, applying Lemma \ref{lem:convex_function_CDF}, we see that the right hand side is upper bounded by
\[
\sP\paren*{\mathbf{Z}^{\top}\mI^{-1/2}\mH^*\mI^{-1/2}\mathbf{Z} \leq t + C\frac{\log^{3/2}n}{n^{1/2}}} + \frac{C}{n^{1/2}}.
\]
Since this holds for any $t$, we get
\[
nq_{n,G}^2\paren*{a} \geq q^*\paren*{a - Cn^{-1/2}} - C\frac{\log^{3/2}n}{n^{1/2}}
\]
for any $0 \leq a \leq 1$.

\textit{\underline{Step 3: Upper bound on parameter distance.}}
We wish to show that with high probability there exists $\widehat{G}_i \in B(\widehat G^{\coat}; r^{\coat})$, $i \in \sB^c$ such that $\norm*{\widehat\mG_i - \mG^*} \leq Cn^{-1/2}$, which is needed for the lower bound in Step 2 is valid.
Based on the lower bound $\abs{\braces{ i \in \sB^c \colon \widehat G_i \in B(\widehat G^{\coat}, r^{\coat})}} \geq (1/2-\alpha)m$ discussed in step 1, it suffices to show that
\[
\abs{\{ \widehat G_i \colon i \in \sB^c, \norm{\widehat \mG_i - \mG^*} > Cn^{-1/2} \}} < (1/2-\alpha)m.
\]
To begin, note that for any $t \geq 0$, Markov's inequality together with the moment bound from Lemma~\ref{lemma:local_mle_property} gives
\[
\sP\paren*{\norm*{\widehat\mG_i - \mG^*} \geq t}  \leq t^{-8}\sE\sqbracket*{\norm*{\widehat\mG_i - \mG^*}^8} \leq \frac{C}{(n^{1/2}t)^8}
\]
for some $C > 0$.
Choosing $t^* = n^{-1/2}\paren*{\frac{C(1-2\alpha)}{4-4\alpha}}^{1/q}$, we thus get
\[
\sP\paren*{\norm*{\widehat\mG_i - \mG^*} \geq t^*} \leq \frac{1-2\alpha}{4-4\alpha}.
\]
Since
\[
\abs{\{ \widehat G_i \colon i \in \sB^c, \norm{\widehat \mG_i - \mG^*} > t^* \}} = \sum_{i \in \sB^c} \mathbbm{1}\paren*{\norm*{\widetilde\mG_i - \mG^*} \geq t^*},
\]
it is a sum of $(1-\alpha)m$ independent Bernoulli random variables each with probability at most $\frac{1-2\alpha}{4(1-\alpha)}$.
Applying Chernoff's inequality \citep[Theorem 2.3.1]{vershynin2018high}, we therefore get
\[
\begin{split}
& \sP\paren*{\text{There exists}~\widehat{G}_i \in B(\widehat G^{\coat}; r^{\coat}), i \in \sB^c~\text{such that}~\norm*{\widehat\mG_i - \mG^*} \leq t^*} \\
\geq &~1 - 
\sP\paren*{\abs{\{ \widehat G_i \colon i \in \sB^c, \norm{\widehat \mG_i - \mG^*} > t^* \}} \geq (1/2-\alpha)m} \geq 1 - \exp\paren*{-\frac{(1-2\alpha)m}{12}}.
\end{split}
\]

\textit{\underline{Step 4: Putting everything together.}}
Condition on the high probability events guaranteed by Step 1 and Step 3.
We may then apply Step 2 (which is deterministic) to get the result we want.
\end{proof}

\subsection{Equivalence between MR and barycentre estimators}
\label{app:mr_barycentre_equivalence}
\begin{lemma}
\label{lem:barycentre_equivalence}
Consider the mixture reduction aggregate $\widehat G^{\mr}$ in~\eqref{eq:GMR} obtained from $m$ fixed mixing distributions $G_1,G_2,\ldots, G_m$, with $G_i = \sum_{k=1}^K w_{ik}\delta_{\theta_{ik}}$ for $i=1,2,\ldots,m$.
There exists $r_0 > 0$ such whenever $D(G_i,G^*) < r_0$ for $i=1,2,\ldots,m$, the mixing weights and subpopulation parameters of $\widehat G^{\mr}$ can be written respectively as
\begin{equation}
\label{eq:barycentre_formulas}
w_k^{\dagger} 
= 
\frac{1}{m}\sum_{i=1}^m {w}_{ik},\quad\quad
\theta_k^{\dagger} 
= 
\argmin_{\theta} \sum_{i=1}^m {w}_{ik} c(\theta_{ik},\theta),
\end{equation}
for $k=1,2,\ldots,K$.
\end{lemma}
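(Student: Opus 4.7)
My plan is to exploit the clustering interpretation~\eqref{eq:MR_clustering} of the MR estimator and argue that, once the local estimates are sufficiently close to $G^*$, the inner minimisation over cluster assignments must be uniquely achieved by the ``natural'' partition $\gC_k = \{(i,k) \colon i \in [m]\}$ (after each $G_i$ is aligned to $G^*$ via~\eqref{eq:aligned_component_def}). Once this optimal partition is identified, the minimisation in~\eqref{eq:MR_clustering} decouples across clusters: the weight formula in~\eqref{eq:barycentre_formulas} drops out by summing $\lambda_i w_{ik} = w_{ik}/m$ over $i$ and rescaling, and the barycentre formula is literally the per-cluster subproblem.

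The first technical step is to translate the $L^2$ hypothesis into Euclidean control of parameters. Concretely, I would show that for any prescribed $\delta > 0$ there exists $r_0 > 0$ such that $D(G_i, G^*) < r_0$ implies $\|\mG_i - \mG^*\| < \delta$. This follows from a compactness argument: the map $G \mapsto f_G$ is continuous from $\sG_K$ endowed with the Wasserstein metric to $L^2$, and by identifiability (Assumption~\ref{assumption:identifiability}) it is injective on $\sG_K$; combined with compactness of $\Theta$ from Assumption~\ref{assumption:parameter-space}, the domain $\sG_K$ is itself compact, so the inverse map is continuous on the image. Hence a small $L^2$ gap forces a small Wasserstein gap, which after the permutation alignment~\eqref{eq:aligned_component_def} is precisely the desired Euclidean bound on $\mG_i - \mG^*$.

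Having the parameter-space control in hand, I fix $\delta < \tfrac{1}{4}\min_{k \neq l}\|\theta_k^* - \theta_l^*\|$, small enough also to lie within the neighbourhoods where Assumption~\ref{assump:cost_bregman_divergence} applies. For the natural partition, the barycentre $\theta_k^{\dagger}$ of $\gC_k$ lies within $O(\delta)$ of $\theta_k^*$ by strong convexity of $A$, so each Bregman cost $c(\theta_{ik}, \theta_k^{\dagger})$ is $O(\delta^2)$ by the upper bound in Assumption~\ref{assump:cost_bregman_divergence}(a). For any alternative partition, some cluster $\gC_j$ must contain points from two different natural groups $k \neq k'$; the barycentre $\theta_j^{\dagger}$ cannot simultaneously be close to both $\theta_k^*$ and $\theta_{k'}^*$, so by the triangle inequality at least one misplaced point satisfies $\|\theta_{ik} - \theta_j^{\dagger}\|$ bounded below by a constant depending only on the subpopulation separation. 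Since $A$ is strictly convex and $\Theta$ is compact, $c(\theta_{ik}, \theta_j^{\dagger})$ is then bounded below by a strictly positive constant independent of $\delta$. Choosing $r_0$ (hence $\delta$) small enough renders the natural clustering strictly optimal, and~\eqref{eq:barycentre_formulas} then follows from the first-order conditions of the resulting per-cluster subproblems.

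The main obstacle is the quantitative inverse continuity in the first step: although the qualitative compactness argument is classical, handling the label-switching ambiguity carefully so that one can match $r_0$ against any prescribed $\delta$ requires some care. Once the parameter control is granted, the remaining cost comparison is straightforward, relying only on strict convexity of $A$, the subpopulation separation inherited from $G^* \in \sG_K \setminus \sG_{K-1}$, and the discreteness of partitions.
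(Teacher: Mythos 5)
Your overall strategy---convert the $L^2$ hypothesis into Euclidean closeness of the aligned parameters, then argue that the natural clustering is optimal in the formulation~\eqref{eq:MR_clustering}---has the right shape, and your first step is essentially the paper's continuity lemma for the parameter distance with respect to $D(\cdot,\cdot)$ (the same compactness-plus-identifiability argument). The gap is in the partition comparison. In the objective, each atom $\theta_{ik}$ of $\widebar G$ carries mass $\lambda_i w_{ik}=w_{ik}/m$. Your lower bound for an alternative partition rests on a \emph{single} misplaced atom whose cost $c(\theta_{ik},\theta_j^{\dagger})$ is bounded below by a constant; but its contribution to the objective is then only $\Omega(1/m)$, not $\Omega(1)$, whereas the natural partition's objective is $O(\delta^2)$. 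So ``choosing $\delta$ small enough'' to force $O(\delta^2)<\Omega(1/m)$ requires $\delta\lesssim m^{-1/2}$, i.e.\ an $r_0$ that shrinks with $m$. That is not sufficient for the lemma's intended use: in the proof of Theorem~\ref{thm:oracle_rate} it is invoked on the event $\max_i D(\widehat G_i,G^*)\leq m^{1/4}n^{-1/2}$, which under $m=O(n)$ is only $O(m^{-1/4})$ and in particular not $o(m^{-1/2})$. (A finer cluster-by-cluster accounting would show the natural partition still wins for fixed $\delta$, but that is a different and more delicate computation than the crude ``$O(\delta^2)$ versus a constant'' comparison you propose.)

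The paper avoids this by never comparing whole partitions. It first upper-bounds the optimal cost via convexity of the Kantorovich problem, $T_c(\widebar G,\widehat G^{\mr})\leq m^{-1}\sum_i T_c(G_i,G^*)=O(r_0)$, and then shows each true component $\theta_{k_0}^*$ must lie near \emph{some} support point of $\widehat G^{\mr}$: otherwise \emph{all} $m$ atoms $\{\theta_{ik_0}\}_{i}$, of total mass about $w_{k_0}^*=\Theta(1)$, would each incur at least a constant per-unit cost, contradicting the upper bound. The lower bound there aggregates an entire component's mass, which is why it is $\Theta(1)$ rather than $\Theta(1/m)$. Only once the centres are pinned down near the true components does the paper compare assignments, and it does so pointwise: for fixed centres the inner minimisation decouples across atoms, each atom is sent to its cheapest centre, and for $\theta_{ik}$ that centre is the one near $\theta_k^*$ ($O(\delta^2)$ versus a constant). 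The pointwise comparison is immune to the $1/m$ weighting, which is precisely what your global objective comparison lacks. You would need to restructure your proof along these two stages (localize the centres first, then decide assignments atom by atom) to obtain an $r_0$ independent of $m$.
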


\begin{remark}
The conclusion of Lemma~\ref{lem:barycentre_equivalence} states that minimizing the composite transportation distance defined in~\eqref{eq:CTD_def} fully aligns all mixture components of the distributions $G_1,G_2,\ldots,G_m$ (c.f.~\eqref{eq:aligned_component_def}). 
Furthermore, the value for each weight in the mixture reduction aggregate is equal to the simple average of the aligned component weights in $G_1,G_2,\ldots,G_m$, while each subpopulation parameter in the aggregate is equal to the barycentre of the aligned subpopulation parameters.
\end{remark}

\begin{proof}
    \textit{\underline{Step 1: Upper bound for composition transportation divergence.}}
    Suppose $\norm{\mG_i - \mG^*} \leq \epsilon'$ for $i=1,2,\ldots,m$ for some $0 < \epsilon' < \epsilon/2$, where $\epsilon$ is the neighbourhood radius in Assumption \ref{assump:cost_bregman_divergence}.
    We aim to obtain an upper bound for the composite transportation divergence~\eqref{eq:CTD_def} between the MR estimate $\widehat G^{MR}$ and the simple average $\widebar{G} = m^{-1}\sum_{i=1}^m G_i$.

    The convexity of the Kantorovich minimisation problem in optimal transport~\citep[Section 7.4]{villani2003topics} allows us to write
    \begin{equation}
    \label{eq:CTD_convexity}
        T_{c}\left(\widebar G, \widehat{G}^{\mr}\right) \leq T_{c}\left(\widebar G, G^{*}\right)\leq \frac{1}{m}\sum_{i=1}^m  T_{c}(G_i, G^{*}).
    \end{equation}
    We then bound each term on the right hand side.
    Recall the composite transportation divergence is the following minimisation problem
    \[
    T_{c}(G_i, G^{*}) = \min\left\{\sum_{jk} \pi_{jk} c(\theta_{ij}, \theta_k^*): \pi_{jk}\geq 0,~\sum_{j} \pi_{jk} = w_k^*,~\sum_{j} \pi_{jk} = w_{ik}\right\}.
    \]
    Consider a specific $\pi$ that satisfies the marginal constraints with diagonal elements being $\pi_{kk} = \min\{w_{ik}, w_k^*\}$, then we must have     
    \[
        \begin{split}
        T_{c}(G_i, G^{*}) & \leq \sum_{k=1}^{K} \min\{{w}_{ik}, w^{*}_k\}c(\theta_{ik}, \theta^{*}_k) + \sup_{\theta,\theta'\in \Theta}c(\theta, \theta')\|\vw_{i} - \vw^{*}\|_1\\
        & \leq \eta_+\sum_{k=1}^{K}\|{\theta}_{ik}- \theta^{*}_k\|_2^2 + \text{Diam}^2(\Theta)K^{1/2}\|{\vw}_{i} -\vw^{*}\|_2,
        \end{split}
    \]
    where $\vw_i = (w_{i1},w_{i2},\ldots,w_{iK})$.
    Note that for the second inequality, we have used Lemma \ref{lem:Bregman_upper_lower}(a).
    
    Since
    \[
        \norm{\mG_i-\mG^*}^2 = \sum_{k=1}^{K}\|{\theta}_{ik}- \theta^{*}_k\|_2^2 + \|{\vw}_{i} -\vw^{*}\|_2^2,
    \]
    we get 
    \[
        T_{c}(G_i, G^{*}) \leq \eta_+\norm{\mG_i-\mG^*}^2 + \text{Diam}^2(\Theta)K^{1/2}\norm{\mG_i-\mG^*}
    \]
    Plugging this back into \eqref{eq:CTD_convexity} and using our assumption on $\norm{\mG_i-\mG^*}$ allows us to conclude that
    \[
    T_{c}\left(\widebar G, \widehat{G}^{\mr}\right) \leq \epsilon'\paren*{\text{Diam}^2(\Theta)K^{1/2} + \eta_+\epsilon'}.
    \]

    \textit{\underline{Step 2: MR aggregate is close to true parameter values.}}
    Denote the mixing weights and subpopulation parameters of $\widehat G^{\mr}$ as $w_k^{\mr}$ and $\theta_k^{\mr}$, $k=1,2,\ldots,K$.
    Suppose $\epsilon' < \min_{k} w_k^*/2$.
    Since $\|\mG_i -\mG^*\| < \epsilon'$, we have $\|\theta_{ik} - \theta_k^*\| < \epsilon'$ for all $i$ and $k$.
    Denote
    \[
        \epsilon'' = \max_{1 \leq k \leq K}\min_{1 \leq l \leq K}\norm{\theta_l^{\mr} - \theta_{k}^*}
    \]
    and let $k_0$ be the index achieving the outer maximum.
    We show below that $\epsilon'' \to 0$ as $\epsilon' \to 0$.

    Suppose for now that $\epsilon'' > 2\epsilon'$.
    Using the reverse triangle inequality, this implies that for $i=1,2,\ldots,M$, we have
    \[
        \min_{1 \leq l \leq K}\norm{\theta_l^{\mr} - \theta_{ik_0}}  \geq \min_{1 \leq l \leq K}\norm{\theta_l^{\mr} - \theta^*_{k_0}} - \norm{\theta_{k_0}^* - \theta_{ik_0}}  \geq \epsilon'' - \epsilon'  \geq \epsilon''/2.
    \]
    Together with Lemma~\ref{lem:Bregman_upper_lower} (b), this implies
    \[
        \min_{1 \leq l \leq K}c(\theta_{ik_0},\theta_l^{\mr}) \geq \frac{\eta_{-}}{4}\min\braces{\epsilon^2, (\epsilon'')^2}.
    \]
    We now plug this into the expression for the optimal transport divergence to get
    \[
        \begin{split}
            T_{c}\left(\widebar G, \widehat{G}^{\mr}\right) & = \min\left\{\sum_{i,k,j}\pi_{ikj}c(\theta_{ik}, \theta_{j}^{\mr}): \sum_{i,k}\pi_{ikj} = w_j^{\mr}, \sum_{j}\pi_{ikj} = \frac{{w}_{ik}}{m}\right\}\\
            & \geq \min\left\{\sum_{i,j}\pi_{ik_0j}c(\theta_{ik_0}, \theta_{j}^{\mr}):  \sum_{j}\pi_{ik_0j} = \frac{{w}_{ik_0}}{m}\right\}\\
            & \geq \frac{1}{m}\sum_{i=1}^m w_{ik_0}\frac{\eta_{-}}{4} \cdot \min\braces{\epsilon^2, (\epsilon'')^2}.
        \end{split}
    \]
    Since
    \[
        \max_{1 \leq i \leq m}\abs{w_{ik_0} - w_{k_0}^*} \leq \max_{1 \leq i \leq m}\norm{\mG_i - \mG^*} \leq \epsilon',
    \]
    our assumption $\epsilon' < w_{k_0}^*/2$ implies
    \[
            \frac{1}{m}\sum_{i=1}^m w_{ik_0} \geq w_{k_0}^* - \frac{1}{m}\sum_{i=1}^m \abs{w_{ik_0} - w_{k_0}^*} \geq w_{k_0}^*/2,
    \]
    which yields
    \[
    T_{c}\left(\widebar G, \widehat{G}^{\mr}\right) \geq \frac{\eta_-}{8}\min_{1 \leq k \leq K} w_k^*\cdot \min\braces{\epsilon^2, (\epsilon'')^2}.
    \] 
    Combining this with Step 2, we get
    \[
            \min\braces{\epsilon^2,(\epsilon'')^2} \leq \frac{8}{\eta_-\min_{1\leq k \leq K}w_k^*}\epsilon'\paren*{\text{Diam}^2(\Theta)K^{1/2} + \eta_+\epsilon'},
    \]
    which becomes
    \[
    (\epsilon'')^2 \leq \frac{8}{\eta_-\min_{1\leq k \leq K}w_k^*}\epsilon'\paren*{\text{Diam}^2(\Theta)K^{1/2} + \eta_+\epsilon'}
    \]
    when $\epsilon'$ is small enough.
    Combine with the case that $\epsilon'' \leq 2\epsilon'$, we conclude that
    \[
        \epsilon'' \leq \max\braces*{2\epsilon', \paren*{\frac{8}{\eta_-\min_{1\leq k \leq K}w_k^*}\epsilon'\paren*{\text{Diam}^2(\Theta)K^{1/2} + \eta_+\epsilon'}}^{1/2}}.
    \]    
    In other words, $\epsilon'' \to 0$ as $\epsilon' \to 0$.
    
    \textit{\underline{Step 3: Completing the proof.}}
    Let
    \[
    R = \max_{1 \leq k < l \leq K}\norm{\theta_k^* - \theta_l^*}.
    \]
    First pick $\epsilon' > 0$ small enough such that $\epsilon', \epsilon'' < \min\braces{R/4,\epsilon/2, w_k^*/2}$, so that in particular, the assumptions and conclusions of Steps 2 and 3 hold.
    Using the indexing strategy \eqref{eq:aligned_component_def} for the components of $\widehat G^{\mr}$, we then have
    \[
        \norm{\theta_k^{\mr} - \theta_k^*} < \epsilon'', \quad\quad \max_{l \neq k}\norm{\theta_k^{\mr} - \theta_l^*} > 3R/4
    \]
    for $k=1,2,\ldots,K$.
    Furthermore,
    \begin{equation}
    \label{eq:barycentre_step4_helper}
    \begin{split}
        \norm{\theta_k^{\mr} - \theta_{ik}} & \leq \norm{\theta_k^{\mr} - \theta_{k}^*} + \norm{\theta_k^* - \theta_{ik}} \\
        & \leq \norm{\theta_k^{\mr} - \theta_{k}^*} + \norm{\mG^* - \mG_i}  < \epsilon'' + \epsilon'
    \end{split}
    \end{equation}
    for $i=1,2,\ldots,m$.
    Similarly, we have
    \begin{equation}
    \label{eq:barycentre_step4_helper2}
        \max_{l \neq k}\norm{\theta_k^{\mr} - \theta_{il}} > R/2
    \end{equation}
    for $i=1,2,\ldots,m$.
    Combining \eqref{eq:barycentre_step4_helper} with Lemma \ref{lem:Bregman_upper_lower}(a), we get
    \be
    \label{eq:same_k_upper_bound}
            c(\theta_{ik},\theta_k^{\mr}) \leq \eta_+(\epsilon' + \epsilon'')^2
    \ee
    Combining \eqref{eq:barycentre_step4_helper2} with Lemma \ref{lem:Bregman_upper_lower}(b), we get
    \[
            \max_{l \neq k}c(\theta_{il},\theta_k^{\mr}) \geq \frac{\eta_-}{4}\min\braces{R^2,\epsilon^2}.
    \]
    Choosing $\epsilon'$ to be smaller if necessary, we get
    \be
    \label{eq:different_k_upper_bound}
            c(\theta_{ik},\theta_k^{\mr}) < \max_{l \neq k}c(\theta_{il},\theta_k^{\mr}).
    \ee
    Recall that 
    \[
    \begin{split}
    T_{c}\left(\widebar G, \widehat{G}^{\mr}\right) & = \min_{G\in \sG_{K}} T_{c}\left(\widebar G, G\right)\\
    &=\min_{G\in \sG_{K}} \min\left\{\sum_{i,k,j}\pi_{ikj}c(\theta_{ik}, \theta_{j}): \sum_{i,k}\pi_{ikj} = w_j, \sum_{j}\pi_{ikj} = \frac{{w}_{ik}}{m}\right\}\\
    &=\min_{\theta_j} \min\left\{\sum_{i,k,j}\pi_{ikj}c(\theta_{ik}, \theta_{j}): \sum_{j}\pi_{ikj} = \frac{{w}_{ik}}{m}\right\}
    \end{split}
    \]
    with 
    \[
        w^{\mr}_k = \sum_{i,k} \pi^*_{ikj} = \frac{1}{m} \sum_{i=1}^m w_{ik}.
    \]
    where $\pi^*_{ikj} = w_{ik}\delta_{jk}/m$ and $\delta_{jk}=\mathbbm{1}(j=k)$ is the Dirac delta function is the solution to the inner minimisation problem based on~\eqref{eq:same_k_upper_bound} and~\eqref{eq:different_k_upper_bound}.
    
    This also means that $\theta^{\mr}_k$ is the solution to
    \[
            \argmin_{\theta}\sum_{i=1}^m w_{ik}c(\theta_{ik},\theta).
    \]
    These statements hold for $k=1,2,\ldots,K$, giving the desired conclusion.
    Finally, by Lemma \ref{lem:continuity_of_parameter_distance}, there exists $r_0$ small enough so when $D(G_i,G^*) < r_0$ for $i=1,2,\ldots,m$, $\epsilon'$ is small enough for the calculations in this step to hold.
\end{proof}

\begin{lemma}
\label{lem:Bregman_upper_lower}
Consider a reverse Bregman divergence
\[
c(\theta',\theta) = A(\theta) - A(\theta') - (\theta - \theta')^T\nabla A(\theta')
\]
computed over $\theta, \theta' \in \R^d$.
Suppose there exists $\theta^* \in \R^d$ and $\eta_+ \geq \eta_- > 0$ such that $\eta_- \mI \leq \nabla^2 A(\theta) \leq \eta_+\mI$ for all $\theta$ satisfying $\norm{\theta - \theta^*} \leq \epsilon$.
Then
\begin{enumerate}[label=(\alph*)]
\item For all $\theta, \theta'$ satisfying $\norm{\theta - \theta^*}, \norm{\theta' - \theta^*} \leq \epsilon$, we have
\begin{equation}
\label{eq:bregman_upper}
    c(\theta',\theta) \leq \eta_+ \norm{\theta'-\theta}^2.
\end{equation}
\item For all $\theta'$ satisfying $\norm{\theta' - \theta^*} \leq \epsilon/2$ and for all $\theta \in \R^d$, we have
\begin{equation}
\label{eq:bregman_lower}
    c(\theta',\theta) \geq \eta_- \min\braces{\norm{\theta'-\theta}^2, \epsilon^2/4}.
\end{equation}
\end{enumerate}
\end{lemma}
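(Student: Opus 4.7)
The proof will be a straightforward exercise in Taylor's theorem applied to the potential $A$, together with a careful argument to handle the unbounded regime of $\theta$ in part (b).

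The starting observation is that the reverse Bregman divergence admits the integral remainder representation
\[
c(\theta',\theta) = \int_0^1 (1-s)\, (\theta-\theta')^\top \nabla^2 A\!\left(\theta' + s(\theta-\theta')\right) (\theta-\theta')\, ds,
\]
which I would derive by writing $g(s) = A(\theta' + s(\theta-\theta'))$, noting $c(\theta',\theta) = g(1) - g(0) - g'(0)$, and integrating by parts. Part (a) then follows immediately: when both $\theta$ and $\theta'$ lie in the $\epsilon$-ball about $\theta^*$, convexity of the ball places the entire segment $\theta' + s(\theta-\theta')$ inside it, so the Hessian upper bound $\nabla^2 A \preceq \eta_+ \mI$ applies throughout the integration. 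Evaluating $\int_0^1 (1-s)\, ds = 1/2$ gives $c(\theta',\theta) \leq (\eta_+/2)\,\norm{\theta-\theta'}^2$, which is stronger than the stated bound.

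For part (b), the issue is that the segment from $\theta'$ to $\theta$ may exit the $\epsilon$-ball about $\theta^*$, in which case we have no control on $\nabla^2 A$. The plan is to restrict attention to the portion of the segment inside the ball and discard the contribution from outside (which is nonnegative by convexity of $A$). Concretely, set $t^* = \min\{1,\, \epsilon/(2\norm{\theta-\theta'})\}$; the triangle inequality together with $\norm{\theta'-\theta^*}\leq\epsilon/2$ shows that $\theta' + s(\theta-\theta')$ stays within the $\epsilon$-ball for all $s \in [0,t^*]$. Truncating the integral representation to $[0,t^*]$ and applying the Hessian lower bound gives
\[
c(\theta',\theta) \geq \eta_- \norm{\theta-\theta'}^2 \int_0^{t^*} (1-s)\, ds \geq \frac{\eta_-}{2}\, \norm{\theta-\theta'}^2 \cdot t^*,
\]
where I used $t^*(1-t^*/2) \geq t^*/2$ for $t^* \in [0,1]$.

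The remaining step is a case split on the two values that $t^*$ can take. If $\norm{\theta-\theta'} \leq \epsilon/2$, then $t^* = 1$ and the bound reduces to $(\eta_-/2)\norm{\theta-\theta'}^2$. Otherwise $t^* = \epsilon/(2\norm{\theta-\theta'})$, and the right-hand side becomes $(\eta_-/4)\,\epsilon\,\norm{\theta-\theta'} \geq \eta_- \epsilon^2/8$, since $\norm{\theta-\theta'} > \epsilon/2$ in this regime. Combining the two cases gives $c(\theta',\theta) \geq (\eta_-/2)\min\{\norm{\theta-\theta'}^2,\, \epsilon^2/4\}$, which implies the claimed bound (after absorbing the harmless factor of $1/2$ that the statement suppresses). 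I do not anticipate any real obstacle here; the only subtlety is remembering to discard the out-of-ball contribution using the nonnegativity of the Bregman divergence's integrand, rather than trying to control $\nabla^2 A$ globally.
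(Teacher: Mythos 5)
Your proof is correct and, for part (b), takes a genuinely different route from the paper's. Part (a) is essentially the paper's argument: Taylor expansion with integral remainder plus the Hessian bounds, valid along the whole segment because the ball is convex. For part (b), the paper instead introduces the auxiliary point $\theta'' = \theta' + (\epsilon/2)(\theta-\theta')/\norm{\theta-\theta'}$, applies the in-ball lower bound to $c(\theta',\theta'')$, and then uses the fact that $\theta \mapsto c(\theta',\theta)$ is convex, nonnegative and vanishes at $\theta'$, hence monotone along rays emanating from $\theta'$, to conclude $c(\theta',\theta) \geq c(\theta',\theta'')$. Your truncation of the integral remainder at the exit time $t^*$ achieves the same effect more directly. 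Note that both arguments quietly rely on the \emph{global} convexity of $A$ — yours to discard the out-of-ball portion of the integrand, the paper's for the ray monotonicity — which is not among the hypotheses displayed in the lemma but is part of what it means for $c$ to be a Bregman divergence (cf.\ Assumption~\ref{assump:cost_bregman_divergence}).

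One point you should not gloss over: your argument yields $c(\theta',\theta) \geq (\eta_-/2)\min\braces{\norm{\theta-\theta'}^2,\epsilon^2/4}$, and that factor of $1/2$ is not something the statement merely ``suppresses'' — it is genuinely necessary. Taking $A(\theta) = (\eta/2)\norm{\theta}^2$ with $\eta_-=\eta_+=\eta$ gives $c(\theta',\theta) = (\eta/2)\norm{\theta-\theta'}^2$ exactly, so the lower bound in part (b) fails as written. The paper's proof arrives at the stated constant only because it records the Taylor remainder as $\int_0^1 (\theta-\theta')^{\top}\nabla^2 A(\rho\theta+(1-\rho)\theta')(\theta-\theta')\,d\rho$, omitting the weight $(1-\rho)$; with the correct remainder the constants become $\eta_+/2$ in (a) (which you noted) and $\eta_-/2$ in (b). Since the lemma is only ever invoked up to multiplicative constants, this is harmless downstream, but you should state the corrected constant in (b) rather than claim the factor can be absorbed.
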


\begin{proof}
Using the Taylor expansion
\begin{equation}
\label{eq:taylor_expansion_bregman}
    A(\theta) = A(\theta') + \langle \nabla A(\theta'), \theta-\theta'\rangle + \int_{0}^{1}(\theta-\theta')^{\top}\nabla^2 A(\rho\theta+(1-\rho)\theta')(\theta-\theta')\,d\rho,
\end{equation}
we may write
\begin{equation}
\label{eq:bregman_cost_integral}
    c(\theta',\theta) = (\theta-\theta')^{\top} M(\theta) (\theta-\theta'),
\end{equation}
where
\[
    M(\theta) = \int_{0}^{1}\nabla^2 A(\rho\theta+(1-\rho)\theta')\,d\rho.
\]
Under the assumptions of (a), $\rho\theta^*+(1-\rho)\theta' \in B_\epsilon(\theta^*)$ for any $\rho \in [0,1]$, so that Assumption~\ref{assump:cost_bregman_divergence} implies $M(\theta) \preceq \eta_+\mI$.
Plugging this into~\eqref{eq:bregman_cost_integral} gives~\eqref{eq:bregman_upper}.

To establish~\eqref{eq:bregman_lower}, first consider $\theta', \theta$ such that $\norm{\theta'-\theta^*} \leq \epsilon/2$ and $\norm{\theta-\theta^*} \leq \epsilon$.
Then Assumption~\ref{assump:cost_bregman_divergence} implies $M(\theta) \succeq \eta_{-}\mI$ in~\eqref{eq:bregman_cost_integral} and we get
\[
c(\theta',\theta) \geq \eta_-\norm{\theta'-\theta}^2.
\]
Next, suppose $\norm{\theta-\theta^*} > \epsilon$.
Then in particular, $\norm{\theta-\theta'} > \epsilon/2$.
Set
\[
\theta'' = \theta' + (\epsilon/2)\frac{\theta - \theta'}{\norm{\theta - \theta'}}.
\]
We see that
\[
    \norm{\theta'' - \theta'} \leq \epsilon/2,
\]
which implies $\norm{\theta'' - \theta^*} \leq \epsilon$, giving
\[
    c(\theta',\theta'') \geq \eta_-\norm{\theta'-\theta''}^2 \geq \eta_- \epsilon^2/4.
\]
Meanwhile, since $\theta''$ is a convex combination of $\theta$ and $\theta'$, the convexity of $A$ implies that 
\[
c(\theta',\theta) \geq \frac{\|\theta-\theta'\|^2}{\epsilon/2} c(\theta',\theta'') > c(\theta',\theta''). \qedhere
\]
\end{proof}

\begin{lemma}
\label{lem:continuity_of_parameter_distance}
    For any $\epsilon > 0$, there exists $\delta > 0$ such that whenever $D(G,G^*) < \delta$, we have $\norm{\mG-\mG^*} < \epsilon$.
\end{lemma}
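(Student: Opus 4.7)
The plan is to prove the statement by a compactness-plus-contradiction argument that leverages identifiability. Suppose for contradiction that the claim fails: there exists $\epsilon_0 > 0$ and a sequence $\{G_n\} \subset \sG_K$ such that $D(G_n, G^*) \to 0$ but $\norm{\mG_n - \mG^*} \geq \epsilon_0$ for every $n$. By Assumption~\ref{assumption:parameter-space}, the subpopulation parameter space $\Theta$ is compact, and the mixing weights lie in the compact simplex $\Delta_{K-1}$. Thus each $\mG_n$ lies in a compact subset of Euclidean space, and I can extract a convergent subsequence $\mG_{n_k} \to \mG'$ with corresponding mixing distribution $G' \in \sG_K$, where necessarily $\norm{\mG' - \mG^*} \geq \epsilon_0$.

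Next, I would identify $G'$ with $G^*$. Since $D(G_{n_k}, G^*) \to 0$, the densities $f_{G_{n_k}}$ converge to $f_{G^*}$ in $L^2$, and hence (after passing to a further subsequence) almost everywhere. On the other hand, since $\mG_{n_k} \to \mG'$ componentwise and $\theta \mapsto f(x;\theta)$ is continuous in $\theta$ (implied by the differentiability in Assumption~\ref{assumption:parametric_family}), we have $f_{G_{n_k}}(x) = \sum_k w_{n_k,k} f(x;\theta_{n_k,k}) \to \sum_k w'_k f(x;\theta'_k) = f_{G'}(x)$ for every $x$. Comparing the two limits gives $f_{G'}(x) = f_{G^*}(x)$ for almost every $x$, and Assumption~\ref{assumption:identifiability} (identifiability of the mixture model) then forces $G' = G^*$ as mixing distributions.

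Finally, I would invoke the particular choice of vectorisation in~\eqref{eq:aligned_component_def}: $\mG'$ is by definition the permutation of the support points and mixing weights of $G'$ that is Euclidean-closest to $\mG^*$. Since $G' = G^*$ as mixing distributions, one valid permutation yields exactly $\mG^*$, so the minimising permutation gives $\mG' = \mG^*$, contradicting $\norm{\mG' - \mG^*} \geq \epsilon_0$.

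The main obstacle is the middle step, specifically pinning down the mode of convergence of $f_{G_{n_k}}$ and ensuring consistency with the two candidate limits $f_{G^*}$ and $f_{G'}$. Pointwise continuity of $\theta \mapsto f(x;\theta)$ is not stated as a standalone assumption but is implicit in the differentiability hypotheses near each $\theta_k^*$; away from those points one needs the tacit regularity that is standard for parametric families (square-integrability together with continuity in $\theta$). A clean way to avoid any fuss would be to combine almost-everywhere convergence from the $L^2$ assumption with Fatou's lemma applied to $(f_{G_{n_k}} - f_{G'})^2$, but the cleanest route is simply to note that the two modes of convergence identify the same limit almost everywhere, after which identifiability and the alignment convention in~\eqref{eq:aligned_component_def} do all the remaining work.
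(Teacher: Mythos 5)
Your proof is correct and is essentially the sequential-compactness formulation of the paper's own argument, which instead applies the extreme value theorem to the continuous map $\mG \mapsto D(G,G^*)$ on the compact set obtained by deleting $\epsilon$-balls around all permutations of $\mG^*$; both rest on the same ingredients (compactness of $\Theta \times \Delta_{K-1}$ from Assumption~\ref{assumption:parameter-space}, continuity of the parameter-to-density map, and Assumption~\ref{assumption:identifiability}). The one point to tighten is at the end: the limit $\mG'$ of the aligned vectors $\mG_{n_k}$ is a priori only \emph{some} permutation-vectorization of $G' = G^*$, not necessarily the one closest to $\mG^*$, so rather than asserting $\mG' = \mG^*$ you should note that some permutation of the components of $\mG_{n_k}$ converges to $\mG^*$, whence $\|\mG_{n_k} - \mG^*\| = \min_{\sigma}\|\vv_\sigma - \mG^*\| \to 0$ by the definition in~\eqref{eq:aligned_component_def}, contradicting $\|\mG_{n_k} - \mG^*\| \geq \epsilon_0$ directly.
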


\begin{proof}
    Fix $\epsilon > 0$, and consider the set
    \[
        E = \sG_K^{\text{vec}} \big\backslash \bigcup_{\sigma \in S_K} B(\mG_{\sigma}^*, \epsilon),
    \]
    which was also defined in the proof of Lemma~\ref{lemma:local_mle_property} (see Section~\ref{sec:local_mle_properties}).
    By Assumption~\ref{assumption:parameter-space}, this set is compact.

    Consider the function $\mG \mapsto D(G,G^*)$.
    This function is continuous and by Assumption \ref{assumption:identifiability}, its set of minimizers comprise $\braces*{ \mG^*_{\sigma} \colon \sigma \in S_K}$.
    As such, $\inf_{\mG \in E} D(G,G^*)$ is achieved by an element in $A$ and we have
    \[
        \inf_{\mG \in E} D(G,G^*) > 0.
    \]
    Setting $\delta$ to be this value, we therefore see that if $D(G,G^*) < \delta$, we need $G \notin E$, which implies that $\norm{\mG - \mG^*} < \epsilon$.
\end{proof}

\subsection{Rate of convergence of MR}
\label{app:mr_rate_proof}
\begin{proof}[\textbf{Proof of Theorem \ref{thm:oracle_rate}}]
\textit{\underline{Step 1: Equivalence to barycentre estimator.}}
We first observe that
\[
\sP\paren*{\max_{1 \leq i \leq m} D(\widehat G_i,G^*) \geq m^{1/4}n^{-1/2}} \leq m \sP\paren{D(\widehat G_i,G^*) \geq m^{1/4}n^{-1/2}} = O(m^{-1}),
\]
where the second inequality follows from Lemma~\ref{lemma:distance_concentration} (b).
In particular, since $m = O(n)$, we have
\begin{equation}
\label{eq:omr_helper}
    \max_{1 \leq i \leq m} D(\widehat G_i,G^*) = o_P(1).
\end{equation}
This means that Lemma \ref{lem:barycentre_equivalence} can be applied for $n$ and $m$ large enough, which shows that that the mixing weights and subpopulation parameters of $\widehat G^{\mr}$ can be written respectively as
\[
w_k^{\dagger} 
= 
\frac{1}{m}\sum_{i=1}^m \widehat{w}_{ik},\quad\quad
\theta_k^{\dagger} 
= 
\argmin_{\theta} \sum_{i=1}^m \widehat{w}_{ik} c(\widehat{\theta}_{ik},\theta),
\]
for $k=1,2,\ldots,K$.

\textit{\underline{Step 2: Convergence of mixing weight estimates.}}
Focusing on some fixed $k$, we next decompose the MSE of each mixing weight estimate into bias and variance terms:
\[
\sE\braces*{\paren{w_k^\dagger - w_k^*}^2}  = \paren*{\sE\braces{w_k^\dagger} - w_k^*}^2 + \Var\paren{w_k^\dagger} = \paren*{\sE\braces{\widehat w_{ik}} - w_k^*}^2 + \frac{1}{m}\Var\paren{\widehat w_{ik}}.
\]
Using Lemma~\ref{lemma:local_mle_property} (b), the bias term satisfies
\[
\paren*{\sE\braces{\widehat w_{ik}} - w_k^*}^2 = O(n^{-2}) = O(N^{-1}).
\]
Meanwhile, using Lemma~\ref{lemma:local_mle_property} (c), the variance term satisfies
\[
m^{-1}\Var\paren{\widehat w_{ik}} \leq m^{-1}\sE\braces*{\paren{\widehat w_{ik}-w_k^*}^2} \leq m^{-1}\sE\braces*{\norm{\widehat\mG - \mG^*}^2} = O(N^{-1}).
\]
By Markov's inequality, we thus have
\[
w_k^\dagger - w_k^* = O_P(N^{-1/2}),
\]
which implies also that
\be
\label{eq:barycentre_weight_helper}
\frac{1}{w_k^\dagger} = \frac{1}{w_k^* + (w_k^\dagger - w_k^*)} = \frac{1}{w_k^*} + O_P(N^{-1/2}).
\ee

\textit{\underline{Step 3: Asymptotic expansion for subpopulation parameter estimates.}}
Denote  
\[
h_k(\theta) = \sum_{i=1}^m \widehat{w}_{ik} c(\widehat{\theta}_{ik}, \theta) 
= \sum_{i=1}^m \widehat{w}_{ik} \big[A(\theta) - A(\widehat{\theta}_{ik}) - (\theta - \widehat{\theta}_{ik})^\top \nabla A(\widehat{\theta}_{ik})\big].
\]  
Since $\theta_k^\dagger$ is the minimizer of $h_k(\theta)$ it satisfies the zero gradient condition $\nabla h_k(\theta_k^\dagger) = 0$.
We may expand
\[
\begin{split}
    \nabla h_k(\theta_k^\dagger) & = \sum_{i=1}^m \widehat{w}_{ik} \paren*{ \nabla A(\theta_k^\dagger) - \nabla A(\widehat{\theta}_{ik})} \\
    & = \sum_{i=1}^m \widehat{w}_{ik} \paren*{\nabla A(\theta_k^\dagger) - \nabla A(\theta_k^*)} - \sum_{i=1}^m \widehat{w}_{ik} \paren*{\nabla A(\widehat\theta_{ik}) - \nabla A(\theta_k^*)}
\end{split}
\]
Setting this equal to zero and rearranging, we get
\begin{equation}
\label{eq:omr_helper2}
    \nabla A(\theta_k^\dagger) - \nabla A(\theta_k^*) = \paren*{m w_k^\dagger}^{-1}\sum_{i=1}^m \widehat{w}_{ik} \paren*{\nabla A(\widehat\theta_{ik}) - \nabla A(\theta_k^*)}.
\end{equation}
Using~\eqref{eq:omr_helper} and Lemma~\ref{lem:continuity_of_parameter_distance}, we see that 
\begin{equation}
\label{eq:omr_helper3}
    \sP\paren*{\max_{1\leq i \leq m}\norm*{\widehat\theta_{ik} - \theta_k^*} \geq \epsilon} = o_P(1).
\end{equation}
On the complement of this event, we may show using Assumption~\ref{assump:cost_bregman_divergence} and by differentiating~\eqref{eq:taylor_expansion_bregman} that
\[
    \norm*{\nabla A(\widehat\theta_{ik}) - \nabla A(\theta_k^*)} \leq \eta_+\norm*{\widehat\theta_{ik} - \theta_k^*}
\]
for $i=1,2,\ldots,m$.
In particular, the right hand side of~\eqref{eq:omr_helper2} is $o_P(1)$.
Note also that by Assumption~\ref{assump:cost_bregman_divergence}, the function $\theta \mapsto \nabla A(\theta)$ has a non-zero Jacobian at $\theta_k^*$ and hence has a differentiable inverse in a neighbourhood of that point.
Therefore, based on~\eqref{eq:omr_helper2}, there is an event with probability converging to 1 on which we may write
\begin{equation}
\label{eq:omr_helper4}
    \theta_k^\dagger =  \paren{\nabla A}^{-1}\paren*{\nabla A(\theta_k^*) + \paren*{mw_k^\dagger}^{-1}\sum_{i=1}^m \widehat{w}_{ik} \paren*{\nabla A(\widehat\theta_{ik}) - \nabla A(\theta_k^*)}}
\end{equation}
Applying the definition of differentiability, we may then rewrite the right hand side in \eqref{eq:omr_helper4} as
\begin{equation}
\label{eq:omr_helper5}
\begin{split}
    & \theta_k^* + \paren*{mw_k^\dagger}^{-1} \nabla^2 A(\theta^*_k)^{-1}\sum_{i=1}^m \widehat{w}_{ik} \paren*{\nabla A(\widehat\theta_{ik}) - \nabla A(\theta_k^*)} + o(\norm{\theta_k^\dagger - \theta_k^*}) \\
    & = \theta_k^* + \paren*{mw_k^*}^{-1} \nabla^2 A(\theta^*_k)^{-1}\sum_{i=1}^m \widehat{w}_{ik} \paren*{\nabla A(\widehat\theta_{ik}) - \nabla A(\theta_k^*)} + o(\norm{\theta_k^\dagger - \theta_k^*}),
\end{split}
\end{equation}
where the equality above follows from~\eqref{eq:barycentre_weight_helper}.

Next, for $i=1,2,\ldots, m$, we expand
\[
\begin{split}
    \nabla A(\widehat\theta_{ik}) - \nabla A(\theta_k^*) & = \int_{0}^{1}\nabla^2 A(\rho\theta_k^* + (1-\rho)\widehat{\theta}_{ik})\,d\rho(\widehat{\theta}_{ik}-\theta_k^*) \\
    & = \nabla^2 A(\theta_k^*)\paren{\widehat \theta_{ik}-\theta_k^*} + U_{i}
\end{split}
\]
where for the ease of notation, we denote 
\[
U_{i} = \int_{0}^{1}\braces{\nabla^2 A(\rho\theta_k^* + (1-\rho)\widehat{\theta}_{ik})-\nabla^2 A(\theta_k^*)}\,d\rho (\widehat{\theta}_{ik}-\theta_k^*).
\]
This allows us to get
\[
\begin{split}
    & \widehat{w}_{ik} \paren*{\nabla A(\widehat\theta_{ik}) - \nabla A(\theta_k^*)} \\
    & = w_k^* \paren*{\nabla A(\widehat{\theta}_{ik}) - \nabla A(\theta^*_k)} + \paren{\widehat{w}_{ik} - w_k^*}\paren*{\nabla A(\widehat{\theta}_{ik}) - \nabla A(\theta^*_k)} \\
    & = w_k^*\nabla^2 A(\theta_k^*)\paren{\widehat \theta_{ik}-\theta_k^*} + \paren{\widehat{w}_{ik} - w_k^*}\nabla^2 A(\theta_k^*)\paren{\widehat \theta_{ik}-\theta_k^*} + w_k^*U_{i} + \paren{\widehat{w}_{ik} - w_k^*}U_{i}
\end{split}
\]

Plugging these into~\eqref{eq:omr_helper5} gives
\be
\label{eq:barycentre_expansion}
\theta_{k}^{\dagger} -\theta_k^* = \frac{1}{m}\sum_{i=1}^{m}(\widehat\theta_{ik}-\sE\braces{\widehat\theta_{ik}}) + (\sE\braces{\widehat\theta_{ik}} - \theta_k^*) + T_1+ T_2 + T_3 + o(\|\theta_{k}^{\dagger} - \theta_k^*\|),
\ee
where 
\[
\begin{split}
T_1 & = \paren*{m w_k^*}^{-1} \sum_{i=1}^m \paren{\widehat{w}_{ik}-w_k^*}\paren{\widehat\theta_{ik}-\theta_k^*}, \\
T_2 & = \paren*{m \nabla^2 A(\theta_k^*)}^{-1}\sum_{i=1}^{m} U_{i},\\
T_3 & = \paren*{m w_k^*\nabla^2 A(\theta_k^*)}^{-1}\sum_{i=1}^{m}(\widehat{w}_{ik}-w_k^*)U_{i}.
\end{split}
\]

\textit{\underline{Step 4: Bounding $T_1$, $T_2$, and $T_3$.}}
Using Cauchy-Schwarz and Lemma \ref{lemma:local_mle_property} (c), we get
\[
\sE\braces*{\norm*{\paren{\widehat{w}_{ik}-w_k^*}\paren{\widehat\theta_{ik}-\theta_k^*}}^2} \leq \sE\braces*{\paren{\widehat{w}_{ik}-w_k^*}^4}^{1/2}\sE\braces*{\norm*{\widehat\theta_{ik}-\theta_k^*}^4}^{1/2} = O(n^{-2}).
\]
Similarly, using Lemma~\ref{lemma:local_mle_property} (b), we get
\[
\begin{split}
\norm*{\sE\braces*{\paren{\widehat{w}_{ik}-w_k^*}\paren{\widehat\theta_{ik}-\theta_k^*}}} & \leq
    \sE\braces*{\norm*{\paren{\widehat{w}_{ik}-w_k^*}\paren{\widehat\theta_{ik}-\theta_k^*}}} \\
    & \leq \sE\braces*{\paren{\widehat{w}_{ik}-w_k^*}^2}^{1/2}\sE\braces*{\norm*{\widehat\theta_{ik}-\theta_k^*}^2}^{1/2} \\
    & = O(n^{-1}).
\end{split}
\]
Decomposing $T_1$ into bias and variance components and arguing similar to Step 2, we get $\sE\braces{\norm{T_1}} = O(n^{-1})$, which then implies $\norm{T_1} = O_P(n^{-1}) = O_P(N^{-1/2})$.

For $\epsilon$ small enough, we may condition on the complement of the event given by~\eqref{eq:omr_helper3} and use Assumption~\ref{assump:cost_bregman_divergence} (b) to get
\[\matrixnorm{\nabla^2 A(\rho \theta_k^* + (1-\rho)\widehat{\theta}_{ik}) - \nabla^2 A(\theta_k^*)}\leq \tilde{A}\rho\|\widehat{\theta}_{ik} - \theta_k^*\|\]
for $0 \leq \rho \leq 1$.
Hence,
\[
\begin{split}
\|U_i\| & = \left\|\int_{0}^{1}\braces{\nabla^2 A(\rho\theta_k^* + (1-\rho)\widehat{\theta}_{ik})-\nabla^2 A(\theta_k^*)}\,d\rho (\widehat{\theta}_{ik}-\theta_k^*)\right\|\\
& \leq\left\{\int_{0}^{1}\tilde{A} \rho\,d\rho\right\}\|\widehat{\theta}_{ik} - \theta_k^*\|^2 =(\tilde{A}/2)\|\widehat{\theta}_{ik} - \theta_k^*\|^2.    
\end{split}
\] 
This implies
\[
\sE\braces{\norm{U_i}^2} \leq (\tilde A/2)^2\sE\braces{\|\widehat{\theta}_{ik} - \theta_k^*\|^4} = O(n^{-2}),
\]
and similarly, $\norm{\sE\braces{U_i}} = O(n^{-1})$.
Decomposing $T_2$ into bias and variance components and arguing similar to Step 2, we get $\sE\braces{\norm{T_2}} = O(n^{-1})$, which then implies $\norm{T_2} = O_P(n^{-1}) = O_P(N^{-1/2})$.

Since $\norm{\paren{\widehat w_{ik} - w_k^*}U_i} \leq 2\norm{U_i}$, we may argue similarly to get $\norm{T_3} = O_P(n^{-1}) = O_P(N^{-1/2})$.

\textit{\underline{Step 5: Completing the proof.}}

Using Lemma~\ref{lemma:local_mle_property}, we have
\[
    \norm*{\sE\braces*{\widehat\theta_{ik}-\theta_k^*}} = O(n^{-1}).
\]
Furthermore, we have
\[
\sE\braces*{\norm{\widehat\theta_{ik}- \sE\braces{\widehat\theta_k^*}}^2} \leq \sE\braces*{\norm{\widehat\theta_{ik}-\theta_k^*}^2} = O(n^{-1}),
\]
so that
\[
\frac{1}{m}\sum_{i=1}^{m}(\widehat\theta_{ik}-\sE\braces{\widehat\theta_{ik}}) = O_P(N^{-1/2}).
\]
Having bounded all the terms in \eqref{eq:barycentre_expansion}, we conclude that $\theta_k^\dagger - \theta_k^* = O_P(N^{-1/2})$ as desired.
Combining this with the conclusion from Step 2 completes the proof of Theorem~\ref{thm:oracle_rate} (a).

Finally, notice that apart from $m^{-1}\sum_{i=1}^{m}(\widehat\theta_{ik}-\sE\braces{\widehat\theta_{ik}})$, all terms in \eqref{eq:barycentre_expansion} satisfy $O_{P}(n^{-1})$.
Assuming $m = o(n)$, they thus satisfy $o(N^{-1/2})$.
Under this scaling regime, we therefore get
\[
\sqrt{N}\paren*{\widehat \mG^{\operatorname{MR}}-\mG^*} = \sqrt{N}\paren*{\widebar \mG-\mG^*} + o_P(1),
\]
where $\widebar\mG$ is the simple average of the vectorized local estimates.
Note that
\[
\begin{split}
\sqrt{N}\paren*{\widebar \mG-\mG^*}
=&~\frac{1}{\sqrt{m}}\sum_{i=1}^{m}\left\{\sqrt{n}(\widehat{\mG}_{i} - \mG^*) - \sE\{\sqrt{n}(\widehat\mG_{i}-\mG^*)\}\right\} + \sqrt{nm}\sE\{\widehat\mG_{i}-\mG^*\}\\
\to&~\mathcal{N}(0,\mI^{-1}(\mG^*)) + o_{P}(1).
\end{split}
\]
In the last line above, the first term follows from the Lindeberg-Lévy Central Limit Theorem.
For the second term, the order is obtained by combining $\|\sE\{\widehat\mG_{i}-\mG^*\}\| = O(n^{-1})$ according to Lemma~\ref{lemma:local_mle_property}~(b) and $m=o(n)$.
\end{proof}

\subsection{Rate of convergence of DFMR}
\label{app:DFMR_rate}

\begin{proof}[\textbf{Proof of Proposition~\ref{prop:COAT_bound}}]
By definition, at most $m/2$ failure-free machines are not contained in the set $B(\widehat G^{\coat}; r^{\coat})$.
As such, there exists $j_{i} \leq m/2+1$ such that $\widehat{G}_{j_{i}} \in B(\widehat G^{\coat}; r^{\coat})$.  
This implies that
\[
D(\widehat{G}^{\coat},G^*)  \leq D(\widehat{G}^{\coat},\widehat{G}_{j_{i}}) + D(\widehat{G}_{j_{i}},G^*)  \leq r^{\coat} + D(\widehat{G}_{j_{m/2+1}},G^*).
\]
We complete the proof by applying Lemma~\ref{lem:coat_radius_conc} and \eqref{eq:distance_bulk_concentration} to bound these two terms respectively.
\end{proof}

\begin{lemma}
\label{lem:upper_lower_selected_set_distance}
Under assumptions~\ref{assumption:parameter-space}--\ref{assump:cost_bregman_divergence} and assumption~\ref{assumption:parametric_family}, for any sequence $\rho_{n} \to \infty$, as $n,m\to\infty$, we have
\[
\sP\braces*{B(G^*, 2^{-1}\rho_{n} n^{-1/2}) \subset B(\widehat G^{\coat},\rho_{n} r^{\coat}) \subset B(G^*,2\rho_{n} n^{-1/2})} \to 1
\]        
\end{lemma}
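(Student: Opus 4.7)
The plan is to prove both set inclusions by direct application of the triangle inequality for $D(\cdot,\cdot)$, invoking the rates already established in Proposition~\ref{prop:COAT_bound} (giving $D(\widehat G^{\coat},G^*) = O_P(n^{-1/2})$) and Lemma~\ref{lem:coat_radius_conc} (giving $r^{\coat} = \Theta_P(n^{-1/2})$). The divergence $\rho_n \to \infty$ is what allows the $O_P(n^{-1/2})$ error on the COAT to be absorbed as a lower-order term relative to the target scale $\rho_n n^{-1/2}$.

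For the outer inclusion $B(\widehat G^{\coat}, \rho_n r^{\coat}) \subset B(G^*, 2\rho_n n^{-1/2})$, I would fix any $G$ with $D(G, \widehat G^{\coat}) < \rho_n r^{\coat}$ and write
\[
D(G, G^*) \leq D(G, \widehat G^{\coat}) + D(\widehat G^{\coat}, G^*) < \rho_n r^{\coat} + D(\widehat G^{\coat}, G^*).
\]
By Proposition~\ref{prop:COAT_bound}, the second term is $O_P(n^{-1/2}) = o_P(\rho_n n^{-1/2})$. Combined with the upper tail $r^{\coat} = O_P(n^{-1/2})$ from Lemma~\ref{lem:coat_radius_conc}, the right-hand side is bounded by $2\rho_n n^{-1/2}$ on an event of probability tending to one. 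The inner inclusion $B(G^*, 2^{-1}\rho_n n^{-1/2}) \subset B(\widehat G^{\coat}, \rho_n r^{\coat})$ is handled symmetrically: for $G$ with $D(G, G^*) < 2^{-1}\rho_n n^{-1/2}$, the triangle inequality gives
\[
D(G, \widehat G^{\coat}) \leq D(G, G^*) + D(\widehat G^{\coat}, G^*) < 2^{-1}\rho_n n^{-1/2} + o_P(\rho_n n^{-1/2}),
\]
and the $\Omega_P(n^{-1/2})$ lower bound on $r^{\coat}$ from Lemma~\ref{lem:coat_radius_conc} ensures that $\rho_n r^{\coat}$ dominates the right-hand side with probability tending to one.

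The main obstacle is that the precise numerical factors $2$ and $2^{-1}$ in the statement require tracking the implicit constants in the $\Theta_P(n^{-1/2})$ characterisation of $r^{\coat}$, which are model-dependent and need not match these specific numbers on the nose. To reconcile this, I would refine the analysis of $r^{\coat}$ using Lemma~\ref{lemma:coat_radius_conc_part2}, whose Berry–Esseen quantile bounds pin $r^{\coat}$ asymptotically to a suitable quantile of the generalized chi-squared distribution with matrix $\mI^{-1/2}(\mG^*)\mH^*\mI^{-1/2}(\mG^*)$, and then argue that because $\rho_n \to \infty$ the $o_P(\rho_n n^{-1/2})$ slack from $D(\widehat G^{\coat},G^*)$ can absorb the discrepancy between the numerical constants in the statement and the actual asymptotic constants produced by the quantile computation.
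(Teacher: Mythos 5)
Your core argument---sandwiching $B(\widehat G^{\coat}, \rho_n r^{\coat})$ between balls around $G^*$ via the triangle inequality, using $D(\widehat G^{\coat},G^*) = O_P(n^{-1/2})$ from Proposition~\ref{prop:COAT_bound} and $r^{\coat} = \Theta_P(n^{-1/2})$ from Lemma~\ref{lem:coat_radius_conc}---is exactly the paper's route; the paper's own proof consists of the single sentence that the lemma ``follows immediately'' from those two results. You also deserve credit for flagging the one point the paper glosses over: the inclusions with the specific constants $2^{-1}$ and $2$ require the implicit constants in the $\Theta_P(n^{-1/2})$ characterisation of $r^{\coat}$ to land in $[1/2,2]$ (up to the $o_P(\rho_n n^{-1/2})$ slack), and nothing in Lemma~\ref{lem:coat_radius_conc} guarantees this.

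However, your proposed repair of that issue does not work. If $n^{1/2} r^{\coat}$ concentrates near some model-dependent constant $c^*$ (which is what the quantile analysis of Lemma~\ref{lemma:coat_radius_conc_part2} delivers, with $c^*$ determined by quantiles of the generalized chi-squared law with matrix $\mI^{-1/2}(\mG^*)\mH^*\mI^{-1/2}(\mG^*)$), then the discrepancy between $\rho_n r^{\coat} \approx c^*\rho_n n^{-1/2}$ and the target radius $2\rho_n n^{-1/2}$ is $(2-c^*)\rho_n n^{-1/2}$, i.e.\ of the \emph{same} order $\rho_n n^{-1/2}$ as the radii themselves. It therefore cannot be absorbed by the $o_P(\rho_n n^{-1/2})$ contribution of $D(\widehat G^{\coat},G^*)$, no matter how slowly $\rho_n$ diverges: if $c^* > 2$ the outer inclusion fails and if $c^* < 1/2$ the inner one fails, and $c^*$ is not scale-invariant so neither case can be excluded in general. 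The correct resolution is to state the lemma with generic model-dependent constants $C^{-1}$ and $C$ in place of $2^{-1}$ and $2$; that weaker statement is exactly what your triangle-inequality argument proves, and it is all that the downstream arguments actually use (the proof of Lemma~\ref{lem:upper_bound_failure_free_omitted}, for instance, already works with a generic threshold $C^{-1}\rho n^{-1/2}$).
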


\begin{proof}
    This follows immediately from Lemma \ref{lem:coat_radius_conc} and Proposition \ref{prop:COAT_bound}.
\end{proof}

\begin{lemma}
\label{lem:upper_bound_failure_free_omitted}
Under assumptions~\ref{assumption:parameter-space}--\ref{assump:cost_bregman_divergence} and assumption~\ref{assumption:parametric_family}, there is a universal constant $C > 0$ such that
\[
\sP\braces*{\abs*{\sB^c \cap \selected_{\rho}^c} \leq Cm^{1/2}} \to 1
\]
as $n, m \to \infty$.
In particular, we have
\[
\abs{\selected_\rho} \geq (1 - o_P(1))\abs{\sB^c}.
\]
\end{lemma}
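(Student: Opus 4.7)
The strategy is to localize the COAT ball around $G^*$ and then count how many failure-free local estimates can possibly fall outside it, using the tail bound from Lemma~\ref{lemma:distance_concentration}(b). Throughout, we work under the assumption $\rho = \Omega(m^{1/14+\delta})$ inherited from Theorem~\ref{thm:main_theorem}, so in particular $\rho\to\infty$ as $m\to\infty$.

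\textbf{Step 1 (Localisation).} Apply Lemma~\ref{lem:upper_lower_selected_set_distance} with the sequence $\rho_n = \rho$. This yields a high-probability event $\mathcal{E}_1$ on which
\[
    B\paren*{G^*,\, \tfrac{1}{2}\rho n^{-1/2}} \subset B(\widehat G^{\coat},\, \rho\, r^{\coat}).
\]
Consequently, on $\mathcal{E}_1$, every $i \in \sB^c \cap \selected_\rho^c$ must satisfy $D(\widehat G_i, G^*) > \tfrac{1}{2}\rho n^{-1/2}$, so that
\[
    |\sB^c \cap \selected_\rho^c| \leq N_\rho := \sum_{i \in \sB^c}\mathbbm{1}\braces*{D(\widehat G_i, G^*) > \tfrac{1}{2}\rho n^{-1/2}}.
\]

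\textbf{Step 2 (Counting bound via the 8th-moment tail).} Apply Lemma~\ref{lemma:distance_concentration}(b) to each $i\in\sB^c$ to obtain $\sP(D(\widehat G_i,G^*) > \tfrac{1}{2}\rho n^{-1/2}) = O(\rho^{-8})$ for $n$ sufficiently large. Summing over the at most $m$ failure-free machines gives
\[
    \sE[N_\rho] = O(m\rho^{-8}) = O\paren*{m^{1 - 8/14 - 8\delta}} = O\paren*{m^{3/7 - 8\delta}} = o(m^{1/2}),
\]
where the last equality uses $3/7 < 1/2$.

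\textbf{Step 3 (Markov and the ``in particular'' statement).} Fix any $C>0$. By Markov's inequality,
\[
    \sP(N_\rho > Cm^{1/2}) \leq \frac{\sE[N_\rho]}{Cm^{1/2}} = O\paren*{m^{-1/14 - 8\delta}} \to 0.
\]
Combining this with $\mathcal{E}_1$ from Step 1 yields $\sP(|\sB^c\cap\selected_\rho^c| \leq Cm^{1/2}) \to 1$, proving the main claim. For the ``in particular'' part, note that
\[
    |\selected_\rho| \geq |\selected_\rho \cap \sB^c| = |\sB^c| - |\sB^c \cap \selected_\rho^c| \geq |\sB^c|\paren*{1 - \frac{Cm^{1/2}}{|\sB^c|}},
\]
and by the majority rule $|\sB^c| \geq m/2$, so $Cm^{1/2}/|\sB^c| \leq 2Cm^{-1/2} \to 0$, giving $|\selected_\rho| \geq (1-o_P(1))|\sB^c|$.

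\textbf{Expected main obstacle.} There is little deep difficulty here since the two key ingredients---the localisation of the COAT ball (Lemma~\ref{lem:upper_lower_selected_set_distance}) and the 8th-moment tail decay (Lemma~\ref{lemma:distance_concentration}(b))---are already established. The only quantitative delicacy is verifying that the specific choice $\rho = \Omega(m^{1/14+\delta})$ is precisely calibrated so that $m\rho^{-8} = o(m^{1/2})$, i.e.\ solving $1 - 8/14 = 3/7 < 1/2$. This is what forces the exponent $1/14$ in the theorem: any smaller inflation rate would leave too many failure-free estimates outside the selected set to close the analysis, whereas $\rho = m^{1/14+\delta}$ just suffices.
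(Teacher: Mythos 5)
Your proof is correct and follows essentially the same route as the paper's: localize via Lemma~\ref{lem:upper_lower_selected_set_distance}, bound each indicator's probability by $O(\rho^{-8}) = o(m^{-1/2})$ using Lemma~\ref{lemma:distance_concentration}(b), and then control the count $N_\rho$. The only difference is that you finish with Markov's inequality where the paper invokes Chernoff's inequality; since only convergence in probability is needed, Markov suffices and is arguably the more economical choice.
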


\begin{proof}
Using Lemma \ref{lem:upper_lower_selected_set_distance}, there is an event with probability tending 1 on which
\[
|\selected_\rho \cap\sB^c |  =  \sum_{i\in \sB^c}\mathbbm{1}(i \in \sS_\rho) \leq \sum_{i\in \sB^c}\mathbbm{1}(D(\widehat{G}_i, G^*) > C^{-1}\rho n^{-1/2}).
\]
Using Lemma~\ref{lemma:distance_concentration} and our definition of $\rho$, we get
\begin{equation}
    \sP\paren*{D(\widehat{G}_i, G^*) > C^{-1}\rho n^{-1/2}} = O(\rho^{-8}) = o(m^{-1/2}).
\end{equation}
Hence for $m$ large enough, we get
\[
\sP\paren*{|\selected_\rho \cap\sB^c | \geq m^{1/2}} \leq \sP\paren*{Z \geq 2\sE\braces{Z} },
\]
where
\[
Z = \sum_{i\in \sB^c}\mathbbm{1}(D(\widehat{G}_i, G^*) > C^{-1}\rho n^{-1/2}).
\]
Applying Chernoff's inequality completes the proof.
\end{proof}

\begin{proof}[\textbf{Proof of Theorem~\ref{thm:main_theorem}}]
Since the oracle mixture reduction $\widehat G^{\omr}$ is the solution to the optimisation problem~\eqref{eq:GMR} in which the aggregation is performed over the failure-free machines $\sB^c$.
Therefore, Lemma~\ref{lem:barycentre_equivalence} implies that
\[
w_k^{\omr} 
= 
|\sB^{c}|^{-1}\sum_{i\in \sB^{c}} \widetilde{w}_{ik},\quad\quad
\theta_k^{\omr} 
= 
\argmin_{\theta} \sum_{i\in \sB^{c}} \widetilde{w}_{ik} c(\widetilde{\theta}_{ik},\theta).
\]

Recall the definition of the DFMR estimator $\widehat G^{\dfmr}$ in~\eqref{eq:DFMR} as the mixture reduction over the selected set $\sS_{\rho}$ defined in~\eqref{eq:cred-set-estimate}, with $\rho = m^{\delta}$, $\delta > 1/14$.
First, condition on the high probability events guaranteed by Lemma~\ref{lem:upper_lower_selected_set_distance}.
This implies, together with Lemma~\ref{lem:barycentre_equivalence}, that that the mixing weights and subpopulation parameters of $\widehat G^{\dfmr}$ can be written respectively as
\begin{equation}
\label{eq:dfmr_true_align}
w_k^{\dfmr} 
= 
|\selected_\rho|^{-1}\sum_{i\in \selected_\rho} \widetilde{w}_{ik},\quad\quad
\theta_k^{\dfmr} 
= 
\argmin_{\theta} \sum_{i\in \selected_\rho} \widetilde{w}_{ik} c(\widetilde{\theta}_{ik},\theta),
\end{equation}
for $k=1,2,\ldots,K$.
In other words, minimizing the composite transportation distance defined in~\eqref{eq:CTD_def} fully aligns all mixture components (c.f.~\eqref{eq:aligned_component_def}).

We now use the formulas in~\eqref{eq:dfmr_true_align} to express the error of the mixing weight and subpopulation parameter estimates.
The mixing weight errors can be decomposed as
\[
w_k^{\dfmr} - w_k^* 
=|\selected_\rho|^{-1}\sum_{i \in \sB^c}(\widetilde{w}_{ik} - w_k^*)  - |\selected_\rho|^{-1}\sum_{i \in \sB^c\cap\selected_\rho^c}(\widetilde{w}_{ik} - w_k^*)  + |\selected_\rho|^{-1}\sum_{i \in \sB\cap \selected_\rho}(\widetilde{w}_{ik} - w_k^*).
\]
The first term satisfies
\[
|\selected_\rho|^{-1}\sum_{i\in \sB^c}(\widetilde{w}_{ik} - w_k^*) = \frac{|\sB^c|}{|\selected_\rho|}\paren*{w^{\omr}_k - w_k^*},
\]
with $\frac{|\sB^c|}{|\selected_\rho|} = \frac{|\sB^c|}{|\sB^c| + |\selected_\rho\cap\sB|} + o_P(1)$.
To bound the second term, we first compute
\begin{equation}
\label{eq:main_theorem_helper_weight}
\begin{split}
\sE\sqbracket*{\left|\sum_{i \in \sB^c\cap\selected_\rho^c}(\widetilde{w}_{ik} - w_k^*)\right|} & \leq \sum_{i \in \sB^c} \sE\sqbracket*{\left|\widetilde{w}_{ik} - w_k^*\right|\mathbbm{1}\paren*{i \notin \selected_\rho}}\\
& \leq m \sE\sqbracket*{\left|\widetilde{w}_{ik} - w_k^*\right|\mathbbm{1}\paren*{D(\widehat G_i, G^*) > 2^{-1}\rho n^{-1/2}}} \\
& \leq m\sE\sqbracket*{\left|\widetilde{w}_{ik} - w_k^*\right|^8}^{1/8} \left\{\sP\paren*{D(\widehat G_i, G^*) > 2^{-1}\rho n^{-1/2}}\right\}^{7/8},
\end{split}
\end{equation}
where the second inequality follows from Lemma~\ref{lem:upper_lower_selected_set_distance} and the last inequality follows from H\"older's inequality.
Because $\left|\widetilde{w}_{ik} - w_k^*\right| \leq \norm{\widehat\mG - \mG^*}$, Lemma~\ref{lemma:local_mle_property}~(b) implies $\sE\sqbracket*{\left|\widetilde{w}_{ik} - w_k^*\right|^8}^{1/8} = O(n^{-1/2})$.
Meanwhile, using Lemma~\ref{lemma:distance_concentration}~(b), we get
\[
\left\{\sP\paren*{D(\widehat{G}_i, G^*) > C^{-1}\rho n^{-1/2}}\right\}^{7/8} = O(\rho^{-7}) = o(m^{-1/2})
\]
where the last equality holds since $\rho = \Omega(m^{\delta})$ with $\delta>1/14$.
Plugging these into the right hand side~\eqref{eq:main_theorem_helper_weight} and using Markov's inequality gives
\begin{equation}
\label{eq:main_thm_weight_decomposition}
    \sum_{i \in \sB^c\cap\selected_\rho^c}(\widetilde{w}_{ik} - w_k^*) = o_P(mN^{-1/2}),
\end{equation}
which we combine with Lemma~\ref{lem:upper_bound_failure_free_omitted} to get
\[
|\selected_\rho|^{-1}\sum_{i \in \sB^c\cap\selected_\rho^c}(\widetilde{w}_{ik} - w_k^*) = o_P(N^{-1/2}).
\]
To bound the third term, we write
\[
\begin{split}
    \abs*{|\selected_\rho|^{-1}\sum_{i \in \sB\cap \selected_\rho}(\widetilde{w}_{ik} - w_k^*)} & \leq |\selected_\rho|^{-1}\sum_{i \in \sB\cap \selected_\rho}\abs*{\widetilde{w}_{ik} - w_k^*} \\
    & = |\selected_\rho|^{-1}\sum_{i \in \sB, D(\widehat G_i,G^*) \leq 2\rho n^{-1/2}}\abs*{\widetilde{w}_{ik} - w_k^*} \\
    & \leq |\selected_\rho|^{-1}\sum_{i \in \sB, D(\widehat G_i,G^*) \leq 2\rho n^{-1/2}}\norm*{\widetilde{\mG}_i - \mG^*} \\
    & = O_P(\widetilde\alpha_m(2\rho n^{-1/2})\cdot \phi(2\rho n^{-1/2})),
\end{split}
\]
where 
\begin{equation}
\label{eq:phi_definition}
    \phi(r) = \min\braces*{\norm{\mG - \mG^*}/r \colon G \in \sG_K, D(G,G^*) \leq r}.
\end{equation}
Together with Lemma \ref{lem:strong_identifiability}, this proves~\eqref{eq:main_theorem} for the coordinates corresponding to the mixing weights.

To do the same for the subpopulation parameter estimates, we first make use of~\eqref{eq:omr_helper5} to get
\[
\theta_{k}^{\dfmr} -\theta_k^* = \paren*{\abs{\selected_\rho}w_k^{\dfmr}}^{-1} \nabla^2 A(\theta^*_k)^{-1}\sum_{i \in \selected_\rho} \widetilde{w}_{ik} \paren*{\nabla A(\widetilde\theta_{ik}) - \nabla A(\theta_k^*)} + o(\norm{\theta_k^{\dfmr} - \theta_k^*}).
\]
We similarly decompose the summation above as 
\[
\begin{split}
\sum_{i \in \selected_\rho} \widetilde{w}_{ik} \paren*{\nabla A(\widetilde\theta_{ik}) - \nabla A(\theta_k^*)} 
& = \sum_{i \in \in \sB^c} \widetilde{w}_{ik} \paren*{\nabla A(\widetilde\theta_{ik}) - \nabla A(\theta_k^*)} - \sum_{i \in \sB^c\cap\selected_\rho^c} \widetilde{w}_{ik} \paren*{\nabla A(\widetilde\theta_{ik}) - \nabla A(\theta_k^*)} \\
& \quad\quad + \sum_{i \in \sB\cap \selected_\rho} \widetilde{w}_{ik} \paren*{\nabla A(\widetilde\theta_{ik}) - \nabla A(\theta_k^*)}
\end{split}
\]
and consider each term in this decomposition separately.
The second term can be bounded as
\[
\begin{split}
&~\sE\sqbracket*{\left\|\sum_{i \in \sB^c\cap\selected_\rho^c} \widetilde{w}_{ik} \paren*{\nabla A(\widetilde\theta_{ik}) - \nabla A(\theta_k^*)}\right\|} \\ 
\leq &~\sum_{i \in \sB^c} \sE\sqbracket*{\left\|\widetilde{w}_{ik} \paren*{\nabla A(\widetilde\theta_{ik}) - \nabla A(\theta_k^*)}\right\|\mathbbm{1}\paren*{i \notin \selected_\rho}}\\
\leq &~m \sE\sqbracket*{\left\|\nabla A(\widetilde\theta_{ik}) - \nabla A(\theta_k^*)\right\|\mathbbm{1}\paren*{D(\widetilde G_i, G^*) > 2^{-1}\rho n^{-1/2}}} \\
\leq &~m\sE\sqbracket*{\left\|\nabla A(\widetilde\theta_{ik}) - \nabla A(\theta_k^*)\right\|^8}^{1/8} \sP\paren*{D(\widetilde G_i, G^*) > 2^{-1}\rho n^{-1/2}}^{7/8}.
\end{split}
\]
Reasoning as in the calculations for mixing weights gives
\[
\paren*{\abs{\selected_\rho}w_k^{\dfmr}}^{-1} \nabla^2 A(\theta^*_k)^{-1}\sum_{i \in \sB^c\cap\selected_\rho^c} \widetilde{w}_{ik} \paren*{\nabla A(\widetilde\theta_{ik}) - \nabla A(\theta_k^*)} = o_P(N^{-1/2}).
\]
The third term can be bounded similarly as before to get
\begin{equation} \nonumber
    \begin{split}
    \paren*{\abs{\selected_\rho}w_k^{\dfmr}}^{-1} \nabla^2 A(\theta^*_k)^{-1}\sum_{i \in \sB\cap \selected_\rho} \widetilde{w}_{ik} \paren*{\nabla A(\widetilde\theta_{ik}) - \nabla A(\theta_k^*)} & = O_P(\widetilde\alpha_m(2\rho)\cdot \phi(2\rho n^{-1/2})) \\
    & = O_P(\widetilde\alpha_m(2\rho n^{-1/2})\cdot \phi(2\rho n^{-1/2})),
    \end{split}
\end{equation}
where we again used Lemma \ref{lem:strong_identifiability}.
Finally, using the expansion~\eqref{eq:omr_helper5}, the first term can be written as
\[
\paren*{\abs{\selected_\rho}w_k^{\dfmr}}^{-1}\sum_{i \in \selected_\rho} \widetilde{w}_{ik} \paren*{\nabla A(\widetilde\theta_{ik}) - \nabla A(\theta_k^*)} = \frac{\abs{\sB^c}w_k^{\omr}}{\abs{\selected_\rho}w_k^{\dfmr}}\paren{\theta_k^{\omr} - \theta_k^* + o_P\paren{\norm{\theta_k^{\omr} - \theta_k^*}}}.
\]
We use~\eqref{eq:main_thm_weight_decomposition} to get
\[
\frac{\abs{\sB^c}w_k^{\omr}}{\abs{\selected_\rho}w_k^{\dfmr}} = \frac{\abs{\sB^c}}{\abs{\sB^c} + \abs{\selected_\rho\cap\sB}} + o_P(1),
\]
which completes the proof of~\eqref{eq:main_theorem}.
\end{proof}

\begin{proof}[\textbf{Proof of Corollary~\ref{cor:failure_density}}]
Based on the assumption that $\sP(D(\xi_i,G^*)\leq r) = O(r^{3})$ as $r\to 0$, for each failure machine, we have
\begin{equation} \nonumber
    \sP\paren{D(\widetilde{G}_i,G^*) \leq 2\rho n^{-1/2}} = O((\rho n^{-1/2})^3).
\end{equation}
Using the union bound, we have
\[
\begin{split}
\sP\paren{\min_{i \in \sB}D(\widetilde{G}_i,G^*) \leq 2\rho n^{-1/2} } \leq&~m \sP\paren{D(\widetilde{G}_i,G^*) \leq 2\rho n^{-1/2}}\\
=&~O(m\rho^{3} n^{-3/2}) = O(\rho^{3}n^{-1/2})
\end{split}
\]
since $m\leq n$.
Under our assumption that $\rho = o(n^{1/6})$, the quantity on the right hand side tends to zero.
This implies that
\[
\sP\left(\{ D(\xi_i,G^*) > 2\rho n^{-1/2},~\forall i \in \sB  \}\right)\to 1,
\]
which completes the proof that
\be
\label{eq:helper_corollary}
\sP(\widetilde\alpha_m(2\rho n^{-1/2})>0) = o(1).
\ee

For the last statement of the corollary, notice from the proof of Theorem~\ref{thm:main_theorem} that
\[
\widehat{\mG}^{\dfmr} - \mG^* = \gamma \paren*{\widehat{\mG}^{\omr} - \mG^*} + O_P(\widetilde\alpha_m(2\rho n^{-1/2})) + o_P(N^{-1/2})
\]
where
\[
\gamma = \frac{\abs{\sB^c}}{\abs{\sB^c} + \abs{\selected_\rho\cap\sB}} + o_P(1).
\]
Under the complement of the event considered above, we have $\abs{\selected_\rho \cap \sB} = 0$, so that $\gamma = 1 + o_P(1)$.
This combines with~\eqref{eq:helper_corollary} implies that 
\[
\widehat{\mG}^{\dfmr} = \widehat{\mG}^{\omr}  + o_P(N^{-1/2}),
\]
which completes the proof of the corollary.
\end{proof}

\begin{lemma}
\label{lem:strong_identifiability}
    Let $\phi$ be defined as in \eqref{eq:phi_definition}.
    If $\sG_K$ is strongly identifiable, then $\limsup_{r \to 0} \phi(r) < \infty$.
\end{lemma}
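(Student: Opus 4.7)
The plan is to show that first-order strong identifiability forces the matrix $\mH^*$ from Lemma~\ref{lemma:L2_distance_densities} to be strictly positive definite, after which a local quadratic lower bound on $D^2(G,G^*)$ gives the desired linear control of $\|\mG-\mG^*\|$ by $D(G,G^*)$.

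First, I would argue that $\mH^* \succ 0$ under first-order strong identifiability. Suppose toward contradiction that there exists a nonzero vector
\[
\vv = (a_1,\ldots,a_{K-1},\beta_1^\top,\ldots,\beta_K^\top)^\top
\]
with $\vv^\top \mH^* \vv = 0$. By the definition of $\mH^*$ in~\eqref{eq:l2_h_matrix}, this quadratic form equals
\[
\int \Bigl\{\sum_{k=1}^{K-1} a_k\bigl(f(x;\theta_k^*)-f(x;\theta_K^*)\bigr) + \sum_{k=1}^{K} w_k^* \beta_k^\top \nabla f(x;\theta_k^*)\Bigr\}^2 dx,
\]
so the integrand vanishes almost everywhere. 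Regrouping the coefficient of $f(x;\theta_K^*)$ as $-\sum_{k=1}^{K-1}a_k$, first-order strong identifiability at the $K$ distinct support points $\theta_1^*,\ldots,\theta_K^*$ then forces every $a_k=0$ and every $w_k^*\beta_k = 0$. Since $w_k^* > 0$ by Assumption~\ref{assumption:parameter-space}, this yields $\vv=0$, a contradiction.

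Next, set $\lambda_* = \lambda_{\min}(\mH^*) > 0$. By Lemma~\ref{lemma:L2_distance_densities}, there exist constants $\eta>0$ and $C>0$ such that for all $G$ with $\|\mG-\mG^*\|\le \eta$,
\[
D^2(G,G^*) \ge \lambda_*\|\mG-\mG^*\|^2 - C\|\mG-\mG^*\|^3.
\]
Shrinking $\eta$ if necessary so that $C\eta \le \lambda_*/2$, we obtain $D(G,G^*) \ge \sqrt{\lambda_*/2}\,\|\mG-\mG^*\|$ whenever $\|\mG-\mG^*\|\le \eta$; equivalently,
\[
\|\mG-\mG^*\| \le \sqrt{2/\lambda_*}\,D(G,G^*).
\]
Finally, to convert this into a bound on $\phi(r)=\sup\{\|\mG-\mG^*\|:G\in\sG_K,\ D(G,G^*)\le r\}$ for small $r$, I would invoke Lemma~\ref{lem:continuity_of_parameter_distance}: choose $r_0>0$ small enough that $D(G,G^*)\le r_0$ forces $\|\mG-\mG^*\|\le \eta$. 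Then for all $r\le r_0$, the displayed inequality applies uniformly to every $G$ in the defining supremum, giving $\phi(r) \le \sqrt{2/\lambda_*}\,r$, so $\limsup_{r\to 0}\phi(r)/r \le \sqrt{2/\lambda_*} < \infty$.

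The main obstacle is the first step: correctly extracting from $\vv^\top \mH^* \vv = 0$ the linear combination of densities and gradients that triggers first-order strong identifiability, particularly handling the reparameterization that eliminates the $K$-th weight (the $a_k$ coefficients multiply differences $f(x;\theta_k^*)-f(x;\theta_K^*)$, not $f(x;\theta_k^*)$ directly). The rest is a standard Taylor argument combined with the qualitative continuity provided by Lemma~\ref{lem:continuity_of_parameter_distance}.
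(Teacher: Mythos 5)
Your proposal is correct and follows essentially the same route as the paper: establishing $\lambda_{\min}(\mH^*)>0$ from first-order strong identifiability (you via vanishing of the quadratic form, the paper via the equivalent observation that $\mH^*$ is the Gram matrix of a linearly independent family in $L^2(\R^d)$), then inverting the local quadratic expansion of $D^2(G,G^*)$ from Lemma~\ref{lemma:L2_distance_densities} and using the continuity of $\mG \mapsto D(G,G^*)$ (Lemma~\ref{lem:continuity_of_parameter_distance}) to restrict to a small neighbourhood. Your handling of the reparameterised weight coordinate and of the factors $w_k^*>0$ is if anything slightly more explicit than the paper's.
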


\begin{proof}
    Under Assumption \ref{assumption:parameter-space}, the true subpopulation parameters $\theta_1^*,\theta_2^*,\ldots,\theta_K^*$ are distinct. 
    Under strong identifiability, we observe that the collection
    \[
    \braces{f(x,\theta_k^*), \partial_1f(x,\theta_k^*),\ldots,\partial_pf(x,\theta_k^*) \colon k=1,2,\ldots, K}
    \]
    are linearly independent as functions on $\R^d$ where $\partial_j f(x, \theta_k^*)$ is the partial derivative of the density function with respect to the $j$th component of parameter $\theta$.
    Since this collection is also contained in $L^2(\R^d)$, they must also be linearly independent within this subspace.
    This statement extends also to the collection
    \[
    \braces{f(x,\theta_k^*) - f(x,\theta_K^*), k=1,2,\ldots,K-1}\cup\braces{\partial_1f(x,\theta_k^*),\ldots,\partial_pf(x,\theta_k^*) \colon k = 1,2,\ldots, K}
    \]
    Now, notice that the matrix $\mH^*$ in Lemma \ref{lemma:L2_distance_densities} is the Gram matrix of this collection of functions.
    By linear independence, $\mH^*$ must therefore be invertible, implying that $\lambda_{\min}(\mH^*) > 0$.
    We may then write
    \[
        \begin{split}
            \norm{\mG-\mG^*}^2 & \leq \lambda_{\min}(\mH^*)^{-1}(\mG-\mG^*)^T\mH^*(\mG-\mG^*) \\
            & = D(G,G^*) + O(\norm{\mG-\mG^*}^3).
        \end{split}
    \]    
    Since $\norm{\mG-\mG^*} \to 0$ as $D(G,G^*) \to 0$, there exists $r$ small enough so that the $O(\norm{\mG-\mG^*})$ term is much smaller than $\norm{\mG-\mG^*}^2$ for $D(G,G^*) \leq r$, in which case we have
    \[
        \phi(r') \leq 2\lambda_{\min}(\mH^*)^{-1}
    \]
    for any $r' \leq r$.
\end{proof}
We have established the rate of convergence of the DFMR estimator for general $\gF$ that is not necessarily first-order strongly identifiable.
This Lemma implies that for first-order strongly identifiable $\gF$, $\phi(r) = O(1)$.
Hence, we can remove $\phi(r)$ in the rate results above for first-order strongly identifiable mixtures.

\subsection{Technical lemmas}
\label{app:technical_lemmas}
\begin{lemma}[Concentration of order statistics]
\label{lem:quantile_conc}
Let $q$ be the quantile function of a random variable $X$.
Given $\lbrace X_1, \ldots,X_m\rbrace$ be $m$ IID observations of $X$.
Let $j_1,\ldots, j_m$ be a permutation of $1,\ldots, m$ such that 
\[X_{j_1}\leq X_{j_2}\leq \cdots \leq X_{j_m}\]
are order statistics.
Then for any $\epsilon > 0$, with probability at least $1-2\exp(-2m\epsilon^2)$, we have
\begin{equation}
    \label{eq:quantile_conc}
    q\left(\alpha - \epsilon\right) \leq X_{j_{\alpha m}} \leq q\left(\alpha + \epsilon\right),
\end{equation}
uniformly for all $\alpha$ satisfying $\epsilon \leq \alpha \leq 1 - \epsilon$.
\end{lemma}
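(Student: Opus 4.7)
The plan is to reduce the claim to the Dvoretzky--Kiefer--Wolfowitz (DKW) inequality applied to the empirical CDF of $X_1,\ldots,X_m$. Let $F$ denote the CDF of $X$, so that $q(u) = \inf\{x \colon F(x) \geq u\}$, and let $F_m(x) = m^{-1}\sum_{i=1}^m \mathbbm{1}(X_i \leq x)$ denote the empirical CDF. The DKW inequality states
\[
\mathbb{P}\!\left(\sup_{x \in \mathbb{R}} |F_m(x) - F(x)| > \epsilon\right) \leq 2\exp(-2m\epsilon^2),
\]
and I will show that on the complement of this event the two inequalities in~\eqref{eq:quantile_conc} hold simultaneously for every $\alpha \in [\epsilon, 1-\epsilon]$. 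Since the DKW bound already controls the deviation uniformly in $x$, the uniformity in $\alpha$ comes for free.

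For the upper bound, evaluate at $x = q(\alpha+\epsilon)$. By the definition of the quantile function, $F(q(\alpha+\epsilon)) \geq \alpha + \epsilon$, hence on the DKW event $F_m(q(\alpha+\epsilon)) \geq F(q(\alpha+\epsilon)) - \epsilon \geq \alpha$. This means at least $\lceil \alpha m\rceil$ of the observations lie in $(-\infty, q(\alpha+\epsilon)]$, and therefore the $\lceil \alpha m\rceil$-th order statistic satisfies $X_{j_{\alpha m}} \leq q(\alpha+\epsilon)$. For the lower bound, take any $x < q(\alpha-\epsilon)$: the definition of $q$ yields $F(x) < \alpha - \epsilon$, so on the DKW event $F_m(x) < \alpha$, i.e.\ strictly fewer than $\alpha m$ observations lie at or below $x$. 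Letting $x \uparrow q(\alpha-\epsilon)$ gives $X_{j_{\alpha m}} \geq q(\alpha-\epsilon)$, as required.

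Since every step on the DKW event is deterministic and holds for all $\alpha \in [\epsilon,1-\epsilon]$ simultaneously, the conclusion follows with the advertised probability. There is no genuine obstacle here; the only minor subtlety is the treatment of $\alpha m$ when it is not an integer and of jump discontinuities of $F$ at boundary values $\alpha \pm \epsilon$, both of which are handled by taking the appropriate one-sided limit in the definition of $q$ and by reading $\alpha m$ as $\lceil \alpha m\rceil$ (matching the indexing of order statistics used elsewhere in the paper, e.g.\ $X_{j_{\lfloor \epsilon m\rfloor}}$ in Lemma~\ref{lemma:distance_concentration}).
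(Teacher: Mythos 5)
Your proof is correct and follows essentially the same route as the paper's: both apply the DKW inequality to the empirical CDF and then invert the uniform CDF bound into a quantile bound. The only difference is that the paper compresses the inversion into the phrase ``reflect the inequality around the $y=x$ line,'' whereas you carry out that step explicitly (including the integer-rounding and one-sided-limit details), which is a faithful elaboration rather than a different argument.
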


\begin{proof}
Let $F$ denote the CDF of $Z$ and let $\widehat F_m$ denote the empirical CDF for the random sample.
The DKW inequality~\citep{massart1990tight} states that with probability at least $1-2\exp(-2m\epsilon^2)$, we have $\| F_m - F\|_\infty \leq \epsilon$.
Next note that $X_{j_{\alpha m}} = \widehat q_m(\alpha)$, where $\widehat q_m$ is the empirical quantile function.
Since $q$ and $\widehat q_m$ are the almost sure inverses of $F$ and $\widehat F_m$ respectively, we may reflect this inequality around the $y=x$ line to obtain \eqref{eq:quantile_conc}.
\end{proof}

\begin{lemma}[Maximum probability of convex sublevel sets]
\label{lem:convex_function_CDF}
Let $\psi\colon \R^d \to \R$ be any even convex function and let $t \geq 0$ be any threshold.
Let $Z$ be a standard Gaussian random vector in $\R^d$.
Then for any vector $\vv \in \R^d$, we have
\[
\sP\paren*{\psi(Z+\vv) \leq t} \leq \sP\paren*{\psi(Z) \leq t}.
\]
\end{lemma}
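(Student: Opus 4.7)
The plan is to recognize this as Anderson's inequality applied to the sublevel sets of $\psi$. First I would define
\[
A := \{x \in \R^d : \psi(x) \leq t\},
\]
and observe two structural facts: (i) $A$ is convex, because $\psi$ is convex and sublevel sets of convex functions are convex; (ii) $A$ is symmetric about the origin, \ie $A = -A$, because $\psi$ is even. Then I would rewrite both sides of the desired inequality in terms of Gaussian measure on $A$: since $\sP(\psi(Z+\vv) \leq t) = \sP(Z \in A - \vv)$ and $\sP(\psi(Z) \leq t) = \sP(Z \in A)$, the lemma reduces to showing
\[
\sP(Z \in A - \vv) \leq \sP(Z \in A)
\]
for every $\vv \in \R^d$ and every symmetric convex set $A$.

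This is exactly Anderson's inequality, and one can quote it directly. For completeness, I would also include the short Brunn--Minkowski argument. Let $\varphi$ denote the standard Gaussian density and, for $s > 0$, write $B_s := \{\varphi \geq s\}$, which is an origin-centred Euclidean ball and hence symmetric and convex. By the layer-cake formula,
\[
\sP(Z \in A - \vv) \;=\; \int \mathbbm{1}_A(x)\,\varphi(x-\vv)\,dx \;=\; \int_0^\infty \abs{A \cap (B_s + \vv)}\,ds,
\]
and the analogous expression without $\vv$ replaces $B_s + \vv$ by $B_s$. So it suffices to prove $\abs{A \cap (B_s + \vv)} \leq \abs{A \cap B_s}$ for each $s$, and this follows from the following simple inclusion: since $A$ and $B_s$ are both symmetric and convex, for $X := A \cap (B_s - \vv)$ and $Y := A \cap (B_s + \vv)$, any $(x+y)/2$ with $x \in X$, $y \in Y$ lies in $A$ by convexity of $A$ and in $B_s$ by convexity and symmetry of $B_s$. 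Hence $\tfrac{1}{2}X + \tfrac{1}{2}Y \subseteq A \cap B_s$, and symmetry gives $\abs{X} = \abs{Y}$. The Brunn--Minkowski inequality then yields
\[
\abs{A \cap B_s}^{1/d} \;\geq\; \tfrac{1}{2}\abs{X}^{1/d} + \tfrac{1}{2}\abs{Y}^{1/d} \;=\; \abs{A \cap (B_s + \vv)}^{1/d},
\]
giving the pointwise bound, which I then integrate in $s$ to conclude.

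There is no real obstacle here: the whole argument is a textbook application of Anderson's inequality, and the only care needed is in verifying that both the sublevel set $A$ and the Gaussian density's superlevel sets $B_s$ are simultaneously symmetric and convex so that the Brunn--Minkowski step applies. If one is willing to cite Anderson's theorem as a black box, the proof collapses to the two-line reduction in the first paragraph.
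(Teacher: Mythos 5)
Your proof is correct, but it takes a different (though closely related) route from the paper's. You reduce the lemma to Anderson's inequality for symmetric convex sets and prove that via the layer-cake decomposition of the Gaussian density into its (symmetric, convex) superlevel balls $B_s$, a midpoint-inclusion argument $\tfrac12 X+\tfrac12 Y\subseteq A\cap B_s$, and the Brunn--Minkowski inequality applied set by set before integrating in $s$; the evenness of $\psi$ enters through the identity $\abs{X}=\abs{Y}$. The paper instead works at the level of functions: it defines $f,g,h$ as the Gaussian density multiplied by the indicators of $\braces{\psi(\cdot\mp\vv)\le t}$ and $\braces{\psi(\cdot)\le t}$, verifies the pointwise midpoint inequality $h((\vu+\vw)/2)\ge f^{1/2}(\vu)g^{1/2}(\vw)$ directly from convexity of $\psi$ and of $\norm{\cdot}^2$, and applies the Pr\'ekopa--Leindler inequality to obtain $\sP(\psi(Z)\le t)\ge \sP(\psi(Z+\vv)\le t)^{1/2}\sP(\psi(Z-\vv)\le t)^{1/2}$, after which evenness of $\psi$ and symmetry of $Z$ equate the two factors. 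Since Pr\'ekopa--Leindler is the functional form of Brunn--Minkowski, the two arguments are cousins, but the decompositions differ: yours buys the ability to quote Anderson's theorem as a black box and is the classical geometric proof, while the paper's avoids the Fubini/layer-cake step and handles the measurability of the sets implicitly by staying at the level of integrable functions. One small point worth making explicit in your version is that $X$ and $Y$ are convex (so their Minkowski average is measurable) and that the case $X=\emptyset$ is trivial, so Brunn--Minkowski applies without issue; with that noted, your argument is complete.
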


\begin{proof}
Let
\[
f(\vx)= \mathbbm{1}\paren*{\psi(\vx-\vv) \leq t}\exp\paren*{-\frac{\norm{\vx}^2}{2}},
\]
\[
g(\vx) =\mathbbm{1}\paren*{\psi(\vx+\vv) \leq t}\exp\paren*{-\frac{\norm{\vx}^2}{2}},
\]
\[
h(\vx)=\mathbbm{1}\paren*{\psi(\vx) \leq t}\exp\paren*{-\frac{\norm{\vx}^2}{2}}.
\]
Since $\psi$ is convex, for any $\vu,\vw \in \R^d$, we have
\[
\psi\paren*{\frac{\vu+\vw}{2}} \leq \max\braces*{\psi(\vu), \psi(\vw)},
\]
which implies
\[
\mathbbm{1}\paren*{\psi((\vu+\vw)/2)\leq t} \geq \mathbbm{1}\paren*{\psi(\vu) \leq t}\mathbbm{1}\paren*{\psi(\vw) \leq t}.
\]
Furthermore, the simple inequality $\norm{\vu + \vw}^2 \leq 2\vu^2 + 2 \vw^2$ implies
\[\exp\paren*{-\frac{\norm{(\vu+\vw)/2}^2}{2}} \geq \exp\paren*{-\frac{\norm{\vu}^2}{2}}^{1/2}\exp\paren*{-\frac{\norm{\vw}^2}{2}}^{1/2}.\]    
Combining the above two inequalities, we obtain
\[h((\vu + \vw)/2) \geq f^{1/2}(\vu)g^{1/2}(\vw).\]
We then apply the Pr{\'e}kopa-Leindler inequality~\citep[Theorem 3.15]{wainwright2019high} to get
\[
\begin{split}
\sP\paren*{\psi(Z) \leq t} & = \sE\{\mathbbm{1}(\psi(Z)\leq t)\} = \paren*{2\pi}^{-d/2}\int h(\vx)d\vx \\
& \geq \paren*{\paren*{2\pi}^{-d/2}\int f(\vx)d\vx}^{1/2}\paren*{\paren*{2\pi}^{-d/2}\int g(\vx)d\vx}^{1/2} \\
& = \sP\paren*{\psi(Z+\vv) \leq t}^{1/2} \sP\paren*{\psi(Z-\vv) \leq t}^{1/2}.
\end{split}
\]
Finally, using the fact that $\psi$ is even and the symmetry of the standard Gaussian distribution, we have
\[
\sP\paren*{\psi(Z-\vv) \leq t}  = \sP\paren*{-\psi(Z+\vv) \leq t}  = \sP\paren*{\psi(Z+\vv) \leq t},
\]
which completes the proof.
\end{proof}

\begin{lemma}[Rate of MSE simple average estimator]
\label{lemma:mse_rate_simple_average}
Let $\theta^*$ be some interior point of a convex and compact subset $\Theta \subset \sR^{d}$.
Let $\widehat\theta_1,\ldots, \widehat\theta_m$ be $m$ independent estimates for $\theta^*$, each based on $n$ samples.
Suppose that $\sE\|\hat\theta_i - \theta^*\|^{2} = O(n^{-1})$ and $\|\sE\{\hat\theta-\theta^*\}\|  = O(n^{-1})$ for $i=1,\ldots, m$.
Let 
\[\bar\theta = \frac{1}{m}\sum_{i=1}^{m} \hat\theta_i,\]
and $N= mn$, then when $m = O(n)$, we have 
\[\sE\|\bar\theta-\theta^*\|^2 = O(N^{-1}).\]
\end{lemma}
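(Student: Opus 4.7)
The plan is to use the standard bias–variance decomposition of the mean squared error and show that each piece is $O(N^{-1})$ under the assumption $m = O(n)$.

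First I would write
\[
\sE\|\bar\theta - \theta^*\|^2 = \|\sE\bar\theta - \theta^*\|^2 + \sE\|\bar\theta - \sE\bar\theta\|^2,
\]
which is the usual orthogonal decomposition in $\R^d$ (obtained by adding and subtracting $\sE\bar\theta$ inside the norm and expanding; the cross term vanishes). The first summand is the squared bias and the second is the total variance of $\bar\theta$.

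For the bias term, since the $\widehat\theta_i$ are identically distributed (in the sense that each has the same bias by assumption), $\sE\bar\theta - \theta^* = \sE\widehat\theta_1 - \theta^*$, so $\|\sE\bar\theta - \theta^*\| = O(n^{-1})$ and hence $\|\sE\bar\theta - \theta^*\|^2 = O(n^{-2})$. Using $m = O(n)$, we have $n^{-2} = (mn)^{-1} \cdot (m/n) \cdot n^{-1} \cdot n = O((mn)^{-1}) = O(N^{-1})$; more directly, $m \leq Cn$ gives $n^{-2} \leq Cm^{-1}n^{-1} = CN^{-1}$, so the bias contribution is $O(N^{-1})$.

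For the variance term, independence of $\widehat\theta_1,\ldots,\widehat\theta_m$ gives
\[
\sE\|\bar\theta - \sE\bar\theta\|^2 = \frac{1}{m^2}\sum_{i=1}^m \sE\|\widehat\theta_i - \sE\widehat\theta_i\|^2 = \frac{1}{m}\sE\|\widehat\theta_1 - \sE\widehat\theta_1\|^2,
\]
by expanding the norm squared and using that cross terms $\sE\langle \widehat\theta_i - \sE\widehat\theta_i, \widehat\theta_j - \sE\widehat\theta_j\rangle$ vanish for $i \neq j$. Since $\sE\widehat\theta_1$ minimizes $a \mapsto \sE\|\widehat\theta_1 - a\|^2$, we have $\sE\|\widehat\theta_1 - \sE\widehat\theta_1\|^2 \leq \sE\|\widehat\theta_1 - \theta^*\|^2 = O(n^{-1})$, so the variance term is $O((mn)^{-1}) = O(N^{-1})$. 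Combining the two bounds yields the conclusion.

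This lemma is essentially bookkeeping; I do not anticipate any genuine obstacle. The only subtle point is the need to justify that $m = O(n)$ is precisely what converts the $O(n^{-2})$ squared-bias into a $O(N^{-1})$ term — outside this regime the bias would dominate and the rate would degrade. All that is needed beyond the two given moment assumptions is the independence of the local estimates and the Pythagoras-style identity for $\sE\|\cdot\|^2$ in $\R^d$.
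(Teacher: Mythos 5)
Your proof is correct and is the standard bias--variance argument; the paper itself does not supply a proof but simply cites Lemma 24 of \citet{huang2019distributed}, which establishes the result by exactly this decomposition. One cosmetic point: the lemma does not assume the $\widehat\theta_i$ are identically distributed, so rather than writing $\sE\bar\theta-\theta^*=\sE\widehat\theta_1-\theta^*$ and $\frac{1}{m}\sE\|\widehat\theta_1-\sE\widehat\theta_1\|^2$, you should bound the bias by $\frac{1}{m}\sum_i\|\sE\widehat\theta_i-\theta^*\|=O(n^{-1})$ via the triangle inequality and the variance by $\frac{1}{m^2}\sum_i\sE\|\widehat\theta_i-\sE\widehat\theta_i\|^2=O((mn)^{-1})$, using that the stated moment bounds hold uniformly in $i$; the rest of your argument is unchanged.
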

The proof of the Lemma can be found in~\citet[Lemma 24]{huang2019distributed}.


\subsection{Recap on exponential family and Bregman divergences}
\label{app:recap-bregman}
In this section, we briefly review exponential family and Bregman divergence.
We recommend~\citet{banerjee2005clustering} and~\citet{kunstner2021homeomorphic} for a more detailed review of Bregman divergences in clustering algorithms and their relation with exponential families.

\subsubsection{Exponential family}
The density function of a distribution from exponential family can be written as
\[
f(x;\theta) = \exp(\theta^{\top}T(x) - A(\theta))
\]
with respect to some reference measure $\nu(\cdot)$.
We call $\theta = (\theta_1, \theta_2,\ldots,\theta_m)^{\top}$ the natural parameter and $T(x)=(T_1(x), T_2(x),\ldots, T_m(x))^{\top}$ the natural sufficient statistics. 
The natural parameters and natural sufficient statistics of some widely used distributions in the exponential family are given in Table~\ref{tab:exp_family}. 
An exponential family is \emph{regular} if the parameter space $\Theta$ is an open set.
The exponential family is \emph{minimal} if the functions in $T(x)$ are linearly independent.
Without loss of generality, we assume the exponential families in our discussion are minimal.
\begin{table}[!htpb]
\centering
\caption{Parameter specification and statistics of widely used density functions in full exponential family. In all cases, the base measure $\nu$ is Lebesgue measure that is modulated by a factor $h(\cdot)$. The $\psi(\cdot)$ is the digamma function.}
\label{tab:exp_family}
\resizebox{0.9\textwidth}{!}{
\begin{tabular}{lll}
\toprule
$\gF$ & $T(x)$ & $A(\theta)$\\
\midrule
\multicolumn{3}{c}{Univariate distribution}\\
Exponential &  $x$ & $-\log(-\theta)$\\
Weibull (known $k$)& $x^k$ & $-\log(-\theta)$ \\
Laplace (known $\mu$)& $|x-\mu|$ &  $\log(-2/\theta)$\\
Rayleigh& $x^2$ & $-\log(-2\theta)$\\
Log-normal& $(\log x, \log^2 x)^{\top}$ & $-\theta_1^2/\theta_2-1/\sqrt{2\theta_2}$ \\
Gamma& $(\log x,x)^{\top}$ & $\log\Gamma(\theta_1+1)-(\theta_1+1)\log(-\theta_2)$ \\
Inverse Gamma& $(\log x, 1/x)^{\top}$ & $\log\Gamma(-\theta_1-1))+(\theta_1+1)\log(-\theta_2)$\\
\midrule
\multicolumn{3}{c}{Multivariate distribution}\\
Gaussian Gamma& $(\log \tau, \tau, \tau x, \tau x^2)^{\top}$ & $\log\Gamma(\theta_1+\frac{1}{2}) - \frac{1}{2}\log(-2\theta_4) - (\theta_1+\frac{1}{2})\log(-\theta_2+\frac{\theta_3^2}{4\theta_4})$\\
Dirichlet& $\log \vx$ & $\{\mathbbm{1}_K^{\top}\log \Gamma(\boldsymbol\theta+1) - \log \{\mathbbm{1}_K^{\top}\Gamma(\boldsymbol\theta+1)\}$\\
\bottomrule
\end{tabular}}
\end{table}

Based on the parameterisation of the exponential family, it is easy to see that 
\begin{equation}
\label{eq:Atheta_exponential_family}
A(\theta) = \log \int \exp\{\theta^{\top}T(x)\}\,\nu(dx).
\end{equation}
This function is called the log partition function.
Moreover, $A(\cdot)$ is a convex function of $\theta$. 
When exponential family is minimal, then $A(\cdot)$ is strictly convex~\citet{wainwright2008graphical}.

\subsubsection{Bregman Divergence}
\label{sec:bregman_divergence_appendix}
Bregman divergence is a generalisation of squared Euclidean distance based on convex functions. 
For a strongly convex function $A$, the Bregman divergence induced by $A$ is denoted as $D_{A}(\theta, \theta')$. 
It represents the difference between the function value at $\theta$ and its linear approximation constructed at $\theta'$:
\[
D_{A}(\theta, \theta') = A(\theta) - A(\theta') - (\theta - \theta')^{\top}  \nabla A(\theta').
\]
Below are some examples of Bregman divergence:

\begin{itemize}
    \item \textbf{Euclidean distance}: For $x \in \mathbb{R}^d$ and $A(\theta) = \|\theta\|^2/2$, the Bregman divergence is:
    $D_{A}(\theta, \theta') = \|\theta - \theta'\|^2/2.$
    
    \item \textbf{Mahalanobis distance}: For $A(\theta) = \theta^{\top}A\theta/2$ with some positive definite matrix $A$, the Bregman divergence is:
    $D_A(\theta, \theta') = \|\theta - \theta'\|_A^2/2 = (\theta-\theta')^{\top}A(\theta-\theta')/2$.
    
    \item \textbf{KL divergence under exponential family}: For an exponential family $f(x;\theta)$, let the log-partition function be $A(\theta)$.
    Then the Bregman divergence induced by the log-partition function is the KL divergence between the densities:
    \[
    \begin{split}
    \KL(f(x;\theta')\|f(x;\theta)) &= \int \log \left\{\frac{f(x;\theta')}{f(x;\theta)}\right\}f(x;\theta')\,dx \\
    &= \int \{S^{\top}(x)\theta' - A(\theta') - S^{\top}(x)\theta + A(\theta)\}f(x;\theta')\,dx \\
    &= A(\theta) - A(\theta') + (\theta' - \theta)^{\top}\mathbb{E}_{f(x;\theta')}[S(X)]\\
    &= A(\theta) - A(\theta') + (\theta' - \theta)^{\top}\nabla A(\theta') = D_{A}(\theta, \theta').
    \end{split}
    \]
\end{itemize}
We now show that the function $A$ in the above examples satisfies Assumption~\ref{assump:cost_bregman_divergence}.
\begin{itemize}
    \item For Euclidean distance, we have $\nabla^2 A(\theta) = \mI$ which clearly satisfies both assumptions.
    \item For Mahalanobis distance, we have $\nabla^2 A(\theta) = A$.
    Since $A>0$, we have $\lambda_{\min}(A)\mI \leq \nabla^2 A(\theta) \leq \lambda_{\max}(A)\mI$ for all $\theta$, satisfying both assumptions.
    \item For KL divergence under exponential family, we have 
    \[\nabla^2 A(\theta) = \sE\{T(X)T^{\top}(X)\} - \sE\{T(X)\}\sE\{T^{\top}(X)\},\]
    which is the variance matrix of the sufficient statistics. 
    Thus, the Hessian also satisfies the assumption in a small neighbourhood of the true parameter value.

    Note also that for exponential family $A(\theta)$ is the cumulative generating function, it is well known that the function is infinitely differentiable.
    Hence, its third order derivative must be bounded in the compact parameter space $\Theta$ and the $\nabla^2 A(\theta)$ must be locally Lipschitz.
\end{itemize}
\section{Experiments}

\subsection{Closed-form of $L^2$ distance}
\label{app:L2_closed_form}
Under many families $\gF$, the $L^2$ distance we used in finding the COAT has a closed form.
\[
D^2(G',G)=\int \{f_{G'}(x) - f_{G}(x)\}^2 \,dx =w^{\top}S_{OO}w - 2w^{\top}S_{OR} w'+ (w')^{\top} S_{RR}(w')
\]
where $w = (w_i)$, $w' = (w'_j)$, and $S_{OO}$, $S_{OR}$, and $S_{RR}$ 
are matrices with their $(i, j)$-th elements respectively being the integral of the product of two densities: $\int f(x;\theta_i) f(x;\theta_j)\,dx$, 
$\int f(x;\theta_i) f(x;\theta_j')\,dx$, and $\int f(x;\theta_i') f(x;\theta_j')\,dx$.

For the Gaussian, Gamma, and Poisson distributions considered in our simulations, the integral admits the following analytical form.
\begin{itemize}
\item Under finite \emph{Gaussian} mixtures, we have
\[
\int \phi(x;\mu_1,\Sigma_1)\phi(x;\mu_2,\Sigma_2)\,dx 
= \phi(\mu_1;\mu_2,\Sigma_1 + \Sigma_2),
\]
for any two Gaussian densities.

\item Under finite \emph{Gamma} mixtures, we have
\[
\int f(x;r_1,\theta_1)f(x;r_2,\theta_2)\,dx 
= \frac{\theta_1^{r_2-1}\theta_2^{r_1-1}\Gamma(r_1+r_2-1)}{
(\theta_1+\theta_2)^{r_1+r_2-1}\Gamma(r_1)\Gamma(r_2)},
\]
provided that $r_1+r_2>1$.

\item Under finite \emph{Poisson} mixtures, we have
\[
\sum_{x=0}^{\infty} f(x;\theta_1)f(x;\theta_2) 
= e^{-(\theta_1+\theta_2)} I_0\!\left(2\sqrt{\theta_1\theta_2}\right),
\]
where $I_0$ is the modified Bessel function of order $0$.

\end{itemize}

\subsection{Details about the performance metrics}
\label{app:exp_performance_metric}
We give the  details of the performance metrics we used in the experiments.
Let $G^{(r)}$ be the true mixing distribution in $r$th repetition and 
$\widehat G^{(r)} = \sum_{k=1}^{K} \widehat{w}_k^{(r)} \delta_{\widehat\theta_k^{(r)}}$ be an estimate. 
We assess the performance using the following metrics:
\begin{enumerate}
\item \textbf{Adjusted Rand Index} (ARI). 
Mixture models are often used for model-based clustering. For a given $G$, we can assign a unit with observed value $x$ to a cluster via: 
\begin{equation*} 
k(x) = \argmax_{k} \{w_k \phi(x; \mu_k, \Sigma_k)\}. 
\end{equation*} 
We measure the performance of the clustering results of the full dataset based on $\widehat G^{(r)}$ and $G^{(r)}$ by calculating the adjusted Rand index (ARI).
The ARI is a well-known measure of the similarity between two clustering outcomes. 
Suppose that the observations in a dataset are divided into $K$ clusters $A_1, A_2, \ldots, A_K$ by one method and $K'$ clusters $B_1, B_2, \ldots, B_{K'}$ by another. 
Let $N_i = \text{Card} (A_i), ~ M_j = \text{Card}(B_j), ~~ N_{ij} = \text{Card}(A_i\cap B_j)$ for $i \in [K]$ and $j \in [K']$. 
The ARI of these two clustering results is calculated as follows: 
\[ \text{ARI} = \dfrac{ \sum_{i, j } \binom{N_{ij}}{2} - \binom{N}{2}^{-1} \sum_{i, j}\binom{N_{i}}{2}\binom{M_{j}}{2}} {\frac{1}{2} \sum_{i}\binom{N_{i}}{2} + \frac{1}{2} \sum_{j} \binom{M_{j}}{2} - \binom{N}{2}^{-1} \sum_{i, j}\binom{N_{i}}{2}\binom{M_{j}}{2}} \] 
where $\binom{n}{k}$ is the number of combinations of $k$ from a given set of $n$ elements. 
A value close to 1 for the ARI indicates a high degree of agreement between the two clustering methods, and it is equal to 1 when the two methods completely agree.

\item \textbf{Transportation distance} ($W_1$).
The transportation distance between two mixing distributions is defined as 
\begin{equation*}
W_1(\widehat{G}, G)=\argmin\left\{\sum_{ij}\pi_{ij} D_1(\widehat\theta_i,\theta_j): \sum_{j}\pi_{ij} = \widehat{w}_i^{(r)},\sum_{i}\pi_{ij} = w_j\right\}
\end{equation*}
The following forms of $D_1(\widehat\theta_i,\theta_j)$ are used under different models:
\begin{itemize}
\item Under Gaussian case, we let
$
D_1(\widehat\theta_{i}, \theta_{j}) 
=
  \|\widehat\mu_{i}-\mu_{j}\|_{2} + \|(\widehat\Sigma_i)^{1/2}-(\Sigma_{j})^{1/2}\|_{F}$
where $\Sigma^{1/2}\geq 0$ is the square root of $\Sigma\geq 0$ and $\|\Sigma\|_F = \sqrt{\text{tr}(\Sigma^{\top}\Sigma)}$ is the Frobenius norm of a matrix.

\item Under Gamma case, we let $
D_1(\widehat\theta_{i}, \theta_{j}) 
=|\widehat{r}_i - \widehat{r}_j| + |\widehat{\theta}_i - \widehat{\theta}_j|$.
\end{itemize}
\end{enumerate}

\subsection{More experiment results under Gaussian mixture}
\label{app:more_results}
In this section, we present the performance of various methods in terms of ARI. 
The relative performance of the estimators in terms of ARI mirrors that of $W_1$.
Figure~\ref{fig:ari_fixoverlap} corresponds to the settings in Figure~\ref{fig:w1_fixoverlap}. 
\begin{figure}[!ht]
  \centering
  \includegraphics[width=0.9\textwidth]{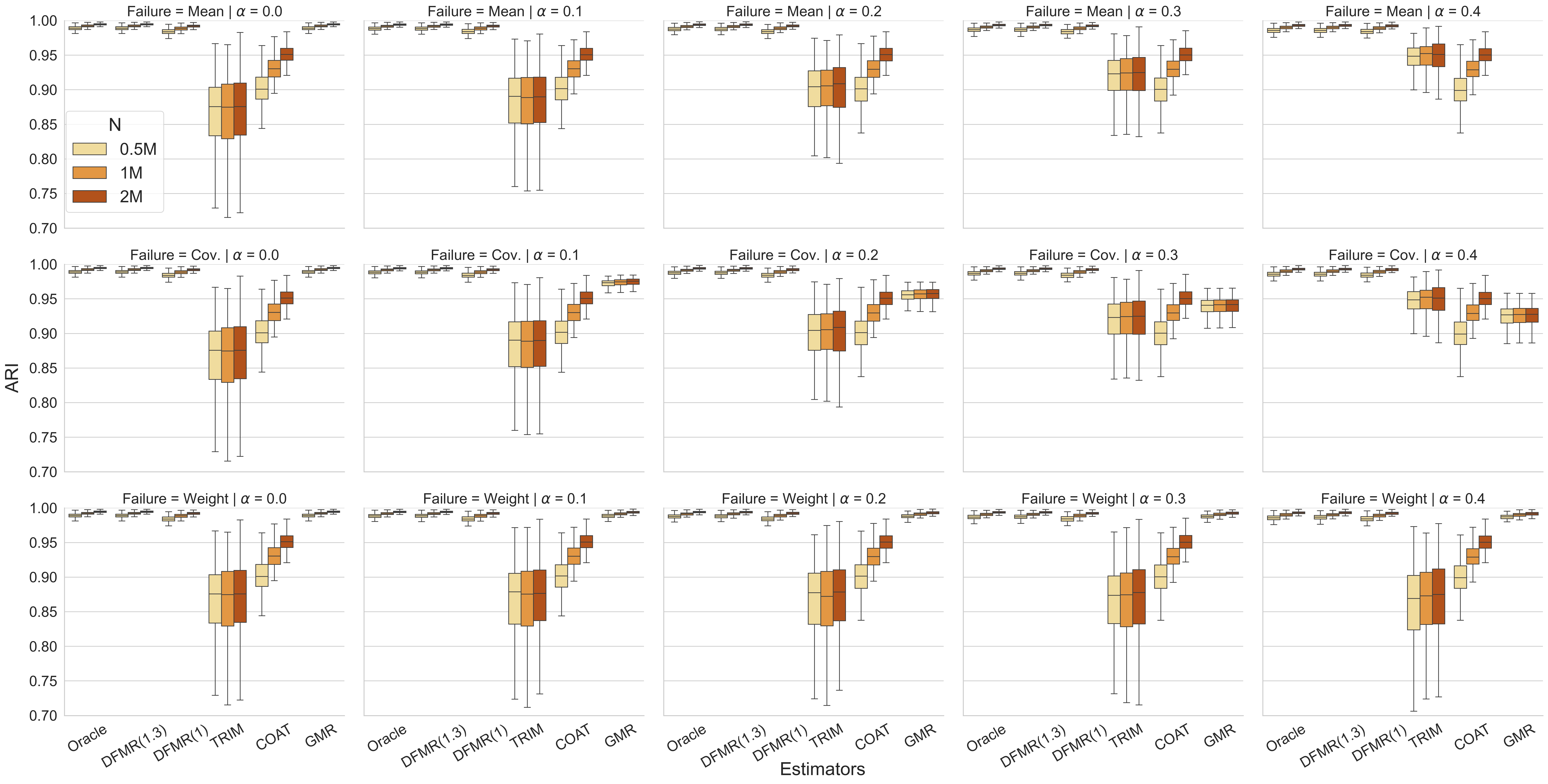}
\caption{\footnotesize{The ARI of different clustering methods based on the full dataset as $N$ and $\alpha$ varies, with $m=100$ and \texttt{MaxOmega}$=0.3$.}}
\label{fig:ari_fixoverlap}
\end{figure}

Figures~\ref{fig:ari_fixn_varym} and~\ref{fig:ari_fixN_varym} correspond to the settings in Figures~\ref{fig:w1_fixn_varym} and~\ref{fig:w1_fixN_varym}, respectively.
The y-axis is constrained to $[0.8,1.0]$ to better illustrate differences among estimators. 
The GMR estimator has a low ARI under mean and covariance attacks, and is omitted in the first two panels in both.
\begin{figure}[!ht]
  \centering
  \includegraphics[width=0.9\textwidth]{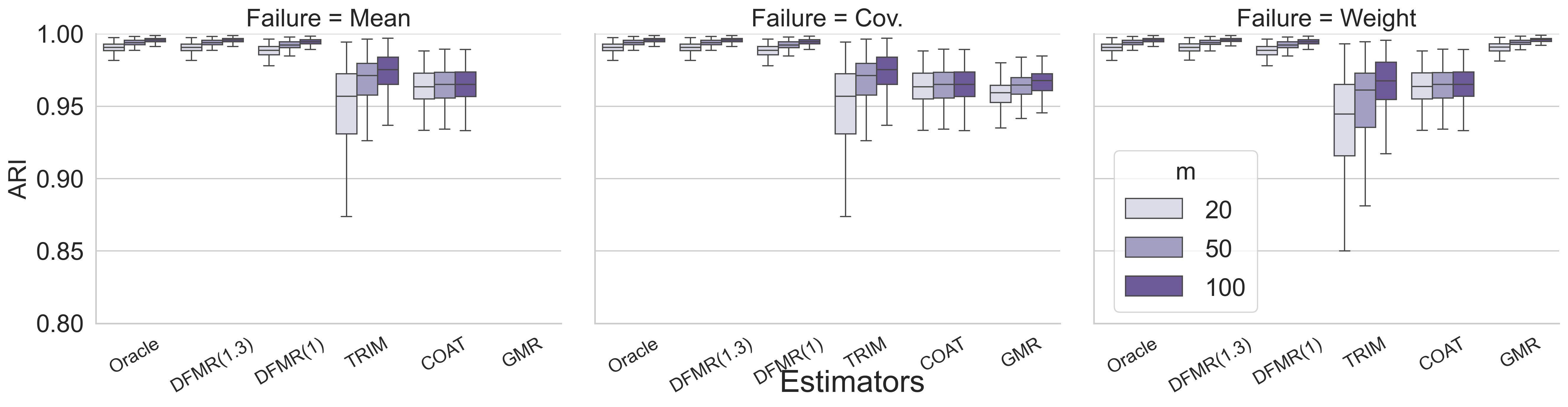}
\caption{\footnotesize{The ARI of different clustering methods based on the full dataset as $m$ varies, with $\alpha=0.1$, and \texttt{MaxOmega}$=0.3$. 
The local sample size $n=5000$, the total sample $N=nm$ increasing with $m$.}
}
\label{fig:ari_fixn_varym}
\end{figure}
\begin{figure}[!ht]
  \centering
\includegraphics[width=0.9\textwidth]{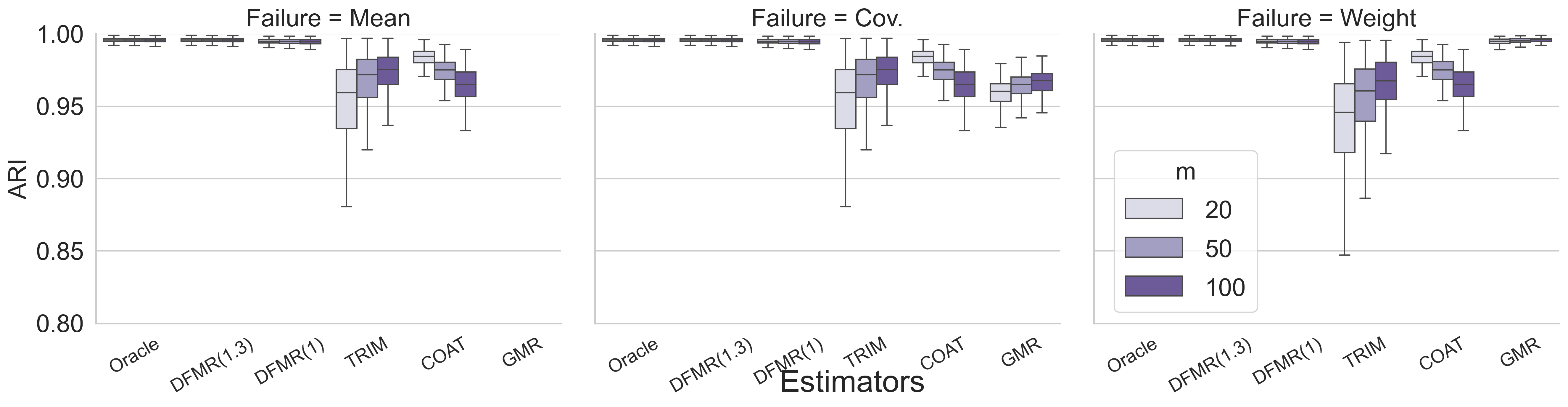}
\caption{\footnotesize{The ARI of different clustering methods based on the full dataset as $m$ varies, with $\alpha=0.1$, and \texttt{MaxOmega}$=0.3$. The total sample size $N=500K$, the local sample $n=N/m$ decreasing with $m$.}
}
\label{fig:ari_fixN_varym}
\end{figure}

Figure~\ref{fig:ari_fixalpha_fixn} corresponds to the settings in Figure~\ref{fig:w1_fixalpha_fixn}.
The y-axis is constrained to $[0.5,1.0]$ to emphasize the differences among estimators. 
The GMR estimator under mean attack has a low ARI and is therefore not shown in the first panel.
\begin{figure}[!ht]
\centering
\includegraphics[width=0.9\textwidth]{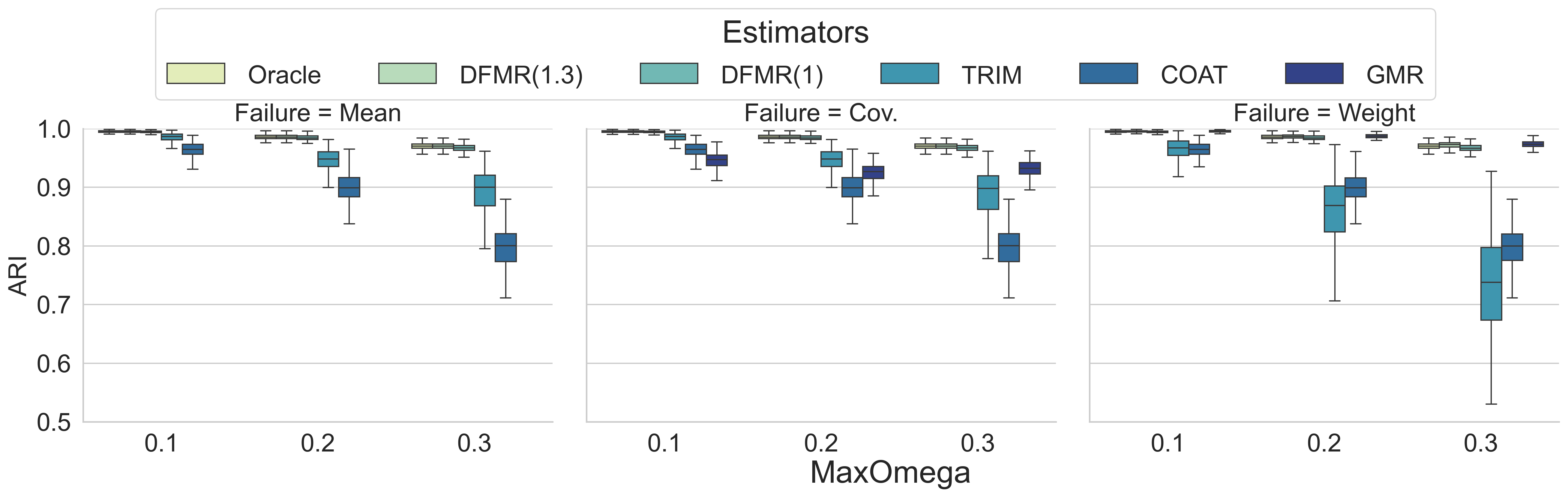}
\caption{\footnotesize{The ARI of different clustering methods based on the full dataset as the degree of overlap \texttt{MaxOmega} varies, with $\alpha=0.2$, $m=100$, and local sample size $n=5000$, under various failure scenarios.}}
\label{fig:ari_fixalpha_fixn}
\end{figure}

\section{Experiments under Gamma mixture}
\label{app:gamma_mixture}
The proposed method is applicable to a wide range of mixture models, extending far beyond Gaussian mixtures. In this section, we present additional empirical results using a gamma mixture.

\subsection{Experimental setup}
Let $f_{\gamma}(x; r, \theta)$ denote the density function of a gamma distribution with shape parameter $r$ and scale parameter $\theta$. 
We consider a finite Gamma mixture model with parameter space $\{(r,\theta): r>0.5, \theta>0\}$.
We generate $R = 300$ independent datasets, each consisting of $ N = 2^{18} $ samples drawn from the following gamma mixture:
\begin{equation}  
\label{eq:gamma_mixture}  
f_{G}(x) = 0.32 f_{\gamma}(x;1,1) + 0.35 f_{\gamma}(x;8,6) + 0.33 f_{\gamma}(x;30,10).
\end{equation}  
The datasets are randomly partitioned into $K = 2^k$ subsets of equal size, where $k \in \{3, \ldots, 8\} $.

To assess robustness against Byzantine failures, we introduce adversarial corruption by varying the proportion of Byzantine machines, denoted by $ \alpha $, from $0.1$ to $0.4$ in increments of $0.1$. 
For each $\alpha $, Byzantine machines are selected at random, and we consider three distinct failure scenarios:
\begin{itemize}
    \item \textbf{Shape failure:} Each estimated shape parameter from the failure machine is replaced with a value drawn independently from $U(1, 10)$.
    \item \textbf{Scale failure:} Each estimated scale parameter from the failure machine is perturbed by additive noise sampled from $U(0,10)$.
    \item \textbf{Weight failure:} The weight corruption mechanism follows the same procedure as in the Gaussian case.
\end{itemize}

\subsection{Distance concentration in Gamma mixtures}
Following the analysis in Section~\ref{sec:exp_distance_concentration}, we generate a dataset of size $N=2^{18}$ from~\eqref{eq:gamma_mixture} and randomly distribute it across $128$ machines. 
The histogram of the generated samples is displayed in the left panel of Figure~\ref{fig:gamma_dist_concentration}, illustrating that the mixture components are well-separated.

For each local dataset, we compute the local pMLE and subsequently introduce Byzantine failures in 20\% of the machines, following the corruption model described earlier. We then compute the COAT estimator $\widehat{G}^{\coat}$ and evaluate the $L^2$ distances between the local estimates $\widetilde{G}_i $ and $\widehat{G}^{\coat}$. 
The right panel of Figure~\ref{fig:gamma_dist_concentration} presents the density functions of these distances, distinguishing between failure-free and corrupted estimates.

\begin{figure}  
\centering  
\includegraphics[width=0.45\textwidth]{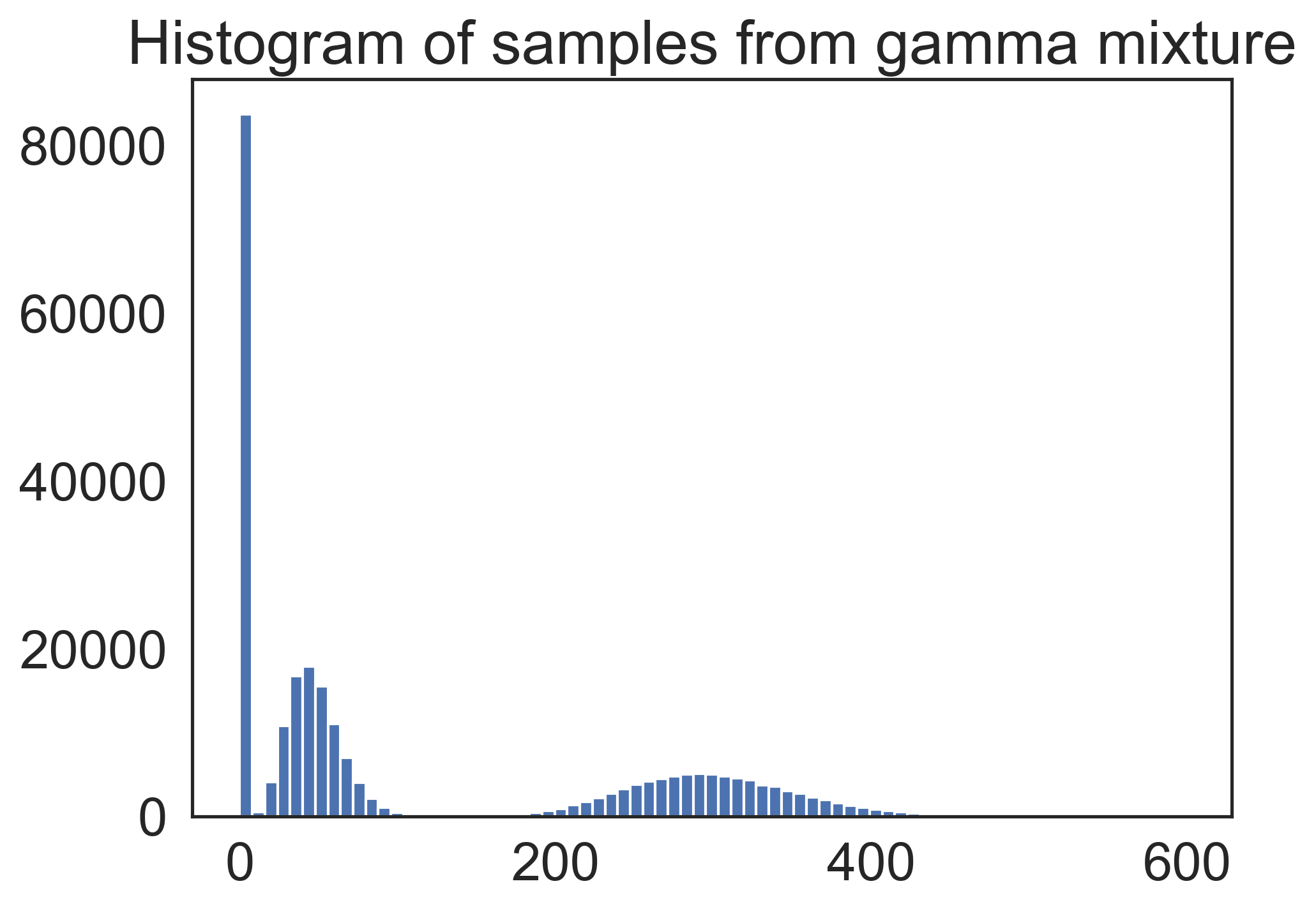}  
\includegraphics[width=0.45\textwidth]{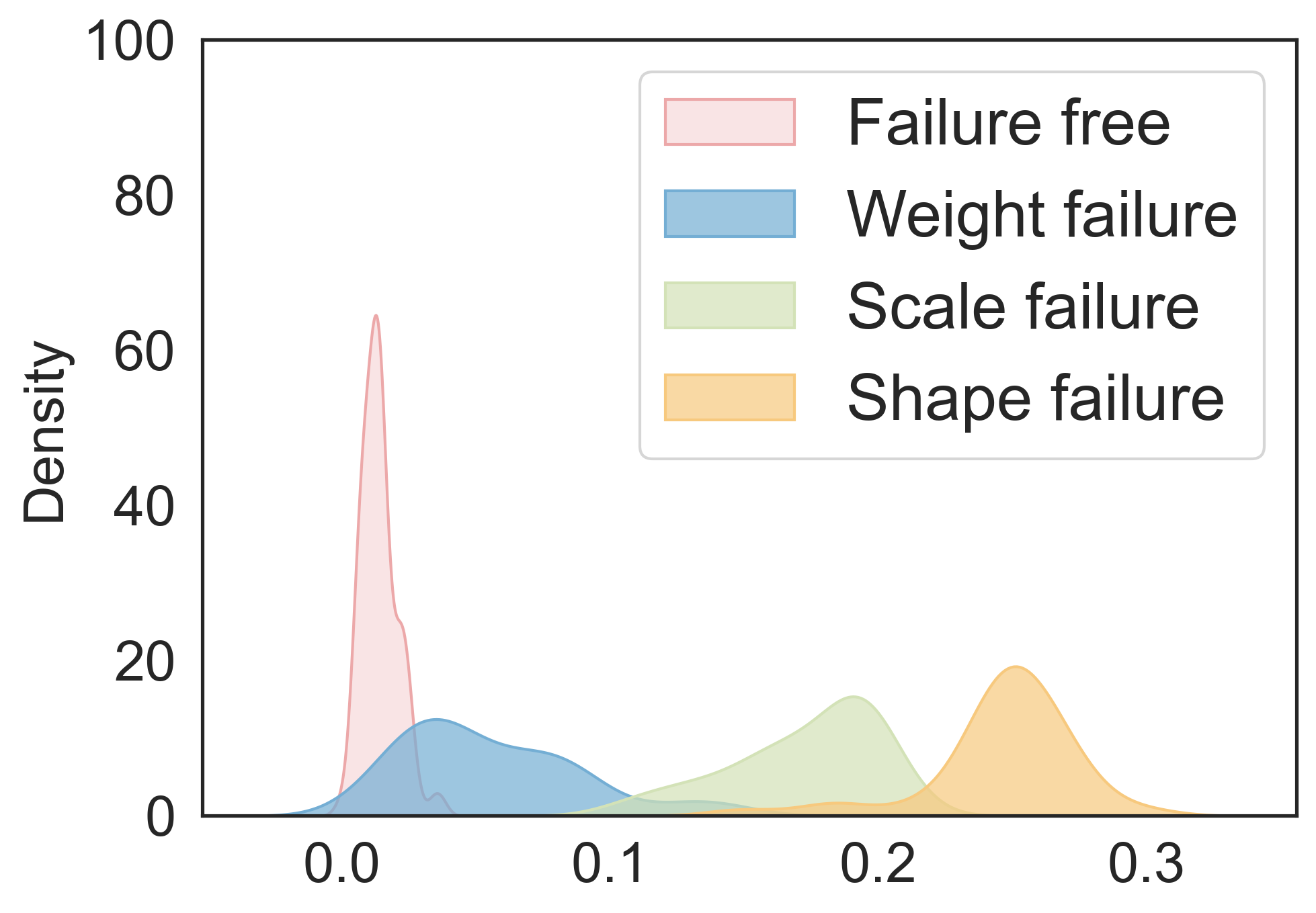}  
\caption{Left: Histogram of samples drawn from the gamma mixture. Right: Distribution of $D(\widetilde{G}_i, \widehat{G}^{\coat})$ under different failure types, represented by distinct colours.}  
\label{fig:gamma_dist_concentration}  
\end{figure}

As observed in the figure, the distance distributions of failure-free and corrupted estimates exhibit distinct modes. However, since the Gamma mixture involves only eight parameters, the concentration of distances is unsurprisingly less pronounced compared to that observed in our experiments with Gaussian mixture models in Section \ref{sec:exp_distance_concentration}, which had at least 29 parameters.

\subsection{Robustness of the inflation factor $\rho$ under Gamma mixtures}
We next evaluate the robustness of the proposed DFMR($\rho$) approach under the Gamma mixture model.
The inflation factor $\rho$ is varied from $1.0$ to $3.0$ in increments of $0.05$.
The top row of Figure~\ref{fig:gamma_EEI_threshold} reports the mean and standard error of the distance between the DFMR($\rho$) estimator and the true parameters across $R = 300$ repetitions, with total sample size $N = 2^{18}$ and failure rate fixed at $\alpha = 0.2$, while varying the number of local machines.
In the bottom row, we use the same $N$, fix the number of local machines at $m = 64$, and vary the failure rate.
\begin{figure}[!htbp]
\centering
\includegraphics[width=0.3\textwidth]{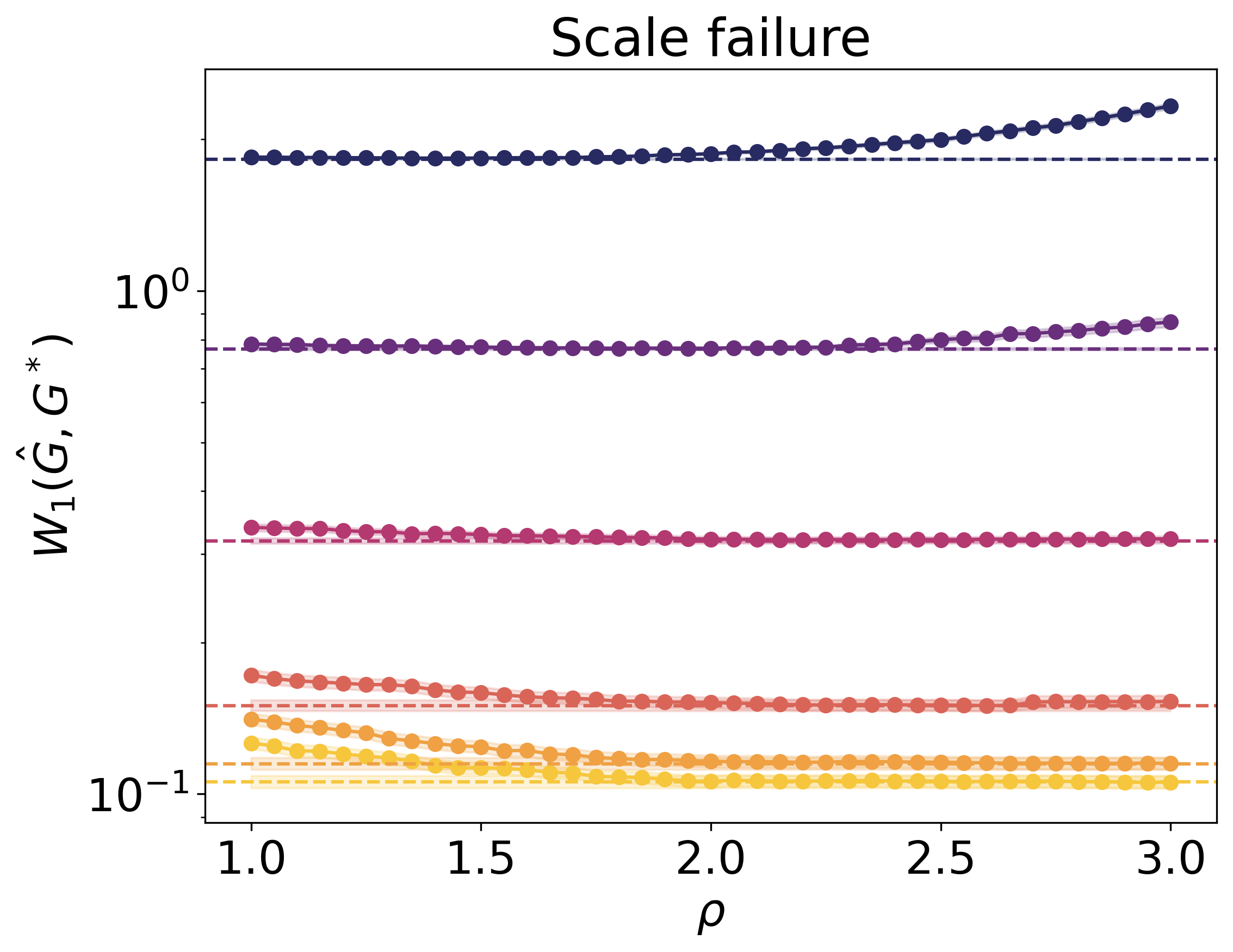}
\includegraphics[width=0.3\textwidth]{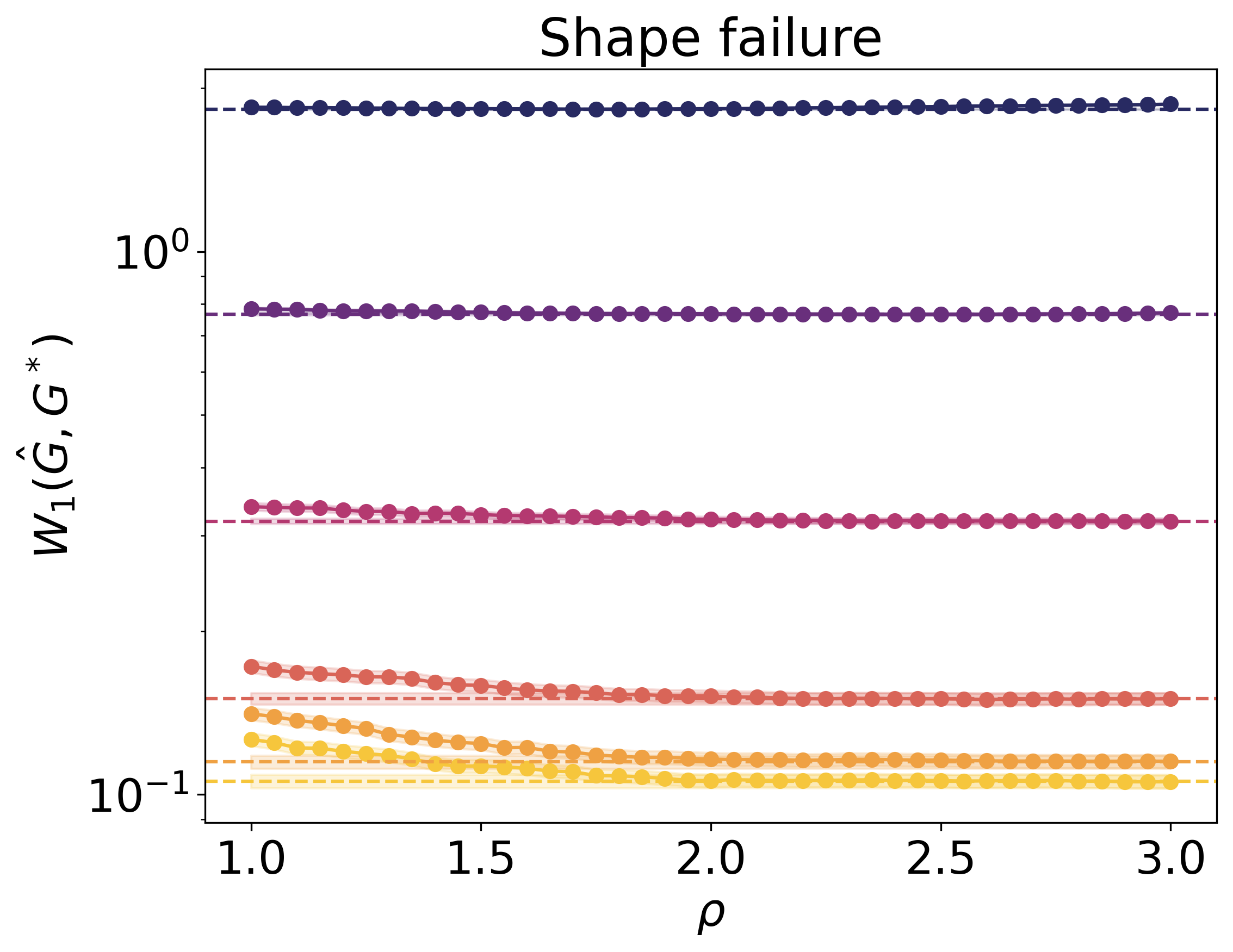}
\includegraphics[width=0.3\textwidth]{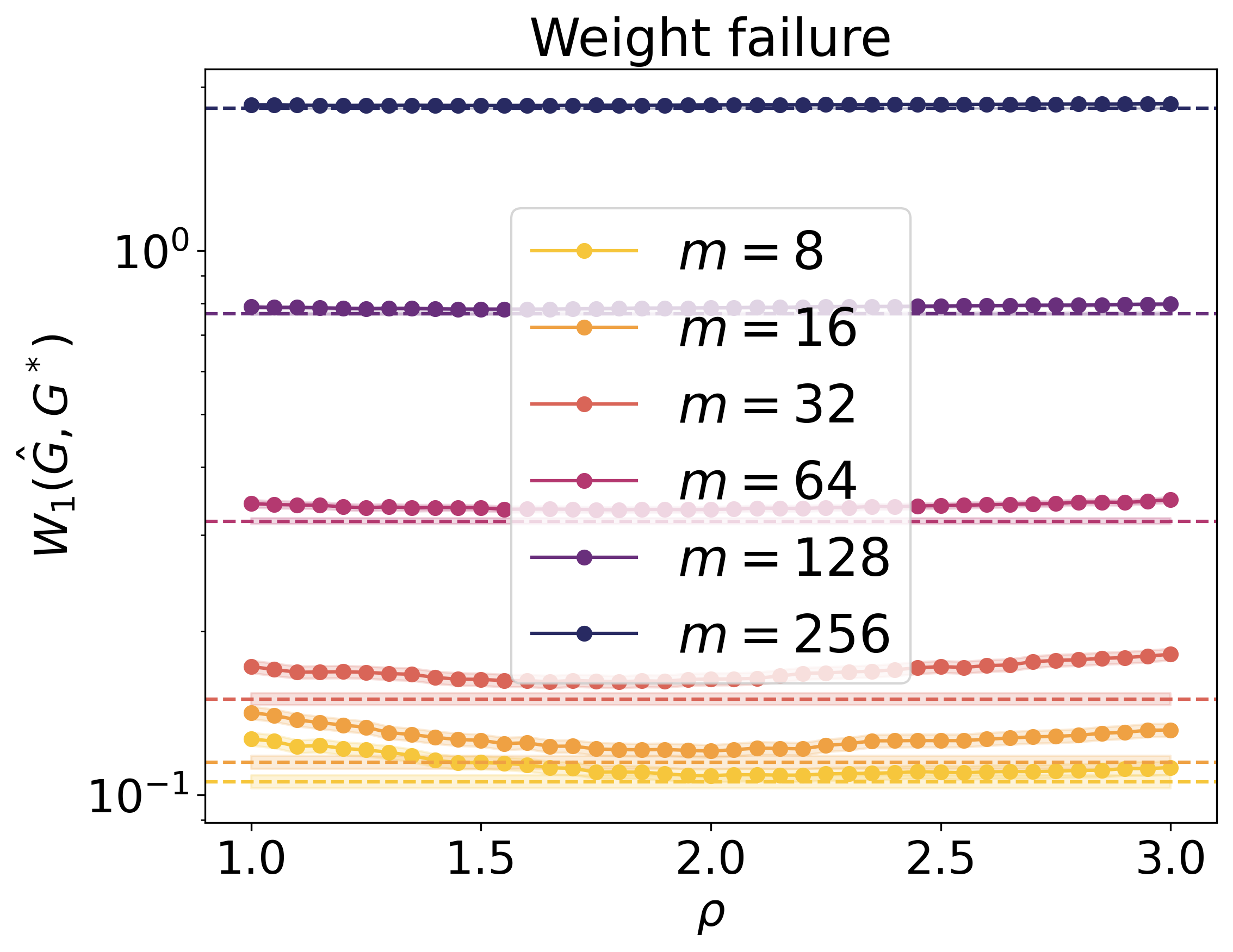}\\
\includegraphics[width=0.3\textwidth]{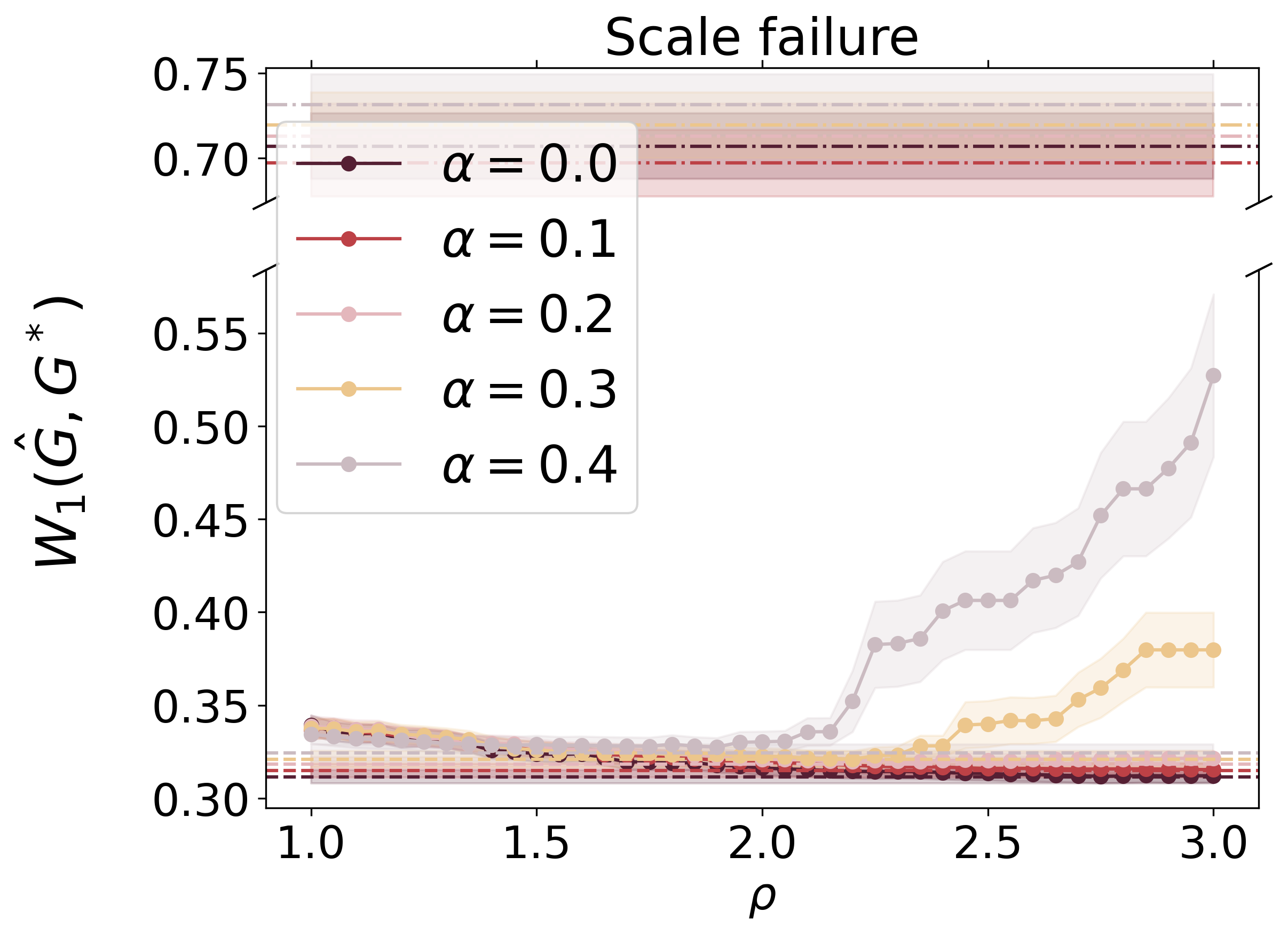}
\includegraphics[width=0.3\textwidth]{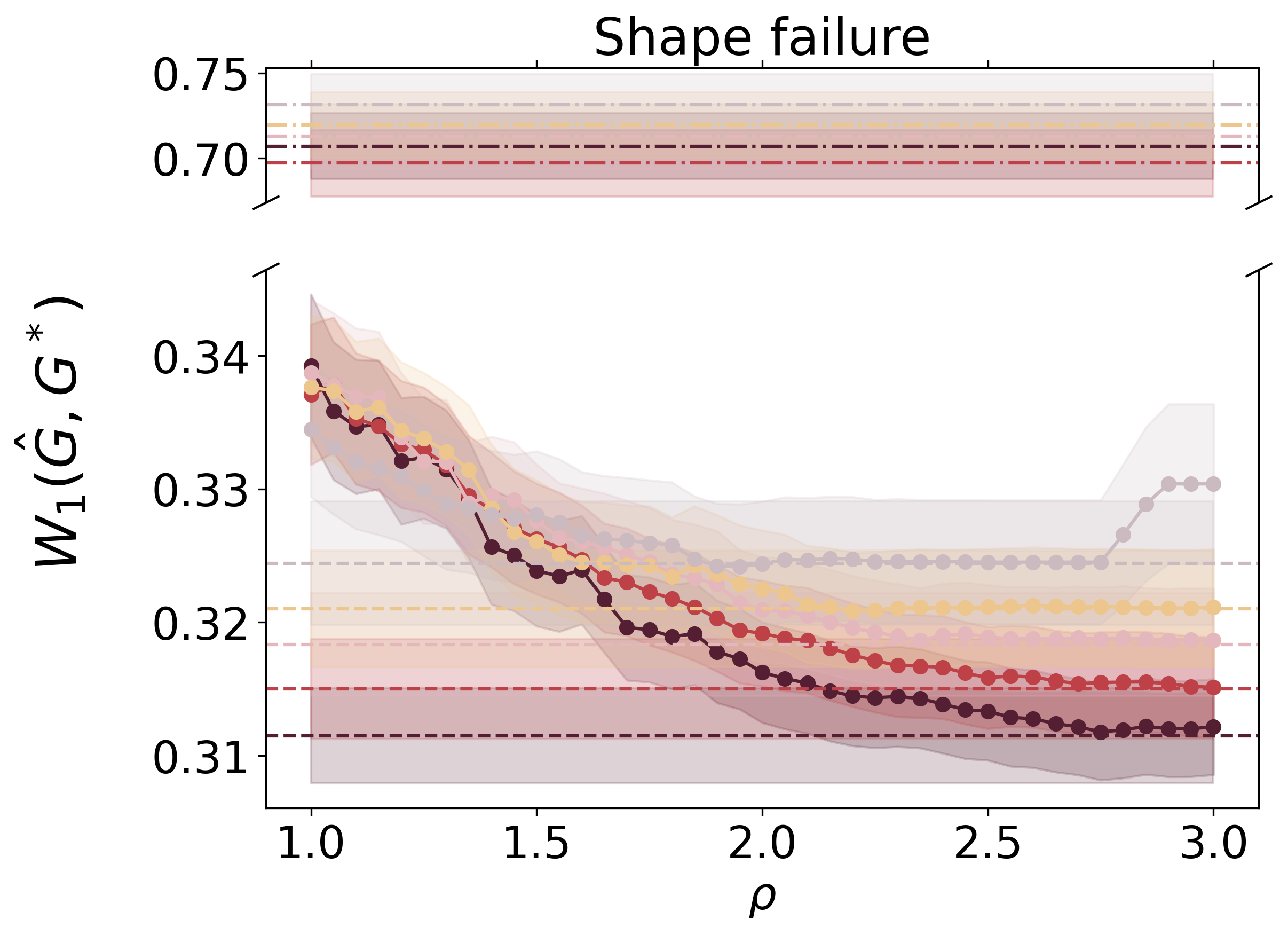}
\includegraphics[width=0.3\textwidth]{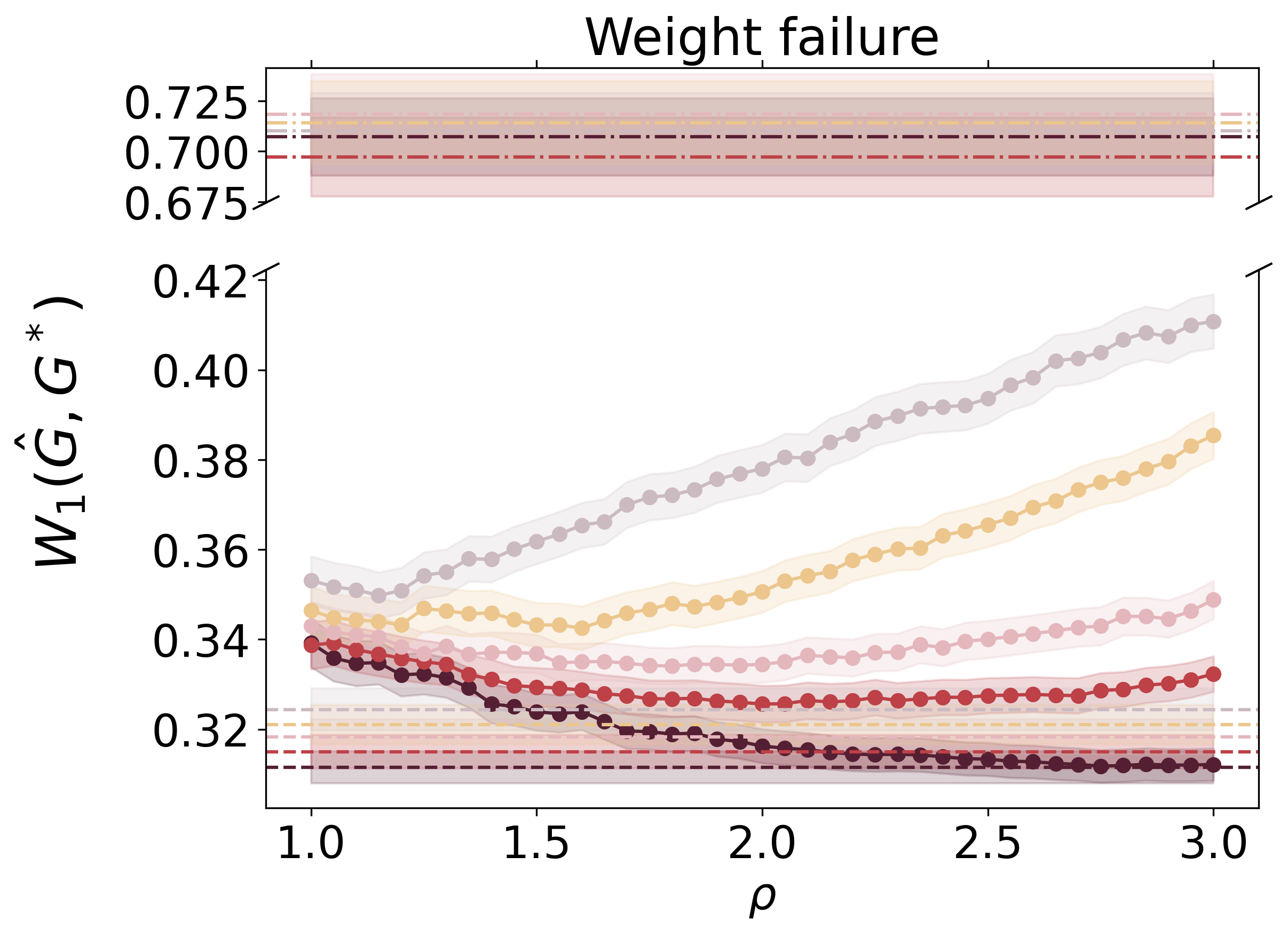}
\caption{The $W_1$ values of the DFMR($\rho$) approach as a function of the inflation factor $\rho$ under varying failure types, failure rates, and numbers of local machines. 
The dotted lines represent the DFMR($\rho$) approach, the dashed lines indicate the performance of the Oracle approach, and the dash-dotted lines correspond to the performance of the COAT approach.}
\label{fig:gamma_EEI_threshold}
\end{figure}
Compared to the Gaussian mixture case, the performance of the DFMR approach is more sensitive to the choice of $\rho$ under gamma mixture. 
Nevertheless, for both shape and scale failures, setting $\rho \approx 2.0$ yields an estimator nearly as efficient as the oracle. 
This empirical choice aligns with the theoretical rate derived in our analysis.

\begin{figure}  
\centering  
\includegraphics[width=0.9\textwidth]{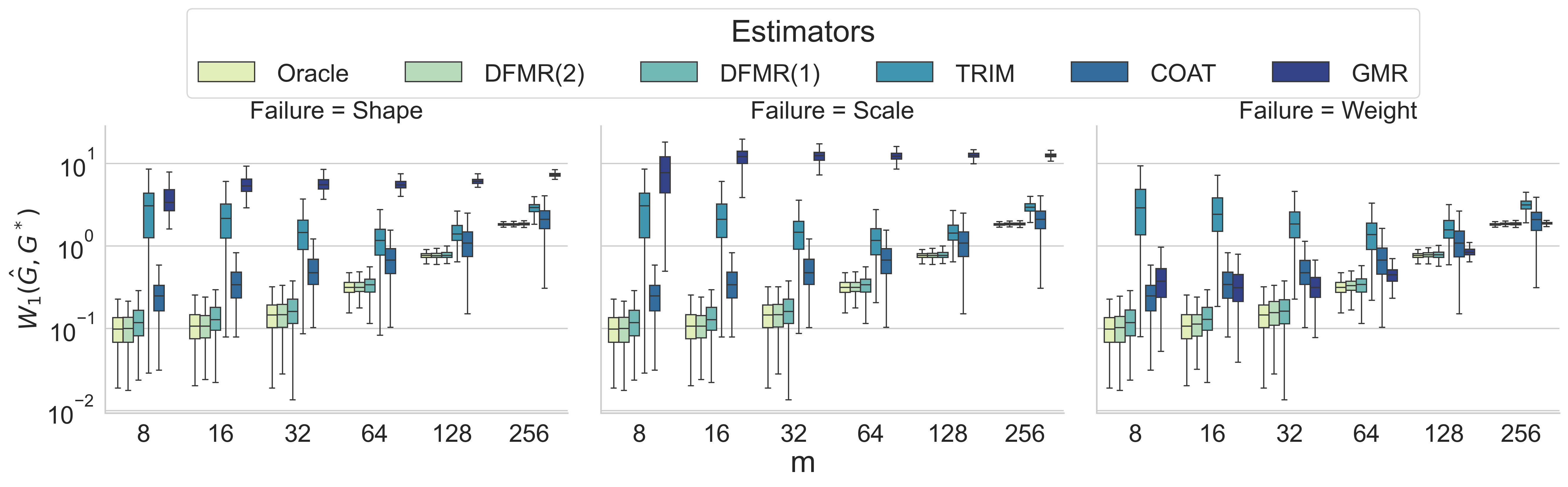}  
\caption{Comparison of different methods as the number of local machines $m$ varies, with $\alpha=0.2$ and total sample size $N=2^{18}$.}  
\label{fig:gamma_final_same_rho}  
\end{figure}
Our findings indicate that DFMR(2) consistently performs on par with the oracle estimator, regardless of failure type or number of local machines. 
Moreover, DFMR(1) alone is nearly as effective as the oracle. 
Both proposed estimators outperform existing baselines.

\subsubsection{Can we improve the performance?}

We observe in our simulation study under the gamma mixture that the DFMR($\rho$) may be better than the oracle estimator for a fixed repetition.  
Followed by this observation, for each repetition, we pick the inflation factor that gives the smallest $W_1$.
\ie let the inflation factor $\rho^{(r)}$ in the $r$th repetition be
\[\rho^{(r)} = \argmin_{\rho \in \{1+0.05t:t=0,1,\ldots,59\}} W_1(\widehat{G}_{\rho}^{\text{DFRM}},G^*)\]
and the corresponding estimator is denoted as $\widehat{G}^{\text{DFRM}}(\rho^{(r)})$.
As a result, the inflation factor may vary across repetitions.  
We then average over repetitions and compute the relative difference in the mean $W_1$ between DFMR($\rho$) and the oracle:
\[\frac{R^{-1}\sum_{r=1}^{R}W_1(\widehat{G}^{\text{DFRM}}(\rho^{(r)}), G^*) - R^{-1}\sum_{r=1}^{R}W_1(\widehat{G}^{\text{OMR}}, G^*)}{R^{-1}\sum_{r=1}^{R}W_1(\widehat{G}^{\text{OMR}}, G^*)}.\]
The values are reported in Table~\ref{tab:improvement}.
\begin{table}[!ht]
\centering
\caption{Relative improvement of DFMR($\rho$)  over oracle.}
\label{tab:improvement}
\begin{tabular}{c|cccccc}
\toprule
\multirow{2}{*}{Failure Rate} & \multicolumn{6}{c}{Number of Machines} \\
\cmidrule(lr){2-7}
 & 8 & 16 & 32 & 64 & 128 & 256 \\
\midrule
& \multicolumn{6}{c}{\textbf{Shape failure}} \\
\midrule
0.0 & -13.48\% & -15.99\% & -17.63\% & -9.85\% & -3.98\% & -1.49\% \\
0.1 & -13.48\% & -14.47\% & -16.83\% & -9.36\% & -3.96\% & -1.50\% \\
0.2 & -11.16\% & -12.83\% & -15.18\% & -8.95\% & -3.81\% & -1.34\% \\
0.3 & -8.22\% & -12.02\% & -12.25\% & -7.44\% & -3.34\% & -1.21\% \\
0.4 & 0.00\% & -6.15\% & -9.08\% & -5.25\% & -2.37\% & -0.89\% \\
\midrule
&\multicolumn{6}{c}{\textbf{Scale failure}} \\
\midrule
0.0 & -13.48\% & -15.99\% & -17.63\% & -9.85\% & -3.98\% & -1.49\% \\
0.1 & -13.48\% & -14.47\% & -16.83\% & -9.36\% & -3.93\% & -1.45\% \\
0.2 & -11.16\% & -12.83\% & -15.16\% & -8.95\% & -3.78\% & -1.21\% \\
0.3 & -8.22\% & -12.02\% & -12.25\% & -7.44\% & -3.25\% & -0.99\% \\
0.4 & 0.00\% & -6.15\% & -9.05\% & -5.21\% & -2.10\% & -0.13\% \\
\midrule
&\multicolumn{6}{c}{\textbf{Weight failure}} \\
\midrule
0.0 & -13.48\% & -15.99\% & -17.63\% & -9.85\% & -3.98\% & -1.49\% \\
0.1 & -13.48\% & -13.99\% & -17.14\% & -8.99\% & -3.82\% & -1.47\% \\
0.2 & -11.14\% & -12.42\% & -14.58\% & -9.04\% & -3.91\% & -0.95\% \\
0.3 & -8.10\% & -11.18\% & -11.98\% & -7.08\% & -3.50\% & -0.77\% \\
0.4 & 1.13\% & -5.52\% & -7.59\% & -5.70\% & -1.43\% & -0.41\% \\
\bottomrule
\end{tabular}
\end{table}
It can be seen that other than weight failure with $\alpha = 0.4$, the DFMR($\rho$) approach improves significantly over the oracle estimator, even when there is no Byzantine failure.  
We do not observe such an improvement for the Gaussian mixture.  
Our conjecture is that the distribution of $\sqrt{n} (\widehat{\mG}_n - \mG^*)$ under the gamma mixture has a heavier tail than the Gaussian mixture. 
Therefore, removing those local estimates far from COAT can help improve efficiency.

\section{Experiments under Poisson mixture}
\label{app:poisson_mixture}
In this section, we present additional empirical results for a widely used discrete mixture model, namely the Poisson mixture.

\subsection{Experimental setup}
Let $f(x;\theta)$ be the probability mass function of a Poisson distribution with mean $\theta$. 
We generate $R=300$ independent datasets, each of size $N=2^{18}$, from the mixture
\begin{equation}
\label{eq:poisson_mixture}
f_G(x) = 0.32 f(x;1) + 0.35 f(x;10) + 0.33 f(x;15).
\end{equation}
Each dataset is randomly partitioned into $K=2^k$ subsets of equal size, for $k = 3,\ldots,8$.

To assess robustness to Byzantine failures, we vary the proportion of Byzantine machines, denoted by $\alpha$, from $0.1$ to $0.4$ in increments of $0.1$. 
For each $\alpha$, the Byzantine machines are selected at random, and we consider two failure scenarios:
\begin{itemize}
    \item \textbf{Rate failure:} Each estimated parameter from a Byzantine machine is replaced by a value drawn independently from $U(1,100)$.
    \item \textbf{Weight failure:} The weights are corrupted using the same mechanism as in the Gaussian case.
\end{itemize}

\subsection{Distance concentration in Poisson mixtures}
Following Section~\ref{sec:exp_distance_concentration}, we generate a dataset of size $N=2^{18}$ from~\eqref{eq:poisson_mixture} and randomly distribute it across $128$ machines. 
The histogram of the samples (left panel of Figure~\ref{fig:poisson_dist_concentration}) shows that one component is well separated, while the other two overlap substantially.

For each local dataset, we compute the local MLE and introduce Byzantine failures in $20\%$ of the machines, following the corruption model described earlier. 
We then compute the COAT estimator $\widehat{G}^{\coat}$ and evaluate the $L^2$ distances between the local estimates $\widetilde{G}_i$ and $\widehat{G}^{\coat}$. 
The right panel of Figure~\ref{fig:poisson_dist_concentration} shows the kernel density estimate of the squared $L^2$ distances, distinguishing between failure-free and corrupted estimates.

\begin{figure}  
\centering  
\includegraphics[width=0.45\textwidth]{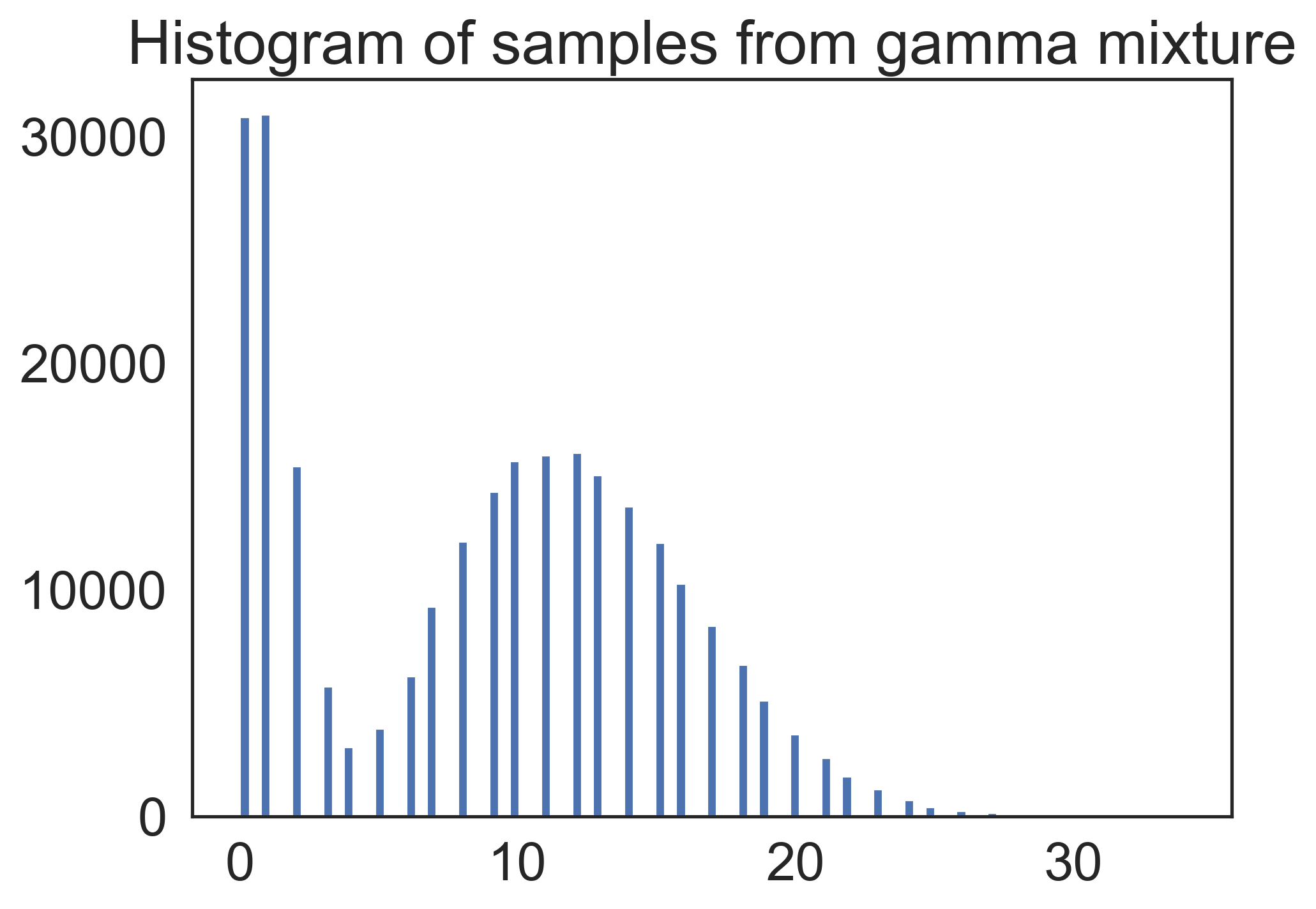}  
\includegraphics[width=0.45\textwidth]{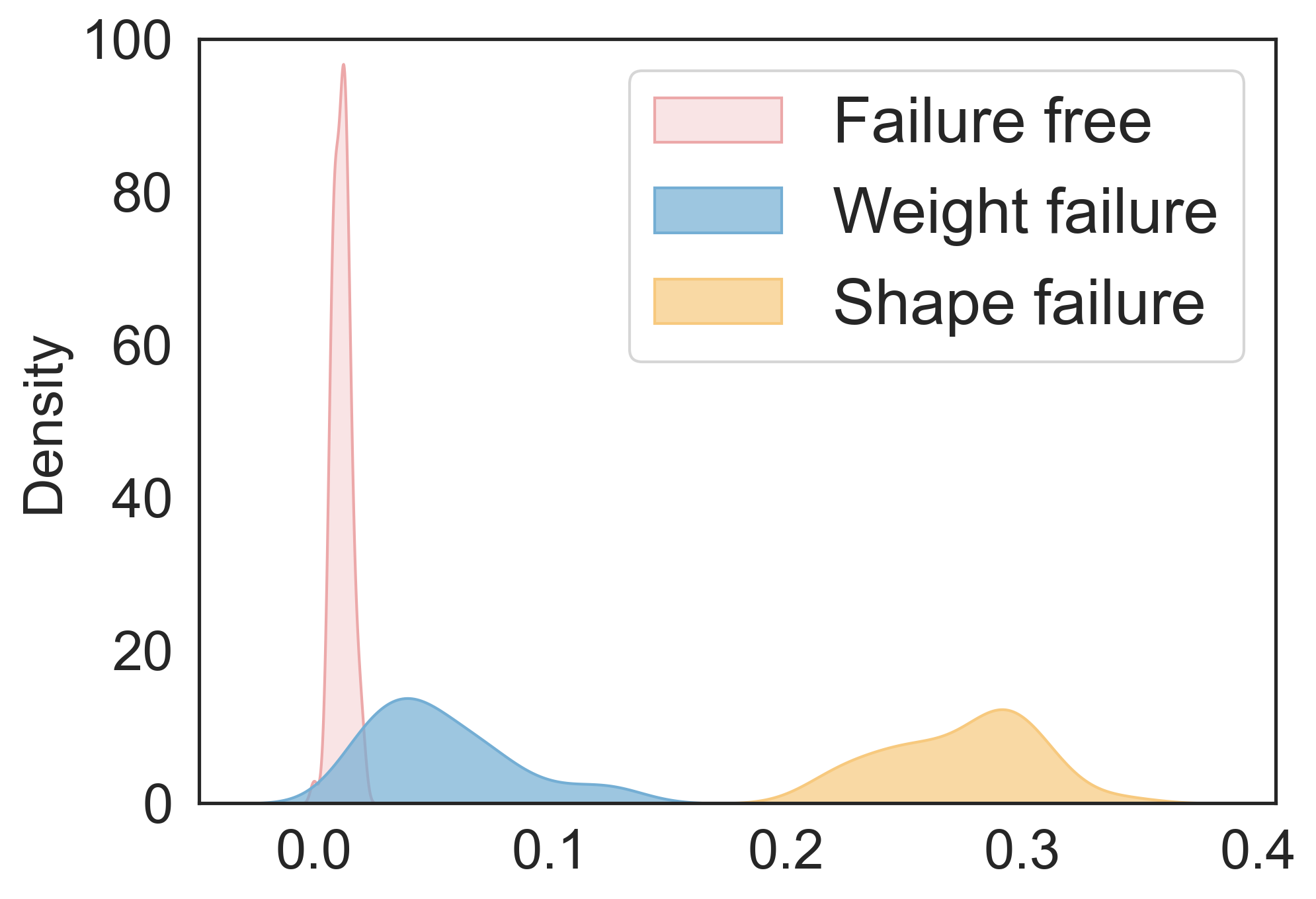} 
\caption{Left: Histogram of samples drawn from the Poisson mixture. Right: Distribution of $D(\widetilde{G}_i, \widehat{G}^{\coat})$ under different failure types, represented by distinct colours.}  
\label{fig:poisson_dist_concentration}  
\end{figure}
As seen in the figure, the distance distributions for failure-free and corrupted estimates have distinct modes. 
However, since the Poisson mixture involves only five parameters, the concentration is less pronounced than in the Gaussian mixture experiments in Section~\ref{sec:exp_distance_concentration}, which involved at least 29 parameters.

\subsection{Robustness of the inflation factor $\rho$ under Poisson mixtures}
We next evaluate the robustness of the DFMR($\rho$) approach under the Poisson mixture.
The inflation factor $\rho$ ranges from $1.0$ to $3.0$ in increments of $0.05$.
The top row of Figure~\ref{fig:poisson_EEI_threshold} shows the mean and standard error of the distance between the DFMR($\rho$) estimator and the true parameters over $R=300$ repetitions, with total sample size $N=2^{18}$ and failure rate fixed at $\alpha=0.2$, while varying the number of machines.
In the bottom row, we keep $N$ fixed, set the number of machines to $m=64$, and vary the failure rate.

We next evaluate the robustness of the proposed DFMR($\rho$) approach under the Poisson mixture.
The inflation factor $\rho$ ranges from $1.0$ to $3.0$ in increments of $0.05$.
The top row of Figure~\ref{fig:poisson_EEI_threshold} shows the mean and standard error of the distance between the DFMR($\rho$) estimator and the true parameters over $R=300$ repetitions, with total sample size $N=2^{18}$ and failure rate fixed at $\alpha=0.2$, while varying the number of machines.
In the bottom row, we keep $N$ fixed, set the number of machines to $m=64$, and vary the failure rate.

\begin{figure}[!htbp]
\centering
\includegraphics[width=0.3\textwidth]{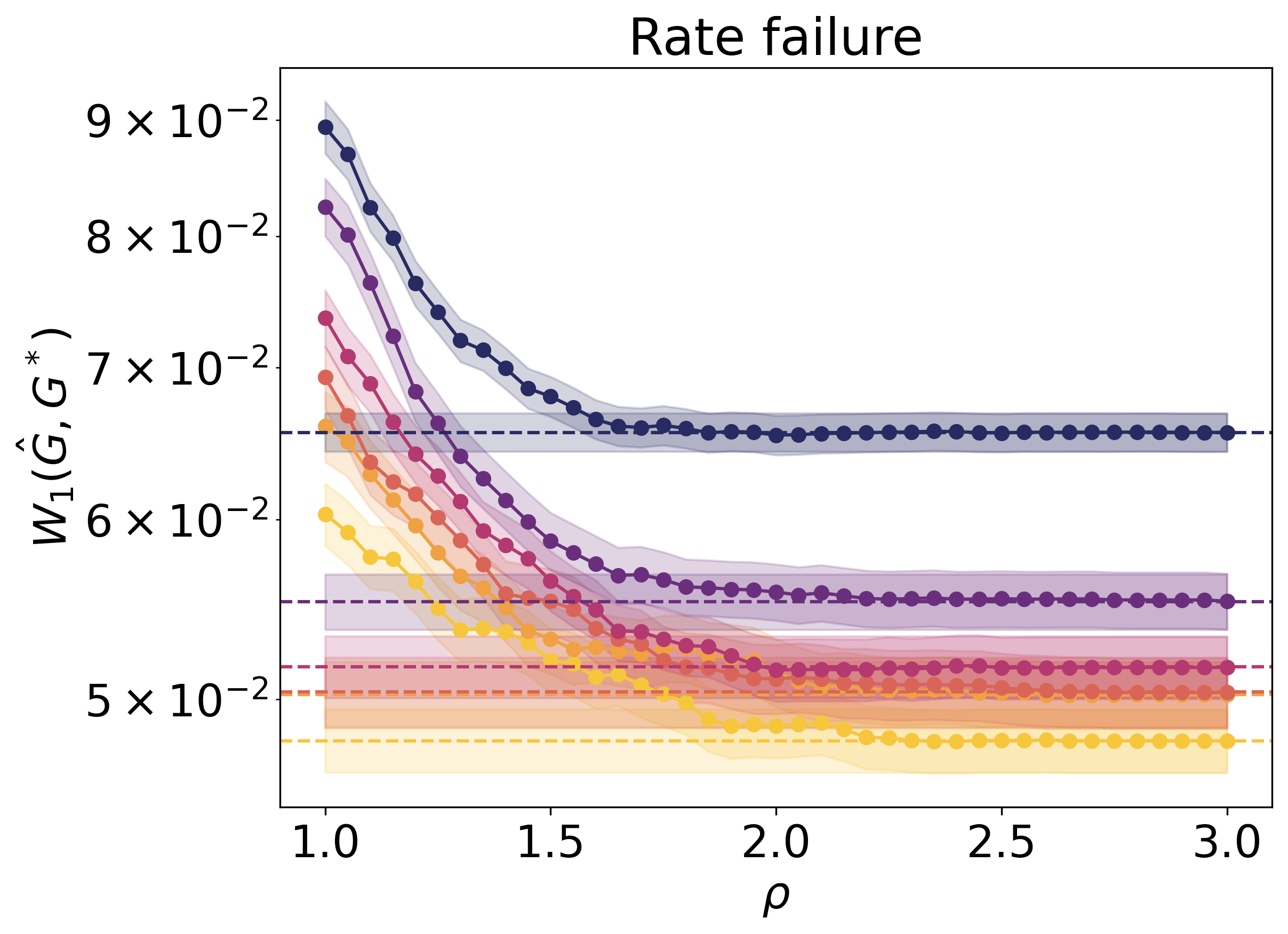}
\includegraphics[width=0.3\textwidth]{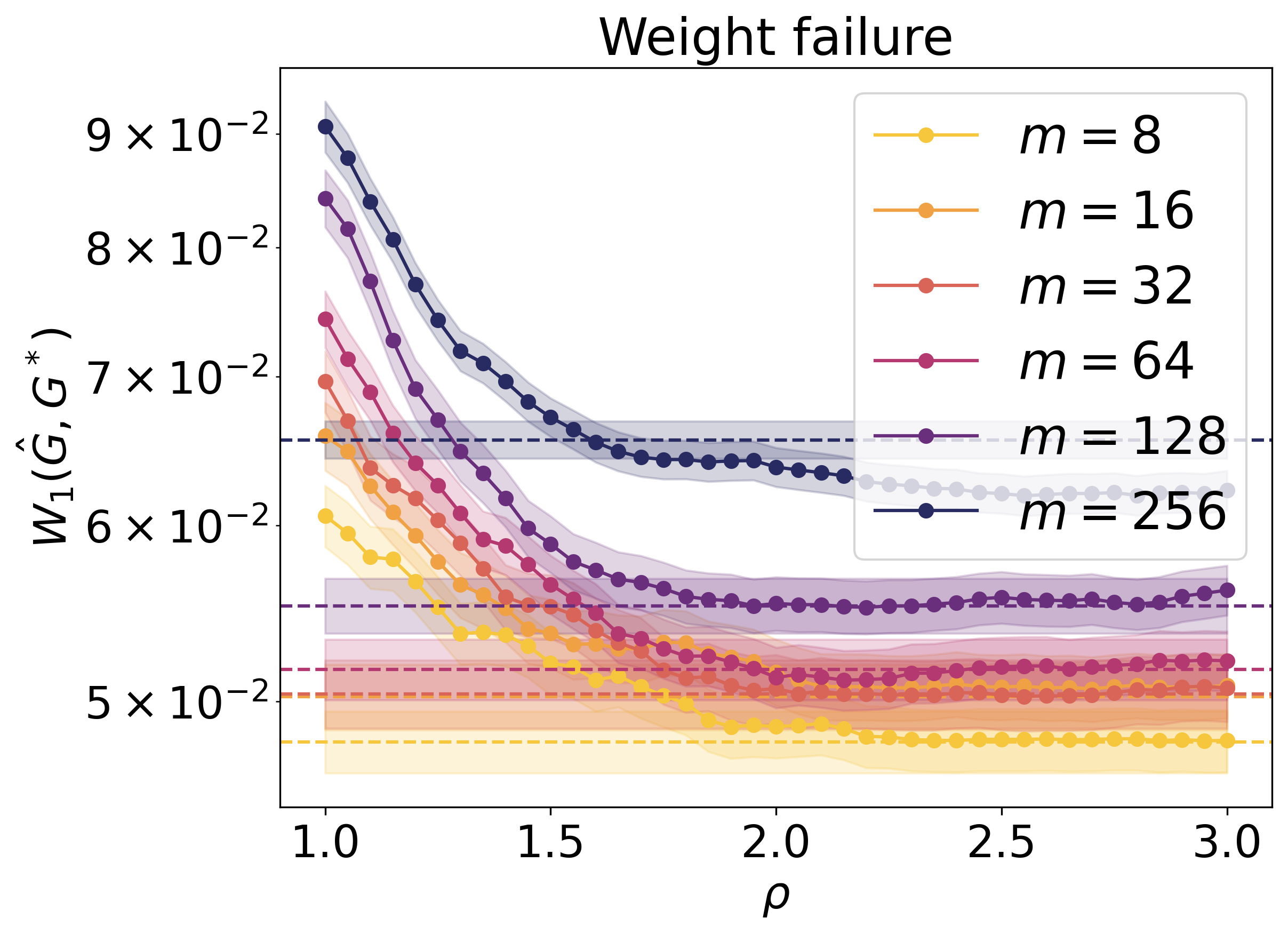}\\
\includegraphics[width=0.3\textwidth]{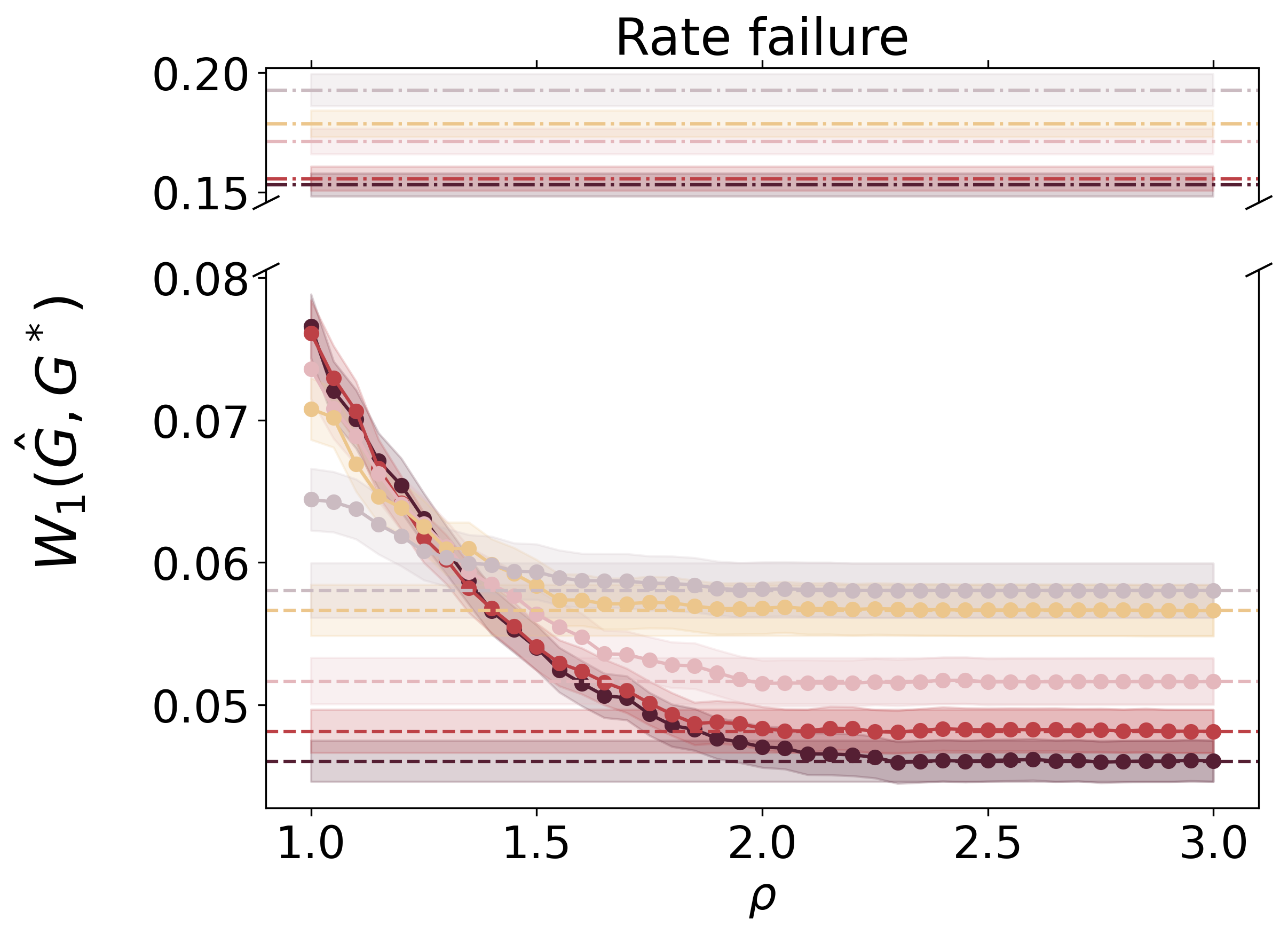}
\includegraphics[width=0.3\textwidth]{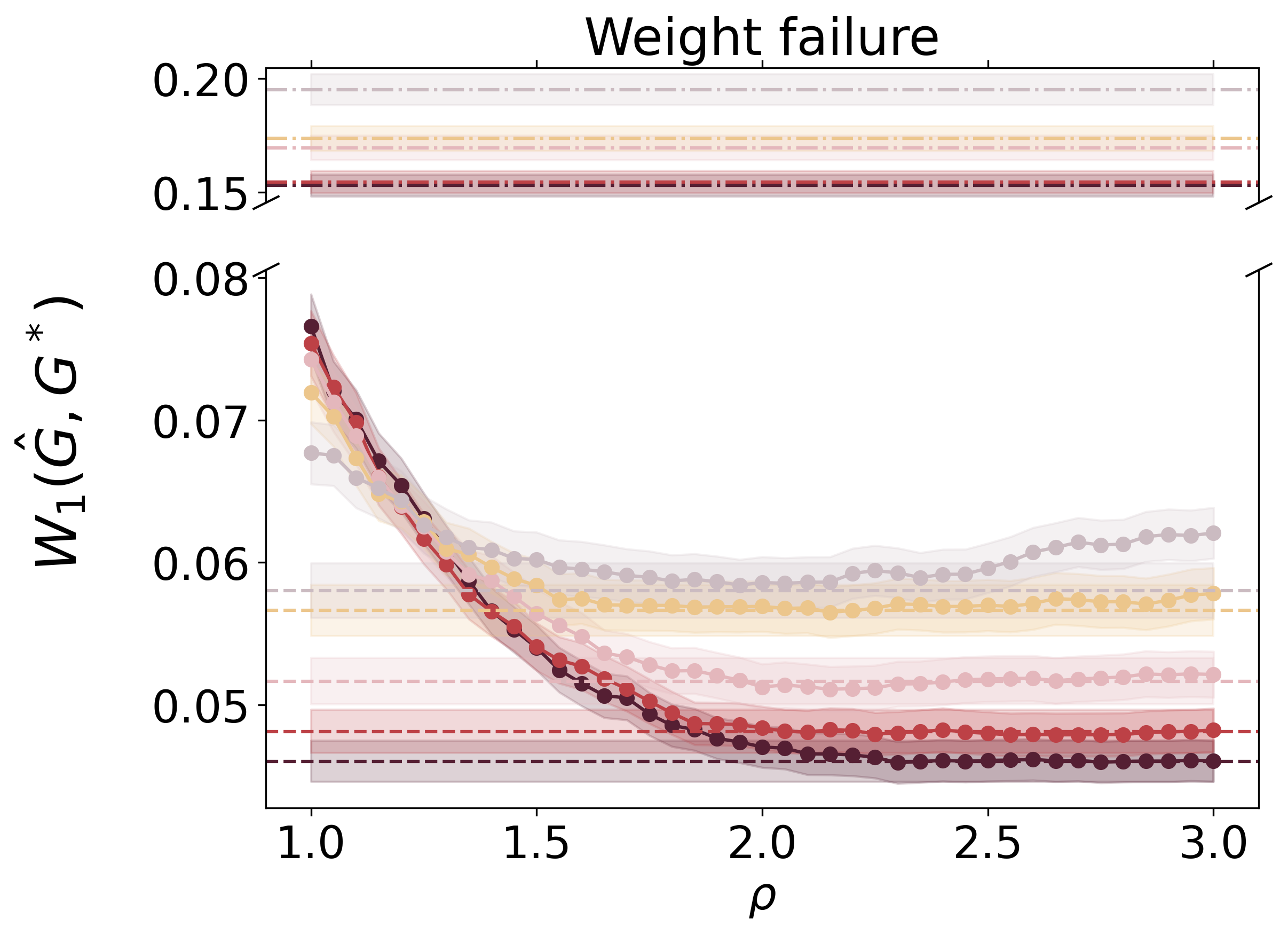}
\caption{
The $W_1$ values of the DFMR($\rho$) approach as a function of the inflation factor $\rho$ under varying failure types, failure rates, and numbers of machines. 
Dotted, dashed, and dash-dotted lines correspond to DFMR($\rho$), Oracle, and COAT, respectively.}
\label{fig:poisson_EEI_threshold}
\end{figure}

As in the Gaussian mixture setting, the performance of DFMR is not highly sensitive to the choice of $\rho$ under Poisson mixtures.
Nevertheless, for both rate and weight failures, setting $\rho \approx 2.0$ yields performance nearly as efficient as the oracle estimator.
This observation is consistent with the theoretical rate established in our analysis.

\begin{figure}  
\centering  
\includegraphics[width=0.9\textwidth]{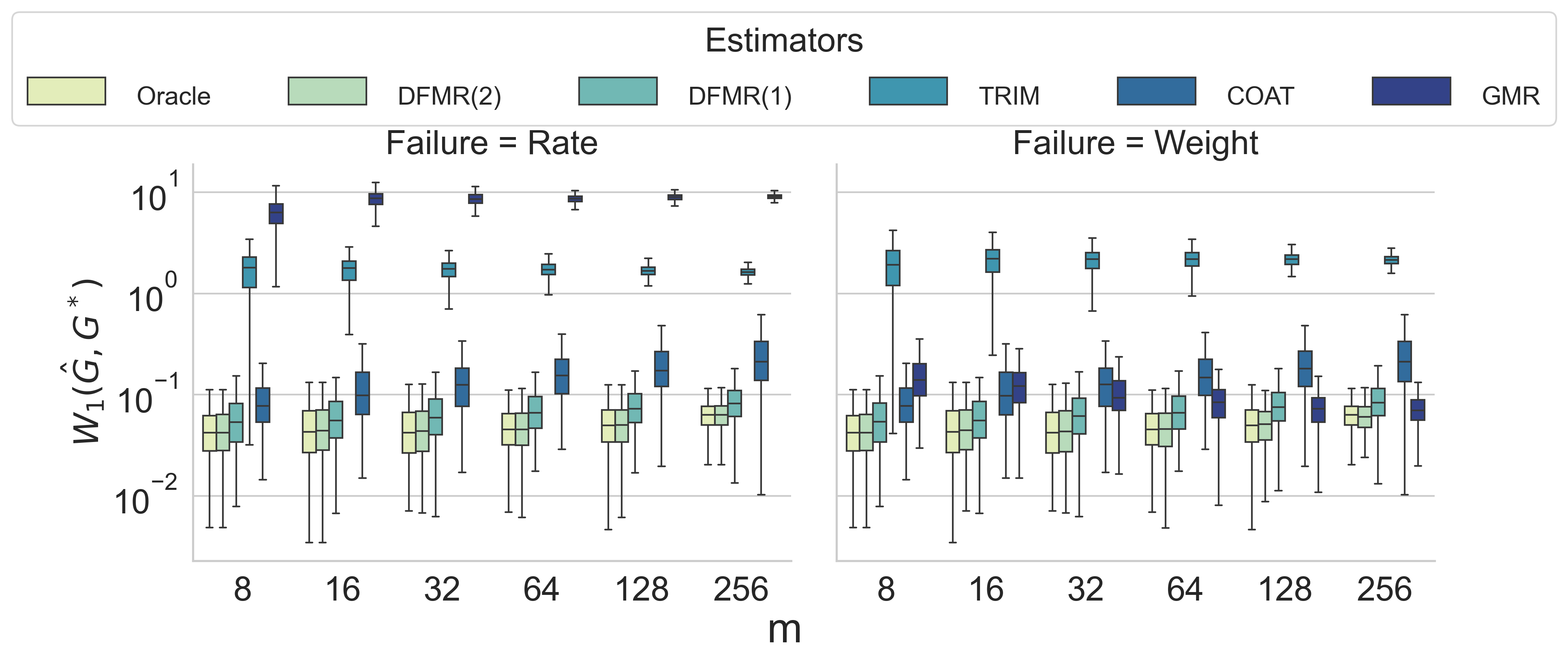}  
\caption{Comparison of methods as the number of machines $m$ varies, with $\alpha=0.2$ and total sample size fixed at $N=2^{18}$.} 
\label{fig:poisson_final_same_rho}  
\end{figure}

Our results show that DFMR(2) performs comparably to the oracle estimator, while DFMR(1) is slightly less accurate.
Both variants outperform existing baselines.

\section{More details on the real data}
\label{app:real_data}

We use the second edition of the NIST dataset.\footnote{Downloaded from \url{https://www.nist.gov/srd/nist-special-database-19}.}
It consists of approximately 4 million images of handwritten digits and letters (0--9, A--Z, and a--z) by different writers.
Each digit or letter has its own directory, with the \emph{by\_class$\slash$hsf\_4} subdirectory designated as the test set, and a \emph{train} subdirectory containing the training data.

Following common practice, we train a $5$-layer convolutional neural network (CNN) to reduce each image to a $d=50$ feature vector of real values. 
The CNN architecture used for dimension reduction in the NIST experiment is described in Table~\ref{tab:architecture}.
The final softmax layer in a CNN can be interpreted as fitting a multinomial logistic regression model on reduced feature space.

\begin{table}[!ht]
\caption{Architecture and layer specifications of the CNN for dimension reduction in the NIST example. $C_{\text{in}}$, $C_{\text{out}}$, $H$, and $W$ denote the input channel size, output channel size, height, and width of the convolutional kernel, respectively.}
\label{tab:architecture}
\centering
\begin{tabular}{c|c|c}
\toprule
{\bf Layer} & {\bf Specification} & {\bf Activation Function} \\
\midrule
Conv2d & $C_{\text{in}}=1$, $C_{\text{out}}=20$, $H=W=5$ & ReLU\\
MaxPool2d & $k=2$ & --\\
Conv2d & $C_{\text{in}}=20$, $C_{\text{out}}=50$, $H=W=5$ & ReLU\\
MaxPool2d & $k=2$ & --\\
Flatten & -- & --\\
Linear & $H_{\text{in}}=800$, $H_{\text{out}}=50$ & ReLU \\
Linear & $H_{\text{in}}=50$, $H_{\text{out}}=10$ & Softmax \\
\bottomrule
\end{tabular}
\end{table}

We implemented the CNN in \emph{PyTorch 2.2.1}~\citep{pytorch2019} and trained it for $10$ epochs on the NIST training dataset. 
We used the SGD optimizer with a learning rate of $0.01$, momentum of $0.9$, and a batch size of $64$. 
After training, we discarded the final layer and employed the resulting CNN to reduce the dimension to $50$ for both the training and test sets. 
These $50$-dimensional numerical features are then utilized to fit Gaussian mixture models on the training set and evaluate performance on the test set.

\end{document}